\documentclass[11pt,a4paper,german]{report}

\usepackage[left=2cm, right=2cm, bottom=2.5cm, top=2.5cm]{geometry}
\usepackage[utf8x]{inputenc}
\usepackage{ae,aecompl}
\usepackage{graphicx}
\usepackage{ucs}
\usepackage{dsfont}
\usepackage{subfigure}
\usepackage{comment}
\usepackage{amsmath}
\usepackage{amsthm}
\usepackage{float}
\usepackage{wallpaper}
\usepackage{setspace}
\usepackage{caption}
\usepackage{youngtab}
\usepackage{mathbbol}
\usepackage{color}
\usepackage[bookmarks, colorlinks, linkcolor=dgreen, urlcolor=navy, citecolor=navy]{hyperref}
\usepackage{amsfonts}
\usepackage{amssymb}

\newtheorem{defn}{Definition}
\newtheorem{thm}[defn]{Theorem}
\newtheorem{prop}[defn]{Proposition}
\newtheorem{cor}[defn]{Corollary}

\newtheorem{lem}[defn]{Lemma}
\newtheorem{qu}[defn]{Question}
\theoremstyle{definition}
\newtheorem{rem}[defn]{Remark}

\definecolor{navy}{rgb}{.05,.15,.5}
\definecolor{dgreen}{rgb}{0,.4,0}

\numberwithin{equation}{chapter}
\numberwithin{defn}{chapter}

\newcommand{\tr}{\mathrm{tr}}
\newcommand{\sgn}{\mathrm{sgn}}
\newcommand{\diag}{\mathrm{diag}}
\newcommand{\mor}{\mathrm{mor}}
\renewcommand{\Re}{\mathrm{Re}}

\newcommand{\Gl}{\mathrm{Gl}}
\newcommand{\Om}{\Omega}
\newcommand{\om}{\omega}
\newcommand{\C}{\mathbb{C}}
\newcommand{\R}{\mathbb{R}}
\newcommand{\N}{\mathbb{N}}
\newcommand{\Z}{\mathbb{Z}}
\newcommand{\Q}{\mathbb{Q}}
\newcommand{\D}{\mathrm{d}\,}
\newcommand{\ket}[1]{\left|#1\right\rangle}
\newcommand{\bra}[1]{\left\langle #1\right|}
\newcommand{\ketbra}[2]{\left|#1\right\rangle\!\!\left\langle #2\right|}

\newcommand{\bracket}[2]{\left\langle #1| #2\right\rangle}
\newcommand{\norm}[1]{\left\|#1\right\|}
\newcommand{\conv}{\mathrm{conv}}
\newcommand{\cone}{\mathrm{cone}}

\newcommand{\im}{\mathrm{im}}
\newcommand{\spa}{\mathrm{span}}
\newcommand{\spec}{\mathrm{spec}}
\newcommand{\Hom}{\mathrm{Hom}}
\newcommand{\End}{\mathrm{End}}
\newcommand{\sh}{\mathrm{sh}}

\author{Christian Majenz}

\title{Constraints on Multipartite Quantum Entropies}

\setcounter{tocdepth}{1}
\setcounter{chapter}{-1}

\begin{document}
\begin{titlepage}
\begin{flushright}
\end{flushright}
 \begin{center}
 {\huge\bfseries Constraints on Multipartite Quantum Entropies \\}
 \vspace{1.5cm}
 {\Large\bfseries Christian Majenz}\\[5pt]
 christian.majenz@pluto.uni-freiburg.de\\[14pt]
 \vspace{.6cm}
 Supervisor: Prof. David Gross\\
 \vspace{1.4cm}
 \includegraphics[width=0.38\textwidth]{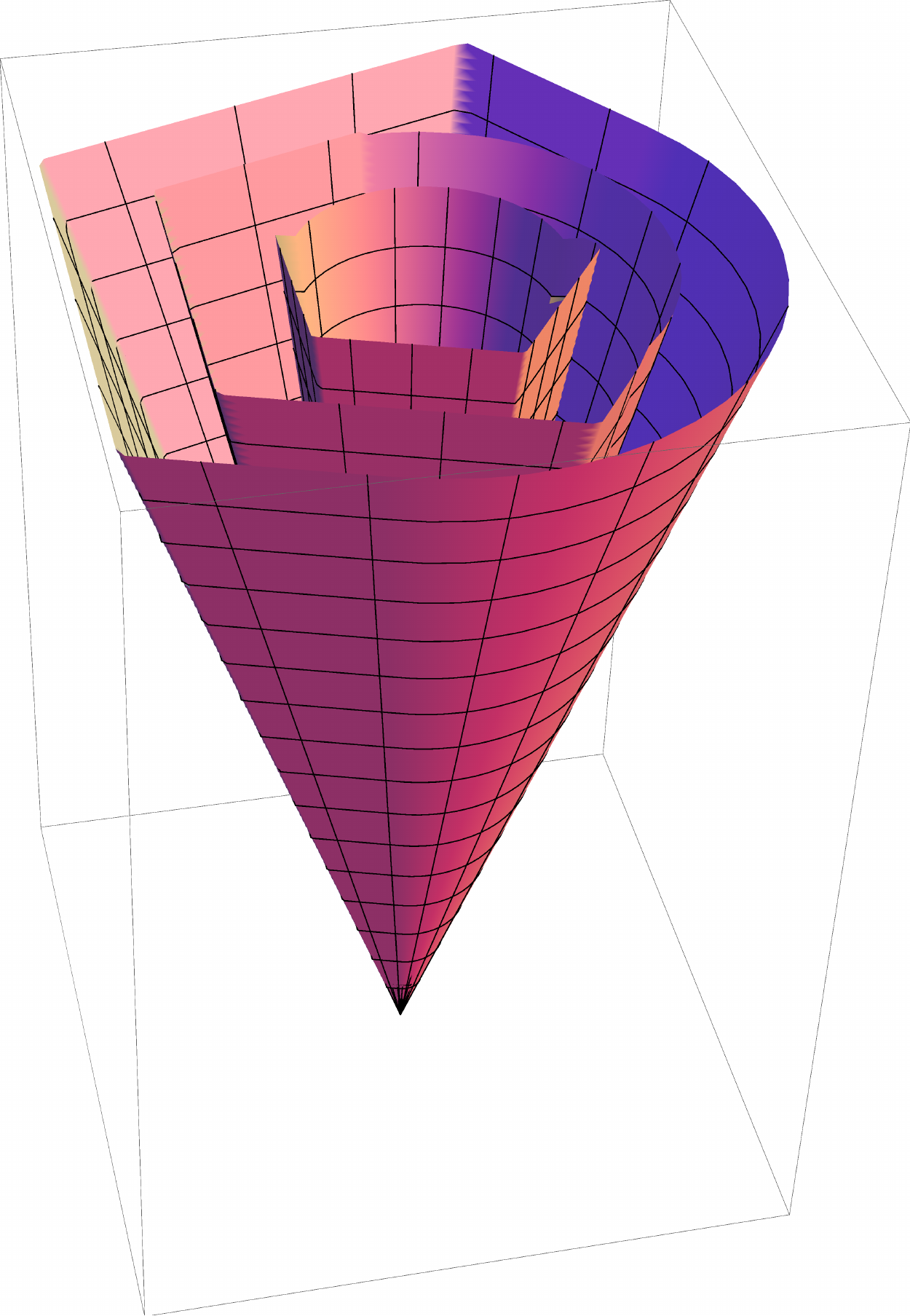}\\[5pt]
\vspace{.6cm}
{Master's Thesis  submitted to} \\[5pt]
\emph{{Albert-Ludwigs-Universität Freiburg}}\\
\vspace{.6cm}

 \includegraphics[width=0.19\textwidth]{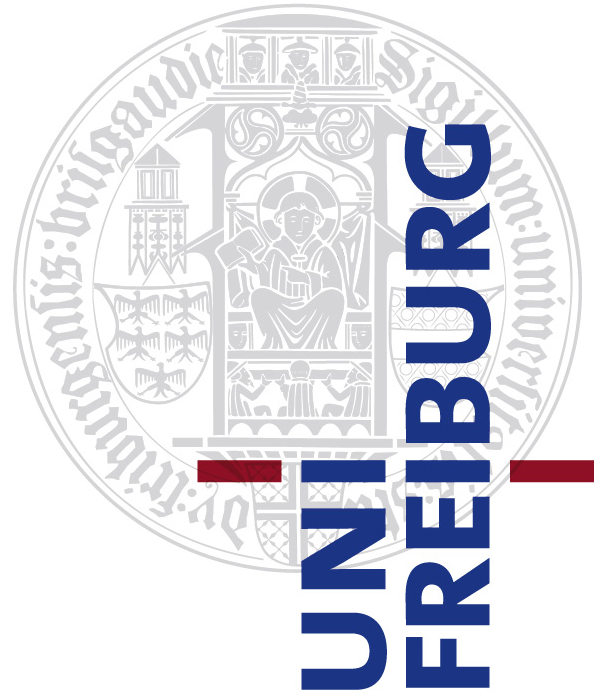}\\[5pt]

{Febuary 2014}
 \end{center}
 
\end{titlepage}

\newpage

\tableofcontents
\newpage


\section*{Abstract}

The von Neumann entropy plays a vital role in quantum information theory. As the Shannon entropy does in classical information theory, the von Neumann entropy determines the capacities of quantum channels. Quantum entropies  of composite quantum systems are important for future quantum network communication their characterization is related to the so called \emph{quantum marginal problem}. Furthermore, they play a role in quantum thermodynamics. In this thesis the set of quantum entropies of multipartite quantum systems is the main object of interest. The problem of characterizing this set is not new -- however, progress has been sparse, indicating that the problem may be considered hard and that new methods might be needed. Here, a variety of different and complementary aprroaches are taken.

First, I look at global properties. It is known that the von Neumann entropy region -- just like its classical counterpart -- forms a \emph{convex cone}. I describe the symmetries of this cone and highlight geometric similarities and differences to the classical entropy cone. 

In a different approach, I utilize the \emph{local} geometric properties of \emph{extremal rays} of a cone.  I show that quantum states whose entropy lies on such an extremal ray of the quantum entropy cone have a very simple structure. 

As the set of all quantum states is very complicated, I look at a simple subset called \emph{stabilizer states}. I improve on previously known results by showing that under a technical condition on the local dimension, entropies of stabilizer states respect an additional class of information inequalities that is valid for random variables from linear codes.

In a last approach I find a representation-theoretic formulation of the classical marginal problem simplifying the comparison with its quantum mechanical counterpart. This novel correspondence yields a simplified formulation of the group characterization of classical entropies (IEEE Trans. Inf. Theory, 48(7):1992–1995, 2002) in purely combinatorial terms.

\newpage
\section*{Zusammenfassung}
Die Von-Neumann-Entropie spielt eine zentrale Rolle in der Quanteninformationstheorie. Wie die Shannonentropie in der klassischen Informationstheorie charakterisiert die Von-Neumann-Entropie die Kapazität von Quantenkanälen. Quantenentropien von Quantenvielteilchensystemen bestimmen die Kommunikationsrate über ein Quantennetzwerk und das problem ihrer Charakterisierung ist verwandt mit dem sogenannten Quantenmarginalproblem. Außerdem spielen sie in der Quantenthermodynamik eine Rolle. In dieser Arbeit liegt das Hauptaugenmerk auf der Menge der Quantenentropien von Quantenzuständen einer bestimmten Teilchenzahl. Das Characterisierungsproblem für diese Menge ist nicht neu -- Fortschritte wurden bisher jedoch nur wenige erzielt, was darauf hinweist, dass das Problem als schwierig bewertet werden kann und dass wahrscheinlich neue Methoden benötigt werden, um einer Lösung näher zu kommen. Hier werden verschieden Herangehensweisen erprobt.

Zuerst gehe ich das Problem aus einer ``globalen Perspektive'' an. Es ist bekannt, dass die Region aller Von-Neumann-Entropien einen konvexen Kegel bildet, genau wie ihr klassisches Gegenstück. Ich beschreibe Symmetrien dieses Kegels und untersuche Gemeinsamkeiten und Unterschiede zum klassischen Entropiekegel. 

Ein komplementärer Ansatz ist die Untersuchung von \emph{lokalen} geoemetrischen Eigenschaften -- Ich zeige, dass Quantenzustände, deren Entropien auf einem Extremstrahl des Quantenentropiekegel liegen, eine sehr einfache Struktur besitzen.

Da die Menge aller Quantenzustände sehr kompliziert ist, schaue ich mir eine einfache Untermenge an: die Menge \emph{Stabilisatorzustände}. Ich verbessere bisher bekannte Ergebnisse, indem ich zeige, dass die Entropien von Stabilisatorzuständen eine zusätzliche Klasse von Ungleichungen erfüllen, die für Zufallsvariablen aus linearen Codes gelten.

Ein vierter Ansatz, den ich betrachte, ist der darstellungstheoretische. Ich formuliere das klassische Marginalproblem in der Sprache der Darstellungstheorie, was den Vergleich mit dem Quantenmarginalproblem vereinfacht. Diese neuartige Verknüpfung ergibt eine vereinfachte kombinatorische Formulierung der Gruppencharakterisierung von klassischen Entropien (IEEE Trans. Inf. Theory, 48(7):1992–1995, 2002).

\newpage 

\section*{Acknowledgements}

First of all I want to thank my supervisor David Gross. Only his support and encouragement as well as our discussions made this thesis possible, our collaboration was a great pleasure. I want to thank Michael Walter for great discussions. Special thanks go to all of the quantum correlations research group at University of Freiburg, which has been a splendid environment for the last year. In particular I want to thank Lukas Luft and Rafael Chaves for sharing their perspective on convex geometry and Shannon entropic inequalities. I want to thank Joe Tresadern for proofreading part of this thesis.

I want to thank my parents, Jaqueline and Klaus Majenz, for their support. Furthermore I want to thank Laura König as well as my housemates for their leniency when I missed some elements of reality every now end then due to their low dimensionality.

I Acknowledge financial support by the German National Academic Foundation.

\newpage
\chapter{Introduction}
\section{Motivation}

The main goal of this thesis is a better understanding of the entropies of multi-particle quantum states. This is an important task from a number of perspectives.

\paragraph*{}First, there is the information theoretic perspective. In both classical and quantum information theory, entropy is a key concept which determines the capacity of a comunication channel \cite{shannon1948mathematical, schumacher1995quantum}. In simple communication scenarios with one sender and one receiver, it suffices to study the entropies of bipartite systems, i.e.\ of two random variables or a bipartite quantum state. Bipartite entropies are well understood in both classical and quantum information theory.

\paragraph{}In a network scenario, however, where data has to be sent from multiple senders to multiple receivers, relations between joint and marginal entropies of multiple random variables determine the constraints on achievable communication rates \cite{yeung2008information}. Although little progress has been made for almost fifty years, in the past fifteen years finally there have been results towards understanding the laws governing the entropies of more than two random variables. In the quantum setting virtually nothing is known. In particular, as the bipartite case shows very strong similarities between quantum and classical entropies, it is promising to search for analogues of the aforementioned recent multivariate classical results.

\paragraph{}In this regard the problem of characterizing the region of possible entropy vectors of multipartite quantum states naturally appears as part of one of the overarching programs in quantum information theoretic research: If possible, find quantum analogues to the results and concepts from classical information theory, otherwise shed light on the differences between the two theories.

Sometimes insights from quantum information theory also have an impact on classical information theory \cite{ozols2013bound}, providing another motivation to study quantum information problems that might be still far from possible applications.

\paragraph*{} Another perspective is that of the \emph{quantum marginal problem}. This is defined more formally in Section \ref{quantmarg}, and can be stated as follows:
Given a multipartite quantum system and some reduced states, is there a global state of that system that is compatible with the given reductions?

A general solution to this problem would have vast implications for quantum physics and quantum information theory. It would, for example, render the task of finding ground states of lattice systems with nearest neighbor interaction \cite{eisert2008gaussian} and the calculation of binding energies and other properties of matter \cite{klyachko2006quantum} computationally tractable. This is unfortunately too optimistic an assumption as the quantum marginal problem turns out to be QMA-complete \cite{liu2006consistency}, as are several specialized variants of practical relevance \cite{liu2007quantum, wei2010interacting}. This is believed to imply that these problems are intractable even for a quantum computer, as QMA is the quantum analogue of the complexity class NP.

\paragraph{}Due to the difficulty of the quantum marginal problem there is little hope for a general solution. But this is not the end of the research program, it is natural to study a ``coarse-grained ''variant: Quantum entropies are functions of the marginals and seem to be amenable to analytic insight.

\paragraph*{}A third motivation to study multi-particle entropies comes from the very field where researchers defined the first entropies, that is from thermodynamics. The strong subadditivity inequality \cite{lieb1973proof} of the von Neumann entropy, for example, has applications in quantum thermodynamics. In one of these applications it is used to prove that the mean entropy of the equilibrium state of an arbitrary quantum system exists in the thermodynamic limit \cite{lanford2003mean, wehrl1978general}, underpinning the correctness of the mathematical formalism used to explicitly take the latter. The application of information theoretic tools in thermodynamics is possible because the respective notions of entropy are mathematically identical and also physically closely related \cite{landauer1961irreversibility, del2011thermodynamic}.

The applications of strong subadditivity suggest that further results in the direction of understanding quantum entropies of multi-particle systems could lead to thermodynamic insights as well.

\section{Goals and Results}

The general program pursued in this thesis -- i.e. understanding multiparticle quantum entropies -- is not new. Several experienced researchers have worked on it before \cite{lieb1973proof, pippenger2003inequalities, linden2005new, christandl2006spectra, christandl2012recoupling, cadney2012infinitely, ibinson2007all, linden2013structure, gross2013stabilizer, linden2013quantum}. Progress, however, has been scarce. In that sense, the problem of finding constraints on quantum entropies can be considered "hard" and it would be too much to ask for anything approaching a complete solution. As a result, we have pursued a variety of very different approaches to the problem in order to gain partial insights. The overall goal of this thesis is to show which approaches could be promising. We are therefore not solving the problem completely, but instead determining which methods may prove useful.
As a consequence, The results obtained in this thesis therefore comprise of a collection of relatively independent insights, rather than being one 'final theorem'. For the benefit of the reader, a list of these individual results are given below.

\paragraph{Global perspective.} A quantum state on an $n$-fold tensor product Hilbert space gives rise to $2^n$ entropies, one for each subset of subsystems. Collecting them in a real vector yields a point in the high-dimensional vector space $\R^{2^n}$. It turns out, that the set of all such \emph{entropy vectors} forms a \emph{convex cone} \cite{pippenger2003inequalities}. The same is known to be true for the classical entropy region defined analogously \cite{zhang1997non}. In Chapter \ref{sec:mor} some global properties of this geometric object are investigated:
\begin{itemize}
 \item Proposition \ref{Gammasym} shows that the quantum entropy cone has a symmetry group that is strictly larger than the known symmetry group of its classical analogue.
 \item Corollary \ref{wmfacets} uses this symmetry to show that some known quantum information inequalities define \emph{facets} of the quantum entropy cone, i.e. they are independent from all other (known and unknown) quantum information inequalities.
 \item The classical entropy cone is known to have  the property that all interesting information inequalities satisfy a number of linear relations \cite{chan2003balanced}. Such information inequalities are called \emph{balanced}. Corollary \ref{sigmastardirsum} and the preceding discussion clarify the geometric property underlying this result: The \emph{dual} of the quantum entropy cone has a certain direct sum structure. Theorem \ref{nodirsum} proves a characterization of cones whose \emph{duals} have this structure. Corollary \ref{gammanodirsum} uses this theorem and the facets identified in Corollary \ref{wmfacets} to show that the quantum entropy cone does not have this simpler structure and that therefore the result from \cite{chan2003balanced} does not have a straightforward quantum analogue.
\end{itemize}

\paragraph{Local perspective.} The most important points of a convex set are its \emph{extremal points}. Here we study the local geometry of \emph{extremal rays}, which are the cone analogues of extremal points. In particular, we characterize quantum states that have an entropy vector that lies on such an extremal ray.
\begin{itemize}
 \item Theorem \ref{poprays} proves that all non-trivial states whose entropy vectors lie on an edge of the quantum entropy cone have the property that all marginal spectra are \emph{flat}, i.e. that the reduced states have only one distinct nonzero eigenvalue. This is a very simple structure and narrows down the search for states on extremal rays tremendously.
 \item Theorem \ref{classpoprays} provides an analogous result for the classical entropy cone. 
\end{itemize}

\paragraph{Variational perspective.} As the characterization of the whole quantum entropy cone for $n\ge4$ parties has so far proved elusive, and even the classical entropy cone is far from characterized in this case, it seems reasonable to start by finding simpler inner approximations. This can also be done by looking at a subset of states that has additional structure such as to allow for a direct algebraic characterization of the possible entropy vectors.

One subset that allows for such an algebraic approach is the set of \emph{stabilizer states} \cite{linden2013quantum, gross2013stabilizer}. In Chapter \ref{stabs} the results from \cite{gross2013stabilizer} and \cite{linden2013quantum} are improved: 
\begin{itemize}
\item Corollary \ref{stabsqfree} states that under a technical assumption on the local Hilbert space dimension, entropies from stabilizer states satisfy an additional class of linear inequalities that governs the behavior of linear network codes, the \emph{linear rank inequalities}. The result includes the important qubit case. This partially answers a question raised in \cite{linden2013quantum} and shows that stabilizer codes behave similar to classical linear codes from an entropic point of view. 
\end{itemize}

\paragraph{Structural perspective.} For the classical entropy cone \cite{chan2002relation} provides a remarkable characterization result: for a given entropy vector there exists a group and subgroups thereof such that the entropies are determined by the relative sizes of the subgroups. Considerable research effort has been directed at finding an analogous relation for quantum states \cite{christandl2006spectra, christandl2012recoupling}. Chapter \ref{CYR} is concerned with the result from \cite{chan2002relation} and possible quantum analogues:
\begin{itemize}
 \item Theorem \ref{grouplesscy} recasts the main result from \cite{chan2002relation} purely in terms of certain combinatorial objects known from mathematical statistics called \emph{type classes}. This characterization is simpler in the sense that type classes are much simpler objects than finite groups.
 \item Section \ref{strings} provides a connection between strings and certain representations of the symmetric group called \emph{permutation modules}.
 \item This formalism allows for representation-theoretic proofs of the Shannon-type information inequalities such as the strong subadditivity for the Shannon entropy (Proposition \ref{repssa}).
 \item Section \ref{repcy} gives a novel decomposition of $\left(\C^d\right)$ into a direct sum of permutation representations of the direct product of symmetric groups $S_n\times S_d$. This allows for a simple argument why Theorem \ref{grouplesscy} does not have a direct quantum analogue.
 \item Theorem \ref{UtoS} gives a formula for the decomposition of \emph{Weyl modules} restricted from the unitary to the symmetric group into irreducible representation of the latter as a byproduct.
\end{itemize}

\section{Overview}

After this introduction, there are two chapters devoted to introducing the mathematical and the physical and information theoretical fundamentals respectively. In Chapter \ref{math} the relevant mathematical background is discussed, that is convex geometry, Lie groups and Lie algebras, the group algebra and representation theory. The section about representation theory is somewhat longer, as deeper results from that field are used in Chapters \ref{stabs} and \ref{CYR}, where as the other sections are mostly dedicated to introducing the concepts and fixing a notation. Chapter \ref{info} contains an introduction to the information theoretical and some physical concepts, i.e.\ classical information theory, quantum information theory and some concepts from quantum mechanics. In this chapter the classical and quantum entropy cones are introduced that are the main objects of study in this thesis.

Chapter \ref{sec:mor} is concerned with the convex geometry of entropy cones. Section \ref{sym} clarifies the symmetries of the quantum entropy cone. 
Section \ref{bal} is concerned with investigating the possibility of generalizing a result from classical information theory \cite{chan2003balanced}.
The last short section in this chapter, Section \ref{sub}, reviews a class of maps between entropy cones of different dimensions introduced in \cite{ibinson2008quantum} and presents them in a more accessible way using the cone morphism formalism.

Chapter \ref{differential} makes a complementary approach to characterizing the quantum entropy cone: While Chapter \ref{sec:mor} investigates global properties by looking at symmetry operations, this Chapter is concerned with the local geometry of extremal rays.
In Section \ref{classdiff} the classical case is investigated with the techniques developed for the quantum case.

Chapter \ref{stabs} introduces the set of stabiliser states and their description by a finite phase space. The independently obtained result in \cite{linden2013quantum} and \cite{gross2013stabilizer} is also strengthened by partly answering a question posed in \cite{gross2013stabilizer}.

Chapter \ref{CYR}, is concerned with the representation theoretic point of view on the quantum marginal problem and quantum information inequalities introduced in \cite{christandl2006spectra}. A The classical result that inspired the research in this direction, \cite{chan2002relation}, is reviewed and reformulated in a more information theoretic way using type classes. 

A quantum analogue of the construction is attempted, but only succeeds for the trivial case $n=1$.

\section{Conventions}
The following conventions and notations are used in this thesis:
\begin{itemize}
	\item $\N=\left\{0,1,2,3,...\right\}$ is the set of integers including zero
	\item $\log(x)$ is the logarithm with basis two, otherwise the basis is specified as in $\log_{10}(x)$, $\ln$ is the natural logarithm
	\item $\R_+=\{x\in\R|x>0\},\ \R_{\ge0}=\{x\in\R|x\ge 0\}$
	\item In a topological space, given a set $A$ I denote its closure by $\overline{A}$.
	\item For a subset $A\subset M$ the complement of a is denoted by $A^c=M\setminus A$. If $A=\{a\}$ is a singleton, I write $a^c$ instead of $A^c$.
	\item $A:=B$ or $B=:A$ means ``define $A$ to be equal to $B$'', $A=B$ means ``$A$ is equal to $B$''
\end{itemize}


\chapter{Mathematical Background}\label{math}
\section{Convex Geometry}\label{conv}

As already mentioned in the introduction, convex geometry plays an important role in the characterization efforts for classical and quantum joint entropies. In particular, the notion of a convex cone is important when investigating joint entropies, as both the set of Shannon entropy vectors of all n-partite probability distributions and the set of von Neumann entropy vectors of all n-partite quantum states, which I will define in Sections \ref{classical} and \ref{quantum} respectively, can be proven to form convex cones up to topological closure. In this chapter I introduce some basic notions of convex geometry. A more careful introduction can be found for example in \cite{barvinok2002course}.

Roughly speaking, a body is \emph{convex}, if it has neither dents nor holes. Mathematically, let us make the following
\begin{defn}[Convex Set]
	Let $V$ be a real vector space. A subset $B\subset V$ is called convex, if
	\begin{equation}
		\, \forall x,y\in B:\, \forall \lambda \in [0,1]: \lambda x+(1-\lambda)y\in B.
	\end{equation}
\end{defn}
The concept that is most important for this thesis among the ones introduced in this section is the (convex) cone. We define a cone to be a convex set that invariant under positive scaling, i.e.
\begin{defn}[Cone]
	Let $V$ be a real Vector space. A convex subset $C\subset V$ is a cone, if
	\begin{equation}
		\, \forall x\in C, \, \forall \lambda \in \R_{+}: \lambda x\in C
	\end{equation}

\end{defn}

For an arbitrary subset $A\subset V$ we define the \emph{convex hull} 
\begin{equation}
	\conv (A)=\left\{\lambda x+(1-\lambda)y|x,y\in A, \lambda \in [0,1]\right\}
\end{equation}
 as the smallest convex set that contains the original one and analogously the conic hull $\cone(A)=\R_{\ge0}\conv(A)$.
Simple examples of cones are the open and the closed quadrants in $\R^2$, the open and the closed octants in $\R^3$ or the eponymous one, $C_{\bigcirc}=\left\{(x,y,z)\in\R^3\Big|x^2+y^2-z^2\le0\right\}$ shown in Figure \ref{cone}.

A \emph{face} of a convex set is, roughly speaking, a flat part of its boundary, or, mathematically precisely put,
\begin{defn}[Face]
	Let $V$ be a real vector space and $A\subset V$ convex. A face of $A$ is a subset $F\subset \bar A$ of its \emph{closure} such that there exists a linear functional $f:V\to\R$ and a number $\alpha\in\R$ with
	\begin{equation}
		F=\bar A\cap \left\{v\in V|f(v)= \alpha\right\}\text{ and } A\cap \left\{v\in V|f(v)< \alpha\right\}=\emptyset.
	\end{equation}
If the face $F=\{v_0\}$ is a singleton, $v_0$ is called an \emph{exposed point}. A face $F$ is called \emph{proper} if $\emptyset\not=F\not=A$. If there is no proper face that contains a face $F$ except for $F$ itself, we call $F$ a \emph{facet}.
\end{defn}
Note that for a proper face of a cone one always has $\alpha=0$.

A \emph{base} of a cone $K$ is a minimal convex set $B\subset K$ that generates $K$ upon multiplication with $\R_{\ge 0}$, i.e.
\begin{defn}[Base]
	Let $K$ be a cone. A base of $K$ is a convex set $B\subset K$ such that for each $v\in K$ there exist unique $b\in B$ and $\lambda \in \R_{\ge 0}$ with $v=\lambda b$.
\end{defn}
A \emph{ray} is a set of the form $\R_{\ge 0}v$ for a vector $v\in\R^n$. A cone contains each ray that is generated by one of it's elements, and there is a natural bijection between the set of rays and any base. That motivates the definition of \emph{extremal rays}, which correspond to extremal points of any base: 
\begin{defn}[Extremal Ray]
	Let $K\subset\R^n$ be a convex cone. A ray $R\subset K$ is called extremal, if for each $v\in R$ and each sum decomposition $v=x+y$ with $x,y\in K$ we have $x,y\in R$. We denote the set of extremal rays of $K$ by $\mathrm{ext}(K)$.
\end{defn}
In convex geometry \emph{duality} is an important concept. Instead of describing which points are in a convex set, one can give the set of affine inequalities that are fulfilled by all points in the convex set. By inequality we always mean statements involving the non-strict relations $\le$ and $\ge$. If a certain affine inequality is valid for all elements of a cone, then also its homogeneous, i.e.\ linear, version holds. By fixing the exclusive usage of either $\ge$ or $\le$, any linear inequality on a vector space $V$ can be described by an element of the \emph{dual space} $V^*$. We adopt the convention to use $\ge$ and define the
\begin{defn}[Dual Cone]
	Let $V$ be a real vector space and $V^*$ its dual space. Let $K\subset V$ be a convex cone. The dual cone is defined by
	\begin{equation}
		K^*=\left\{x\in V^*|x(y)\ge 0\, \forall y\in K\right\}.
	\end{equation}
\end{defn}
If $\dim V<\infty$ we have $V\cong \R^n$ and hence $V^*\cong V$ via the standard inner product in $\R^n$. The extremal rays of the dual cone are exactly the ones corresponding to facets.

Convex sets come n different shapes, e.g.\ a circle is convex as well as a triangle. An important difference between the two is that the latter is described by finitely many faces or finitely many extremal points.
\begin{defn}[Polyhedron, Polytope]
	Let $V$ be a real vector space. A convex set $B\subset V$ is called a \emph{polyhedron}, if it is the intersection of finitely many halfspaces, i.e.\ there exist a finite number $k$ of functionals $f_1, ..., f_k \subset V^*$ and a real number $\alpha_j,\ j=1,...,k$ for each functional such that
	\begin{equation}
		B=\left\{v\in V|f_j(v)\ge \alpha_j,\ j=1,...,k\right\}.
	\end{equation}
	$B$ is, in addition, compact, it is called a \emph{polytope}.
\end{defn}
We call a cone \emph{polyhedral}, if it is a polyhedron.

As the classical and quantum entropy cones are not exactly cones but only after topological closure, I reproduce a characterization result here for such sets. Adopting the notions in \cite{pippenger2003inequalities}, we say a subset $A\subset V$ of a vector space $V$ with $\dim V<\infty$ is \emph{additive}, if $\, \forall x,y\in A: x+y\in A$, and a set is said to be \emph{approximately diluable} if for all $\epsilon>0$ there exists a $\delta >0$, such that for all $x\in A,  0\le \lambda \le \delta$ there exists a $ y\in A$ such that $\left\|\lambda x-y\right\|<\epsilon$. Note that, as we are talking about finite dimensional vector spaces, all norms are equivalent, so we do not have to specify. It turns out that a set that is additive and approximately diluable turns into a cone after taking the closure:
\begin{prop}[\cite{pippenger2003inequalities}]
	Let $V$ be a real vector space and $A\subset V$ additive and approximately diluable. Then $\overline{A}$ is a convex cone.
\end{prop}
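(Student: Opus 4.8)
The plan is to establish two closure properties of $\overline{A}$ — additivity (closure under vector addition) and closure under multiplication by positive scalars — and then note that convexity follows formally: for $x,y\in\overline A$ and $\lambda\in(0,1)$ one writes $\lambda x+(1-\lambda)y$ as the sum of the two elements $\lambda x,(1-\lambda)y\in\overline A$, the endpoints $\lambda\in\{0,1\}$ being trivial. By the definition of a cone given above (a convex set invariant under positive scaling), this is exactly the claim. One may harmlessly assume $A\neq\emptyset$, the empty case being vacuous.

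Additivity of $\overline A$ is immediate from continuity of addition: if $a_n\to a$ and $b_n\to b$ with $a_n,b_n\in A$, then $a_n+b_n\in A$ by additivity of $A$ and $a_n+b_n\to a+b$, hence $a+b\in\overline A$ (finite-dimensionality guarantees the topology is well-behaved, as already noted in the text). In particular, by induction $kx\in A$ for every $x\in A$ and every integer $k\ge1$ — a fact that will be used crucially below.

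The substantive step is closure of $\overline A$ under positive scalars. It suffices to show $\lambda x\in\overline A$ whenever $x\in A$ and $\lambda>0$: the general case $x\in\overline A$ then follows by approximating $x$ by elements of $A$, using that $\overline A$ is closed and that multiplication by the fixed scalar $\lambda$ is continuous. So fix $x\in A$, $\lambda>0$ and $\epsilon>0$. The key idea is \emph{not} to dilute $x$ itself — the admissible dilution window shrinks with $\epsilon$ and need not contain $\lambda$ — but to dilute a large integer multiple $Nx$, which still lies in $A$. Concretely: approximate diluability provides $\delta>0$ (uniform over all of $A$, for this $\epsilon$); pick an integer $N\ge1$ with $\lambda/N\le\delta$; then $Nx\in A$, and applying approximate diluability to the element $Nx$ with scaling factor $\lambda/N\in[0,\delta]$ yields $y\in A$ with $\norm{\tfrac{\lambda}{N}(Nx)-y}=\norm{\lambda x-y}<\epsilon$. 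Since $\epsilon$ was arbitrary, $\lambda x\in\overline A$.

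The one point I expect to be the main obstacle in \emph{finding} the argument is precisely this decoupling: one must realize that the largeness of $\lambda$ can be absorbed into $N$ so that $\lambda/N$ falls inside the small dilution window $[0,\delta(\epsilon)]$, that $Nx\in A$ for every $N$ thanks to additivity, and that the argument closes only because the $\delta$ in the definition of approximate diluability is uniform over $A$ (so it may be chosen before $N$). Once this is seen, everything else is routine continuity and closure bookkeeping. As an aside, taking scaling factor $0$ in approximate diluability also gives $0\in\overline A$, though this is not needed for the cone property as defined.
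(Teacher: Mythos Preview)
The paper does not actually prove this proposition; it is stated with a citation to Pippenger and no proof is given in the text. Your argument is correct and is the standard one: additivity passes to the closure by continuity, and the scaling step is handled exactly as you do, by absorbing a large scalar $\lambda$ into an integer multiple $Nx\in A$ so that the remaining factor $\lambda/N$ lands in the dilution window $[0,\delta]$, exploiting that $\delta$ is uniform over $A$.
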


\begin{figure}
\centering
 \includegraphics[width=7cm]{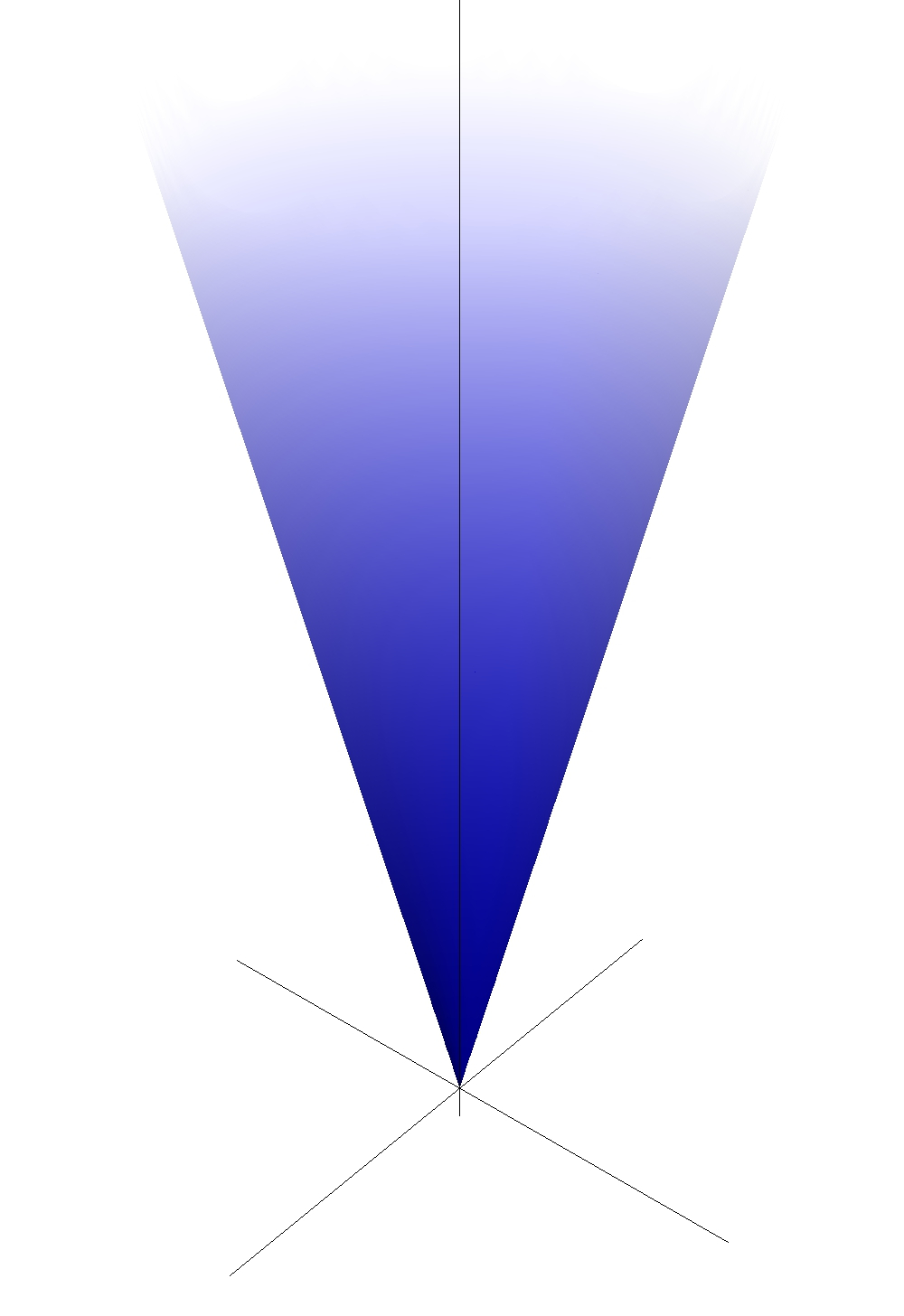}
 \caption{A cone.}\label{cone}
\end{figure}

To investigate relations between different cones and to find their symmetries, we would like to introduce a class of maps between vector spaces containing cones that preserves their structure. The set of maps will be a subset of the homomorphisms of the ambient vector spaces that map cone points to cone points, i.e.\ 
\begin{defn}[Cone Morphism]
	Let $V_1$, $V_2$ be real vector spaces, $K_1\subset V_1$, $K_2\subset V_2$ cones. A map $\phi \in \hom(V_1,V_2)$ is called cone morphism, if
	\begin{equation}
		\phi(K_1)\subset K_2.
	\end{equation}
	If $\phi|_{K_1}$ is injective and $\phi(K_1)=K_2$, $\phi$ is called a \emph{cone isomorphism}, in this case $K_1$ and $K_2$ are called \emph{isomorphic}. The set of all such cone morphisms is denoted by $\mor(K_1,K_2)$.
\end{defn}
\begin{rem}
	The notion of a cone isomorphism introduced here coincides with the notion of an \emph{order isomorphism} in the theory of ordered vector spaces.
\end{rem}

Note that a cone isomorphism is not always a vector space isomorphism. However, if $\dim K_i=\dim V_i$, then a cone isomorphism $\phi$ is also a vector space isomorphism. A cone homomorphism $\phi\in \mor(K_1,K_2)$ naturally induces a cone homomorphism $\phi^\dagger\in\mor\left(K_2^*,K_1^*\right)$ by pulling back functionals via $\phi$, i.e.\ $\phi^\dagger(f)=f\circ\phi$. If we look at an arbitrary linear map $L: V_1\to V_2$ we get a new cone in $V_2$ from a cone $C\subset V_1$, that is $C'=L(C)$. Can we express $C'^*$ by $C^*$ and $L$? We calculate
\begin{eqnarray}
	L(C)^*&=&\left\{f\in V_2^*|f(L(x))\ge 0\, \forall x\in C\right\}=\left\{f\in V_2^*|(L^\dagger f)(x)\ge 0\, \forall x\in C\right\}\nonumber\\
	&=&\left\{f\in V_2^*|L^\dagger f\in C^*\right\}=(L^\dagger)^{-1}C^*
\end{eqnarray}
where $(L^\dagger)^{-1}$ is the set-valued inverse of the adjoint of $L$.


\section{Groups and Group Algebras}

The Following chapter is dedicated to a concise introduction to Lie groups, and group algebra, as they may be not familiar to all readers and also to fixing a notation for the subsequent chapters. A reference for a more extensive introduction that is still focused on representation theory is \cite{fulton1991representation}.


\subsection{Lie Groups}\label{lie}
A Lie group is, roughly speaking, a Group that also is a $\mathcal{C}^\infty$-manifold and in which the group structure is smooth with respect to differentiation on the manifold. Recall the definition of a
\begin{defn}[Manifold]
	A $\mathcal{C}^\infty$-manifold is a topological space $M$ (Hausdorff, paracompact) with the following properties:
	\begin{enumerate}
		\item[(i)] There exists a \emph{dimension} $n\in\N$ such that for all $x\in M$ there is an open neighborhood $U\subset M$ of $x$ and a homeomorphism $\phi: U\to \R^n$ called \emph{chart}.
		\item[(ii)] For two such maps, $\phi_1: U_1\to \R^n$ and $\phi_2: U_2\to \R^n$ with $U_1\cap U_2\not=\emptyset$, the map $\phi_2\circ\phi_1^{-1}: \phi_1(U_1\cap U_2)\to \R^n$ is $\mathcal{C}^\infty$ or \emph{smooth}.
	\end{enumerate}
\end{defn}
In this text all manifolds are $\mathcal{C}^\infty$. A map $f: M\to N$ between manifolds is called smooth, if $\phi_N\circ f \circ \phi_M^{-1}$ is smooth, where $\phi_M (\phi_N)$ are charts on $M$($N$) respectively.  With this in mind we can go forward and define a
\begin{defn}[Lie Group]
	A Lie Group is a group $(G,\cdot)$ with the additional property that $G$ is a manifold and the maps $\cdot\, :G\times G\to G$ and $\mathrm{inv}: G\to G, g\mapsto g^{-1}$ are smooth. Note that the $G\times G$ is equipped with the obvious manifold structure.
\end{defn}
Lie groups can be characterized by manifold properties such as connected, simply connected or compact, and by group properties such as simple or Abelian. A manifold can be a complicated object, but we can always map its local properties to its \textit{tangent space}, the same can be done with the group structure of a Lie group. This motivates the definition of a
\begin{defn}[Lie Algebra]
	A Lie algebra $\mathfrak{g}$ is a vector space over a field with characteristic $0$, with a bilinear map $[\cdot,\cdot]:\mathfrak{g}\times \mathfrak{g}\to \mathfrak{g}$ called \emph{Lie bracket}, which fulfills the following properties:
	\begin{enumerate}
		\item[(i)] it is alternating, that is $[X,Y]=-[Y,X]\, \forall X,Y\in\mathfrak{g}$
		\item[(ii)] it fulfills the Jacobi identity $[X,[Y,Z]]+[Y,[Z,X]]+[Z,[X,Y]]=0$
	\end{enumerate}
\end{defn}
A representation of a Lie algebra $\mathfrak{g}$ is vector space homomorphism $\phi: \mathfrak{g}\to \C^{n\times n}$ which maps the Lie bracket to the commutator:
\begin{equation}
	\phi([X,Y])=[\phi(X),\phi(Y)]
\end{equation}

The tangent space of a Lie group $G$ has a natural bilinear map of this form. To construct it we write down the conjugation map
\begin{equation}
	\Psi(g,h)=ghg^{-1}
\end{equation}
and differentiate in both arguments at $g=h=\mathds{1}$. The resulting bilinear map makes the tangent space a Lie algebra, as can easily be checked for the special case of $G$ being a subgroup of $Gl(n,\C)$, the only case we will be dealing with. In that case the Lie bracket is the commutator.

The important fact about the Lie algebra of a Lie group is that it contains the essential part of the group structure in the sense that each representation of a Lie group $G$ defines a representation of its Lie algebra, and each representation of its Lie algebra defines a representation of the \emph{universal cover} of the connected component of the Identity. Usually the Lie groups are real manifolds, and therefore they have real Lie algebras, but often it is simpler to have a complex algebra, especially because $\C$ is algebraically closed. A helpful fact is that, given a Lie algebra $\mathfrak{g}$, the representations of the complexified Lie algebra $\mathfrak{g}_\C$ are irreducible if and only if the 
corresponding 
representation of the real Lie algebra is irreducible. 

The connection between Lie group and Lie algebra is even more explicit. An element $X$ of the Lie Algebra $\mathfrak{g}$ generates a one parameter subgroup of $G$: we just find a smooth curve $\gamma:[0,1]\to G$ with $\gamma(0)=\mathds{1}$ and $\dot{\gamma}(0)=X$ and define
\begin{equation}
	e^{sX}=\lim_{n\to\infty}\gamma\left(s\frac{1}{n}\right)^n.
\end{equation}
In matrix Lie groups/algebras this coincides with the matrix exponential.

An Important representation of a Lie algebra is the adjoint representation. A Lie algebra $\mathfrak{g}$ acts on itself by means of the bracket, i.e.\
\begin{equation}\label{ad}
	ad: \mathfrak{g}\to \mathfrak{gl}(\mathfrak{g}),\ \ ad(X)Y=[X,Y].
\end{equation}
The Jacobi identity ensures that the bracket is preserved under this vector space homomorphism, it thus really is a representation of $\mathfrak{g}$


\subsection{Group Algebras}
Let $G$ be a finite group and $\mathcal{A}(G)=\C G$ the free complex vector space over $G$. Then $\mathcal{A}(G)$ inherits the multiplication law from $G$ which makes it an associative unital algebra:
\begin{equation}
	\alpha\cdot\beta=\left(\sum_{g\in G}\alpha_g g\right)\left(\sum_{g\in G}\beta_g g\right)=\sum_{g, h\in G}\alpha_g \beta_h gh=\sum_{g\in G}\left(\sum_{h\in G}\alpha_{gh^-1}\beta_{h}\right)g
\end{equation}
$\mathcal{A}(G)$ is usually equipped with the standard inner product of $\C^{|G|}$ rescaled by the size of $G$:
\begin{equation}\label{galgip}
	(\alpha,\beta)=\frac{1}{|G|}\sum\alpha_g^* \beta_g
\end{equation}
The free complex vector space $\C M$ over any set $M$ is nothing else but the vector space of complex functions on $M$, so we can also view elements of the group algebra $\mathcal{A}(G)$ as complex functions on $G$. A \emph{projection} in $\mathcal{A}(G)$ is an element $0\not=p\in\mathcal{A}(G)$ with $p^2=p$. A projection is called \emph{minimal}, if it can not be decomposed into a sum of two projections.
The concept of a group algebra generalizes in a straightforward way to compact Lie groups, where the sums over $G$ have to replaced by integrals with respect to the invariant Haar measure that assigns the volume 1 to $G$.



\section{Representation Theory}
The basic results stated in this section can be found in textbooks like \cite{goodman1998representations} and \cite{fulton1991representation}. Given a group $G$ we can investigate homomorphisms $\phi: G\to \mathrm{Gl}(n,\mathds{k})$ to the general linear group of the $n$-dimensional Vector space over some field $\mathds{k}$ which are called \emph{representations} of $G$. More generally we write $\phi: G\to \mathrm{Gl}(V)$ for a representation on an arbitrary vector space $V$. The vector space which the group acts on is called representation space. The representation is called complex (real) representation if $\mathds{k}=\mathbb{C}$ ($\mathds{k}=\mathbb{R}$). In the context of quantum information theory we are almost exclusively concerned with complex representation, as quantum mechanics take place in a complex Hilbert space (Although Asher Peres once said that ``...quantum phenomena do not occur in a Hilbert space, they occur in a laboratory.'' \cite{nielsen2010quantum}, page 112). Any representation of a 
group $G$ can be extended by linearity to a representation of the group algebra $\mathcal{A}(G)$. Two representations $\phi_
1$ and $\phi_2$ are considered equivalent if there exists a vector space isomorphism $\psi: \C^n\to\C^n$ which acts as an intertwiner for the two representations:
\begin{equation}
	\phi_1\circ \psi=\phi_2
\end{equation}
A representation $\phi$ on $\C^n$ is called irreducible if it has no non-trivial proper invariant subspaces, otherwise it is called reducible. A representation $\phi$ on $\C^n$ is called completely reducible if $\C^n=\bigoplus_iV_i$, $V_i$ are invariant subspaces and $\phi|_{\Gl (V_i)}$ is irreducible. All representations of finite Groups are completely reducible. Also this result, which is built on the possibility of averaging over the group, generalizes to compact Lie groups. In the sequel we do not always distinguish between a representation and its representation space. Given a group $G$ and a representation $\phi: G\to \mathrm{Gl}(V)$ we say $V$ is a representation of $G$ and write $gu=v$ if $\phi(g)u=v$. An important tool in representation theory is Schur's lemma which characterizes the homomorphisms between two representations that commute with the action of th group:
\begin{lem}[Schur's Lemma]\label{schur}
	Let $V$ and $W$ be irreducible representations of a Group $G$, and let $\phi: V\to W$ be a vector space homomorphism that commutes with the action of $G$, i.e.
	\begin{equation}
		\phi(g v)=g\phi(v) \, \forall g\in G, v\in V.
	\end{equation}
	Then either $\phi=0$ or $\phi$ is an isomorphism. In particular, if $V=W$ then $\phi=\lambda \mathds{1}_V$ for some $\lambda\in\C$.
\end{lem}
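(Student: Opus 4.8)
The plan is to exploit the fact that the kernel and image of a $G$-equivariant map are themselves subrepresentations, and then to combine this with irreducibility. First I would observe that $\ker\phi\subset V$ is invariant under the action of $G$: if $v\in\ker\phi$, then $\phi(gv)=g\phi(v)=g\cdot 0=0$, so $gv\in\ker\phi$. Likewise $\im\phi\subset W$ is invariant: any element of $\im\phi$ has the form $\phi(v)$, and $g\phi(v)=\phi(gv)\in\im\phi$. Since $V$ is irreducible, $\ker\phi$ is either $\{0\}$ or all of $V$; since $W$ is irreducible, $\im\phi$ is either $\{0\}$ or all of $W$.

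Next I would do the case analysis. If $\phi\neq 0$, then $\ker\phi\neq V$, so $\ker\phi=\{0\}$ and $\phi$ is injective; and $\im\phi\neq\{0\}$, so $\im\phi=W$ and $\phi$ is surjective. Hence $\phi$ is a vector space isomorphism, and being $G$-equivariant it is an isomorphism of representations. This establishes the dichotomy ``$\phi=0$ or $\phi$ an isomorphism.''

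For the refinement when $V=W$, I would use that the underlying field is $\C$, which is algebraically closed, together with the finite-dimensionality (the representation space is $\C^n$). The characteristic polynomial of $\phi$ has a root $\lambda\in\C$, so $\phi$ has an eigenvalue $\lambda$. Consider the map $\psi:=\phi-\lambda\,\mathds{1}_V$. It is again linear and commutes with the $G$-action, since $\mathds{1}_V$ trivially does and equivariant maps form a vector space. Moreover $\psi$ is not injective, because its kernel contains a nonzero eigenvector. By the dichotomy just proved, $\psi$ must therefore be the zero map, i.e.\ $\phi=\lambda\,\mathds{1}_V$.

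I do not expect a serious obstacle here; the only points requiring a little care are verifying that $\ker\phi$ and $\im\phi$ are genuinely $G$-invariant (a one-line computation using equivariance) and noting explicitly where algebraic closure of $\C$ and finite dimensionality enter, namely in guaranteeing the existence of the eigenvalue $\lambda$ in the last step.
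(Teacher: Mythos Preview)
Your proof is correct and follows essentially the same route as the paper's: invariance of $\ker\phi$ and $\im\phi$ combined with irreducibility gives the dichotomy, and then the existence of an eigenvalue over $\C$ forces $\phi-\lambda\mathds{1}_V=0$. The only difference is cosmetic---you spell out the case analysis and the roles of algebraic closure and finite dimensionality a bit more explicitly than the paper does.
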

\begin{proof}
	Observe that if $\phi(v)=0$, then $\phi(gv)=g 0=0$, i.e.\ $\ker\phi$ is an invariant subspace of $V$ which can, by the irreducibility of $V$, only be zero or $V$. This proves that $\phi$ is either $0$ or injective. Also $\im\phi$ is invariant, as $g\phi(v)=\phi(g v)\in\im\phi$. This shows that $\phi$ is surjective, unless it is $0$. We conclude that $\phi$ is either $0$ or an isomorphism. For $V=W$, $\phi$ is an endomorphism of a vector space over the algebraically closed field $\C$, so it has an eigenvalue $\lambda$. Hence $\ker\left(\phi-\lambda \mathds{1}\right)\not=0$ and $\phi-\lambda \mathds{1}$ commutes with the action of $G$, so by the first part of this proof $\phi-\lambda \mathds{1}=0$
\end{proof}

As an important corollary of this lemma, we find the multiplicity of an irreducible representation $V$ of a group $G$ in some representation $W$ being equal to the dimension of the space of $G$-invariant homomorphisms i.e.\ of the space
\begin{equation}
	\mathrm{Hom}^G(W,V)=\left\{\phi\in\mathrm{Hom}(W,V)\big|\phi(gv)=g\phi(v)\, \forall g\in G, v\in W\right\}
\end{equation}

\begin{cor}\label{dimhom}
	Let $W=\bigoplus_\alpha V_\alpha^{\oplus m_\alpha}$ be a representation of a finite group $G$. Then
	\begin{equation}
		m_\alpha=\dim\mathrm{Hom}^G(W,V_\alpha)
	\end{equation}
\end{cor}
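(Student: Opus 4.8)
The plan is to apply Schur's Lemma (Lemma \ref{schur}) to the decomposition of $W$ into isotypic components and to use the fact that $\Hom^G$ is additive in each argument. First I would observe that for any two representations $A,B$ of $G$ one has $\Hom^G(A\oplus B, V_\alpha)\cong \Hom^G(A,V_\alpha)\oplus\Hom^G(B,V_\alpha)$: a homomorphism out of a direct sum is determined by its restrictions to the summands, and equivariance of the whole is equivalent to equivariance of each restriction, since the $G$-action on $A\oplus B$ is the direct sum action. Dually, $\Hom^G(A, V_\alpha)$ behaves additively in the first slot as well, so from $W=\bigoplus_\beta V_\beta^{\oplus m_\beta}$ we get
\begin{equation}
	\Hom^G(W,V_\alpha)\cong\bigoplus_\beta \Hom^G(V_\beta,V_\alpha)^{\oplus m_\beta},
\end{equation}
and hence $\dim\Hom^G(W,V_\alpha)=\sum_\beta m_\beta\dim\Hom^G(V_\beta,V_\alpha)$.

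Next I would compute $\dim\Hom^G(V_\beta,V_\alpha)$ for irreducibles. By Schur's Lemma, any equivariant map $V_\beta\to V_\alpha$ is either $0$ or an isomorphism, so if $V_\beta\not\cong V_\alpha$ the space is $\{0\}$ and the dimension is $0$. If $V_\beta\cong V_\alpha$, fix one isomorphism $\psi_0$; then for any equivariant $\phi$, the composite $\psi_0^{-1}\circ\phi$ is an equivariant endomorphism of $V_\alpha$, which by the last clause of Schur's Lemma (using that $\C$ is algebraically closed) is a scalar multiple of the identity. Hence $\Hom^G(V_\alpha,V_\alpha)=\C\,\mathds{1}_{V_\alpha}$ is one-dimensional, and more generally $\dim\Hom^G(V_\beta,V_\alpha)=\delta_{\alpha\beta}$ once we have fixed representatives of the isomorphism classes. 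Substituting into the displayed sum collapses it to the single term $\beta=\alpha$, giving $\dim\Hom^G(W,V_\alpha)=m_\alpha$, as claimed.

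I do not anticipate a serious obstacle here; the argument is essentially a bookkeeping exercise once Schur's Lemma is in hand. The one point that deserves a line of care is the additivity isomorphism for $\Hom^G$ — one should check that the natural vector-space isomorphism $\Hom(A\oplus B,V_\alpha)\cong\Hom(A,V_\alpha)\oplus\Hom(B,V_\alpha)$ restricts to the $G$-equivariant subspaces, which is immediate because the projections and inclusions between $A\oplus B$ and its summands are themselves $G$-equivariant. A second, purely notational, subtlety is that the multiplicities $m_\alpha$ are well-defined only up to choosing one representative $V_\alpha$ in each isomorphism class; with that convention fixed (as the statement implicitly assumes), the proof goes through verbatim. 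The same averaging-over-$G$ technique and Schur's Lemma extend the result to compact Lie groups, though that is not needed for the statement as given.
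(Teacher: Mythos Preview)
Your proof is correct and follows essentially the same route as the paper: decompose $W$ into its irreducible summands, apply Schur's Lemma to each piece $\Hom^G(V_\beta,V_\alpha)$, and read off $m_\alpha$. The only cosmetic difference is that the paper writes out a general $\phi\in\Hom^G(W,V_\alpha)$ as a block sum $\bigoplus_{\beta,i}\phi_{\beta i}$ and applies Schur to each block directly, whereas you first isolate the additivity isomorphism $\Hom^G(\bigoplus\cdot\,,V_\alpha)\cong\bigoplus\Hom^G(\cdot\,,V_\alpha)$ and then invoke Schur; the content is identical.
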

\begin{proof}
	Let $\phi$ be any element from $\mathrm{Hom}^G(W,V_\alpha)$. Then $\phi$ has the form
	\begin{equation}
		\phi=\bigoplus_\beta\bigoplus_{i=1}^{m_\beta}\phi_{\beta i}, \text{ with }\phi_{\beta i}: V_\beta\to V_\alpha.
	\end{equation}
	According to Schur's lemma (Lemma \ref{schur})
	\begin{equation}
	\phi_{\beta i}=\begin{cases}
	               	\lambda_i\mathds{1}_{V_\alpha}& \beta=\alpha\\ 0& \text{else}
	               \end{cases},
	\end{equation}
	with $\lambda_i\in\C$. Thus we have an obvious isomorphism
	\begin{equation}
		\mathrm{Hom}^G(W,V_\alpha)\stackrel{\sim}{\longrightarrow}\C^{m_\alpha}
	\end{equation}
	and the statement follows.
\end{proof}
The unitary representations of a finite group $G$, somewhat surprisingly, provide us with an orthonormal basis of the group algebra. Here we prove a first part of this fact:
\begin{thm}[Schur Orthogonality Relations, Part I]\label{schurortho1}
	Let $G$ be a finite Group. Let $\alpha$ label the equivalence classes of irreducible representations of $G$ and pick a unitary representative $U_\alpha: G\to U(V_\alpha)$ from each class. Then
	\begin{equation}\label{ureporth}
		(U_{\alpha i j},U_{\beta k l})=\delta_{\alpha\beta}\delta_{ik}\delta_{jl}\frac{1}{d_\alpha},
	\end{equation}
	where $(\cdot,\cdot)$ is the inner product of the group algebra \eqref{galgip} and $d_\alpha=\dim(V_\alpha)$. 
\end{thm}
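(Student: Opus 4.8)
The plan is to run the standard averaging argument that upgrades Schur's lemma (Lemma \ref{schur}) to an orthogonality statement, and then to specialise the resulting abstract intertwiner to matrix units in order to read off the matrix-element relation \eqref{ureporth}.

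First I would fix the two irreducible unitary representations $U_\alpha$ on $V_\alpha$ and $U_\beta$ on $V_\beta$ and, for an arbitrary linear map $M\colon V_\beta\to V_\alpha$, form the group average
\begin{equation}
	\tilde M=\frac{1}{|G|}\sum_{g\in G}U_\alpha(g)\,M\,U_\beta(g^{-1}).
\end{equation}
A one-line reindexing $g\mapsto hg$ shows $U_\alpha(h)\tilde M=\tilde M\,U_\beta(h)$ for all $h\in G$, so $\tilde M$ intertwines the two representations. By Schur's lemma it must vanish if $\alpha\neq\beta$, whereas for $\alpha=\beta$ it is of the form $\lambda\mathds{1}_{V_\alpha}$; taking the trace of both sides and using that the trace is conjugation-invariant gives $d_\alpha\lambda=\tr M$, hence $\tilde M=\frac{\tr M}{d_\alpha}\mathds{1}_{V_\alpha}$ in that case.

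Next I would choose $M$ to be the matrix unit $E_{pq}$, i.e.\ the map with a single nonzero entry, equal to $1$, in row $p$ and column $q$. Writing $\tilde M$ in components yields
\begin{equation}
	\bigl(\tilde M\bigr)_{km}=\frac{1}{|G|}\sum_{g\in G}U_\alpha(g)_{kp}\,U_\beta(g^{-1})_{qm},
\end{equation}
and unitarity in the form $U_\beta(g^{-1})=U_\beta(g)^\dagger$, that is $U_\beta(g^{-1})_{qm}=\overline{U_\beta(g)_{mq}}$, identifies the right-hand side with the rescaled group-algebra inner product $(U_{\beta mq},U_{\alpha kp})$ of \eqref{galgip}. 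Feeding in the two cases from the previous paragraph — $\tilde M=0$ when $\alpha\neq\beta$, and $(\tilde M)_{km}=\tfrac{\delta_{pq}}{d_\alpha}\delta_{km}$ when $\alpha=\beta$, since $\tr E_{pq}=\delta_{pq}$ — and finally relabelling the free indices ($m\to i$, $q\to j$, $p\to l$) produces exactly \eqref{ureporth}.

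The argument has no genuine obstacle; it is essentially a bookkeeping exercise. The one place requiring care is keeping the complex conjugation built into the inner product \eqref{galgip} consistent with the placement of $U(g^{-1})$ versus $U(g)^\dagger$ at each step, so that when the abstract identity $\tilde M=\tfrac{\tr M}{d_\alpha}\mathds{1}$ is translated into a statement about the scalar functions $U_{\alpha ij}$, the conjugate lands on the correct factor and the index pattern $\delta_{ik}\delta_{jl}$ comes out rather than its transpose.
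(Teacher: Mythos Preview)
Your proof is correct and is essentially the same as the paper's: both form the group-averaged intertwiner $\tilde M=\frac{1}{|G|}\sum_g U(g)\,M\,U(g^{-1})$, invoke Schur's lemma to identify it as either $0$ or a scalar times the identity with the scalar fixed by taking a trace, and then specialise $M$ to a matrix unit to extract the matrix-element orthogonality. The only differences are cosmetic conventions (the direction of the map $M$ and the resulting index labels), and your closing remark about tracking the conjugation is exactly the one subtlety the paper also handles implicitly.
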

\begin{proof}
	for two fixed unitary irreducible representations $U_\alpha, U_\beta$ we define for each map $A\in\Hom(V_\alpha,V_\beta)$ an associated element $A^\sharp\in \Hom^G(V_\alpha,V_\beta)$ by
	\begin{equation}
		A^\sharp=\frac{1}{|G|}\sum_{g\in G}U_\beta(g^{-1})AU_{\alpha}(g).
	\end{equation}
	If now $E^{(ij)}$ is the standard basis of the space of $d_\alpha\times d_\beta$-matrices $\C^{d_\alpha\times d_\beta}\cong \Hom(V_\alpha,V_\beta)$, i.e.\ $E^{(ij)}_kl=\delta_{ik}\delta_{lj}$, then
	\begin{equation}
		\left(E^{(ij)}\right)^\sharp_{kl}=\left(U_{\alpha ik},U_{\beta jl}\right)
	\end{equation}
	According to Schur's lemma (Lemma \ref{schur}) we have $\Hom^G(V_\alpha, V_\beta)=\C\delta_{\alpha\beta}\mathds{1}_{V_\alpha}$. So with the above equation we already get $\left(U_{\alpha ik},U_{\beta jl}\right)=0 $ if $\alpha\not=\beta$.
	
	For $\alpha=\beta$ we have $\delta_{ij}=\tr E^{(ij)}=\tr \left(E^{(ij)}\right)^\sharp=\tr(\lambda \mathds{1}_{V_\alpha})=d_\alpha\lambda$, hence $\lambda=\delta_{ij}\frac{1}{d_\alpha}$ and thus
	\begin{equation}
		\left(U_{\alpha ik},U_{\alpha jl}\right)=\delta_{ij}\delta_{kl}\frac{1}{d_\alpha}.
	\end{equation}
	This proves \eqref{ureporth}. 
\end{proof}

\subsection{Restriction and Induction}
Given a Group $G$ with a representation $V$ and a subgroup $H\subset G$ it is straightforward to define a representation of $H$ on $V$ by \emph{restriction}. for this representation we write $V\!\downarrow^G_H$. A little less obvious is the construction of a representation $W$ of $G$ from a representation $V$ of $H$. To define this recipe called induction we need the definition of a
\begin{defn}[Transversal]
	Let $G$ be a group and $H\subset G$ a subgroup. A subset $T\subset G$ is called \emph{(left) transversal for $H$}, if
	\begin{enumerate}
		\item[(i)]$TH=G$
		\item[(ii)] $xH\cap yH=\emptyset\,\, \forall x,y\in T, x\not=y$.
	\end{enumerate}
\end{defn}
The above definition is equivalent to saying that a transversal for $H$ in $G$ contains exactly one element from each (left) coset. Let us now define the
\begin{defn}[Induced Representation]
	Let $G$ be a group, $H\subset G$ a subgroup and $(\phi,V)$ a representation of $H$. Furthermore set $\phi(g)=0$ for $g\in G\setminus H$ and fix and order a transversal $T=(t_1,...,t_k)$. Then we define the \emph{induced representation} $\phi\!\uparrow_H^G$ on $W=V^{\oplus k}$ by
	\begin{equation}
		\phi\!\uparrow_H^G(g)=\left(\begin{array}{ccc}
						\phi(t_1gt_1^{-1})&\dots&\phi(t_1gt_k^{-1})\\
						\vdots&\ddots&\vdots\\
						\phi(t_k g t_1^{-1})&\dots&\phi(t_k g t_k^{-1})
		                            \end{array}\right).
	\end{equation}
\end{defn}
It is straightforward to verify that the induced representation is a representation and that induced representations corresponding to different transversals of the same subgroup are isomorphic. While being easily explained in simple terms, the above construction is somewhat dissatisfactory because it first uses a transversal and it has to be proven afterwards that the construction doesn't depend on it. This can be circumvented by giving the definition in terms of a generalized notion of tensor products.
\begin{defn}[Tensor Product]
 Let $R$ be a ring, $M$ a right $R$-module and $N$ a left $R$-module. Let $F=\Z(M\times N)$ be the free Abelian group over the Cartesian product of $M$ and $N$ and define the subgroup $I$ generated by the set $S=S_1\cup S_2$,
 \begin{eqnarray}
  S_1&=&\left\{(x+y)\times z-x\times z-y\times z\Big| x,y\in M, z\in N\right\}\\
  S_2&=&\left\{(x \alpha)\times y-x\times(\alpha y)\Big| x\in M, y\in N, \alpha\in R\right\}.
 \end{eqnarray}
Then
\begin{equation}
 M\otimes_RN=F/I
\end{equation}
Is the $R$-tensor-product of $M$ and $N$. Whenever $M$ is also a left $R'$-module for another ring $R'$, $M\otimes_RN$ is a left $R'$ module as well, and when $N$ is also a right $R''$-module for yet another ring $R''$, $M\otimes_RN$ is a right $R''$-module as well.
\end{defn}
Note that this definition specializes to the usual definition of a tensor product between vector spaces if $R=\mathbb{F}$ is a field.

We are now in the position to give a transversal independent definition of the induced representation:

\begin{defn}[Induced Representation, 2nd Definition]
	Let $\mathbb{F}$ be a field, $G$ be a group, $H\subset G$ a subgroup and $(\phi,V)$ a $\mathbb{F}$-representation of $H$. This is equivalent to stating that $V$ is a $\mathbb{F}H$-left-module. Then then induced representation is the $\mathbb{F}G$-left-module
	\begin{equation}
	  V\uparrow_H^G=(\mathbb{F}G)\otimes_{{}_{\mathbb FH}}V.
	\end{equation}
\end{defn}
Note that $\mathbb FG$ is a $\mathbb F H$-bimodule for any subgroup $H\subset G$. To recover the transversal dependent construction, we choose a left-transversal $T=\left\{t_1,...,t_{|G|/|H|}\right\}$ and observe that the set $\left\{t\otimes_{{}_{\mathbb FH}}v\Big| t\in T,v\in B\right\} $ generates  $V\uparrow_H^G$ for any basis $B$ of $V$.


\subsection{Character Theory}
Character theory is a powerful means of analyzing group representations. Given a representation $(\phi_\alpha,V_\alpha)$ of a finite group $G$, we define its \emph{character} as the map (group algebra element)
\begin{equation}
	\chi_\alpha: G\to \C,\ \ g\mapsto \tr \phi_\alpha(g).
\end{equation}
Note that, for the purpose of a clear definition of the character, we have temporarily reintroduced the distinction between the representation (-map) $\phi$ and the representation space $V$. The characters are in the center of $\mathcal{A}(G)$ denoted by $\mathcal{Z}(G)$, which follows from the fact that they are constant on conjugacy classes: 
\begin{equation}
	\chi(h^{-1}gh)=\tr\phi(h^{-1}gh)=\tr(\phi(h)^{-1}\phi(g)\phi(h)=\tr\phi(g)=\chi(g)
\end{equation}
The characters of equivalent representations are identical, as the trace is basis independent and the transition to an equivalent representation can be viewed as a basis change. It follows directly from the Schur orthogonality relations, Theorem \ref{schurortho1} that the characters are orthonormal in $\mathcal{A}(G)$, i.e.
\begin{equation}\label{chiorth}
	(\chi_\alpha,\chi_\beta)=\delta_{\alpha\beta}.
\end{equation}
This provides us with a way finding the multiplicity of an irreducible representation in a given representation far simpler that Corollary \ref{dimhom}. If an arbitrary representation $W$ has a decomposition into irreducible representations
\begin{equation}
	W=\bigoplus_\alpha V_\alpha^{\oplus m_\alpha},
\end{equation}
then its character is easily determined to be
\begin{equation}
	\chi_W=\sum_\alpha m_\alpha\chi_\alpha,
\end{equation}
and hence, using \eqref{chiorth},
\begin{equation}
	(\chi_\alpha,\chi_W)=m_\alpha.
\end{equation}
All the above can be summarized by the statement that an equivalence class of representations is uniquely determined by its character and that the irreducible characters are orthonormal.


\subsection{The Regular Representation}\label{regrep}
Let $G$ be a group. Consider the action
\begin{equation}\label{regac}
	R: G\times \C G\to \C G,\ \ (h,\sum \alpha_g g)\mapsto \sum \alpha_{h^{-1}g}g
\end{equation}
on the group algebra as a vector space. Let $G$ for now be finite. Using the theory of characters introduced in the last subsection we can analyze the regular representation. It's character is
\begin{equation}
	\chi_R(g)=\begin{cases}
	          	|G|& g=e\\
	          	0&\text{else}
	          \end{cases},
\end{equation}
as $gh\not=h$ if $g\not=e$. Explicitly calculating the inner product of $\chi_R$ with the irreducible representations yields
\begin{equation}
	(\chi_R, \chi_V)=\dim V
\end{equation}
which implies that the decomposition of the Regular representation into a sum of irreducible representations is
\begin{eqnarray}\label{regdec}
	\C G&=&\bigoplus_{V\text{ irrep of }G} V^{\otimes \dim V}\nonumber\\ &\cong&\bigoplus_{V\text{ irrep of }G} V\otimes V
\end{eqnarray}
The last expression reflects the fact that there is, in addition to the left action \eqref{regac}, a right action
\begin{equation}\label{regright}
	R': G\times \C G\to \C G,\ \ (h,\sum \alpha_g g)\mapsto \sum \alpha_{gh}g
\end{equation}
which commutes with the former.
From the decomposition \eqref{regdec} we also get an explicit formula for the cardinality of the group in terms of the dimensions of its irreducible representations,
\begin{equation}\label{dimsquarestoG}
	|G|=\sum_{V\text{ irrep of }G} (\dim V)^2.
\end{equation}
We are now ready to prove part two of Theorem \ref{schurortho1}.
\begin{thm}[Schur orthogonality relations, Part II]\label{schurortho2}
	Let $G$ be a finite Group. Let $\alpha$ label the equivalence classes of irreducible representations of $G$ and pick a unitary representative $U_\alpha: G\to U(V_\alpha)$. Then $\{U_{\alpha ij}\}$ is a basis of $\mathcal{A}(G)$, and the $\alpha$ components don't mix in the sense that
	\begin{equation}\label{algmult}
		U_{\alpha ij}U_{\beta kl}=\delta_{\alpha\beta}\delta_{jk}\frac{|G|}{d_\alpha}U_{\alpha il}
	\end{equation}
\end{thm}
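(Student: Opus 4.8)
The plan is to prove Theorem \ref{schurortho2} in two steps, corresponding to its two assertions: first that $\{U_{\alpha ij}\}$ is a basis of $\mathcal{A}(G)$, and then the multiplication rule \eqref{algmult}.

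For the basis claim, I would argue by a dimension count combined with linear independence. Linear independence of the functions $U_{\alpha ij}$ on $G$ is immediate from Part I of the Schur orthogonality relations, Theorem \ref{schurortho1}: the relation $(U_{\alpha ij},U_{\beta kl})=\delta_{\alpha\beta}\delta_{ik}\delta_{jl}/d_\alpha$ shows they are mutually orthogonal and nonzero, hence linearly independent in $\mathcal{A}(G)$. To see that they span, I would count: there are $\sum_\alpha d_\alpha^2$ such functions, and by equation \eqref{dimsquarestoG} this equals $|G|=\dim\mathcal{A}(G)$. An orthogonal set of the right cardinality in a finite-dimensional inner product space is automatically a basis. (Alternatively, one observes directly that the matrix coefficients of the regular representation are, via the decomposition \eqref{regdec}, exactly the matrix coefficients of the $V_\alpha$, each appearing with multiplicity $d_\alpha$, so they span the regular representation's coefficient space, which is all of $\mathcal{A}(G)$.)

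For the multiplication rule, the key observation is to compute the product $U_{\alpha ij}U_{\beta kl}$ as a function on $G$ and simplify it using Part I. Writing out the convolution product in $\mathcal{A}(G)$,
\begin{equation}
(U_{\alpha ij}\cdot U_{\beta kl})(g)=\sum_{h\in G}U_{\alpha ij}(gh^{-1})U_{\beta kl}(h),
\end{equation}
one would use the homomorphism property $U_{\alpha ij}(gh^{-1})=\sum_m U_{\alpha im}(g)U_{\alpha mj}(h^{-1})$ together with unitarity $U_{\alpha mj}(h^{-1})=\overline{U_{\alpha jm}(h)}$ to reduce the sum over $h$ to an inner product of matrix coefficients, namely $|G|\,(U_{\alpha jm},U_{\beta kl})$. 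Applying \eqref{ureporth} collapses this to $\delta_{\alpha\beta}\delta_{jk}\delta_{ml}\,|G|/d_\alpha$, and the remaining sum over $m$ picks out $U_{\alpha il}(g)$, yielding \eqref{algmult}. I would need to be careful about the exact form of the group-algebra product \eqref{galgip}-normalization and the placement of inverses, matching the convention stated in the group-algebra section, and about whether the inner product is conjugate-linear in the first or second slot; this bookkeeping with conjugates and the $1/|G|$ factor is really the only place errors can creep in.

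I do not expect a genuine conceptual obstacle here: both parts are formal consequences of Theorem \ref{schurortho1} and the regular-representation decomposition \eqref{regdec}, which are already available. The main ``obstacle'' is purely notational — keeping the index contractions, the unitarity-induced complex conjugates, and the $|G|$ normalizations consistent so that the constant in \eqref{algmult} comes out as $|G|/d_\alpha$ rather than its reciprocal or a version without the $|G|$. A useful sanity check at the end is to set $\beta=\alpha$, $j=k$, and trace over the free indices: summing \eqref{algmult} appropriately should reproduce the idempotent $\frac{d_\alpha}{|G|}\sum_i U_{\alpha ii}$ as a central projection onto the $\alpha$-isotypic component of the regular representation, which is consistent with \eqref{regdec}.
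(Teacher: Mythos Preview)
Your proposal is correct and follows essentially the same approach as the paper: the basis claim is obtained exactly as you describe (orthogonality from Part~I plus the dimension count \eqref{dimsquarestoG}), and the multiplication rule is derived by expanding the convolution, using the homomorphism property and unitarity to reduce to the inner products of Part~I. The only cosmetic difference is that the paper computes $(U_{\alpha ij}U_{\beta kl},U_{\gamma mn})$ and then reads off \eqref{algmult} from the basis expansion, whereas you evaluate the convolution at $g$ directly; both routes collapse to the same application of \eqref{ureporth}.
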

\begin{proof}
	In Theorem \ref{schurortho1} we already saw that $\{U_{\alpha ij}\}$ is an orthogonal set and in particular linearly independent. But Equation \ref{dimsquarestoG} directly implies $\left|\{U_{\alpha ij}\}\right|=\dim\mathcal{A}(G)$, hence $\{U_{\alpha ij}\}$ is indeed a basis.
	For the last part of the theorem, let us calculate
	\begin{eqnarray}
		(U_{\alpha ij}U_{\beta kl}, U_{\gamma mn})&=&\frac{1}{|G|}\sum_{x,y\in G}U^*_{\alpha ij}(xy^{-1})U^*_{\beta kl}(y) U_{\gamma mn}(x)\nonumber\\
		&=&\frac{1}{|G|}\sum_{x,y\in G}\sum_{s=1}^{d_\alpha}U^*_{\alpha is}(x)U_{\alpha js}(y)U^*_{\beta kl}(y) U_{\gamma mn}(x)\nonumber\\
		&=&|G|\sum_{s}\left(U_{\alpha is},U_{\gamma mn}\right)\left(U_{\beta kl},U_{\alpha js}\right)\nonumber\\
		&=&\frac{|G|}{d_\alpha^2}\delta_{\alpha\beta}\delta_{\alpha\gamma}\delta_{im}\delta_{ln}\delta_{jk},
	\end{eqnarray}
	where for the second equality we used the properties of a unitary representation and for the third one we used Theorem \ref{schurortho1}.
	Using the orthonormal basis property proven above this implies the multiplication law \eqref{algmult}.
\end{proof}
The last result implies that the group algebra of a finite group $G$ is isomorphic to the direct sum of the matrix algebras over the irreducible representation spaces,
\begin{equation}
	\mathcal{A}(G)\cong\bigoplus_{\alpha}\End(\C^{d_\alpha}),
\end{equation}
for example via the isomorphism
\begin{eqnarray}\label{groupalgdirsum}
	\mathcal{A}(G)&\stackrel{\sim}{\longrightarrow}&\bigoplus_{\alpha}\End(\C^{d_\alpha})\nonumber\\
	\sum_{\alpha}\sum_{i,j=1}^{d_\alpha}a_{\alpha ij}U_{\alpha ij}&\mapsto& \bigoplus_{\alpha}A_\alpha \text{ with }(A_{\alpha})_{ij}=a_{\alpha ij},
\end{eqnarray}
where in the second line we fixed a set of unitary irreducible representations, or, equivalently, a basis for each $\C^{d_\alpha}$ to choose a definite isomorphism. 
Can we explicitly find minimal projections of $\mathcal{A}(G)$ as well as its center? According to Theorem \ref{schurortho2} the diagonal elements of any irreducible unitary representation are proportional to projections, and in view of \eqref{groupalgdirsum} they are also minimal. In view of the decomposition \eqref{regdec} they project onto a single copy of the corresponding irreducible representation with respect to the right action \eqref{regright}. The isomorphism \eqref{groupalgdirsum} also implies, together with \eqref{chiorth} that the set of irreducible characters forms an orthonormal basis of $\mathcal{Z}(G)$. The multiplication rule \eqref{algmult} from Theorem \ref{schurortho2} implies furthermore that the irreducible characters must square to multiples of themselves, in fact, explicitly exploiting \eqref{algmult},
\begin{equation}
	\chi_\alpha\chi_\alpha=\frac{|G|}{d_\alpha}\chi_\alpha,
\end{equation}
and therefore
\begin{equation}
	\pi_\alpha=\frac{d_\alpha}{|G|}\chi_\alpha
\end{equation}
are the minimal central projections. Now consider an arbitrary representation $W$ of $G$ with decomposition into irreducible representations
\begin{equation}
	W=\bigoplus_{\alpha}V_{\alpha}^{\oplus m_\alpha}.
\end{equation}
As in the group algebra $\chi_\alpha$ projects onto the $V_\alpha$ irreducible component, $\chi_\alpha$ acts on $W$ by projecting onto $V_\alpha^{\oplus m_\alpha}$.


\subsection{Irreducible representations of $S_n$}\label{symirreps}
We want to identify the irreducible representations of the symmetric group $S_n$, i.e.\! the permutation group of $n$ elements. To this end, we make use of the regular representation as it contains all irreducible representations of a finite group. Let us first introduce the important tool called young diagrams.

Given a partition $\lambda=\left(\lambda_1,\lambda_2,...,\lambda_d\right)$ of $n\in\N$ into a sum of non increasing numbers $\lambda_i\in\N$ we can define the corresponding
\begin{defn}[Young Diagram]
	A \emph{Young diagram} is a subset $T\subset \N_+\times\N_+$ for which the following holds:
	\begin{equation}
		(n,m)\in T\implies (k,m)\in T\, \forall k<n \text{ and } (n,l)\in T\, \forall l<m
	\end{equation}
	Elements of a Young diagram are called \emph{boxes}, subsets with constant first component are called \emph{columns}, such with constant second component \emph{rows}. For a Young diagram $\lambda$ of $n$ boxes we write $\lambda\vdash n$, for a Young diagram of $n$ Boxes and at most $d$ rows $\lambda\vdash (n,d)$. 
\end{defn}
The picture one should have in mind reading this definition is the one obtained by taking an empty box for each element of $T$ and arranging them in a diagram such that the ``origin'' of $\N_+\times\N_+$ is in the upper left corner:
\begin{equation*}
	\lambda=\Yvcentermath1\yng(5,3,2,2,1)
\end{equation*}
This example corresponds to a partition of $n=13$, namely $13=5+3+2+2+1$. We also write $\lambda=(532^21)$. A Young diagram filled with numbers is called a \textit{Young tableau}:
\begin{equation*}
	\young({124},3)\ ,\ \young({112},2)
\end{equation*}
The first kind is called standard, the second semistandard:
\begin{defn}
	A Young tableau $T$ of shape $\lambda$ is a Young diagram $\lambda$ with a number in each box. We also write $\sh(T)=\lambda$. A standard Young tableau is a Young diagram of $n$ boxes that is filled with the numbers from 1 to $n$ such that numbers increase from left to right along each row and down each column. A semistandard Young tableau is a Young diagram filled with numbers which are nondecreasing along each row and increasing down each column. We write $T(\lambda)$, $S(\lambda)$ and $s(\lambda,d)$ for the sets of Young tableaux filled with the numbers $1$ to $n$, the standard Young tableaux and the semistandard Young Tableaux with entries smaller or equal to $d$, respectively.
\end{defn}
A standard Young tableau $T$ of shape $\lambda\vdash n$ defines two subgroups of $S_n$, one that permutes the entries of the rows, $R_T$, and one that permutes the entries of the columns, $C_T$. Define the group algebra elements
\begin{eqnarray}
	r_T&=&\sum_{r\in R_T}r\nonumber\\
	c_T&=&\sum_{c\in C_T}\sgn(c)c\nonumber\\
	e_T&=&r_Tc_T\label{youngsym}
\end{eqnarray}
The last one, $e_T$, is proportional to a minimal projection
\begin{equation}\label{youngproj}
	e_T^2=\frac{k!}{\dim V_\lambda}e_T
\end{equation}
and is called the \emph{young symmetrizer}. As a minimal projection, according to the discussion in section \ref{regrep}, it projects onto a single copy of an irreducible representation in the decomposition of the group algebra $\mathcal{A}(S_n)$ with respect to the right action of $S_n$.

Another way of constructing the representations of $S_n$ is to stop after symmetrizing the rows of a Young tableau and looking at the corresponding representation, that is
\begin{equation}
	S_n\looparrowright M^\lambda=\C\left\{T\in T(\lambda)\big|T(i,j)\le T(i,j+1)\right\}
\end{equation}
where the action is defined by permuting the entries and then resorting the rows. Tableaux whose rows are ordered but their columns are not are called \emph{row-standard}. This representation is called \emph{permutation module}. It can also be constructed in a different way. Each Young diagram $\lambda\vdash n$ defines a subgroup of $S_n$, the so called \emph{Young subgroup} $S_\lambda:\cong S_{\lambda_1}\times...\times S_{\lambda_n}$ with $S_{\lambda_1}$ permutes the first $\lambda_1$ elements, $S_{\lambda_2}$ permutes $\lambda_1+1, ...,\lambda_1+\lambda_2$ etc. Then it is easy to verify that the permutation module $M^\lambda$ is the representation of $S_n$ induced by the trivial representation of the corresponding Young subgroup $S_\lambda$, as a formula $M^\lambda=1\!\uparrow_{S_\lambda}^{S_n}$.

\paragraph{}What is the relation between the permutation modules and the irreducible representations of $S_n$, which are also called Specht modules? The permutation module $M_\lambda$ contains the irreducible representation $[\lambda]$ exactly once, and otherwise only contains irreducible representations $[\mu]$ with $\mu\vdash|\lambda|$ and $\mu\succ \lambda$, i.e.
\begin{equation}\label{kostka}
	M^\lambda\cong\bigoplus_{\mu\succ\lambda}K_{\mu\lambda}[\mu],
\end{equation}
the multiplicities $K_{\mu\lambda}$ are called \emph{Kostka numbers}. These numbers have a simple combinatorial description: Define the set $s(\lambda,\mu)$ of semistandard Young tableau with shape $\lambda$ and content $\mu$, i.e.\ the numbers in the tableau have frequency $\mu$ as a string. Then the Kostka number $K_{\mu\lambda}$ is the number of such tableaux,
\begin{equation}
	K_{\mu\lambda}=|s(\lambda,\mu)|
\end{equation}


\subsection{Irreducible Representations of the Unitary Group}
The unitary group $\mathrm{U}(n)$ is defined as the group of endomorphisms of $\C^n$ that leaves the standard inner product invariant. It is a Lie group and we can easily find its Lie algebra. For $X\in \C^{n\times n}$ with  $X^\dagger =-X$, $e^X\in \mathrm{U}(n)$. Also $e^X\not\in \mathrm{U}(n)$ if $X$ is not antihermitian, so the Lie algebra $\mathfrak{u}(n)$ is the set of antihermitian $n\times n$-matrices. If we look at an arbitrary (finite dimensional, unitary) irreducible representation $U$ of $\mathrm{U}(n)$, we know that for the restriction to the Abelian subgroup 
\begin{equation}
	\mathrm{H}(n)=\left\{U\in \mathrm{U}(n)|U \text{ diagonal}\right\}\cong U(1)^{\times n}
\end{equation}
 of diagonal unitaries (in some fixed basis) -- a \emph{Cartan} subgroup --  there is a basis $\left\{\left|v_i\right\rangle\right \}$ of $U$ where it also acts diagonal. The holomorphic irreducible representations of $\mathrm{H}(n)$ are just 
 \begin{equation}
  \diag(u_1,...,u_n)\mapsto \prod_{i=1}^nu_i^{\lambda_i}
 \end{equation}
 for some $\lambda\in\Z^n$, so to each basis vector $\left|v_i\right\rangle$ there is a $\lambda \in\Z^n$ such that for all $u=\diag(u_1,u_2,...,u_n)\in\mathrm{H}(n)$ we have 
 \begin{equation}
 u\left|v_i\right\rangle=\prod_{j=1}^nu_j^{\lambda_j}\left|v_i\right\rangle.
 \end{equation}
Such a vector is called a \textit{weight vector}, and $\lambda$ is called its \textit{weight}. This translates to a similar property in the Lie algebra. The restriction to Lie subalgebra $\mathfrak{h}(n)$ corresponding to $\mathrm{H}(n)$ of the Lie algebra representation defined on $U$ acts diagonally in the same basis, for $h\in \mathfrak{h}(n)$ we get
\begin{equation}
	h\left|v_i\right\rangle=\sum_{j=1}^nh_j\lambda_j\ket{v_i}
\end{equation}
This procedure of diagonalizing the action of a Cartan subgroup and the corresponding Cartan subalgebra can also be done for the adjoint representation (see Section \ref{lie}). The weights that are encountered there are called \emph{roots}, the vector space they live in is called \emph{root space}. The structure of the \emph{root lattice} generated by the roots captures the properties of the underlying Lie group.

Let us now look at the complexified Lie algebra $\mathfrak{u}(d)_\C=\mathfrak{gl}(n)=\C^{n\times n}$. The representations of a real Lie algebra and its complexification are in a one to one correspondence, see e.g.\ \cite{carter32lectures}. Define the standard basis of $\C^{n\times n}$ to be the set of matrices $E_{ij}$ which have a one at position $(i,j)$ and are zero elsewhere. This basis is an eigenbasis of the adjoint action defined in \eqref{ad}, because
 \begin{equation}
 	[\diag(h_1,...,h_n), E_{ij}]=(h_i-h_j)E_{ij}.
 \end{equation}
 Using the matrices $E_{ij}$, which are the multidimensional analogues of the well known ladder operators of $\mathrm{SU}(2)$ we can reconstruct the whole irreducible representation. Let us consider an ordering on the set of weights, for example the lexicographical order, that is
 \begin{equation}
 	\lambda <\mu:\Leftrightarrow \lambda\not=\mu\text{ and }\lambda_s<\mu_s \text{ for } s=\min\left\{s|\lambda_s\not=\mu_s\right\}.
 \end{equation}
 Note that this ordering is arbitrarily chosen. This is equivalent to choosing an irrational functional in the dual of the root space and thus totally ordering the roots.
 Looking at an arbitrary representation $U$ again, because $\dim U<\infty$ we can find a highest weight $\lambda$ and at least one corresponding weight vector $\left|v_\lambda \right\rangle$. It turns out that this highest weight vector is unique. But first take $h\in \mathfrak{h}(n)$ and calculate
 \begin{equation}
 h E_{ij}\left|v_\lambda\right\rangle=[h,E_{ij}]\ket{v_\lambda}+E_{ij}\sum_{j=1}^nh_j\lambda_j\ket{v_\lambda}=\left(h_i-h_j+\sum_{j=1}^nh_j\lambda_j\right)E_{ij}\ket{v_\lambda}
 \end{equation}
so either $E_{ij}\ket{v_\lambda}$ is zero, or it is a weight vector for the weight $\lambda +\epsilon_{ij}$ where $\epsilon^{(ij)}_k=\delta_{ik}-\delta_{jk}$. Because $\lambda$ is the highest weight, $E_{ij}\ket{v_\lambda}=0$ for $i>j$. For fixed $i<j$ look at the three Lie algebra Elements
\begin{eqnarray}
H&=&E_{ii}-E_{jj},\nonumber\\
X&=&E_{ij}\text{ and}\nonumber\\
Y&=&E_{ji}.
\end{eqnarray}
Then $\left\{H,X,Y\right\}$ is a Lie subalgebra isomorphic to $\mathfrak{su}(2)_\C=\mathfrak{sl}(2)$, as $[H,X]=2X, \ [H,Y]=-2Y$ and $[X,Y]=H$, so $\ket{v_\lambda}$ generates a $2\left(\lambda_i-\lambda_j\right)$+1-dimensional representation of $\mathfrak{sl}(2)$. In this fashion repeated application of the $E_{ij}$, $i<j$, yields a basis for the irreducible representation $U$.


\subsection{Schur-Weyl Duality}

In this section I shortly explain the Schur-Weyl duality theorem. A good introduction to this topic can, for example, be found in \cite{christandl2006structure}. This will be important when I consider similar constructions in Chapter \ref{CYR}.

Consider the tensor product space $\left(\C^d\right)^{\otimes n}$. The symmetric group $S_n$ has a natural unitary action on that space by permuting the tensor factors, i.e.
\begin{equation}\label{Sntens}
	\pi \ket{v_1}\otimes\ket{v_2}\otimes...\otimes\ket{v_n}=\ket{v_{\pi^{-1}(1)}}\otimes\ket{v_{\pi^{-1}(2)}}\otimes...\otimes\ket{v_{\pi^{-1}(n)}},
\end{equation}
And $\mathrm{U}(d)$ acts via its tensor representation, i.e.\ for $U\in\mathrm{U}(d)$,
\begin{equation}
	U\cdot\ket{e_{i_1}}\otimes\ket{e_{i_2}}\otimes...\otimes\ket{e_{i_n}}=\left(U\ket{e_{i_1}}\right)\otimes\left(U\ket{e_{i_2}}\right)\otimes...\otimes\left(U\ket{e_{i_n}}\right)
\end{equation}
Obviously the two actions commute. But even more is true, that is, the subalgebras of End$\left(\left(\C^d\right)^{\otimes n}\right)$ generated by the two representations are each others commutants. The Schur-Weyl duality theorem then states that
\begin{equation}\label{schurweyl}
	\left(\C^d\right)^{\otimes n}\cong\bigoplus_{\lambda\vdash(n,d)}[\lambda] \otimes V_\lambda
\end{equation}
where $[\lambda]$ is the the representation projected out by the central projection corresponding to the frame $\lambda$ and $V_\lambda$ is the representation of U$(d)$ with highest weight $\lambda$.

Of course the representation of the group extends by linearity to a representation of the group algebra. Recall the definition of the Young symmetrizer $e_T$ corresponding to a standard young tableau $T$ of shape $\lambda\vdash(n,d)$. It projects onto a single vector in the representation $[\lambda]$ of $S_n$ so it projects onto a space of dimension equal to the multiplicity of $[\lambda]$ in the representation \eqref{Sntens}. More precisely, because $[\lambda]$ is paired with $U_\lambda$ ind \eqref{schurweyl}, $e_T$ actually projects onto a copy of $U_\lambda$.


\chapter{Physical and Information Theoretical Background}\label{info}
\section{Classical Information Theory}\label{classinfo}
In the following chapter I first give a short introduction into the mathematical formalism of classical information theory. In the subsequent sections I introduce the Shannon entropy, investigate its basic properties, and describe the convex geometry framework used to describe joint and marginal entropies of a multipartite random variable. Finally I give a short example how characterization results for the entropy cone are useful in applications by elaborating the connection to network coding.

In classical information theory states are modeled as measurable functions $X: \Om\to\mathcal{X}$ called \textit{random variables}, where $(\Om,\mathsf{X},\mathbf{P})$ is a probability space and $\mathcal{X}$ is a discrete set called \textit{alphabet}. Explaining the concept of a probability space at length lies beyond the scope of this thesis, an introduction can be found in \cite{kallenberg2002foundations}. In simple words, $\Omega$ is just a set, $\mathsf{X}\subset 2^\Om$ is a sigma algebra of \emph{measurable sets}, and $\mathbf{P}:\mathsf{X}\to [0,1]$ is a measure with the additional requirement that $\mathbf{P}(\Om)=1$ called \emph{probability measure}. As most of the following is concerned with random variables on finite alphabets, it is enough to know that $X$ is an object that outputs elements from this alphabet with fixed probabilities given by the corresponding \emph{probability distribution}
\begin{equation}
	p_X: \mathcal{X}\to [0,1],\ x\mapsto \mathbf{P}(X^{-1}(\{x\}))
\end{equation}
A realization of a random variable is called a \emph{variate}. The set of all probability distributions for $n$ outcomes is an $n-1$-simplex
\begin{equation}
 \mathcal{P}^{n}=\left\{p\in\R_{\ge 0}^n\Big|\sum_{i=1}^np_i=1\right\}
\end{equation}
called \emph{probability simplex}. We also call the elements of the probability simplex \emph{probability vectors}, especially if no corresponding alphabet is specified.

To quantify the information content of a random variable, or, in other words, the information that is gained by learning its outcome, information theory uses certain functions called \textit{entropies}.


\subsection{The Shannon Information Measures}\label{shannent}
Entropies play a key role in information theory, classical and quantum. In fact, they did so right from the beginning, Shannon introduced the entropy that was later named after him in the very same paper that is said to constitute the birth of modern information theory. Entropies are functionals on the state space of the respective information theory that quantify the average information content of a state. In classical information theory, the Shannon entropy of a random variable $X$ on some alphabet $\mathcal{X}$ is defined as
\begin{equation}
 H(X)=-\sum_{x\in\mathcal{X}}p_X(x)\log p_X(x).
\end{equation}
Note that we omit the subscript $X$ if it is clear from context to which random variable the probability distribution belongs. Observe moreover that the Shannon entropy only depends on the probability distribution $p$ of $X$, therefore we also write $H(p)$ sometimes.
Originally, Shannon derived this entropy, up to a constant factor, from the following simple axioms \cite{shannon1948mathematical}:
\begin{enumerate}
 \item $H$ should be continuous in the $p_i$
 \item for a uniform distribution $H$ should monotonically increase with the number of possible outcomes
 \item If $p_i=q_{j,i}\bar p_j$ describing a two step random process, then $H$ should be the weighted sum of the entropies of the steps:
 \begin{equation}
  H(p)=H(\bar p)+\sum_j p_j H(q_j)
 \end{equation}
\end{enumerate}
However, the clearest justification for the claim that this of all functionals quantifies the information content of $X$ is due to the famous
\begin{thm}[Noiseless Coding Theorem (\cite{shannon1948mathematical}, Theorem 9)]
 Let a source produce independent copies of a random variable $X$ and let $\mathsf{C}$ be a channel with capacity $C$. Then each rate $R<\frac{C}{H(X)}$ of perfect transmission of letters can be achieved, and each rate $R>\frac{C}{H(X)}$ is impossible to achieve.
\end{thm}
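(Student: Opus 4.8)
The plan is to prove the noiseless coding theorem via a standard typical-sequence argument, which is the cleanest route given the axiomatic and combinatorial framing already set up in the thesis. First I would reduce the channel side of the statement to a counting problem: a channel of capacity $C$ can reliably transmit (asymptotically) $2^{nC}$ distinct messages in $n$ uses, so the question of achievable letter-rates $R$ becomes the question of how many blocks $X^m$ of source letters can be encoded into $nC$ channel bits with vanishing error. Thus the heart of the matter is the source-coding estimate: the number of ``essentially all'' length-$m$ strings produced by i.i.d.\ copies of $X$ is $2^{m(H(X)+o(1))}$.

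Next I would make this precise using the asymptotic equipartition property. Let $X_1,\dots,X_m$ be i.i.d.\ copies of $X$. By the weak law of large numbers applied to the i.i.d.\ sequence $-\log p_X(X_i)$, whose mean is $H(X)$, the empirical average $-\frac1m\sum_{i=1}^m \log p_X(X_i)$ converges in probability to $H(X)$. Hence for every $\epsilon>0$ the typical set
\begin{equation}
 A_\epsilon^{(m)} = \left\{ x^m \in \mathcal{X}^m \;\Big|\; 2^{-m(H(X)+\epsilon)} \le p_{X^m}(x^m) \le 2^{-m(H(X)-\epsilon)} \right\}
\end{equation}
has probability at least $1-\epsilon$ for $m$ large, and a straightforward two-sided bound (summing $p_{X^m}$ over $A_\epsilon^{(m)}$ and using that the probabilities are nearly uniform) gives
\begin{equation}
 (1-\epsilon)\,2^{m(H(X)-\epsilon)} \le \left| A_\epsilon^{(m)} \right| \le 2^{m(H(X)+\epsilon)} .
\end{equation}
This is exactly the combinatorial content that makes $H(X)$ the right quantity; the type-class language from mathematical statistics referenced later in the thesis could be used here as an alternative but the AEP suffices.

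For achievability, fix $R < C/H(X)$, choose $\epsilon$ small enough that $R(H(X)+\epsilon) < C$, and set the block length $m$ and number of channel uses $n$ with $m/n \to R$. Enumerate the typical strings in $A_\epsilon^{(m)}$ with distinct indices of length $\lceil m(H(X)+\epsilon)\rceil \le nC$ bits and send these through the channel using a capacity-achieving code; atypical blocks are declared errors. The total error probability is bounded by the channel's decoding error plus $\Pr[X^m \notin A_\epsilon^{(m)}] \le \epsilon$, both of which vanish, so rate $R$ is achievable. For the converse, suppose $R > C/H(X)$ and pick $\epsilon$ with $R(H(X)-\epsilon) > C$. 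Any encoding of length-$m$ blocks uses at most $2^{nC}$ distinct channel inputs, so at most $2^{nC}$ source blocks can be decoded correctly; but the typical set alone contains $\ge (1-\epsilon)2^{m(H(X)-\epsilon)} \gg 2^{nC}$ roughly equiprobable strings, so the probability that $X^m$ lands on a correctly-decodable block is at most $2^{nC} \cdot 2^{-m(H(X)-\epsilon)} + \epsilon \to 0$, ruling out reliable transmission. The main obstacle, and the only genuinely delicate point, is handling the channel half rigorously: a clean proof really needs Shannon's noisy channel coding theorem (existence of codes at rates below $C$ with vanishing error, and the strong/weak converse above $C$), so I would either invoke that theorem as a black box or, if the thesis wants self-containment, restrict to the noiseless-channel case where ``capacity $C$'' literally means $C$ transmittable bits per use and the channel coding step is trivial. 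Everything else is the routine AEP bookkeeping sketched above.
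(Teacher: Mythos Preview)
The paper does not prove this theorem at all: it is stated as background (Shannon's Theorem~9 from \cite{shannon1948mathematical}) and immediately followed by commentary on its operational meaning, with no proof or proof sketch. So there is nothing in the paper to compare your argument against.

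That said, your proposal is the standard and correct AEP/typical-set proof, and it is internally sound. The one point you already flag yourself is the real one: as stated, the theorem mentions a general channel of capacity $C$, so a fully rigorous proof needs the noisy-channel coding theorem as input (or a restriction to a noiseless binary channel). Since the thesis treats this result purely as motivation and later proves the AEP (Theorem~\ref{AEP}) exactly as you do, your write-up would fit naturally, but it goes well beyond what the paper itself attempts.
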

This provides us with a connection to our intuitive understanding of information: Suppose the channel is just a device that perfectly transmits bits. Then the capacity is 1 and the coding theorem implies that we need on average at least $H(X)$ bits to encode $X$.

\paragraph{}Let us review some basic properties of the Shannon entropy.
The Shannon entropy is nonnegative,
\begin{equation}
	H(X)\ge 0,
\end{equation}
as $0\le p(x) \le 1\, \forall x\in \mathcal{X}$. Looking at more than one random variable, there are several quantities commonly used in information theory that are defined in terms of the entropies of their joint and marginal distributions. Considering two random variables $X$ and $Y$ we define the \textit{conditional entropy}
\begin{eqnarray}
	H(X|Y)&=&\sum_{x,y\in\mathcal{X}}p(x,y)\log(p(x|y))\nonumber\\
	&=&\sum_{x,y\in\mathcal{X}}p(x,y)\log(\frac{p(x,y)}{p(y)})\nonumber\\
	&=&\sum_{x,y\in\mathcal{X}}p(x,y)\left(\log(p(x,y))-\log(p(y))\right)\nonumber\\
	&=&H(XY)-H(Y),
\end{eqnarray}
 where $H(XY)$ is the natural extension of the Shannon entropy to many random variables, that is the Shannon entropy of the random variable $Z: \Om\to \mathcal{X}\times\mathcal{X},\ \om \mapsto (X(\om),Y(\om))$. Note that we assumed for notational convenience that $X$ and $Y$ are defined on the same alphabet. The conditional entropy has a very intuitive operational interpretation as well. Suppose we want to encode a string of $X$-variates but we know that the receiver has The corresponding string of $Y$-variates as side information to help him decode the message. Then we can achieve a rate no better than $H(X|Y)$.
 
Another information measure also defined by Shannon is the \textit{mutual information} of two random variables $X$ and $Y$,
\begin{eqnarray}\label{mutual}
	I(X:Y)&=&H(XY)-H(X|Y)-H(Y|X)\nonumber\\
	&=&H(X)+H(Y)-H(XY)=H(X)-H(X|Y)=H(Y)-H(Y|X).
\end{eqnarray}
Like the other quantities the mutual information has a precise operational meaning, that is, it is the average information about $X$ that is gained by looking at $Y$, or vice versa. As the Shannon entropy itself, also for the mutual information one can define a conditional version, the \textit{conditional mutual information} that is obtained by taking one of the expressions \eqref{mutual} for the mutual information and conditioning on another random variable $Z$,
\begin{eqnarray}
	I(X:Y|Z)&=&H(XY|Z)-H(X|YZ)-H(Y|XZ)\nonumber\\
	&=&H(XYZ)-H(Z)-H(XYZ)+H(YZ)-H(XYZ)+H(XZ)\\
	&=&H(XZ)+H(YZ)-H(XYZ)-H(Z)
\end{eqnarray}
The conditional mutual information is nonnegative, i.e.
\begin{equation}\label{condmutpos}
	I(X:Y|Z)\ge 0.
\end{equation}
A proof for that fact can be found in \cite{yeung2008information}, an alternative proof using type classes is given in Section \ref{strings}. It follows that all Shannon information measures introduced above are nonnegative, as, with a trivial random variable $\delta$ that has probability 1 for a certain outcome and zero elsewhere, $H(X)=I(X:X|\delta)$, $H(X|Y)=I(X:X|Y)$ and $I(X:Y)=I(X:Y|\delta)$. The inequalities ensuring positivities of the Shannon information measures and linear combinations of these are said to be of \textit{Shannon type}.

Another information theoretic quantity that is related to the Shannon entropy is the \emph{relative entropy}. Given two random variables $X$ and $Y$ on the same alphabet $\mathcal{X}$, it is defined as
\begin{equation}\label{relent}
 H(X\| Y)=\begin{cases}
           \sum_{x\in \mathcal{X}}p(x)\log\left(\frac{p(x)}{q(x)}\right)& p(x)=0\,\forall x\in \mathcal{X} \text{ with } q(x)=0\\
           \infty&\text{else}
          \end{cases}.
\end{equation}
It is used in the literature under a variety of other names such as information distance, information divergence or Kullback-Leibler-distance. In the next section it will appear in the context of quantum information theory as well, where it serves as a distance measure between spectra. Note that $H(X\| Y)$ is not a metric, although it is called a ``distance''.

An important result from classical information theory is the asymptotic equipartition property. It can be understood as a strengthening of the law of large numbers for random variables on finite alphabets, as it implies, among other statements, convergence of the \emph{empirical distribution} of a sample.
\begin{thm}[Asymptotic Equipartition Property]\label{AEP}
 Let $X_i$, $i\in \N$ be independent and identically distributed on a finite alphabet $\mathcal{X}$ according to a probability distribution $p$. Then
 \begin{equation}
  -\frac{1}{n}\log p(X_1,...,X_n)\to H(p)
 \end{equation}
in probability\footnote{A sequence of random variables $(Z_i)_{i\in\N}$ with range in a metric space converges to a random variable $Z_\infty$ \emph{in probability}, if $\mathbf{P}\left(\left|Z_i-Z_{\infty}\right|>\epsilon\right)\to0\,\forall \epsilon>0$.} for $n\to\infty$.
\end{thm}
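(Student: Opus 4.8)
The plan is to reduce the statement to the weak law of large numbers for i.i.d.\ random variables; the only input beyond that is the factorization of the joint distribution of independent variables, which turns the logarithm of a product into a sum.

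First I would use independence to write $p(X_1,\ldots,X_n) = \prod_{i=1}^n p(X_i)$, so that
\begin{equation}
	-\frac{1}{n}\log p(X_1,\ldots,X_n) = \frac{1}{n}\sum_{i=1}^n\bigl(-\log p(X_i)\bigr).
\end{equation}
Next I would introduce the random variables $Y_i := -\log p(X_i)$. Since the $X_i$ are i.i.d.\ and each $Y_i$ is one and the same fixed measurable function of $X_i$, the $Y_i$ are i.i.d.\ as well. Because $\mathcal{X}$ is finite, $Y_i$ takes only finitely many values (the letters $x$ with $p(x)=0$ occur with probability zero and can be discarded, or one simply passes to the support of $p$ from the start), so $Y_i$ has finite mean and indeed finite variance, with
\begin{equation}
	\mathbf{E}[Y_1] = -\sum_{x\in\mathcal{X}}p(x)\log p(x) = H(p).
\end{equation}

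Then I would invoke the weak law of large numbers: $\tfrac{1}{n}\sum_{i=1}^n Y_i \to \mathbf{E}[Y_1]$ in probability as $n\to\infty$. Combining this with the first display gives $-\tfrac{1}{n}\log p(X_1,\ldots,X_n)\to H(p)$ in probability, which is the assertion. There is essentially no obstacle; the only point needing a word of care is the case $p(x)=0$, handled as above, and if one prefers not to cite the weak law as a black box one may prove this special case directly from Chebyshev's inequality, which applies precisely because the finiteness of $\mathcal{X}$ furnishes a finite variance for $Y_1$.
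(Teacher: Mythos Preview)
Your proof is correct and follows essentially the same approach as the paper: define $Y_i=-\log p(X_i)$, observe these are i.i.d.\ with mean $H(p)$, and apply the (weak) law of large numbers. The paper's proof is terser and does not explicitly discuss the $p(x)=0$ issue or finite variance, but the argument is the same.
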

The proof using the law of large numbers (which is probably more widely known among physicist) is very simple and therefore I include it here for completeness.
\begin{proof}\cite{cover2012elements}
 As the random variables $X_i$ are independent, so are $Y_i=-\log(p(x_i))$. Hence by the law of large numbers the mean value $\overline Y_i:=\frac{1}{n}\sum_{j=1}^i Y_j$ converges to the expectation value in probability, which is equal to the Shannon entropy $H(p)$.
\end{proof}

This theorem implies, that there is a subset $\mathcal{A}\subset\mathcal{X}^n$ of size approximately $2^{n H(p)}$ such that $\mathbf{P}( \mathcal{A})\stackrel{n\to\infty}{\longrightarrow}$1.


\subsection{The Classical Entropy Cone}\label{classical}

As seen in the last section, interesting non-trivial constraints govern the Shannon entropies of a number of random variables. In this section I introduce the formalism to treat the characterization of the entropies of a collection of random variables.

Consider a collection of random variables $X_1,X_2,...,X_n$. Then for each $I\subset[n]=\{1,2,...,n\}$ there is an entropy $H\left(X_I\right)=H\left((X_i)_{i\in I}\right)$, $2^n$ entropies in total, where we adopt the convention $H(X_\emptyset)=0$, i.e.\ the Shannon entropy of zero random variables is zero. Each such collection defines an \textit{entropy vector} $(H(X_I))_{I\subset [n]}\in \R^{2^n}:= V_n$. We call $V_n$ the \emph{entropy space} and we denote its standard basis vectors by $\{e^{(I)}|I\subset[n]\}$. For fixed $n$, define the set of all such vectors
\begin{equation}
	\Sigma_n=\left\{h(X)\in  V_n|X=(X_i)_{i\in[n]},\ X_i\text{ random variables}\right\},
\end{equation}
where $h(X)=(H(X_I))_{I\subset [n]}$ is the \emph{entropy function}. It turns out that its topological closure $\overline{\Sigma}_n$ is a convex cone \cite{zhang1997non}. One could think that $\Sigma_n$ were closed itself because of the continuity of the Shannon entropy and thus a cone. This is not the case for the following reason: The Shannon entropy is continuous only for a fixed \emph{finite} alphabet. It may happen, however, that some points in the closure of $\Sigma_n$ can only be approached by sequences where the alphabet sizes of the corresponding random variables are not bounded. Matu\v s later proved that $\text{ri}(\overline{\Sigma_n})\subset\Sigma_n$ \cite[Theorem 1]{matus2007two}, where ri$(A)$ denotes the relative interior of $A$, i.e.\ the interior of $A$ in the topology of $\spa A$. This means, that the only points in $\overline{\Sigma}_n$ that cannot be realized by random variables on a finite alphabet are located on the boundary of $A$.
The proof of that fact also produces the dimension of $\Sigma_n$ as a byproduct:
\begin{prop}[\cite{matus2007two}]\label{dimSig}
	$\dim \Sigma_n=2^n-1$
\end{prop}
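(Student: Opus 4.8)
The plan is to show $\dim\Sigma_n = 2^n-1$ by bounding the dimension from above and below separately. For the upper bound, note first that $\Sigma_n$ lives in $V_n=\R^{2^n}$, but the convention $H(X_\emptyset)=0$ forces every entropy vector to lie in the hyperplane $\{v\in V_n \mid v_\emptyset = 0\}$, which already gives $\dim\Sigma_n\le 2^n-1$. So the entire content is the lower bound: one must exhibit a collection of random variables whose entropy vector, together with nearby perturbations, spans a $(2^n-1)$-dimensional affine subspace of the hyperplane $\{v_\emptyset=0\}$. Equivalently, since $\overline{\Sigma}_n$ is a convex cone, it suffices to produce $2^n-1$ points of $\Sigma_n$ (or of $\overline{\Sigma}_n$, since $\mathrm{ri}(\overline{\Sigma}_n)\subset\Sigma_n$) that are linearly independent in $V_n$.

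First I would exhibit, for each nonempty $S\subseteq[n]$, a distribution $p^{(S)}$ realizing the entropy vector $h^{(S)}$ with $h^{(S)}_I = \log 2$ whenever $I\cap S\neq\emptyset$ and $h^{(S)}_I=0$ otherwise: take the random variables $(X_i)_{i\in[n]}$ to be a single fair coin shared by all parties $i\in S$ and deterministic for $i\notin S$. Then $H(X_I)$ equals $\log 2$ exactly when some party in $I$ sees the coin, i.e.\ when $I\cap S\neq\emptyset$, and $0$ otherwise. Up to the scalar $\log 2$, the vector $h^{(S)}$ is the indicator of the up-set $\{I : I\cap S\neq\emptyset\}$. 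The key step is then to check that the $2^n-1$ vectors $\{h^{(S)} : \emptyset\neq S\subseteq[n]\}$ are linearly independent. This is a purely combinatorial fact: the matrix $M$ with rows indexed by nonempty $S$ and columns by nonempty $I$, with entry $M_{S,I}=1$ iff $S\cap I\neq\emptyset$, is invertible. One clean way to see this is to pass to complements: $S\cap I\neq\emptyset \iff I\not\subseteq S^c$, so $M_{S,I} = 1 - [I\subseteq S^c]$, and $M = J - Z$ where $J$ is all-ones and $Z_{S,I}=[I\subseteq S^c]$ is (a submatrix of) the zeta matrix of the Boolean lattice reindexed by complementation, hence triangular with unit diagonal up to the ordering, and therefore unimodular; a short rank computation then shows $J-Z$ is invertible on the $(2^n-1)$-dimensional space of nonempty subsets.

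Combining the two bounds gives $\dim\Sigma_n = 2^n-1$. I expect the main obstacle to be precisely the linear-independence verification in the last step: it is elementary but requires care in choosing a good basis or ordering of the subsets so that the relevant $(2^n-1)\times(2^n-1)$ matrix is manifestly triangular (or otherwise obviously full rank), rather than doing a determinant computation head-on. An alternative route that sidesteps the combinatorics is to invoke the elemental inequalities: it is classical that the Shannon-type cone $\Gamma_n$ (cut out by nonnegativity of conditional mutual informations) is full-dimensional in $\{v_\emptyset=0\}$, and $\Sigma_n\subseteq\Gamma_n$ with $\overline{\Sigma}_n$ having nonempty interior there because $\Gamma_n$ and $\overline{\Sigma}_n$ agree for $n\le 3$ and, for general $n$, $\overline{\Sigma}_n$ is still known to be full-dimensional; but the coin-based explicit construction above is the more self-contained argument and is the one I would write out.
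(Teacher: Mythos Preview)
Your construction of the entropy vectors $h^{(S)}$ (a shared fair coin on $S$, trivial elsewhere) and your upper bound via $v_\emptyset=0$ are exactly what the paper does. The divergence is entirely in the linear-independence step, and there your argument has a genuine error, not just a hand-wave.

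You write $M=J-Z$ with $Z_{S,I}=[I\subseteq S^c]$ and assert that $Z$, being a submatrix of the Boolean zeta matrix after complementing the row index, is ``triangular with unit diagonal up to the ordering, and therefore unimodular''. This is false. After the substitution $T=S^c$, rows are indexed by \emph{proper} subsets $T\subsetneq[n]$ while columns are indexed by \emph{nonempty} subsets $I$; the row $T=\emptyset$ (i.e.\ $S=[n]$) satisfies $[I\subseteq\emptyset]=0$ for every nonempty $I$, so $Z$ has a zero row and is singular. (For $n=2$ one gets $Z=\left(\begin{smallmatrix}0&1&0\\1&0&0\\0&0&0\end{smallmatrix}\right)$.) Being a submatrix of a unitriangular matrix does not transfer invertibility. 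And even had $Z$ been invertible, no ``short rank computation'' gets you from there to invertibility of $J-Z$; that step needs its own argument.

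The paper closes this gap cleanly by exhibiting an explicit linear map that inverts the family $\{v^{(I)}\}$: define $A:V_n\to V_n$ by
\[
(Av)_J=\sum_{K\subseteq J}(-1)^{|J\setminus K|}\bigl(v_{[n]}-v_{[n]\setminus K}\bigr),
\]
and check directly (a one-line M\"obius inversion, since $v^{(I)}_{[n]}-v^{(I)}_{[n]\setminus K}=[I\subseteq K]$) that $Av^{(I)}=e^{(I)}$ for every nonempty $I$. This immediately gives linear independence of the $2^n-1$ vectors $v^{(I)}$ and finishes the proof. If you want to salvage your $J-Z$ viewpoint, the honest route is precisely this: write down the M\"obius-type inverse rather than arguing about the shape of $Z$.
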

\begin{proof}
	Let $Y$ be a fair coin and $T$ a trivial random variable. For each $I\subset [n]$ define a collection of random variables $\left(X^{(I)}_j\right)_{j\in[n]}$ such that
	\begin{equation}
		X^{(I)}_j=\begin{cases}
		    	Y&j\in I\\
		    	T& \mathrm{else}
		    \end{cases}.
	\end{equation}
	Then we have 
	\begin{equation}
	v^{(I)}_J:=h\left(X^{(I)}\right)_J=\begin{cases}
	                        	0&I\cap J=\emptyset\\
	                        	1&\mathrm{else}
	                        \end{cases}.
	\end{equation}
	Consider the linear map
	\begin{eqnarray}
		A: V_n&\to&V_n\nonumber\\
		v&\mapsto& w \text{ with }w_I=\sum_{K\subset I}(-1)^{|I\setminus K|}\left(v_{[n]}-v_{[n]\setminus K}\right).\label{matustrafo}
	\end{eqnarray}
	It follows from an elementary calculation that
	\begin{equation}
		Av^{(I)}=e^{(I)}, \emptyset\not=I\subset [n]
	\end{equation}
	Therefore, as the family $\left(e^{(I)}\right)_{\emptyset\not= I\subset[n ]}$ is linearly independent, so is the family $\left(v^{(I)}\right)_{\emptyset\not= I\subset[n ]}$, and we have
	\begin{eqnarray}
	2^{n}-1&=&\dim\spa\{e^{(I)}|\emptyset\not= I\subset [n]\}\nonumber\\&=& \dim\spa\{v^{(I)}|I\subset [n]\}\nonumber\\ &\le&\dim \spa \Sigma_n\nonumber\\&=&\dim\Sigma_n.
	\end{eqnarray}
	But $v_{\emptyset}=0$ for all $v\in\Sigma_n$ and dim$V_n=2^n$, so $\dim \Sigma_n\le2^n-1$, ergo $\dim \Sigma_n=2^n-1$.
\end{proof}

It is an important problem to characterize this cone. One way to do this is to find linear inequalities for $\Sigma_n$, e.g.\ $H(X_I)\ge 0\,\forall I\subset[n]$. For a long time it was not known whether there are more inequalities in addition to the Shannon type inequalities, i.e.\ the positivity conditions for the Shannon information measures introduced in the last section. It was not until almost fifty years after Claude Shannon's seminal work \cite{shannon1948mathematical} until Yeung and Zhang discovered a new, non Shannon type inequality \cite{zhang1997non}.
This new inequality is quite complicated, in its compact form using conditional mutual informations it reads
\begin{equation}
	I(X_1:X_2)+I(X_1:X_{34})+3I(X_3:X_4|X_1)+I(X_3:X_4|X_2)-2I(X_3: X_4)\ge 0.
\end{equation}
Expanding it into Shannon entropies we get the lengthy expression
\begin{equation}
	-H(1)-2H(3)-2H(4)-2H({12})+3H({34})+3H({31})+3H({41})+H({32})+H({42})-4H({134})-H({234})\ge0.
\end{equation}
None of the two expressions has an operational meaning easily accessible to understanding, nevertheless it can be shown that non Shannon-type information inequalities play a role, for example in entropic marginal problems \cite{fritz2011entropic} or network coding \cite{dougherty2007networks}.

Later Matu\v{s} found an infinite family of independent inequalities \cite{matus2007infinitely} for four or more random variables. In addition he proved that infinitely many of them define facets, proving that the cone is not \emph{polyhedral}, i.e.\ its base is not a polytope.

Let us formalize the notion of an information inequality. Looking at the vector space $ V_n$ that contains $\Sigma_n$ we see that we can identify a linear information inequality with an element of its dual space. An element $f\in  V_n^*\cong  V_n$ corresponds to a valid information inequality if
\begin{equation}
	\sum_{I\subset[n]}f_I H(X_I)\ge 0
\end{equation}
for all sets of random variables $(X_1,...,X_n)$. Or, without explicit reference to random variables, ergo in purely geometric terms, $f$ corresponds to a valid information inequality if and only if
\begin{equation}
	f(x)\ge 0\, \forall x\in \Sigma_n,
\end{equation}
which means that the dual cone $\Sigma_n^*$ is exactly the set of valid information inequalities. Let us call an information inequality \emph{essential for }$\Sigma_n$, if it is an extremal ray of $\Sigma_n^*$. That is equivalent to the fact that that it defines a facet of $\Sigma_n$.

There is an important subcone of $\Sigma_n^*$, that is the set of all balanced information inequalities. A functional $f\in  V_n^*$ is defined to be \emph{balanced}, if
\begin{equation}
	\sum_{I\ni i}f_I=0\, \forall \ i\in[n].
\end{equation}
The $n$ equations above are independent, i.e.\ they define a subspace $B_n\subset V_n^*$ with $\dim(B_n)=2^n-n$. The subset of valid information inequalities in $B_n$ is the cone of balanced information inequalities,
\begin{equation}
	\Sigma_{n,b}^*=\Sigma_n^*\cap B_n.
\end{equation}
Closely related is the notion of \emph{residual weights} introduced by Chan \cite{chan2003balanced}. Given a functional $f\in V_n^*$, its $i$th residual weight is defined by
\begin{equation}\label{resids}
	r_i(f)=\sum_{I\ni i}f_I.
\end{equation}
The definition is equivalent to saying that the $i$th residual weight of $f$ is defined by $r_i(f)=f(v^{(i)})$ with $v^{(i)}\in\Sigma_n$, $v^{(i)}_I=|\{i\}\cap I|$.

In 2002 Chan and Yeung proved a theorem that provides an algebraic characterization of the classical entropy cone by connecting entropies and subgroup sizes:
\begin{thm}[\cite{chan2002relation}]\label{chanyeung1}
	Let $X=(X_i)_{i\in [n]}$ be an $n$-partite random variable. Then there exits a sequence of tuples of finite groups $(G,G_1, G_2,...,G_n)_k, k\in\N$ with $G_i\subset G$ subgroups, such that
	\begin{equation}
		H(X_I)=\lim_{k\to\infty}\frac{1}{k}\log\frac{|G|}{\left|G_I\right|}\, \forall I\subset[n]
	\end{equation}
	where $G_I=\bigcap_{i\in I}G_i$.
	Conversely, for any group tuple $(G,G_1, G_2,...,G_n)$ there is a random variable $Y=(Y_I)_{I\subset [n]}$ such that
	\begin{equation}
		H(Y_I)=\log\frac{|G|}{\left|G_I\right|}
	\end{equation}
\end{thm}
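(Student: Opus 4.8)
The plan is to prove the two directions separately, beginning with the easier ``conversely'' statement. Given a group tuple $(G,G_1,\dots,G_n)$, let $\Lambda$ be a uniformly distributed random element of $G$ and set $Y_i=\Lambda G_i$, the left coset of $G_i$ containing $\Lambda$, so that $Y_i$ takes values in the coset space $G/G_i$. Since $\Lambda$ is uniform and all cosets of $G_i$ have cardinality $|G_i|$, the variable $Y_i$ is uniform on $G/G_i$ and $H(Y_i)=\log(|G|/|G_i|)$. For general $I\subset[n]$ the key observation is that $gG_i=g'G_i$ for every $i\in I$ if and only if $g^{-1}g'\in\bigcap_{i\in I}G_i=G_I$, i.e.\ if and only if $gG_I=g'G_I$; hence the map $g\mapsto(gG_i)_{i\in I}$ has fibres equal to the left cosets of $G_I$, so $Y_I=(Y_i)_{i\in I}$ is uniform on a set of size $|G|/|G_I|$ and $H(Y_I)=\log(|G|/|G_I|)$. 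The empty-intersection convention $G_\emptyset=G$ is consistent with $H(Y_\emptyset)=0$.

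For the asymptotic direction, I would realise the entropies of $X=(X_i)_{i\in[n]}$ by means of Young subgroups of symmetric groups, using the type-class counting that underlies the asymptotic equipartition property (Theorem \ref{AEP}). Write $p$ for the joint distribution of $X$ on the finite product alphabet $\mathcal{X}=\mathcal{X}_1\times\dots\times\mathcal{X}_n$. For each $k$ choose a string $x^{(k)}\in\mathcal{X}^k$ whose empirical type $q^{(k)}$ satisfies $q^{(k)}\to p$ as $k\to\infty$; such strings exist because the types of length-$k$ strings are exactly the probability vectors with entries in $\frac{1}{k}\Z$, which become dense in the simplex as $k\to\infty$. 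Put $G=S_k$ acting on the $k$ positions by permuting coordinates, let $x_i^{(k)}\in\mathcal{X}_i^k$ be the string obtained from $x^{(k)}$ by projecting each letter to its $i$-th component, and let $G_i\subset S_k$ be the stabiliser of $x_i^{(k)}$. Each $G_i$ is then a Young subgroup $\prod_{a\in\mathcal{X}_i}S_{N_a^{(i)}}$, where $N_a^{(i)}$ is the number of positions $j$ with $(x_i^{(k)})_j=a$, so that $|G|/|G_i|=k!/\prod_a N_a^{(i)}!$ is the size of the type class of $x_i^{(k)}$.

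The crucial point is that intersections behave correctly: $G_I=\bigcap_{i\in I}G_i$ is the simultaneous stabiliser of the strings $(x_i^{(k)})_{i\in I}$, which equals the stabiliser of the single string over $\prod_{i\in I}\mathcal{X}_i$ obtained by projecting $x^{(k)}$ to the coordinates in $I$; this is again a Young subgroup, and $|G|/|G_I|$ is the size of the associated type class, whose type is the $I$-marginal of $q^{(k)}$. Since $q^{(k)}\to p$ on the finite alphabet, all marginal types converge as well, and the standard estimate $\frac{1}{k}\log|T_{r^{(k)}}|\to H(r)$ for any sequence of types $r^{(k)}\to r$ (a consequence of Stirling's formula, equivalently of the elementary bounds on type-class sizes) yields $\frac{1}{k}\log(|G|/|G_I|)\to H(X_I)$ for all $I\subset[n]$ simultaneously. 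The sequence of tuples $(S_k,G_1,\dots,G_n)_k$ then does the job.

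I expect the only genuinely non-routine step to be the clean identification of $G_I=\bigcap_{i\in I}G_i$ with the stabiliser of the joint $I$-projection of $x^{(k)}$, and hence with a Young subgroup whose index is a type-class size; once this is in place, the convergence $\frac{1}{k}\log(|G|/|G_I|)\to H(X_I)$ is a routine Stirling estimate that holds uniformly over the finitely many subsets $I$. The remaining ingredients --- the coset construction, the uniformity of the pushforward measures, and the bookkeeping with marginals --- are elementary.
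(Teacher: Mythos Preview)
Your proposal is correct and follows essentially the same route the paper takes (in Chapter~\ref{CYR}, after Theorem~\ref{grouplesscy}): for the forward direction you use $G=S_k$ with $G_i$ the Young-subgroup stabilisers of the marginal strings, identify $\bigcap_{i\in I}G_i$ with the stabiliser of the $I$-projection, and finish with Stirling; for the converse you give the standard coset construction $Y_i=\Lambda G_i$ from \cite{chan2002relation}, which the paper cites rather than reproves. The only cosmetic difference is that the paper first assumes $p$ rational with denominator $q$ and works in $S_{qk}$, deferring the irrational case to a continuity argument, whereas you approximate $p$ directly by length-$k$ types---these are equivalent.
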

On the one hand this is a very nice result as it provides us with an additional toolbox for attacking the entropy cone problem. On the other hand, finite groups are completely characterized indeed \cite{aschbacher2004status}, but this characterization is hugely complicated and suggests that one should not expect too much of a simplification switching from entropies to finite groups.

\subsubsection{Application: Network Coding}

Entropy inequalities are extremely useful in practice. For example they are the laws constraining \emph{network codes}. Although not widely used as of today, the current research effort indicates that network coding will be commercially applied in the future (see for example \cite{medard2011network}, Chapter 4.2 or \cite{heide2009network, pedersen2008implementation}). 

\paragraph{}The following short introduction to network coding is similar to the one in \cite{yeung1997framework}.
To get an idea how network coding can be useful let us first understand how almost the entire network infrastructure of today's world works. We can describe a network as a directed graph, where each vertex represents a node and each edge represents a channel. Each edge also has a number assigned to it which is the \emph{capacity}. The common \emph{store-and-forward} network architecture amounts to mere routing: A message is encoded by the sender node, then it is routed through the network to the receiver node, where it is decoded. This protocol is optimal for exactly one sender and one receiver being active in the network. But already when to nodes want to \emph{exchange} a pair of messages, there are conceivable network scenarios where a store-and-forward protocol cannot reach the maximum possible capacity.

\paragraph{}Network coding means that not only sender and receiver may perform coding operations, but also intermediate nodes. This provides an advantage in a variety of scenarios, one of which is described in the following paragraph.
\paragraph{Example}
To see how network coding protocols can outperform store-and-forward protocols \cite{yeung2008information} consider the following situation. Let Alice and Bob be situated on two different continents. They want to communicate over a satellite that can perform one of two operations per time interval, it can either receive a unit message from one sender or broadcast a unit message. This system can be described by the graph shown in Figure \ref{satellite-graph}.
\begin{figure}
\centering
 \includegraphics[width=7cm]{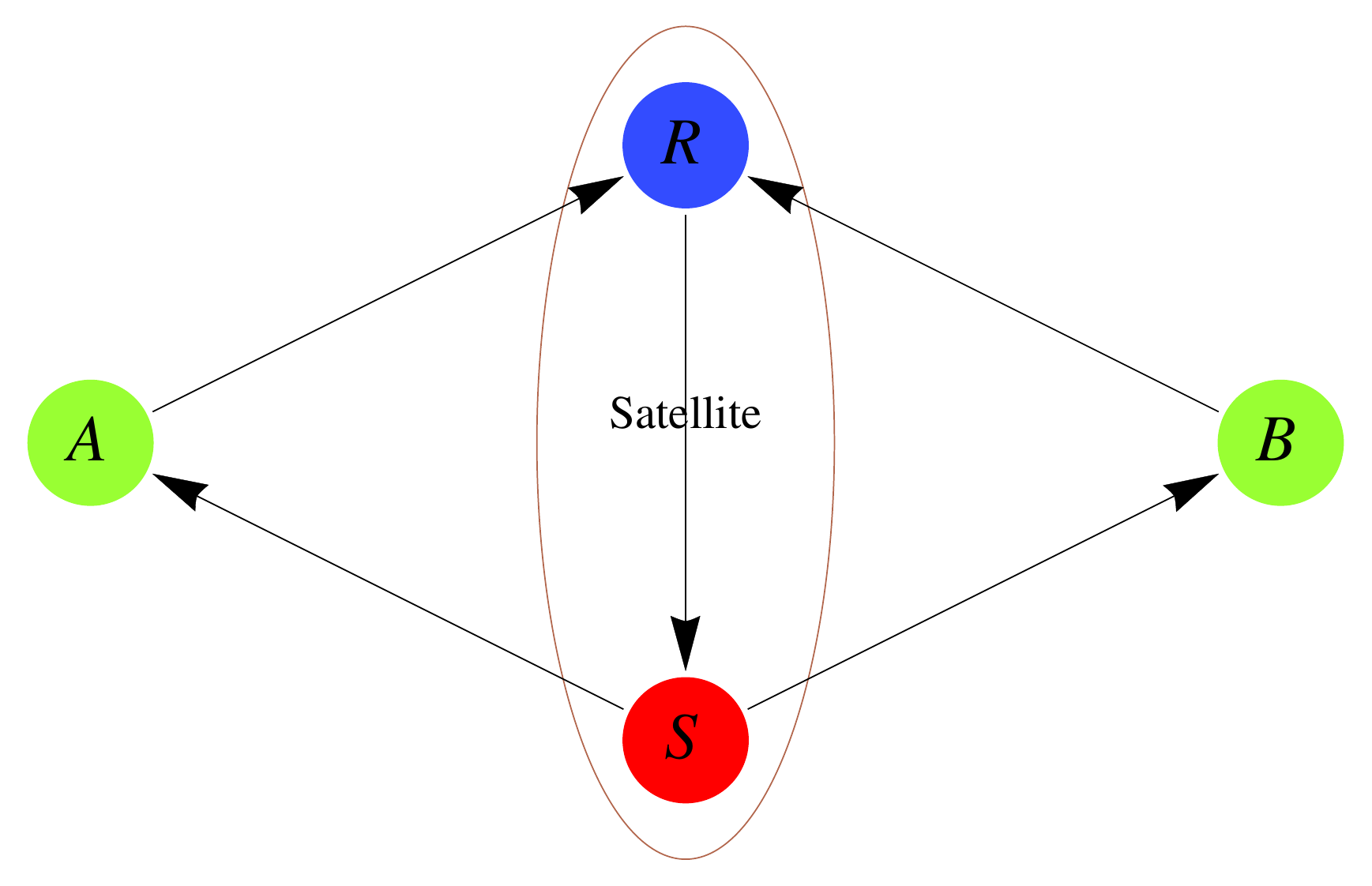}
 \caption{Directed graph representing a satellite communication scenario. 'A' and 'B' stand for Alice and Bob, the two communication partners, 'R' stands for receiver and 'S' for sender, and it is assumed that each channel represented by an arrow has unit capacity, R can receive from the left or from the right channel but not both and that only R or S can be active at the same time.}\label{satellite-graph}
\end{figure}
Now assume Alice and bob want to exchange unit messages $m_A$ and $m_B$. With a store-and-forward protocol this needs 4 time intervals: 
\begin{enumerate}
 \item A sends $m_A$ to the satellite
 \item B sends $m_B$ to the satellite
 \item The satellite broadcasts $m_A$
 \item The satellite broadcasts $m_B$
\end{enumerate}
However, if we allow the satellite to perform a very simple coding computation, the communication task can be completed within 3 time intervals:

\begin{enumerate}
 \item A sends $m_A$ to the satellite
 \item B sends $m_B$ to the satellite
 \item The satellite broadcasts $m_S=m_A\oplus m_B$
\end{enumerate}
Here $\oplus$ denotes modulo two addition. Alice can now decode $m_B=m_S\oplus m_A$ because she already has $m_A$ and Bob can in the same fashion compute $m_A=m_S\oplus m_B$.

\paragraph{}In this simple example network coding was able to outperform routing by 25

\paragraph{}
To mathematically formalize a general network coding scenario, let us recall some notions from graph theory.
\begin{defn}[Graph, Multigraph]
 A \emph{graph} $\mathcal{G}$ is a pair $\mathcal{G}=(V,E)$, where $V$ is a finite set called \emph{vertex set}, and $E\subset \left\{\{x,y\}\Big|x,y\in V\right\}$ is called \emph{edge set}. In a \emph{directed graph} the edge set contains ordered pairs, i.e.\ $E\subset V\times V$, we say an edge $e=(v,w)$ points from $v$ to $w$. In a (directed) \emph{multigraph} the edges from a multiset of (ordered) pairs from $V$.
\end{defn}
The pictures to have in mind reading this definition are shown in Figure \ref{graphs}.
\begin{figure}
\begin{center}
\subfigure[Graph]{\includegraphics[width=2.4in]{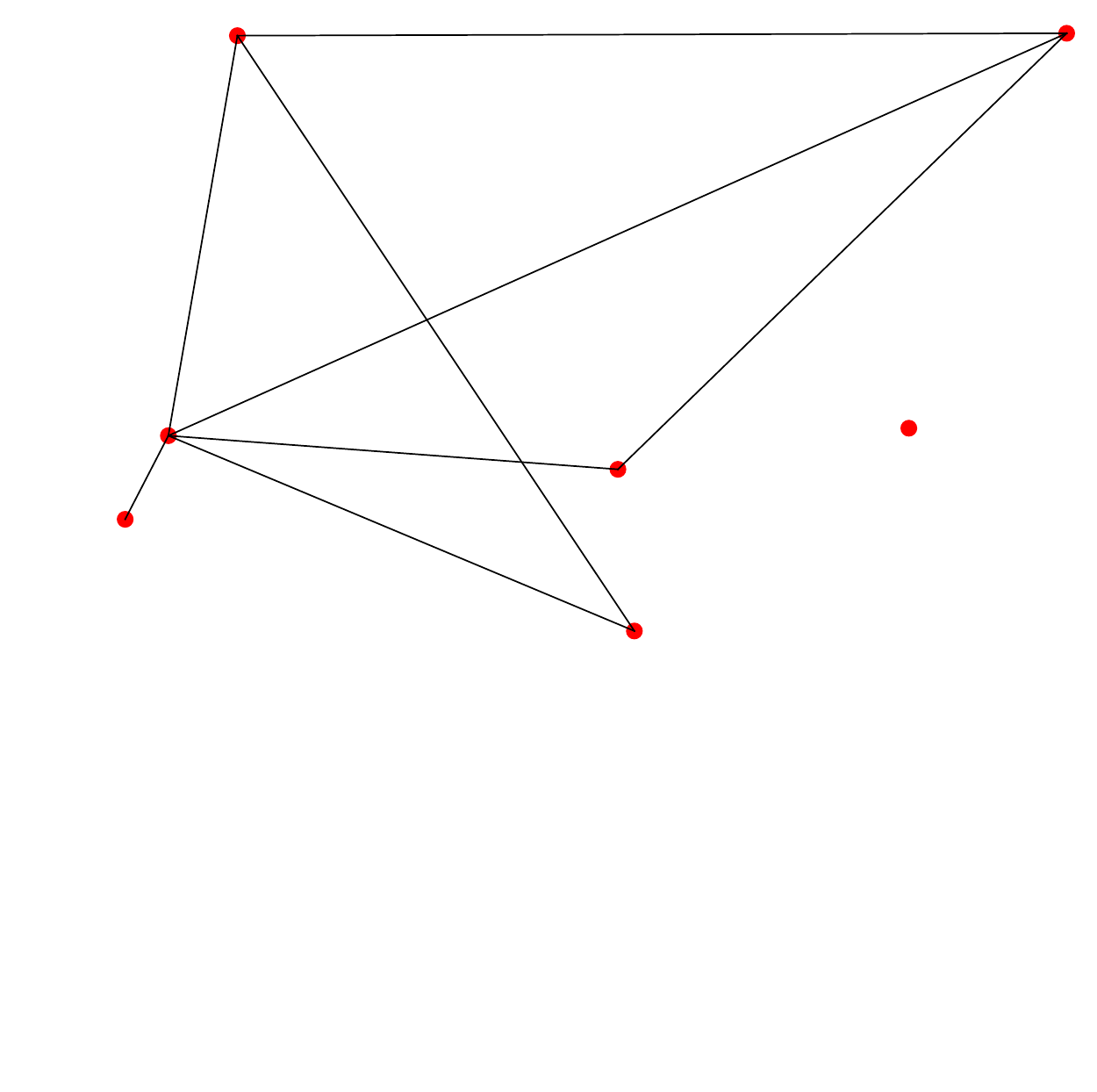}  \hspace{0.5cm}}
\subfigure[Directed graph]{\includegraphics[width=2.4in]{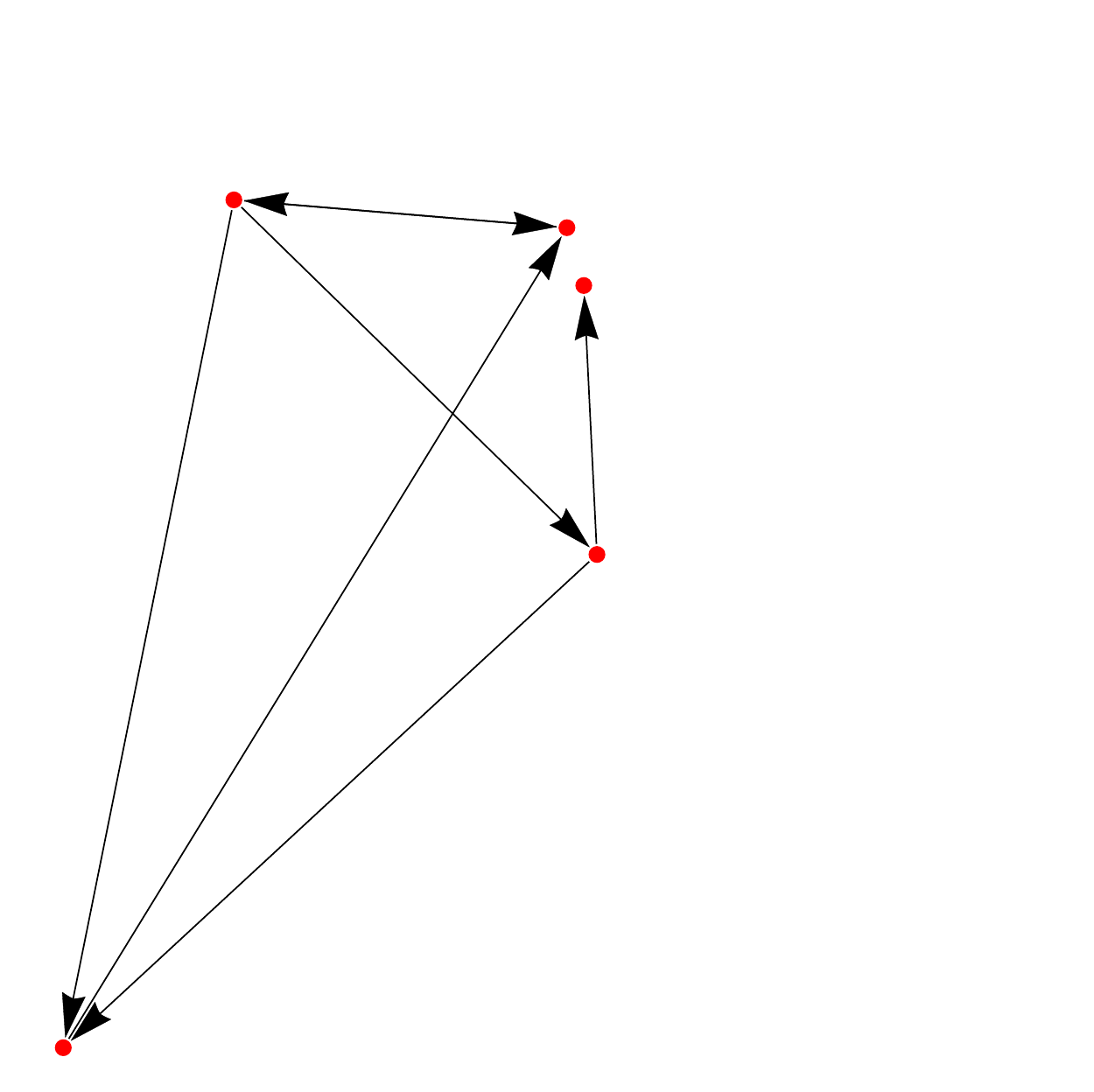} } \\
\subfigure[Multigraph]{\includegraphics[width=2.4in]{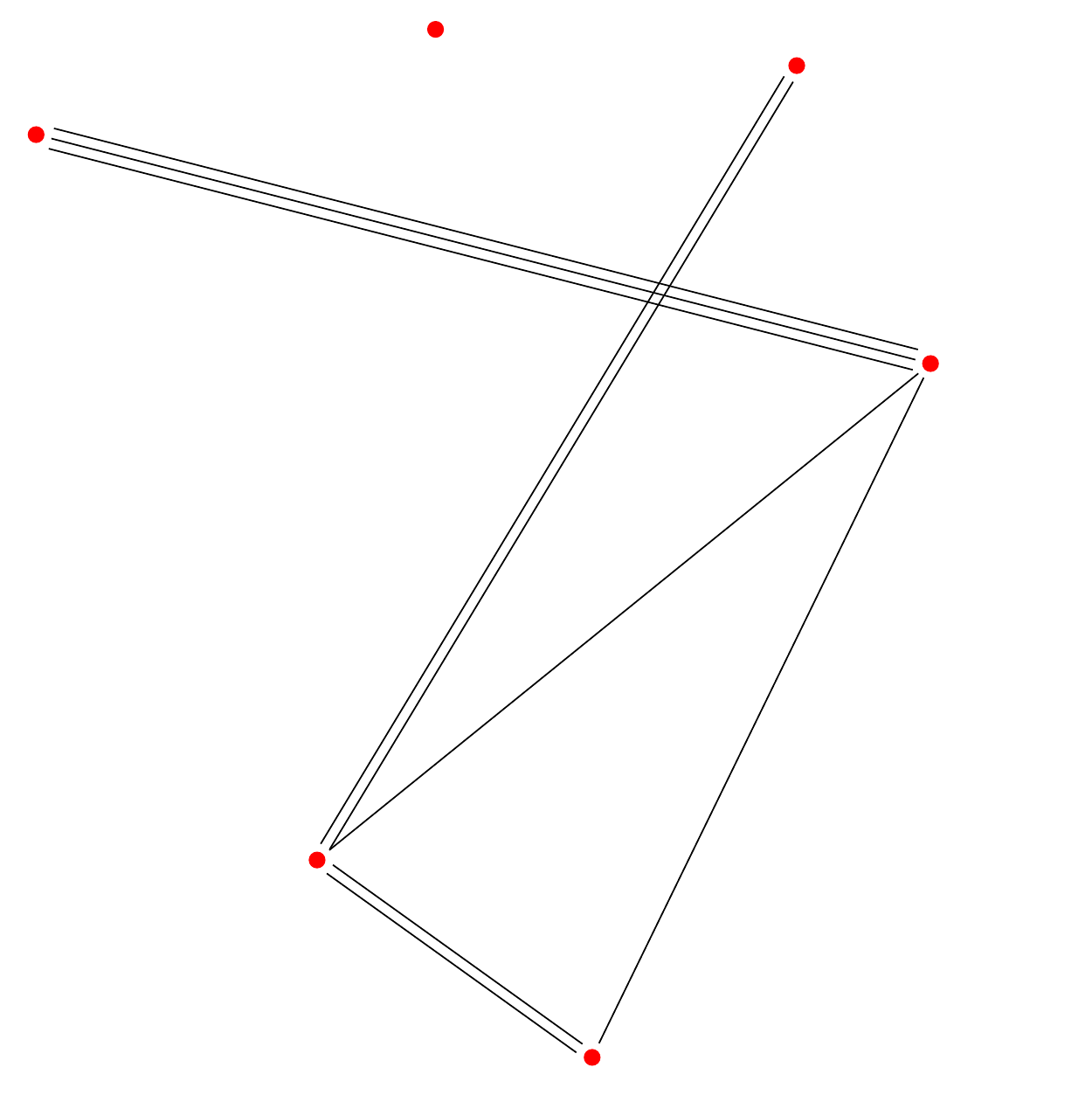}} \hspace{0.5cm}
\subfigure[Directed Multigraph]{\includegraphics[width=2.4in]{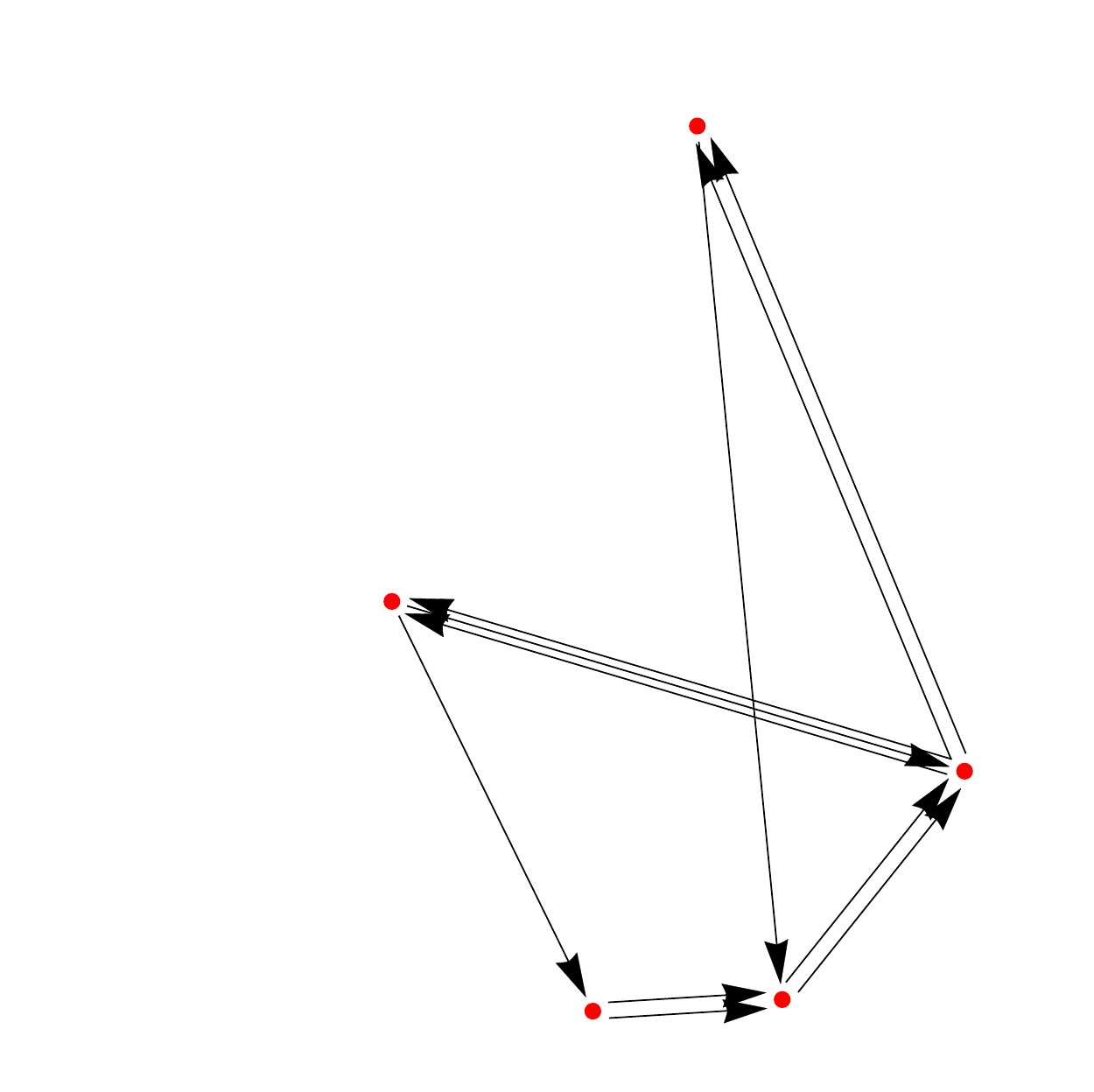}} 
\end{center}
\caption{Pictures of different graph notions.}
   \label{graphs}
\end{figure}
For a vertex $v\in V$ of directed graph $\mathcal{G}=(V,E)$ we denote by $\mathrm{in}(v)\subset E$ the set of edges pointing at $v$, and by $\mathrm{out}(v)\subset E$ the set of vertices pointing away from $v$.

A general network communication scenario can be described by the following data:
\begin{itemize}
 \item A directed graph $G={V,E}$. The vertices represent nodes in the network, the edges represent communication channels.
 \item A map $P: V\to\{\mathbf{s,n,t}\}$ which specifies whether a node is a source node ($\mathbf s$), a regular node ($\mathbf n$) or a target node ($\mathbf t$). define $S=P^{-1}(\{\mathbf s\})$ and $T=P^{-1}(\{\mathbf t\})$.
 \item A map $C: E\to \R_+$ which specifies the \emph{capacity} of each channel
 \item A map $\omega: S\to \R_+$ specifying the \emph{rate} of the sources. We write $\omega_s:=\omega(s)$. 
 \item A map $D: T\to 2^S$ to specify which target needs to receive which sources' information
\end{itemize}
For convenience of notation continue $\omega$ to all vertices by setting $\omega_v=0\,\, \forall v\not\in S$.
A Network code is now an assignment of random variables $(X_v)_{v\in V}$ and $(Y_e)_{e\in E}$ such that the following conditions are satisfied:
\begin{enumerate}
\item $H(X_S)=\sum_{s\in S}H(X_s)$
  \item $H(X_v)\ge \omega_v$
 \item $H\left(Y_e|Y_{\mathrm{in}(v)}, X_v\right)=0$ for all $e=(v,w)\in E$
 \item $H(Y_e)\le C(e)$
 \item $H\left(X_{D(t)}|Y_{\mathrm{in}(t)}\right)=0$
\end{enumerate}
These conditions mean that the source variables are independent (1.), that they can encode the amount of information given by the source rate (2.), that information transmitted through a channel should be a function of the information available at the sender node (3.), that the information send through a channel is bounded by its capacity (4.), and finally that the information intended for the target node $t$ is actually available there (5.). 

\paragraph{}Having found such random variables, we have solved the task to distribute the information as intended in one time step. We say a rate tuple $\omega=(\omega_s)_{s\in S}$ is \emph{asymptotically achievable} for a certain network with capacities $C$ if for all $\epsilon\ge 0$ there exists a $n\in\N$ such that there is a network code for the rate tuple $\left(n(\omega_s-\epsilon)\right)_{s\in S}$ and capacities $n C$. (Note that this notion is simplified compared to the one in \cite{yeung2008information} to concisely present the concepts. For a practically more relevant definition of achievable information rate see e.g.\ the aforementioned introductory text \cite{yeung2008information}.) The connection to the Shannon entropy cone becomes clear now. Let 
\begin{equation}
 \mathcal{L}_1:=\left\{h\in \R^{2^{|V|+|E|}}|h_S=\sum_{s\in S}h_s\right\}
\end{equation}
the set of vectors in $\R^{2^{|V|+|E|}}$ that satisfies condition 1, and let $\mathcal{L}_2$ to $\mathcal{L}_5$ be defined in an analogous way. Furthermore define the projection onto the source node entropies
\begin{eqnarray}
 \Pi_S: \R^{2^{|V|+|E|}}&\to&\R^{|S|}\nonumber\\
 h&\mapsto&(h_s)_{s\in S}.
\end{eqnarray}
Then a rate tuple $\omega$ is directly achievable if and only if
\begin{equation}
 \omega\in\Pi_S\left[\Sigma_{|V|+|E|}\cap\mathcal L_1\cap \mathcal L_3\cap \mathcal L_4\cap \mathcal L_5\right]
\end{equation}
A similar characterization result can be proven for asymptotically achievable rates (\cite{yeung2008information}, Theorem 21.5). The implications of this result are far-reaching: As pointed out in the introduction, we can expect network coding to be used in communication infrastructure in the not too far future. But we are far from being able to even determine the maximum achievable rate region for a general network, let alone finding an actual implementation that achieves it. This shows that the study of the Entropy cone is far from being of purely academic interest.

\subsection{Classical Marginal Problem}

Let us first look at a geometric marginal problem to get an idea what makes marginal problems so difficult.
\begin{figure}
\begin{center}
\subfigure{\includegraphics[width=4cm]{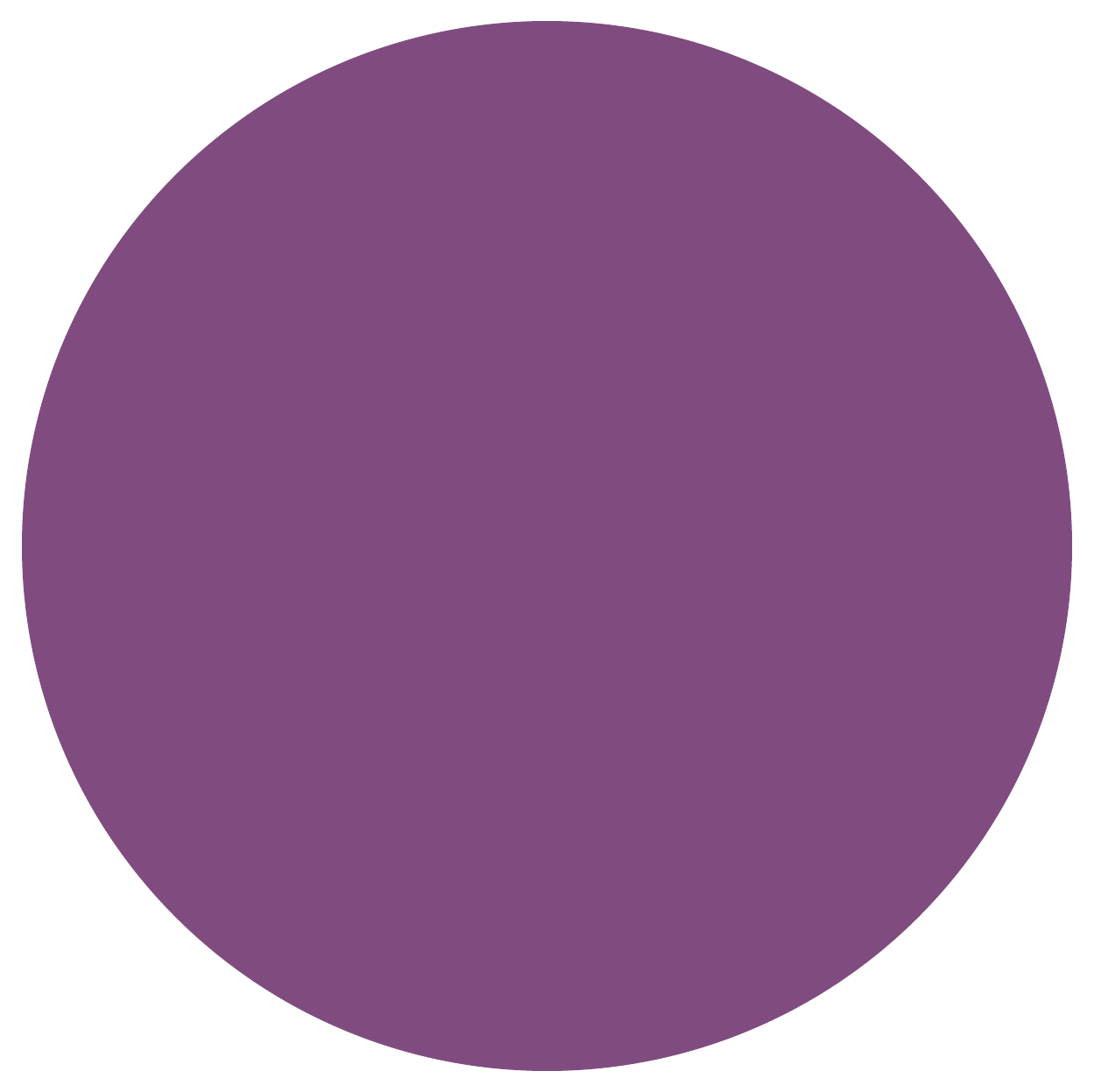} }
\subfigure{\includegraphics[width=4cm]{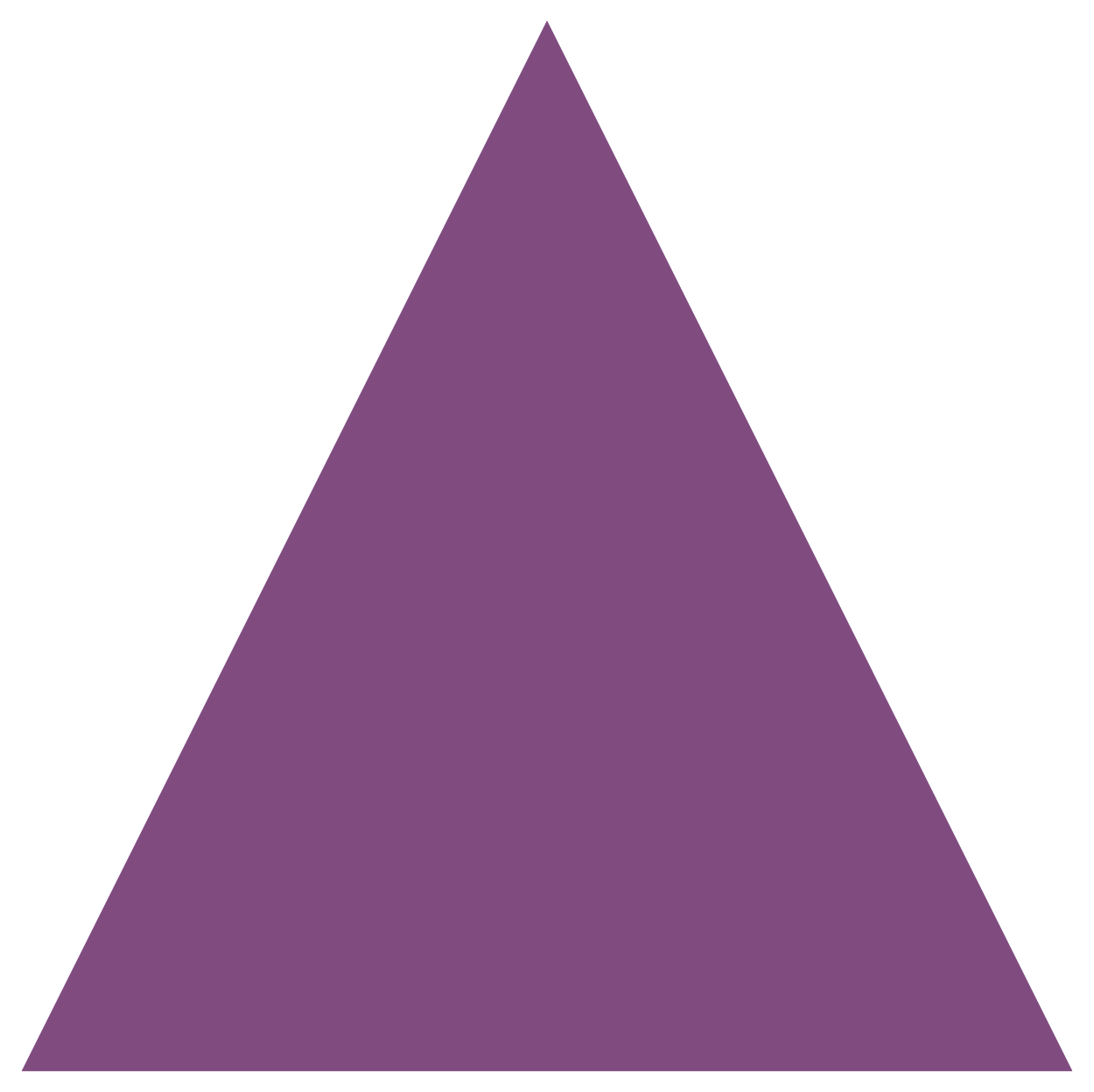}}
\subfigure{\includegraphics[width=4cm]{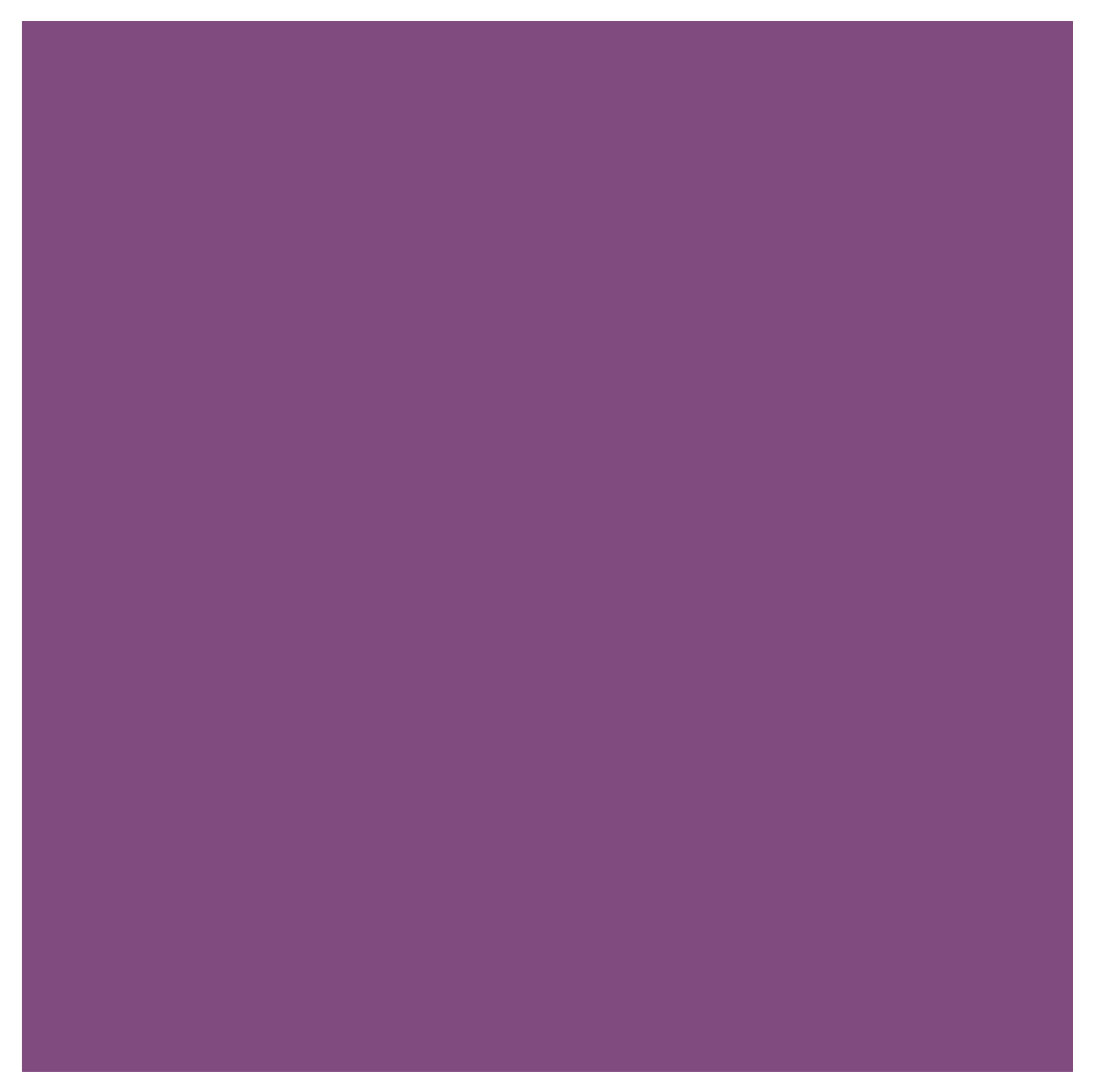} } \\
\subfigure{\includegraphics[width=4cm]{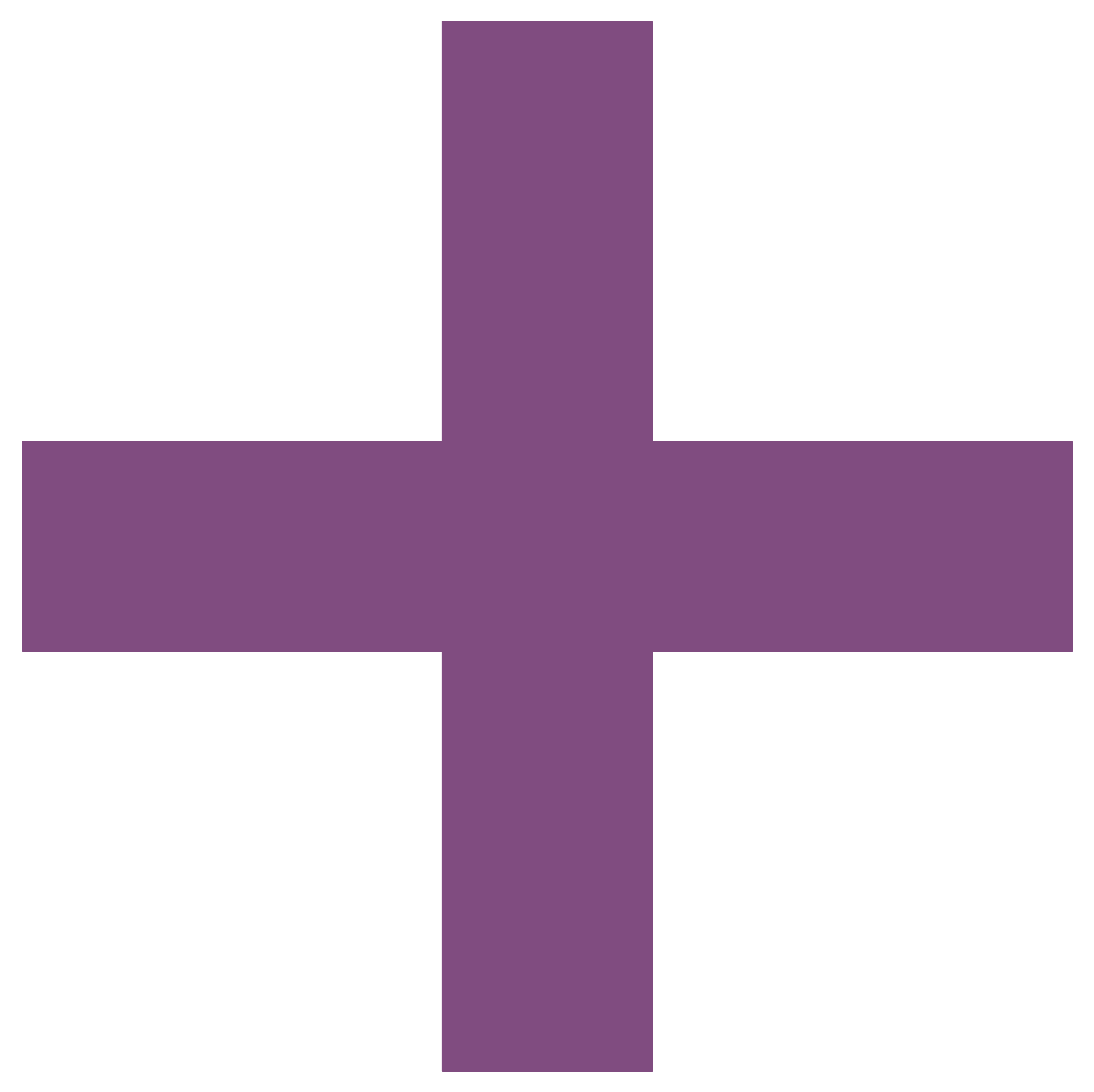}}
\subfigure{\includegraphics[width=4cm]{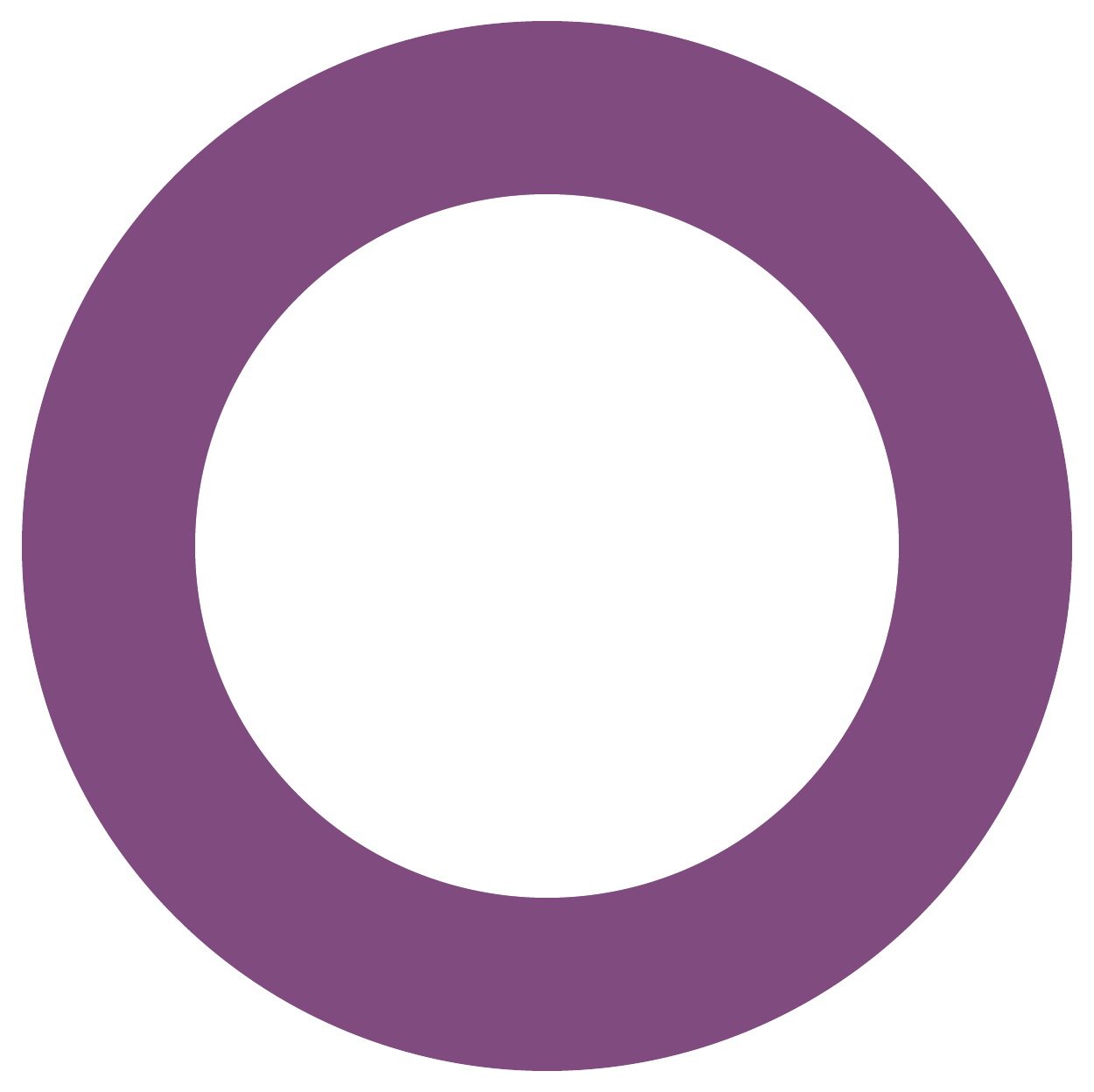}}
\subfigure{\includegraphics[width=4cm]{square}} 
\end{center}
\caption{Geometric marginal problem: Are there genuine three dimensional bodies that have a) the three shapes in the first row as coordinate plane projections, or b) the three shapes in the second row?}
   \label{projs}
\end{figure}
In figure \ref{projs} two triples of two dimensional geometric shapes are shown. Is there a three dimensional body such that the three shapes arise as the three projections onto the coordinate planes? We want an actual three dimensional body with no ``thin'' parts, i.e.\ the closure of the interior should contain the body itself. For the first triple that is certainly possible, the three-dimensional Body is shown in Figure \ref{MargBody}. For the second triple there seems to be no obvious solution.

\begin{figure}
\centering
 \includegraphics[width=7cm]{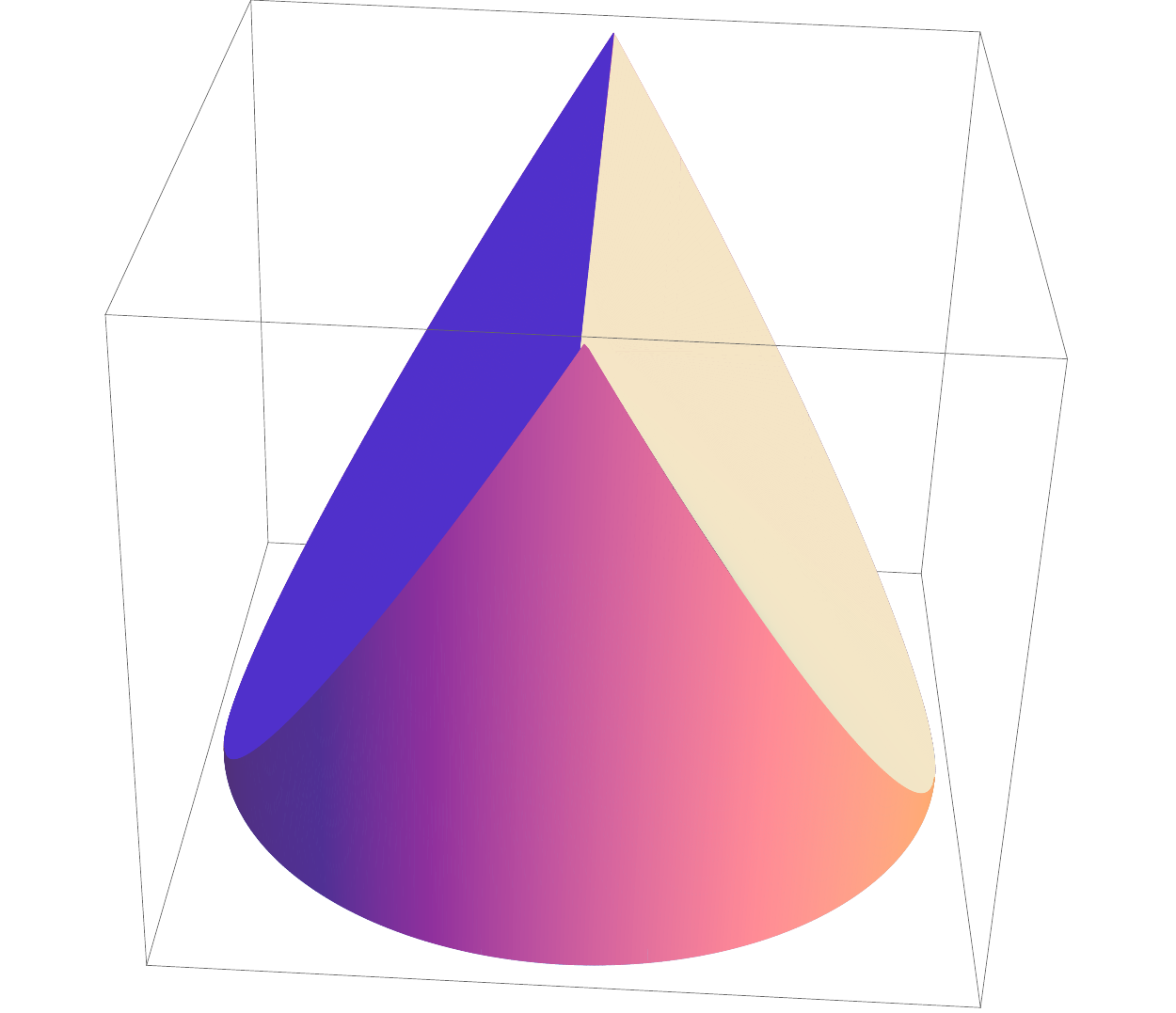}
 \caption{Solution to problem a) from Figure \ref{projs}}\label{MargBody}
\end{figure}

This simple-to-state geometric problem already captures the difficulty of marginal problems: The projections are not independent as overlapping dimensions survive. Finding a four dimensional body that has two given two dimensional projections is fairly easy, the Cartesian product of the two does the trick, which is possible because the two projections can be chosen orthogonal and thus independently controllable.

\paragraph{}The classical marginal problem is that of random variables, which can be stated in the following way: Given some probability distributions claimed to be the marginals of a global distribution, check whether a compatible global distribution exists. In other words, are the given distributions compatible with each other? \cite{fritz2011entropic} describes a couple of examples in which situations marginal problems arise, e.g.\ when investigating privacy issues when anonymizing data from databases, in artificial intelligence or when studying quantum non-locality. In the following paragraph I will describe a classic scenario from the latter field of research as an example.

\paragraph{Example: Bell Inequalities}

One of the counterintuitive features of quantum mechanics is that generically measuring an observable of a system also changes the state of the system. This implies in particular that the results of different measurements on the same system cannot be obtained unless many copies of the state are available. The outcome of a quantum measurement constitutes a random variable. Now consider an $n$-partite quantum state. Measurements on different subsystems commute and can therefore be performed simultaneously. Let us assume that each subsystem admits a number $k$ of different non-commuting measurements, So we get $nk$ random variables $X_{i,j},\ 1\le i\le n,\ 1\le j\le k$, one for each system and each measurement. But we can only obtain the probability distributions $p_{X_{1,j_1}, ..., X_{n,j_n}}$ where for each subsystem only one of the $k$ measurements is considered. The question whether these distributions can arise from a joint probability distribution of all $nk$ random variables is a marginal problem. 

The famous Bell inequalities \cite{bell1964einstein} are the affine inequalities that define the boundary of the image of the set of all possible distributions of $nk$ random variables (defined on a fixed Alphabet), the \emph{probability simplex}, under the linear marginalization map that maps the global distribution to the set of jointly observable marginals.

\paragraph{}This example also shows the connection to entropy inequalities. The Shannon entropy is, as described above, a function of the probability distribution, so it is not surprising that Bell inequalities have non-trivial corollaries in terms of entropies \cite{braunstein1990wringing}. In \cite{chaves2012entropic} and \cite{fritz2011entropic} it is described how in principle a complete set of entropic Bell inequalities can be obtained from a complete description of the Shannon entropy cone.

\paragraph{}Let us now give a more formal definition of the classical marginal problem:

\begin{qu}[Classical Marginal Problem]\label{classmarg}
Let $\mathcal{A}\subset 2^{[n]}$ be a subset of the power set of the $n$ element set, $\mathcal{X}$ an alphabet and let for each $I\in\mathcal{A}$ $p_I: \mathcal{X}^{|I|}\to[0,1]$ be a probability distribution. Do there exist random variables $X_1,...,X_n$ on $\mathcal X$such that for all $I\in \mathcal{A}$ $X_I=(X_i)_{i\in I}$ is distributed according to $p_I$?
\end{qu}

Solving this problem is equivalent to characterizing the image of the probability simplex under the marginal map

\begin{eqnarray}
 m_\mathcal{A}: \mathcal{P}^{|\mathcal X|^n}&\to &\bigoplus_{I\in\mathcal{A}} \mathcal{P}^{|\mathcal X|^{|I|}}\nonumber\\
 \left(p:\mathcal X^n\to[0,1]\right)&\mapsto&\left(p_I: \mathcal X ^{|I|}\to[0,1]\right)_{I\in \mathcal A},\\
 \text{with }p_{I}(x)&=&\sum_{\substack{y\in\mathcal X^{n}\\y_I=x}}p(y)\nonumber
\end{eqnarray}

Calculating the image of a polytope under a linear map is a fairly easy computational task, but anyway problematic in high dimensions. In Chapter \ref{CYR} I will give a connection  of this problem to representation Theory.


\section{Quantum Information Theory}

Quantum information theory is the mathematical framework for utilizing quantum mechanical systems for information processing. In this chapter I want to introduce the mathematical concepts relevant for this thesis. The substructure resembles the one of the last chapter: First, I will establish in brevity the fundamentals of quantum information theory, in the subsequent section I introduce the von Neumann entropy which plays a similar role as the Shannon entropy does in Classical information theory, and eventually I describe the quantum entropy cone and review some of its properties.

A great introduction to quantum information theory can be found in \cite{nielsen2010quantum}, in the following I introduce the basic concepts as they can be found there. In quantum information theory, states are positive semidefinite operators $\rho$ on a finite-dimensional, complex Hilbert space $\mathcal{H}$ that have unit trace, i.e.
\begin{equation}
 \rho\in\left\{\sigma\in \hom(\mathcal{H})\Big|\bra{\psi}\sigma\ket{\psi}\ge 0\,\forall \ket{\psi}\in \mathcal{H},\ \tr\sigma=1\right\}=:\mathcal{B}(\mathcal{H})
\end{equation}
where $\bra{\psi}$ is the dual vector of $\ket\psi$ employing Dirac notation. The operator $\rho$ is called \emph{density operator} of the quantum system. A state is called \emph{pure} if it has rank one, otherwise it is called \emph{mixed}. A pure state $\rho=\ketbra{\psi}{\psi}$ can also be represented as unit vector $\ket{\psi}\in\mathcal{H}$. 

In all quantum theories measurement plays a crucial role. In quantum information theory, in particular, it is important as only classical information is human readable and the measurement is the way the extraction of classical information from a quantum system can be achieved. Mathematically a measurement is specified by a set of measurement operators $M_i, i=1,...,l$ such that
\begin{equation}\label{measnorm}
 \sum_i M_i^\dagger M_i=\mathds{1}.
\end{equation}
The probability, that outcome $i$ occurs when the quantum system that is measured is in state $\rho$ is given by
\begin{equation}
 p_i=\tr(M_i^\dagger M_i\rho).
\end{equation}
the resulting probability distribution is normalized because of the unit trace condition on $\rho$ and Equation \eqref{measnorm}. One of the main differences between classical and quantum theories is, that the measurement process affects the state of the system. The post-measurement state is given by
\begin{equation}
 \rho'=\sum_i M_i\rho M_i^\dagger.
\end{equation}

Quantum information theory is a generalization of classical information theory. A random variable $X$ on an alphabet $\mathcal{X}$ with probability distribution $p$ corresponds to a density operator $\rho_X$ on $\mathcal{H}_\mathcal{X}=\C\mathcal{X}$ that is diagonal in the defining basis, i.e.
\begin{equation}
 \rho_X=\sum_{x\in \mathcal{X}}p(x)\ketbra{x}{x}
\end{equation}
A measurement with measurement operators $M_x=\ketbra{x}{x},\ x\in \mathcal{X}$ recovers the random variable $X$, this is also called ``measuring the basis $\{\ket{x}|x\in \mathcal{X}\}$''.

A composite system that consist of several distinct subsystems is described by a tensor product Hilbert space $\mathcal{H}=\mathcal{H}_1\otimes \mathcal{H}_2\otimes ...\otimes\mathcal{H}_n$. Given a state on that product space, $\rho\in \mathcal{B}(\mathcal{H})$, and a subset $I\subset[n]$, what is the state on $\mathcal{H}_I=\bigotimes_{i\in I}\mathcal{H}_i$, the analogue of the marginal distribution of a probability distribution? It should be an map that sends $\rho$ to the \emph{reduced density operator} $\rho_I$ such that
\begin{equation}
 \tr(A\rho_I)=\tr(A\otimes\mathds{1}_{I^c}\rho)
\end{equation}
for all positive semidefinite operators $A$ on $\mathcal{H}_I$, where $I^c=[n]\setminus I$. That map is the so called \emph{partial trace}. To define it, let us for a moment index every trace operator by the Hilbert space it is defined on, i.e.\ the last equation becomes
\begin{equation}
 \tr_{\mathcal{H}_I}(A\rho_I)=\tr_{\mathcal{H}}(A\otimes\mathds{1}_{I^c}\rho).
\end{equation}
Then the partial trace is defined as the tensor product of the identity on the Hilbert spaces where the reduced density operator is defined on and the trace on the remaining ones, i.e.
\begin{eqnarray}
 \rho_I&=&\tr_{I^c}\rho\nonumber\\
 \tr_{I^c}&=&\mathds{1}_{\mathcal{H}_I}\otimes\tr_{\mathcal{H}_{I^c}}.
\end{eqnarray}

The way random variables are used in classical information theory can be a bit confusing. Most of the time the randomness of a random variable is interpreted as \emph{potential} information. A communication channel, for example, is is not used to transmit random data but its designer treats the data as random variable $X$ such as to build the capabilities to transmit any dataset from the support of $p_X$ etc. Putting it in yet another way, the actual message will be known to the sender, so for him it is in a deterministic state, i.e.\ in a state that is extremal in the convex set of states, the probability simplex. The channel designer assumes a weighted average of the ensemble of messages he expects the user to send, i.e.\ a convex combination of deterministic states.

To generalize this formalism to quantum information theory we observe that the extremal points in the quantum state space are pure states. A mixed state is a convex combination of pure states and can be interpreted as representing an ensemble of pure states analogously to the random variable being interpreted as representing an ensemble of deterministic states.

An important result that turned out to be a powerful proof technique in the quantum marginal problem \cite{christandl2006spectra,christandl2012recoupling} to be introduced in Section \ref{quantmarg} is the so called \emph{spectrum estimation theorem} that was first discovered in many body theory \cite{alicki1988symmetry}. Later it was rediscovered independently in quantum information theory \cite{keyl2001estimating}. It is a quantum version of the asymptotic equipartition property where the role of type classes is played by the \emph{typical subspaces}, the direct summands in the Schur-Weyl decomposition \eqref{schurweyl}.

It states that a high tensor power $\rho^{\otimes n}$  of a density matrix $\rho\in\mathcal{B}\left(\C^d\right)$ is supported mostly on Subspaces $[\lambda]\otimes V_\lambda \subset \left(\C^d\right)^{\otimes n}$ such that $\overline\lambda:=\frac{\lambda}{n}$ is close to the spectrum of $\rho$. In Section \ref{symirreps} the construction of the irreducible representations of the Symmetric group by decomposing the permutation modules $M^\lambda$ into irreducible representations is described, the connection between frequencies, which determine the type class, and partitions, which determine the typical subspace, becomes apparent there. Also the connection between the two concepts is elaborated in Chapter \ref{CYR}.

\begin{thm}[\cite{alicki1988symmetry},\cite{keyl2001estimating}]\label{specest}
	Let $\rho\in\mathcal{B}(\mathcal{H})$ be a density operator on some finite dimensional Hilbert space $\mathcal{H}$ with spectrum $r=\spec(rho)$. Then, for any projector $P_\lambda$ onto a direct summand in $\eqref{schurweyl}$
	\begin{equation}
		\tr P_\lambda\rho^{\otimes k}\le (k+1)^{d(d-1)/2}e^{-k H\left(\overline{\lambda}\|r\right)},
	\end{equation}
	where $\overline{\lambda}=\frac{\lambda}{k}$ and $H(\cdot\|\cdot)$ is the classical relative entropy defined in Equation \eqref{relent}
\end{thm}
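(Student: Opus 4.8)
The plan is to compute $\tr(P_\lambda\rho^{\otimes k})$ \emph{exactly} via Schur--Weyl duality and then estimate the result with elementary combinatorics. First I would note that $\rho^{\otimes k}$ lies in the commutant of the $S_k$-action \eqref{Sntens}: it is the limit as $\epsilon\to0^+$ of the operators $(\rho+\epsilon\mathds 1)^{\otimes k}$ with $\rho+\epsilon\mathds 1\in\Gl(d)$, and the linear span of $\{g^{\otimes k}\mid g\in\Gl(d)\}$ equals that of $\{U^{\otimes k}\mid U\in\mathrm{U}(d)\}$, which by Schur--Weyl is precisely the commutant of $\C S_k$ in $\End((\C^d)^{\otimes k})$; the commutant being closed, $\rho^{\otimes k}$ lies in it. Hence, with respect to the decomposition \eqref{schurweyl}, $\rho^{\otimes k}$ acts as $\bigoplus_{\mu}\mathds 1_{[\mu]}\otimes\pi_\mu(\rho)$, where $\pi_\mu\colon\Gl(d)\to\Gl(V_\mu)$ is the irreducible representation of highest weight $\mu$, extended polynomially to all of $\C^{d\times d}$. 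Pairing with $P_\lambda$ picks out the single block $\mu=\lambda$ and gives
\begin{equation}
  \tr(P_\lambda\rho^{\otimes k})=\dim[\lambda]\cdot\tr\pi_\lambda(\rho)=\dim[\lambda]\cdot s_\lambda(r),
\end{equation}
because the character of $\pi_\lambda$ is the Schur polynomial $s_\lambda$, which depends only on the spectrum $r=\spec(\rho)$.

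The remaining task is to bound $\dim[\lambda]\cdot s_\lambda(r)$. Expanding over semistandard tableaux, $s_\lambda(r)=\sum_{T\in s(\lambda,d)}\prod_i r_i^{c_i(T)}$ with $c(T)$ the content of $T$, I would use that the sorted content of any tableau of shape $\lambda$ is dominated by $\lambda$ (the nonvanishing condition for the Kostka numbers of \eqref{kostka}); ordering the spectrum as $r_1\ge\dots\ge r_d$, the rearrangement inequality together with this domination then yields $\prod_i r_i^{c_i(T)}\le\prod_i r_i^{\lambda_i}=:r^\lambda$ for every $T$, so $s_\lambda(r)\le|s(\lambda,d)|\,r^\lambda=\dim V_\lambda\cdot r^\lambda$. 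For the two dimensions I would invoke standard bounds: $\dim V_\lambda\le(k+1)^{d(d-1)/2}$, from the Weyl dimension formula, since each of the $\binom d2$ factors $(\lambda_i-\lambda_j+j-i)/(j-i)$ is at most $k+1$; and $\dim[\lambda]\le\binom{k}{\lambda_1,\dots,\lambda_d}$, because the Specht module $[\lambda]$ embeds into the permutation module $M^\lambda$ with multiplicity one (see \eqref{kostka}) and $\dim M^\lambda=k!/(\lambda_1!\cdots\lambda_d!)$. Finally, the elementary type-counting estimate $\binom{k}{\lambda_1,\dots,\lambda_d}\,r^\lambda\le e^{-kH(\overline\lambda\|r)}$ — this is exactly the probability that $k$ independent draws from $r$ have empirical distribution $\overline\lambda$, bounded as in the proof of Theorem \ref{AEP} (see e.g.\ \cite{cover2012elements}) — closes the chain,
\begin{equation}
  \tr(P_\lambda\rho^{\otimes k})=\dim[\lambda]\,s_\lambda(r)\le(k+1)^{d(d-1)/2}\,\binom{k}{\lambda_1,\dots,\lambda_d}\,r^\lambda\le(k+1)^{d(d-1)/2}\,e^{-kH(\overline\lambda\|r)}.
\end{equation}
If $\overline\lambda$ is not supported inside $\supp r$, then $s_\lambda(r)=0$ and the right-hand side also vanishes, so that degenerate case needs no separate treatment.

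I expect the only genuine obstacle to be the identification $\tr(P_\lambda\rho^{\otimes k})=\dim[\lambda]\,s_\lambda(r)$, i.e.\ making rigorous that the (non-unitary, possibly rank-deficient) operator $\rho^{\otimes k}$ acts block-diagonally as $\mathds 1_{[\lambda]}\otimes\pi_\lambda(\rho)$ in \eqref{schurweyl}; the limiting argument over $\Gl(d)$, or equivalently Zariski density of $\Gl(d)$ in $\C^{d\times d}$, is the point to state carefully. The combinatorial ingredients afterwards — domination of tableau contents, the Weyl dimension bound, the embedding $[\lambda]\hookrightarrow M^\lambda$, and the multinomial/relative-entropy estimate — are routine. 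As an alternative that stays closer to the type-class language of Chapter \ref{CYR}, one can instead decompose $\mathds 1=\sum_{\mathbf t}P_{\mathbf t}$ into the projectors onto the type subspaces, observe $\rho^{\otimes k}=\sum_{\mathbf t}r^{k\mathbf t}P_{\mathbf t}$ in the eigenbasis of $\rho$, and use that $P_\lambda$ is central for $\C S_k$ while $P_{\mathbf t}$ commutes with $S_k$, so that $P_\lambda P_{\mathbf t}$ is again a projector, of rank $\dim[\lambda]\cdot K_{\lambda,k\mathbf t}$; the same majorization and counting then give the bound.
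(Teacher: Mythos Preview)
The paper does not give its own proof of this theorem; immediately after the statement it writes ``A concise proof for this theorem can be found for example in \cite{christandl2006structure}.'' Your argument is correct and is essentially the standard proof that appears in that reference: identify $\tr(P_\lambda\rho^{\otimes k})=\dim[\lambda]\,s_\lambda(r)$ via Schur--Weyl, bound $s_\lambda(r)\le(\dim V_\lambda)\,r^\lambda$ by majorization of tableau contents, use $\dim V_\lambda\le(k+1)^{d(d-1)/2}$ from the Weyl dimension formula and $\dim[\lambda]\le k!/\lambda!$ from $[\lambda]\hookrightarrow M^\lambda$, and finish with the type-class estimate $\binom{k}{\lambda}r^\lambda\le e^{-kH(\overline\lambda\|r)}$. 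One minor remark: the paper's conventions declare $\log$ to be base $2$, so with the relative entropy \eqref{relent} as written the last bound is literally $2^{-kH(\overline\lambda\|r)}$ rather than $e^{-kH(\overline\lambda\|r)}$; this is a base mismatch in the statement as quoted, not a flaw in your reasoning.
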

A concise proof for this theorem can be found for example in \cite{christandl2006structure}.

\subsection{Von Neumann Information Measures}\label{vNmeasures}

The natural generalization of the Shannon entropy is the von Neumann entropy named after John von Neumann who solidified the mathematical framework of quantum mechanics \cite{neumann1955mathematical}. It is defined as
\begin{equation}
 S(\rho)=-\tr\rho\log\rho
\end{equation}
for a quantum state given by a density operator $\rho$. As easily verified, it is the only quantum generalization possible if we demand the following two reasonable properties:
\begin{itemize}
 \item For classical states, i.e.\ for diagonal density operators, the quantum entropy has to coincide with the Shannon entropy.
 \item The quantum entropy has to be basis independent, i.e.\ invariant under unitary conjugation of $\rho$,
\end{itemize}
or, more formally put,
\begin{eqnarray}\label{vN}
 S\left(\sum_i p_i \ketbra{i}{i}\right)&=&H(p)\, \forall p\in\mathcal{P}^{\dim\mathcal{H}}\ \text{and}\nonumber\\
  S(\rho)&=&S(U\rho U^\dagger)\,\forall\rho\in\mathcal{B}(\mathcal{H}),\, U\in \mathrm U(\mathcal{H}),
\end{eqnarray}
where $\mathrm U(\mathcal H)$ is the group of unitary transformations on $\mathcal{H}$. This fixes $S$ to \eqref{vN}, as any density operator can be diagonalized by a unitary.

\paragraph{}The von Neumann entropy is arguably as important of a concept for quantum information theory as the Shannon entropy is for classical information theory. Analogous to the Shannon entropy the prime justification of the von Neumann entropy as a measure of information is coding, as Shannon's noiseless channel coding theorem can be generalized to coding a source of quantum states:
\begin{thm}[Schumacher’s noiseless channel coding theorem \cite{schumacher1995quantum}]
 Given a source of pure quantum states from a Hilbert space $\mathcal{H}$ distributed i.i.d.\!\!\! according to the density operator $\rho$, then for each $R<S(\rho)$ there exists a protocol to compress the states with rate $R$. If ever $R>S(\rho)$ no such scheme exists.
\end{thm}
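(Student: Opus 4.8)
The plan is to transport the proof of the classical noiseless coding theorem to the quantum setting, replacing the classical typical set by a \emph{typical subspace}. First I would diagonalise the source state, writing $\rho=\sum_x p(x)\proj{x}$; this turns the list of eigenvalues into a genuine probability distribution $p$, and by the characterisation \eqref{vN} of the von Neumann entropy one has $S(\rho)=H(p)$. Passing to $n$ i.i.d.\ copies, $\rho^{\otimes n}$ acts on $\mathcal{H}^{\otimes n}$ and is diagonal in the product eigenbasis $\ket{x_1}\otimes\cdots\otimes\ket{x_n}$ with eigenvalue $p(x_1)\cdots p(x_n)$, so its entire spectral data is governed by the classical product distribution $p^{\times n}$.

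Next I would introduce, for $\epsilon>0$, the $\epsilon$-typical subspace $T_{n,\epsilon}\subset\mathcal{H}^{\otimes n}$ spanned by those product eigenvectors whose index string $(x_1,\dots,x_n)$ is $\epsilon$-typical for $p$, i.e.\ satisfies $2^{-n(H(p)+\epsilon)}\le p(x_1)\cdots p(x_n)\le 2^{-n(H(p)-\epsilon)}$, and write $P_{n,\epsilon}$ for the orthogonal projector onto it. Applying the asymptotic equipartition property (Theorem \ref{AEP}) to $p$ yields the two estimates that drive everything: (i) $\tr(P_{n,\epsilon}\rho^{\otimes n})\to 1$ as $n\to\infty$, since this trace is precisely the probability that an i.i.d.\ sample from $p$ falls in the typical set; and (ii) $\dim T_{n,\epsilon}\le 2^{n(H(p)+\epsilon)}=2^{n(S(\rho)+\epsilon)}$, with the complementary lower bound $\dim T_{n,\epsilon}\ge(1-\delta)\,2^{n(S(\rho)-\epsilon)}$ for all sufficiently large $n$.

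The direct part then amounts to exhibiting a block compression scheme using about $n(S(\rho)+\epsilon)$ qubits per block: the encoder performs the two-outcome measurement $\{P_{n,\epsilon},\,\mathds{1}-P_{n,\epsilon}\}$ and, on the typical outcome, applies a fixed isometry embedding the at most $2^{n(S(\rho)+\epsilon)}$-dimensional space $T_{n,\epsilon}$ into $\lceil n(S(\rho)+\epsilon)\rceil$ qubits, while on the atypical outcome it outputs a fixed dummy state; the decoder inverts the isometry. Its fidelity with the identity channel is controlled by estimate (i) through the \emph{gentle measurement} bound $F(\rho^{\otimes n},\,P_{n,\epsilon}\rho^{\otimes n}P_{n,\epsilon}/\tr(P_{n,\epsilon}\rho^{\otimes n}))\ge\tr(P_{n,\epsilon}\rho^{\otimes n})$, so the scheme has fidelity tending to $1$ and, letting $\epsilon\to 0$, the compression rate is pushed down to $S(\rho)$. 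Conversely, any subspace of dimension $2^{nR}$ with $R<S(\rho)$ can carry only vanishing weight of $\rho^{\otimes n}$: by (ii) the mass of $\rho^{\otimes n}$ sits, up to $\epsilon$, on $2^{nS(\rho)}$ roughly equal eigenvalues, so retaining only the $2^{nR}$ largest of them leaves a total weight that tends to $0$, whence no rate below $S(\rho)$ can yield fidelity bounded away from $0$.

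The main obstacle is the quantum-specific fidelity bookkeeping in the direct part: unlike the classical argument, the encoder's projective measurement genuinely disturbs the state, so one must bound the combined distortion from projecting onto $T_{n,\epsilon}$ and from the dummy replacement on the atypical branch, and one must do so at the level of the \emph{average} fidelity over the source ensemble (equivalently, the entanglement fidelity with a purifying reference), not merely for the averaged state $\rho^{\otimes n}$. The lemma that reduces this back to the clean classical estimate (i) is exactly the gentle-measurement inequality above, after which the whole analysis collapses onto the asymptotic equipartition property for $p$; once that reduction is in place the converse is a short counting argument as sketched, and nothing further of substance is required.
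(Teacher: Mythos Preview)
The paper does not actually prove this theorem: it is stated without proof as a cited background result from \cite{schumacher1995quantum}, used solely to motivate the von Neumann entropy as the quantum information measure. There is therefore nothing in the paper to compare your argument against.

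That said, your sketch is the standard and correct route---diagonalise $\rho$, reduce to the classical distribution on eigenvalues via \eqref{vN}, invoke the AEP (Theorem~\ref{AEP}) to build the typical subspace, and use gentle measurement to control the fidelity. This is essentially Schumacher's original argument and also the textbook treatment in \cite{nielsen2010quantum}. One caveat: the theorem as stated in the paper has the inequalities the wrong way round relative to the usual convention (rate $R$ meaning qubits per source symbol, so achievability holds for $R>S(\rho)$ and the converse for $R<S(\rho)$); your proof proves the correct version, which is what you should do, but it is worth flagging the discrepancy rather than silently proving something formally different from what is written.
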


\paragraph{}The Shannon information measures have a natural generalization to the quantum theory in terms of von Neumann entropies. For a tripartite state $\rho_{ABC}\in\mathcal{H}^{\otimes3}$ let $\rho_{AB}=\tr_C\rho_{ABC}$ etc. We define the \textit{conditional von Neumann entropy}, the \textit{quantum mutual information} and the \textit{quantum conditional mutual information} by the classical formulas with the Shannon entropy replaced by the Von Neumann entropy:
\begin{eqnarray}
	S(A|B)_\rho&=&S(\rho_{AB})-S(\rho_B)\\
	I(A:B)_\rho&=&S(\rho_A)+S(\rho_B)-S(\rho_{AB})\\
	I(A:B|C)_\rho&=&S(\rho_{AC})+S(\rho_{BC})-S(\rho_{ABC})-S(\rho_C)
\end{eqnarray}
If there is no danger of confusion we denote $S(\rho_A)=S_\rho(A)=S(A)$ and omit the subscript in the von Neumann information measures. Note that, although the mathematical generalization is straightforward, the classical interpretation cannot be generalized to the quantum case in a simple way.
Lieb and Ruskai proved that the quantum conditional mutual information is nonnegative \cite{lieb1973proof}, this result is called \emph{strong subadditivity},
\begin{equation}\label{SSA}
	S(AB)+S(BC)-S(ABC)-S(B)\ge 0.
\end{equation}

\paragraph{}Although the operational meaning of the information measures does not generalize to quantum entropies in a straightforward way, strong subadditivity has many important applications in quantum information theory. For example it turns out that if a bipartite quantum system $AB$ is shared between two parties $A$ and $B$, than the mutual information between the two is equal to the amount of classical information that can be send from $A$ to $B$ securely using \emph{one time pad} encryption \cite{schumacher2007quantum}. Strong subadditivity of the von Neumann entropy implies a result adding plausibility to this interpretation: The quantum mutual information does not increase when a local operation is performed on one of the two systems \cite{nielsen2010quantum}. This result is called data processing inequality and is only one of many inequalities relying on strong subadditivity.

\paragraph{}From strong subadditivity the only known convex independent quantum information inequality can be derived by considering a purification party, that is \emph{weak monotonicity},
\begin{equation}\label{wmo}
	S(AB)+S(BC)-S(A)-S(C)\ge 0
\end{equation}
which replaces the classically valid monotonicity, $H(A|B)\ge 0$. 
Let us shortly recall the possibility of purification and how weak monotonicity follows from strong subadditivity and vice versa. Consider a Hilbert space $\mathcal{H}$ and an arbitrary state $\rho\in\mathcal{B}(\mathcal{H})$. Let $ \rho=\sum_i p_i\ketbra{i}{i}$ be the spectral decomposition of $\rho$. Define the state $\ket{\psi_\rho}=\sum_i\sqrt{p_i}\ket{i}\otimes\ket{i}\in\mathcal{H}^{\otimes 2}$, then $\rho=\tr_2\ketbra{\psi_\rho}{\psi_\rho}$. On the other hand, for any pure state $\ket\Psi\in\mathcal{H}_1\otimes\mathcal{H}_2$ on a bipartite Hilbert space we have the so called \emph{Schmidt decomposition}, that is bases $\{\ket{\phi_i}|i\in[d_1]\}$ of $\mathcal{H}_1$ and $\{\ket{\psi_i}|i\in[d_2]\}$ of $\mathcal{H}_2$ such that 
\begin{equation}
\ket{\Psi}=\sum_{i=1}^{\min(d_1,d_2)}\alpha_i\ket{\phi_i}\otimes\ket{\psi_i}.
\end{equation}
That implies in particular that the spectra and hence the entropies of the reduced states $\rho_i,\ i=1,2$ are the same, i.e.
\begin{equation}\label{bipure}
	S(\rho_1)=S(\rho_2).
\end{equation}
Weak monotonicity follows now from purifying a tripartite state $\rho_{ABC}$ by using another system $D$, i.e.\ $\rho_{ABC}=\tr_D\ketbra{\Psi_{ABCD}}{\Psi_{ABCD}}$, and then eliminating the occurrence of the system $A$ in \eqref{SSA} by means of \eqref{bipure}.

\subsection{The Quantum Entropy Cone}\label{quantum}

In analogy to the classical entropy cone, define the set of entropy vectors of $n$-partite states by
\begin{equation}
	\Gamma_n=\left\{s(\rho)\subset V_n\Big|\rho\in \mathcal{B}\left(\mathcal{H}_1\otimes...\otimes\mathcal{H}_n\right)\right\},
\end{equation}
where $s(\rho)=\left(S(\rho_I)\right)_{I\subset[n]}$ is the entropy vector of $\rho$. If $\rho=\ketbra{\psi}{\psi}$ is a pure state, we write $s(\ket{\psi}):=s(\ketbra{\psi}{\psi})$. 

Pippenger proved that this is a convex cone as well \cite{pippenger2003inequalities}, following Zhang's and Yeung's argument \cite{zhang1997non} for the classical case. Let us review the proof for that fact.
\begin{thm}[\cite{pippenger2003inequalities}]\label{qcone}
	$\Gamma_n$ is a convex cone.
\end{thm}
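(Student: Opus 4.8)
The statement will be derived from the criterion quoted just before this chapter (the Proposition of \cite{pippenger2003inequalities}): it suffices to show that $\Gamma_n$ is \emph{additive} and \emph{approximately diluable}, and then to conclude that $\overline{\Gamma_n}$ is a convex cone. (Strictly this produces a cone only after closure; for the statement as phrased one works with $\overline{\Gamma_n}$, and nothing in the sequel needs more.) The two properties are checked by simple state constructions on enlarged Hilbert spaces.

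\emph{Additivity.} Given $\rho\in\mathcal{B}(\mathcal{H}_1\otimes\cdots\otimes\mathcal{H}_n)$ and $\sigma\in\mathcal{B}(\mathcal{K}_1\otimes\cdots\otimes\mathcal{K}_n)$, regard $\rho\otimes\sigma$ as a state on $\bigotimes_{i=1}^n(\mathcal{H}_i\otimes\mathcal{K}_i)$, i.e.\ let party $i$ hold the two factors $\mathcal{H}_i$ and $\mathcal{K}_i$. For every $I\subset[n]$ one has $(\rho\otimes\sigma)_I=\rho_I\otimes\sigma_I$, and since the von Neumann entropy is additive under tensor products, $S\big((\rho\otimes\sigma)_I\big)=S(\rho_I)+S(\sigma_I)$. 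Hence $s(\rho\otimes\sigma)=s(\rho)+s(\sigma)$, so $\Gamma_n$ is closed under addition.

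\emph{Approximate diluability.} Write $h(\lambda)=-\lambda\log\lambda-(1-\lambda)\log(1-\lambda)$ for the binary entropy and put $u=\sum_{\emptyset\ne I\subset[n]}e^{(I)}\in V_n$, so that $\norm{u}=\sqrt{2^n-1}$. Fix $\rho\in\mathcal{B}(\mathcal{H}_1\otimes\cdots\otimes\mathcal{H}_n)$ and $0<\lambda\le\tfrac12$. Enlarge each local space to $\mathcal{H}_i\oplus\C$, choose a unit vector $\ket{f_i}$ spanning the new summand, and set $\ket{\phi}=\ket{f_1}\otimes\cdots\otimes\ket{f_n}$, a pure product state; since each $\ket{f_i}$ is orthogonal to $\mathcal{H}_i$, the reduction $\ket{f_I}=\bigotimes_{i\in I}\ket{f_i}$ is orthogonal to $\bigotimes_{i\in I}\mathcal{H}_i$ for every $\emptyset\ne I\subset[n]$. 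Define
\begin{equation}
  \tau_\lambda=\lambda\,\rho+(1-\lambda)\ketbra{\phi}{\phi}.
\end{equation}
By linearity of the partial trace, $(\tau_\lambda)_I=\lambda\rho_I+(1-\lambda)\ketbra{f_I}{f_I}$ for $\emptyset\ne I\subset[n]$, and these two pieces have orthogonal supports, so $S\big((\tau_\lambda)_I\big)=h(\lambda)+\lambda S(\rho_I)$, while $S\big((\tau_\lambda)_\emptyset\big)=0$. Thus $s(\tau_\lambda)=\lambda\,s(\rho)+h(\lambda)\,u$, whence $\norm{s(\tau_\lambda)-\lambda\,s(\rho)}=h(\lambda)\sqrt{2^n-1}$, a quantity that does not depend on $\rho$ or on the dimensions of the $\mathcal{H}_i$. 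Given $\epsilon>0$, use that $h$ is continuous and increasing on $[0,\tfrac12]$ with $h(0)=0$ to pick $\delta\in(0,\tfrac12)$ with $h(\delta)\sqrt{2^n-1}<\epsilon$. For $\lambda=0$ take the point $s(\ketbra{\phi}{\phi})=0$; for $0<\lambda\le\delta$ take $s(\tau_\lambda)$, which lies within $h(\lambda)\sqrt{2^n-1}\le h(\delta)\sqrt{2^n-1}<\epsilon$ of $\lambda\,s(\rho)$. Since this $\delta$ works for all $\rho$ at once, $\Gamma_n$ is approximately diluable, and the Proposition gives the claim.

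\textbf{Main obstacle.} There is no substantial difficulty here; the one point that actually matters is the \emph{uniformity} of $\delta$ over all states of all local dimensions, which is precisely what makes the diluability hypothesis applicable, and it holds because the error term $h(\lambda)\,u$ produced by the dilution is the same vector for every $\rho$. The only other things to handle with care are the orthogonal-support identity $S(p\sigma_1\oplus(1-p)\sigma_2)=h(p)+pS(\sigma_1)+(1-p)S(\sigma_2)$ applied to the marginals, and the (cosmetic) distinction between $\Gamma_n$ and $\overline{\Gamma_n}$.
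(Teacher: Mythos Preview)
Your proof is correct and follows essentially the same approach as the paper: additivity via tensor products of states, and approximate diluability via the mixture $\lambda\rho\oplus(1-\lambda)\ketbra{0}{0}^{\otimes n}$ on the enlarged local spaces $\mathcal{H}_i\oplus\C$, using the identity $S((\tau_\lambda)_I)=\lambda S(\rho_I)+h(\lambda)$. The only cosmetic differences are your choice of the Euclidean norm rather than $\|\cdot\|_\infty$ and your more explicit remarks on the uniformity of $\delta$ and on closure, none of which affect the argument.
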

\begin{proof}
	Let $v,w\in\Gamma_n$ be the entropy vectors of $\rho, \sigma \in\mathcal{H}^{\otimes n}$ respectively. Then $\rho\otimes\sigma\in \left(\mathcal{H}\otimes \mathcal{H}\right)^{\otimes n}$ has the entropy vector $v+w$. This proves additivity. For approximate diluability let $v\in\Gamma_n$ be the entropy vector of $\rho \in \mathcal{H}^{\otimes n}$ and $\epsilon>0$. Now take $0<\delta\le\frac{1}{2}$ such that $h(\delta):=-\delta\log \delta -(1-\delta)\log(1-\delta)\le \epsilon$. For $0<\lambda \le \delta$ take $\mathcal{H}'=\mathcal{H}\oplus\C$ and $\rho'=\lambda \rho\oplus (1-\lambda )\ketbra{0}{0}^{\otimes n}\in\mathcal{H}'^{\otimes n}$. Direct calculation shows that $v'_I=S(\rho'_I)=\lambda S(\rho_I)+h(\lambda)\, \forall I\subset[n]$, which implies $\norm{v'-\lambda v}_\infty=h(\lambda)\le h(\delta)\le \epsilon$.
\end{proof}
Analogously to the classical case, the dual cone $\Gamma_n^*$ is the set of all valid quantum information inequalities, also define the balanced subcone $\Gamma_{n,b}^*=\Gamma_n^*\cap B_n$. For strong subadditivity and weak monotonicity we introduce the notation
\begin{eqnarray}\label{ssa}
	\Delta[I,J]&=&S(I)+S(J)-S(I\cup J)-S(I\cap J)\\
	E[I,J]&=&S(I)+S(J)-S(I\setminus J)+S(J\setminus I)\label{wm}
\end{eqnarray}
Let $\Xi_n$ be the cone defined by the inequalities \eqref{ssa} and \eqref{wm}. We call this cone the \emph{von Neumann cone}. Pippenger identified the extremal rays of $\Xi_n^*$:
\begin{prop}[Pippenger, Corollary 3.6 in \cite{pippenger2003inequalities}]\label{pipdualextrays}
	The set of extremal rays of the dual of the von Neumann cone is $\mathrm{ext}(\Xi_n^*)=\mathcal{E}_\Delta\cup \mathcal{E}_E$ with $\mathcal{E}_\Delta=\{\Delta[I,J]|I,J\subset[n],\ I\setminus J=\{i\},\  J\setminus I=\{j\},\  i<j\}$ and $\mathcal{E}_E=\{E[I,J]|I,J\subset[n],\ I\cap J=\{k\},\ I\cup J=[n],\  k+1\in I\}$.
\end{prop}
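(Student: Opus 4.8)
The plan is to establish the two inclusions $\mathrm{ext}(\Xi_n^*)\subseteq\mathcal E_\Delta\cup\mathcal E_E$ and $\mathcal E_\Delta\cup\mathcal E_E\subseteq\mathrm{ext}(\Xi_n^*)$ separately. By definition $\Xi_n^*=\cone\big(\{\Delta[I,J]\colon I,J\subseteq[n]\}\cup\{E[I,J]\colon I,J\subseteq[n]\}\big)$, so every extremal ray of $\Xi_n^*$ is spanned by one of these generators; for the first inclusion it therefore suffices to rewrite each generator as a nonnegative combination of the elementary functionals in $\mathcal E_\Delta\cup\mathcal E_E$. For the second inclusion I would show that each elementary functional spans an extremal ray, by exhibiting a vector on which it is the unique tight inequality.

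For the $\Delta$-functionals, set $A=I\setminus J$, $B=J\setminus I$, $C=I\cap J$; then $\Delta[I,J]=S(A\cup C)+S(B\cup C)-S(A\cup B\cup C)-S(C)=I(A:B|C)$, and the chain rule for conditional mutual information --- a linear identity valid for every vector of $V_n$, not only for genuine entropy vectors --- expands this into a sum of terms $I(\{a\}:\{b\}|W)=\Delta[W\cup\{a\},W\cup\{b\}]$ with $a\in A$, $b\in B$, $a,b\notin W$. Using $\Delta[X,Y]=\Delta[Y,X]$ one orders each pair so that the two distinguished singletons satisfy $i<j$, landing in $\mathcal E_\Delta$. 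For the $E$-functionals I would use two successive reductions. First, if $D=[n]\setminus(I\cup J)$ is non-empty, a short computation gives $E[I,J]=E[I\cup D,\,J]+\Delta[A\cup D,\,A\cup C]$, the correction being a nonnegative strong-subadditivity functional; this reduces matters to the case $I\cup J=[n]$. Second, if $I\cup J=[n]$ and $|I\cap J|\ge2$, picking $c\in C$ and writing $C'=C\setminus\{c\}$ gives $E[I,J]=E[A\cup C,\,B\cup C']+E[A\cup\{c\},\,B\cup C]$, where the first summand has overlap $C'$ of strictly smaller size and the second is already elementary (overlap $\{c\}$, union $[n]$). Iterating the second reduction and then decomposing the residual $\Delta$-terms as above writes every $E[I,J]$ as a nonnegative combination of $\mathcal E_\Delta\cup\mathcal E_E$.

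For the second inclusion I would use the elementary observation that a generator $f$ of a finitely generated cone spans an extremal ray as soon as there is a vector $v$ with $f(v)=0$ and $g(v)>0$ for every other generator $g$: applying any nonnegative representation of $f$ over the remaining generators to $v$ forces all coefficients to vanish. The witnesses $v$ would be built as nonnegative combinations of entropy vectors of very simple states --- a classically correlated uniform bit shared by the parties in a subset $L$, a Bell pair on a two-element subset, and a GHZ-type state on a subset $L$ --- whose entropy vectors have explicit indicator-type entries, and any such combination lies in $\Gamma_n\subseteq\Xi_n$. For each elementary $I(\{i\}:\{j\}|W)$, and likewise for each $E[I,J]\in\mathcal E_E$, one assembles from these atoms a combination saturating exactly that inequality; the weak-monotonicity witnesses necessarily involve the entangled atoms, which is where the genuinely quantum content enters. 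Finally, the labelling condition ``$k+1\in I$'' (read cyclically on $[n]$, so that $n+1\equiv1$) only selects one representative from each unordered pair $E[I,J]=E[J,I]$ and hence does not affect the cone.

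The step I expect to be the real obstacle is exactly this witness construction: for each of the exponentially many elementary inequalities one must pick the correct admixture of classical, Bell and GHZ atoms and then verify that \emph{all} of the remaining $\Delta$- and $E$-functionals are \emph{strictly} positive on it. This is a finite but delicate case analysis, and it carries all the genuinely quantum input of the statement; by contrast, the generating-set reductions of the first inclusion are routine linear algebra in $V_n$.
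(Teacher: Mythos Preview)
The paper does not give its own proof of this proposition; it is recorded as a citation to Pippenger's article, so there is no in-paper argument to compare against. I can, however, comment on your outline.

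Your first inclusion is correct. The chain-rule expansion of $I(A\!:\!B\,|\,C)$ into two-singleton conditional mutual informations is a linear identity in $V_n^*$, and the two telescoping identities you write for $E[I,J]$ both check out term by term; together they show that $\mathcal E_\Delta\cup\mathcal E_E$ generates $\Xi_n^*$. Your reading of the labelling condition ``$k+1\in I$'' as cyclic is also the right one, matching the count $|\mathcal E_E|=n\,2^{n-2}$ quoted in the paper.

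For the extremality direction your witness strategy is sound, but the framing contains a misconception that may be what makes you anticipate the case analysis as an obstacle. The proposition is a statement about the polyhedral cone $\Xi_n$, which is \emph{defined} by a finite list of linear inequalities in $V_n$; nothing about quantum states enters either its statement or its proof. To certify that an elementary $f$ spans an extremal ray of $\Xi_n^*$ you need only some $v\in V_n$ with $f(v)=0$ and $g(v)>0$ for every other elementary $g$; such a $v$ lies in $\Xi_n$ automatically, but there is no requirement that it lie in $\Gamma_n$, and in particular no Bell pairs or GHZ states are needed. Pippenger's argument proceeds by writing down explicit combinatorial witness vectors. Your sentence ``this is where the genuinely quantum content enters'' therefore misreads what is being proved: $\Xi_n$ is an abstract polyhedron, and identifying its facets is pure linear algebra over $V_n$. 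Using entropy vectors of simple quantum states as witnesses is permitted and may even be convenient, but presenting it as necessary conflates the outer bound $\Xi_n$ with the actual entropy region $\Gamma_n$ and is likely the source of the difficulty you flag.
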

The number of essential inequalities is $\left|\mathcal{E}_\Delta\right|=\frac{n(n-1)}{2}2^{n-2}=n(n-1)2^{n-3}$ and $\left|\mathcal{E}_E\right|=n2^{n-2}$ respectively \cite{pippenger2003inequalities}.

Let us define some subcones of $\Gamma_n$. First, we can look at the set of entropy vectors of symmetric states,
\begin{equation}
	\widehat{\Gamma}^s_n=\left\{s(\rho)\Big|\rho\in \mathcal{B}\left(\mathcal{H}^{\otimes n}\right), \phi(\sigma)\rho\phi(\sigma)^\dagger=\rho\right\}
\end{equation}
where $\phi(\sigma)$ is the natural unitary representation of $S_n$ on $\mathcal{H}^{\otimes n}$ permuting the tensor factors. As easily checked, its closure is a convex cone as well.
\begin{cor}
	$\overline{\widehat{\Gamma}^s_n}$ is a convex cone.
\end{cor}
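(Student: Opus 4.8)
The plan is to follow Pippenger's proof of Theorem \ref{qcone} almost verbatim: I will check that the set $\widehat{\Gamma}^s_n$ of entropy vectors of symmetric states is \emph{additive} and \emph{approximately diluable}, and then invoke the characterization result of \cite{pippenger2003inequalities} for additive, approximately diluable sets to conclude that $\overline{\widehat{\Gamma}^s_n}$ is a convex cone. The only point beyond the general case is to verify that the two state constructions used in that proof — tensoring two states and ``diluting'' a state by a direct sum with a product state — both preserve permutation symmetry.

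For additivity, suppose $v=s(\rho)$ and $w=s(\sigma)$ with $\rho,\sigma\in\mathcal{B}(\mathcal{H}^{\otimes n})$ symmetric. I regroup the tensor factors of $\rho\otimes\sigma$ to obtain a state $\tau$ on $(\mathcal{H}\otimes\mathcal{H})^{\otimes n}$ in which the $i$-th factor of $\rho$'s Hilbert space is paired with the $i$-th factor of $\sigma$'s. Under this reordering the permutation action of $S_n$ on $(\mathcal{H}\otimes\mathcal{H})^{\otimes n}$ corresponds to permuting the two blocks of $n$ factors simultaneously; since $\rho\otimes\sigma$ is invariant under permuting each block separately (by symmetry of $\rho$ and of $\sigma$), it is invariant under the simultaneous permutation, so $\tau$ is symmetric. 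Since $\tau_I=\rho_I\otimes\sigma_I$ we get $S(\tau_I)=S(\rho_I)+S(\sigma_I)$, i.e.\ $s(\tau)=v+w\in\widehat{\Gamma}^s_n$.

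For approximate diluability I copy the construction in Theorem \ref{qcone}: given $v=s(\rho)$ with $\rho$ symmetric and $\epsilon>0$, pick $0<\delta\le\frac{1}{2}$ with $h(\delta)\le\epsilon$, set $\mathcal{H}'=\mathcal{H}\oplus\C$, and for $0<\lambda\le\delta$ put $\rho'=\lambda\rho\oplus(1-\lambda)\proj{0}^{\otimes n}\in\mathcal{B}(\mathcal{H}'^{\otimes n})$. Here one uses that $\mathcal{H}^{\otimes n}$ and $\C\ket{0}^{\otimes n}$ sit inside $\mathcal{H}'^{\otimes n}$ as orthogonal subspaces, each of which is invariant under the $S_n$-action (the action merely permutes tensor factors, and these two summands are homogeneous in the resulting multinomial decomposition). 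Both $\rho$ on the first summand and the product state $\proj{0}^{\otimes n}$ on the second are symmetric, hence so is $\rho'$. The same direct calculation as in Theorem \ref{qcone} gives $S(\rho'_I)=\lambda S(\rho_I)+h(\lambda)$, so $\norm{s(\rho')-\lambda v}_\infty=h(\lambda)\le\epsilon$, and the cited Proposition finishes the proof.

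I do not anticipate a genuine obstacle: the argument is a routine adaptation, and the only steps requiring a moment's care are confirming that the reordering in the additivity step intertwines the two permutation actions and that the $S_n$-action on $(\mathcal{H}\oplus\C)^{\otimes n}$ leaves the summands $\mathcal{H}^{\otimes n}$ and $\C\ket{0}^{\otimes n}$ invariant — both are straightforward bookkeeping.
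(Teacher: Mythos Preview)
Your proof is correct and follows exactly the approach the paper takes: the paper's own proof is the single sentence ``Follow the proof of Theorem \ref{qcone} and check that every step conserves the symmetry properties of the involved density matrices,'' and you have carried out precisely that verification in detail.
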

\begin{proof}
Follow the proof of Theorem \ref{qcone} and check that every step conserves the symmetry properties of the involved density matrices.
\end{proof}
The same is true for the set of symmetric entropy vectors, defined by
\begin{equation}
	\widehat{\Gamma}^\sigma_n=\left\{s(\rho)\Big|\rho\in \mathcal{B}\left(\mathcal{H}^{\otimes n}\right), s(\rho)_I=s(\rho)_J \text{ if }|I|=|J|\right\}.
\end{equation}
Obviously $\widehat{\Gamma}^s_n\subset \widehat{\Gamma}^\sigma_n$. Both cones can be mapped bijectively into the lower-dimensional space $\R^n$ by defining
\begin{equation}
	\Gamma^s_n=\left\{\left(S(\rho_{[i]})\right)_{i\in[n]}\Big|\rho\in \mathcal{B}\left(\mathcal{H}^{\otimes n}\right), \phi(\sigma)\rho\phi(\sigma)^\dagger=\rho\right\}
\end{equation}
and analogously $\Gamma^\sigma_n$. Pippenger found the extremal rays of the cone $\Gamma^\sigma_n$ of symmetric entropy vectors \cite{pippenger2003inequalities}, proving that there are no inequalities other than strong subadditivity and weak monotonicity for symmetric entropies.

For the further characterization of the quantum entropy cone there are two main courses of action one can follow: either try to prove that there are more inequalities, of prove that the extremal rays of the cone generated by the known inequalities are extremal rays of the quantum entropy cone, i.e.\ can be approximated by von Neumann entropy vectors. In Chapter \ref{differential} I present some results in the direction of the second path by characterizing states that populate extremal rays using their local geometry.

\subsection{The Quantum Marginal Problem}\label{quantmarg}

The quantum marginal problem is the quantum version of the classical marginal problem described in Section \ref{classmarg} and is closely related to the characterization of the quantum entropy cone. It asks whether a quantum state exists that has certain reduced density matrices. In the following I will introduce the quantum marginal problem in a formal way. In Chapter \ref{CYR}, which is mostly about representation theory and the classical marginal problem, the following definition will be used. An introduction can be found, for example, in \cite{klyachko2004quantum}.

There are many variants in which the problem can be stated, one of which is the following.
\begin{qu}[Quantum Marginal Problem]
 Let $\mathcal{A}\subset 2^{[n]}$ a subset of the power set of the $n$ element set and $s_I$ a spectrum for each $I\in \mathcal{A}$. Is there a quantum state $\rho$ on some $n$ factor tensor product Hilbert space such that for all $I\in\mathcal{A}$ the spectrum of $\rho_I=\tr_{[n]\setminus I}\rho$ is $s_I$, possibly padded with zeros?
\end{qu}

Note that the Hilbert space dimension is not a problem here. As we allow for padding of the spectra with zeros, a larger Hilbert space is no problem and we can just take $\mathcal{H}=\left(\C^d\right)^{\otimes n}$ with $d\ge\max_{I\in \mathcal A}|s_I|^{\frac{1}{|I|}}$.

An equivalence of this problem to representation theoretic problems has been shown for bipartite and tripartite quantum states in \cite{christandl2006spectra} and \cite{christandl2012recoupling}.


\chapter{Entropy Cones and their Morphisms}\label{sec:mor}

In the following chapter I want to illuminate the geometric properties of the entropy cones $\overline\Sigma_n$ and $\overline\Gamma_n$. After briefly discussing relations between the entropy cones of a different number of particles, I will first investigate the symmetries of the cones. I prove, that the quantum entropy cone has a strictly larger symmetry group than its classical analogue, this group is identified and has a clear physical interpretation. As a corollary I can show, that a certain subset of the weak monotonicity facets of the von Neumann cone identified by Pippenger \cite{pippenger2003inequalities} are facets of the quantum entropy cone itself. In the succeeding section I review a result by Chan \cite{chan2003balanced} that reduces the characterization problem to balanced information inequalities. I translate it into purely geometric language which makes it look much simpler: It states that the dual of the Shannon entropy cone is a direct sum of two cones. I show that this simplifying property is missing for the quantum entropy cone, i.e.\ a result like Chan's cannot be achieved. The proof of this fact heavily relies on the results concerning the symmetry group. This shows that the quantum entropy cone is more symmetric than the classical one, but also more complicated. Finally I shortly review a class of morphisms considered by Ibinson \cite{ibinson2008quantum}, thus completing the collection of geometric results concerning the von Neumann cone.

As we discuss general properties of entropy cones in this Chapter, we denote an entropy cone by $\Lambda_n$ if we do not want to specify whether it is classical or quantum. 

\paragraph*{}Given $n\le m$ we have natural morphisms between $\Lambda_n$ and $\Lambda_m$. The most obvious ones are the surjection
\begin{equation}
	\mathrm{sur}_m^n:V_m\to V_n, v=(v_I)_{I\subset[m]}\mapsto w \text{ with } w_J=v_{J} \text{ for } J\subset[n]\subset[m]
\end{equation}
and the injection
\begin{equation}
	\mathrm{inj}_n^m: V_n\to V_m, v=(v_I)_{I\subset[n]}\mapsto w \text{ with } w_J=v_{I\cap J}.
\end{equation}
They correspond to discarding the systems $n+1$ to $m$ and adding trivial systems to a random variable or density matrix generating $v$ respectively. For $\Gamma_n$ there is another natural injection that is given by purification, i.e.
\begin{equation}
	\mathrm{pur}^{n+1}_n: V_n\to V_{n+1}, v=(v_I)_{I\subset[n]}\mapsto w \text{ with } w_J=\begin{cases}
	                                                                               	v_J&J\not \ni n+1\\
	                                                                               	v_{J^c}&\mathrm{else}
	                                                                               \end{cases}.
\end{equation}
Note that purification acts linearly in entropy space while it is a nonlinear map on the state level. The purification map is an isomorphism between $\Gamma_n$ and $\Gamma_{n+1}^p$, the entropy cone of $n\!+\!1$-partite pure states. 

\section{Symmetries}\label{sym}

The following section is dedicated to clarifying symmetry properties of classical and quantum entropy cones using the cone morphism formalism developed in Section \ref{conv}.

The symmetric group $S_n$ acts linearly on the entropy space in a natural way by permuting the subsystems of the state that generates the entropy vector, i.e.
\begin{equation}\label{wGamsym}
	S_n\looparrowright V_n, \sigma\cdot v=\left(w_{\sigma^{-1}I}\right)_{I\subset[n]}
\end{equation}
where $\sigma^{-1}(I)=\left\{\sigma^{-1}(i)|i\in I\right\}$. Of course the entropy cone is invariant under this action, $\sigma\left(\Lambda_n\right)=\Lambda_n$. This implies that the image of $S_n$ under this action is part of the automorphism group of the cone. For $\Gamma_n$, however, this action can be extended to an action of $S_{n+1}$ by considering a purifying system:
\begin{prop}\label{Gammasym}
 The map
 \begin{eqnarray}\label{gamsym}
 S_{n+1}&\looparrowright &V_n\\
  (\pi \cdot v)_I&=&\begin{cases}
             v_{\pi(I)^c}& n+1\in \pi(I)\\
             v_{\pi(I)}& \text{else}
            \end{cases}
 \end{eqnarray}
defines a linear group action of $S_{n+1}$ on $V_n$, and the quantum entropy cone $\overline \Gamma_n$ is invariant under this action.
\end{prop}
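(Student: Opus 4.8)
The plan is to deduce both claims from the fact, recalled just above, that the purification map $\mathrm{pur}_n^{n+1}$ is an isomorphism of $\Gamma_n$ onto the pure-state cone $\Gamma_{n+1}^{p}$. Concretely, I would first observe that $\mathrm{pur}_n^{n+1}$ is injective and linear, and that, by inspection of its defining formula, its image is exactly the subspace $W=\{w\in V_{n+1}\mid w_J=w_{[n+1]\setminus J}\ \forall J\subseteq[n+1]\}$ (a dimension count $\dim W=2^n=\dim V_n$ confirms surjectivity onto $W$). The claimed action \eqref{gamsym} will then turn out to be nothing but the ordinary permutation action \eqref{wGamsym} of $S_{n+1}$ on $V_{n+1}$, restricted to $W$ and transported back to $V_n$ through $\mathrm{pur}_n^{n+1}$.

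Next I would record that \eqref{gamsym} genuinely defines a linear endomorphism of $V_n$: whenever $n+1\in\pi(I)$ the set $\pi(I)^c=[n+1]\setminus\pi(I)$ lies in $[n]$, so $v_{\pi(I)^c}$ is a legitimate coordinate, and $I\mapsto\pi(I)$ (resp.\ $\pi(I)^c$) is a bijection of $2^{[n]}$, so the map is a permutation of the $2^n$ coordinates of $V_n$, hence linear and invertible. Then I would check that \eqref{wGamsym} on $V_{n+1}$ preserves $W$: for $w\in W$ one has $(\pi\cdot w)_{[n+1]\setminus K}=w_{\pi^{-1}([n+1]\setminus K)}=w_{[n+1]\setminus\pi^{-1}(K)}=w_{\pi^{-1}(K)}=(\pi\cdot w)_K$, using that $\pi^{-1}$ commutes with complementation in $[n+1]$. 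Restricting \eqref{wGamsym} to $W$ and conjugating by the linear bijection $\mathrm{pur}_n^{n+1}\colon V_n\to W$ therefore yields a linear action of $S_{n+1}$ on $V_n$. A short computation — using that $\mathrm{pur}_n^{n+1}$ and its inverse merely relabel components, namely $\mathrm{pur}_n^{n+1}(v)_J=v_J$ for $n+1\notin J$ and $\mathrm{pur}_n^{n+1}(v)_J=v_{[n+1]\setminus J}$ for $n+1\in J$, while $\bigl(\mathrm{pur}_n^{n+1}\bigr)^{-1}(w)_I=w_I$ for $I\subseteq[n]$ — identifies this conjugated action with the formula \eqref{gamsym} after the harmless replacement $\pi\leftrightarrow\pi^{-1}$ forced by the opposite composition convention of \eqref{gamsym}; in particular \eqref{gamsym} satisfies the action axiom. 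The only calculation is the case split $n+1\in\pi(I)$ versus $n+1\notin\pi(I)$, which I would carry out once.

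Finally, for invariance of $\overline\Gamma_n$: since $\mathrm{pur}_n^{n+1}$ is a linear bijection onto $W$ it is a homeomorphism, so $\mathrm{pur}_n^{n+1}\bigl(\overline\Gamma_n\bigr)=\overline{\mathrm{pur}_n^{n+1}(\Gamma_n)}=\overline{\Gamma_{n+1}^{p}}$. Permuting the $n\!+\!1$ tensor factors of a pure state again gives a pure state with permuted entropy vector, so $\pi\cdot\Gamma_{n+1}^{p}=\Gamma_{n+1}^{p}$ for every $\pi\in S_{n+1}$ (surjectivity via $\pi^{-1}$), and taking closures (again $\pi\cdot$ is a homeomorphism) gives $\pi\cdot\overline{\Gamma_{n+1}^{p}}=\overline{\Gamma_{n+1}^{p}}$; transporting back through $\mathrm{pur}_n^{n+1}$ yields $\pi\cdot\overline\Gamma_n=\overline\Gamma_n$. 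I expect the only mildly delicate point to be purely clerical: keeping the two complementations (inside $[n]$ versus inside $[n+1]$) straight while matching \eqref{gamsym} with the transported action. The mathematical substance is entirely the purification identity $S(\psi_J)=S(\psi_{[n+1]\setminus J})$ for pure states, already established through the Schmidt decomposition in \eqref{bipure}.
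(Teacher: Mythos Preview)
Your proposal is correct and follows essentially the same route as the paper: both transport the natural $S_{n+1}$-permutation action on $V_{n+1}$ back to $V_n$ through the purification map, the paper writing this as $\sigma\cdot v=\mathrm{sur}_{n+1}^n\bigl(\sigma\cdot\mathrm{pur}_n^{n+1}(v)\bigr)$ while you instead identify the image subspace $W$ explicitly and conjugate by the bijection $\mathrm{pur}_n^{n+1}\colon V_n\to W$. Your version is in fact a bit more careful, since verifying the group-action axiom on all of $V_n$ really needs $S_{n+1}$-invariance of $W$ (which you check), not merely of $\overline{\Gamma_{n+1}^p}$; your remark about the $\pi\leftrightarrow\pi^{-1}$ convention mismatch with \eqref{wGamsym} is also accurate and harmless.
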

Note that the subgroup of permutations that fix $n+1$ generates the permutation action under which also the classical entropy cone is invariant.
\begin{proof}
The action \eqref{gamsym} can be constructed by first applying the purification map, then the usual permutation action in $V_{n+1}$ and then applying the surjection onto the original space, i.e.
\begin{equation}\label{Gamsym}
	 \sigma\cdot v=\mathrm{sur}_{n+1}^n\left(\sigma\cdot\mathrm{pur}^{n+1}_n(v)\right),
\end{equation}
As easily verified, $\left(\mathrm{pur}^{n+1}_n\circ \mathrm{sur}_{n+1}^n\right)\big|_{\overline \Gamma_{n+1}^p}=\mathds 1$, so
\begin{equation}
 (\pi\sigma)\cdot v=\pi\cdot (\sigma\cdot v),
\end{equation}
i.e.\ \eqref{gamsym} indeed defines a group action. The image of the quantum entropy cone under this action is again the whole quantum entropy cone, as by definition $\pi.\Gamma_n\subset\Gamma_n$ and $\pi^{n!}=\mathds 1$, so each $\pi\in S_n$ defines an isomorphism of $\overline \Gamma_n$.
\end{proof}

Explicitly that yields for example
\begin{equation}
	\left((1\ n+1)\cdot v\right)_I=\begin{cases}
	                               	v_{I^c\cup\{1\}}& 1\in I\\
	                               	v_I&\text{else}
	                               \end{cases}
\end{equation}
Note that the corresponding representation is faithful, as there is no permutation $\sigma\in S_{n+1}$ that replaces a number of subsets of $[n+1]$ by their complement and does nothing else, i.e.\ that there exists a subset $\mathfrak{M}\subset 2^{[n+1]}$ such that $\sigma(J)=J^c$ for $J\in \mathfrak{M}$ and $\sigma(J)=J$ else. The action \eqref{Gamsym} defines an adjoint action on $\Gamma_n^*$ as well that is defined naturally as
\begin{equation}\label{adGamsym}
	[\sigma\cdot f](v):=f(\sigma^{-1}\cdot v)
\end{equation}
With the help of this compact notation we have, for example, $\Delta[\{12\},\{23\}]=(14)\cdot E[\{13\},\{23\}]$ as elements of $\Gamma_3^*$, where here $\cdot$ denotes the adjoint action. This demystifies the equivalence proof below Equation \eqref{wmo} as being a consequence of the symmetry:
\begin{cor}
 The orbit of any non-trivial strong subadditivity functional under the adjoint action \eqref{adGamsym} contains a weak monotonicity functional, and vice versa.
\end{cor}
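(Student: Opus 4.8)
\emph{Proof strategy.} By a ``non-trivial strong subadditivity functional'' I take one of the elementary ones, $\Delta[I,J]$ with $I\setminus J=\{i\}$, $J\setminus I=\{j\}$, $i\neq j$; these are exactly the extremal rays $\mathcal E_\Delta$ of $\Xi_n^*$ by Proposition \ref{pipdualextrays}, and ``weak monotonicity functional'' likewise means an element of $\mathcal E_E$. The plan is to write down, for each such $\Delta[I,J]$, one explicit transposition $\sigma\in S_{n+1}$ for which $\sigma\cdot\Delta[I,J]$ is a weak monotonicity functional $E[I',J']$, and to note that as $\Delta[I,J]$ runs over $\mathcal E_\Delta$ (with either labelling of its two sets) the resulting $E[I',J']$ exhausts all weak monotonicity functionals. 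Since the adjoint action \eqref{adGamsym} is by the finite group $S_{n+1}$, a single identity $\sigma\cdot\Delta[I,J]=E[I',J']$ shows at once that the orbit of $\Delta[I,J]$ meets $\mathcal E_E$ and (reading it as $\Delta[I,J]=\sigma\cdot E[I',J']$, $\sigma$ an involution) that the orbit of $E[I',J']$ meets $\mathcal E_\Delta$; together with surjectivity this gives both halves of the corollary.

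The first step is to make the action explicit on functionals. Dualizing the factorization \eqref{Gamsym}, $\sigma\cdot v=\mathrm{sur}_{n+1}^n\big(\sigma\cdot\mathrm{pur}^{n+1}_n(v)\big)$, gives
\[
\sigma\cdot f=\big(\mathrm{pur}^{n+1}_n\big)^\dagger\Big(\sigma\cdot\big(\mathrm{sur}_{n+1}^n\big)^\dagger f\Big),
\]
where on the middle factor $\sigma$ acts by the ordinary permutation of the $n+1$ parties (pushing coefficients forward by $\sigma$), $\big(\mathrm{sur}_{n+1}^n\big)^\dagger$ merely regards $f=\sum_{L\subseteq[n]}f_L\,S(L)$ as a functional on $V_{n+1}^*$ with vanishing coefficients on subsets containing $n+1$, and---the one formula needing care---$\big(\mathrm{pur}^{n+1}_n\big)^\dagger$ sends $\sum_{L\subseteq[n+1]}g_L\,S(L)$ to $\sum_{L\subseteq[n]}\big(g_L+g_{[n+1]\setminus L}\big)\,S(L)$; that is, it folds the coefficient of $S(L^c)$, complement taken in $[n+1]$, onto that of $S(L)$, which just encodes the purity relation $S(L)=S(L^c)$ on the adjoined purifying system.

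The second step is the computation. Fix $\Delta[I,J]\in\mathcal E_\Delta$, write $K=I\cap J$, $\{i\}=I\setminus J$, $\{j\}=J\setminus I$, $R=[n]\setminus(I\cup J)$, and take $\sigma=(i,\,n+1)$. Feeding $\Delta[I,J]=S(K\cup\{i\})+S(K\cup\{j\})-S(K\cup\{i,j\})-S(K)$ through the three maps is short: the permutation replaces $i$ by $n+1$ in the first and third terms, after which $\big(\mathrm{pur}^{n+1}_n\big)^\dagger$ reflects exactly those two terms, and one lands on
\[
\sigma\cdot\Delta[I,J]=S\big(\{i,j\}\cup R\big)+S\big(K\cup\{j\}\big)-S\big(\{i\}\cup R\big)-S(K).
\]
Setting $I'=\{i,j\}\cup R$ and $J'=K\cup\{j\}$ one reads off $I'\cap J'=\{j\}$, $I'\cup J'=[n]$, $I'\setminus J'=\{i\}\cup R$, $J'\setminus I'=K$, so the right-hand side is precisely the weak monotonicity functional $E[I',J']$ of \eqref{wm}, lying in $\mathcal E_E$. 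For the converse and surjectivity one runs this backwards: given a weak monotonicity functional $E[I',J']$ with $I'\cap J'=\{k\}$, $I'\cup J'=[n]$, after possibly swapping $I',J'$ the set $A:=I'\setminus J'$ is non-empty; pick $p\in A$, put $K=J'\setminus I'$, and with $\sigma=(p,\,n+1)$ one gets $\sigma\cdot E[I',J']=\Delta[K\cup\{p\},\,K\cup\{k\}]\in\mathcal E_\Delta$ (the differences $\{p\}$ and $\{k\}$ are singletons and $p\neq k$). Specializing to $n=3$, $K=\{2\}$, $i=1$, $j=3$, $R=\emptyset$ recovers the instance $\Delta[\{12\},\{23\}]=(14)\cdot E[\{13\},\{23\}]$ noted before the corollary.

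I do not anticipate a real obstacle: the entire content is the explicit description of $\big(\mathrm{pur}^{n+1}_n\big)^\dagger$ together with elementary bookkeeping of unions, intersections and complements. The only pitfalls are (i) keeping straight which complements are formed in $[n]$ and which in $[n+1]$---the purification flip uses $[n+1]$---and (ii), if one insists on matching Pippenger's canonical orientations of $\mathcal E_\Delta$ and $\mathcal E_E$, remembering that $\Delta[I,J]=\Delta[J,I]$ and $E[I',J']=E[J',I']$, so the two sets in each functional may be interchanged freely.
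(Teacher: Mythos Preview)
Your proof is correct and follows essentially the same route as the paper: both act by the transposition $(i,\,n+1)$ with $i\in I\setminus J$ and verify that this swaps $\Delta[I,J]$ with the corresponding weak monotonicity functional (the paper writes the result compactly as $E[I^c\cup\{i\},J]$, which agrees with your $E[I',J']$). The only difference is scope: the paper states the identity for arbitrary non-degenerate $\Delta[I,J]$ (any $I,J$ with $I\setminus J,\ J\setminus I,\ I\cap J$ all non-empty), whereas you restrict to the elementary instances in $\mathcal E_\Delta$; your computation extends verbatim to the general case by picking any $i\in I\setminus J$.
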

\begin{proof}
 For any $I,J\subset [n]$ such that $I\setminus J\neq\emptyset\neq J\setminus I$ and $I\cap J\neq\emptyset$, we have 
 \begin{equation}
  (i\ n+1).\Delta[I,J]=E[I^c\cup\{1\},J]
 \end{equation}
 and
 \begin{equation}
  (i\ n+1).E[I,J]=\Delta[I^c\cup\{1\},J]
 \end{equation}
for any $i\in I\setminus J$.
\end{proof}

Exploiting the symmetry \eqref{Gamsym} the set of known independent information inequalities can be reduced to, for example, the set
\begin{equation}
	\mathcal{E}'_\Delta=\{\Delta[\{1,...,k\},\{l,...,m\}]|1<l\le k<m\le n,\ k\ge m-l+1\}.
\end{equation}

\paragraph*{}Using this new symmetry we can prove, that some weak monotonicity inequalities define facets of $\overline \Gamma_n$.

\begin{cor}\label{wmfacets}
 The inequalities $E[{ij},i^c],\ i,j\in [n]$ define facets of the quantum entropy cone, i.e.\ they are in particular extremal rays of its dual.
\end{cor}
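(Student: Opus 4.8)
The plan is to combine the new $S_{n+1}$-symmetry of Proposition~\ref{Gammasym} with a reduction to the classical entropy cone, where the corresponding statement is essentially known.

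First I would use the symmetry to replace the weak-monotonicity functional $E[\{i,j\},i^c]$ by an ordinary conditional-mutual-information functional. Fix $i\neq j$, assume for the moment $n\geq 3$, and choose any $k\in[n]\setminus\{i,j\}$. A short computation with the action \eqref{gamsym} shows that under the adjoint action \eqref{adGamsym} the transposition $(k\ n+1)\in S_{n+1}$ sends
\begin{equation}
  (k\ n+1)\cdot E[\{i,j\},i^c]=S(\{i,j\})+S(\{i,k\})-S(\{i\})-S(\{i,j,k\})=\Delta[\{i,j\},\{i,k\}],
\end{equation}
that is, the conditional mutual information $I(X_j\!:\!X_k\,|\,X_i)\ge 0$, which is one of Pippenger's essential inequalities in $\mathcal{E}_\Delta$ (Proposition~\ref{pipdualextrays}). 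By Proposition~\ref{Gammasym} the adjoint action of $S_{n+1}$ preserves $\Gamma_n^*$ and acts by cone automorphisms, hence permutes the extremal rays of $\Gamma_n^*$; so $E[\{i,j\},i^c]$ is an extremal ray of $\Gamma_n^*$ -- equivalently a facet of $\overline\Gamma_n$ -- if and only if $\Delta[\{i,j\},\{i,k\}]$ is. (For $n=2$ the claim is immediate: $\overline\Gamma_2$ is the three-dimensional simplicial cone cut out by subadditivity and the two Araki--Lieb inequalities, and $E[\{1,2\},\{2\}]$ and $E[\{1,2\},\{1\}]$ are precisely the latter two facets.)

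Next I would pass from the quantum to the classical cone. Diagonal density matrices realise exactly the classical entropy vectors, so $\Sigma_n\subseteq\Gamma_n$ and therefore $\Gamma_n^*\subseteq\Sigma_n^*$; moreover $\Delta[\{i,j\},\{i,k\}]\in\Gamma_n^*$ by strong subadditivity \cite{lieb1973proof}. I then invoke the elementary fact that if $C_2\subseteq C_1$ are cones and $R$ is an extremal ray of $C_1$ contained in $C_2$, then $R$ is also an extremal ray of $C_2$ (any splitting of a point of $R$ inside $C_2$ is a splitting inside $C_1$). Applied with $C_1=\Sigma_n^*$, $C_2=\Gamma_n^*$ and $R=\R_{\ge 0}\,\Delta[\{i,j\},\{i,k\}]$, this reduces the problem to showing that $I(X_j\!:\!X_k\,|\,X_i)\ge 0$ is a facet of the \emph{classical} entropy cone $\overline\Sigma_n$, i.e.\ an extremal ray of $\Sigma_n^*$. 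That statement -- the elemental Shannon-type inequalities are facets of the entropy region -- is standard \cite{yeung2008information}; a self-contained proof checks directly that the face $\{v\in\overline\Sigma_n\mid \Delta[\{i,j\},\{i,k\}](v)=0\}$ has dimension $2^n-2=\dim\Sigma_n-1$ (Proposition~\ref{dimSig}), by exhibiting $2^n-1$ linearly independent entropy vectors of distributions in which $X_j$ and $X_k$ are conditionally independent given $X_i$. A natural candidate is the ``fair-coin'' family $\{v^{(J)}\}_{\emptyset\ne J\subseteq[n]}$ from the proof of Proposition~\ref{dimSig}: it already lies on the face except for the $2^{n-3}$ sets $J$ with $\{j,k\}\subseteq J\not\ni i$, and for each such $J$ one replaces $v^{(J)}$ by the entropy vector of the distribution in which $X_j,X_k$ are independent fair coins, $X_\ell=X_j\oplus X_k$ for $\ell\in J\setminus\{j,k\}$ and $X_\ell$ is trivial for $\ell\notin J$; this distribution satisfies $X_j\perp X_k$, hence $X_j\perp X_k\mid X_i$ because $i\notin J$, and its entropies are ranks of $\mathbb{F}_2$-vectors.

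The main obstacle is exactly this last verification -- that the modified family is still linearly independent. I expect it to succumb to the device used in the proof of Proposition~\ref{dimSig}: apply the triangularising transform $A$ of \eqref{matustrafo}, or, more transparently, pass to the Möbius (``interaction-information'') coordinates $I_J=\sum_{K\subseteq J}(-1)^{|J\setminus K|}S(X_K)$, in which the unchanged fair-coin vectors are supported on the sets $K\subseteq J$ and satisfy $I_J\ne 0$, and then check that the replacement vectors remain independent modulo the span of the unchanged ones. This is finite but somewhat tedious bookkeeping; by contrast the symmetry reduction and the quantum-to-classical passage are the conceptual content of the argument and come essentially for free from Propositions~\ref{Gammasym} and~\ref{pipdualextrays}.
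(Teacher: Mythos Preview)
Your approach is correct but takes a genuinely different route from the paper, and the difference is instructive.

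You transform $E[\{i,j\},i^c]$ via the transposition $(k\ n+1)$ to the \emph{elemental} strong-subadditivity $\Delta[\{i,j\},\{i,k\}]=I(X_j:X_k|X_i)$, and then pass to the classical cone through the inclusion $\Gamma_n^*\subseteq\Sigma_n^*$ and the lemma ``extremal in the supercone $\Rightarrow$ extremal in the subcone''. That reduction is valid, and it is a nice observation that a classical facet which happens to be a quantum inequality is automatically a quantum facet. The price is the verification you yourself flag: for $\Delta[\{i,j\},\{i,k\}]$, not one but $2^{n-3}$ of the fair-coin vectors $v^{(I)}$ (namely those with $\{j,k\}\subseteq I\not\ni i$) fail to lie on the face, so you are forced into the replacement construction and its linear-independence bookkeeping (or into citing the classical facet result from Yeung's book).

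The paper instead applies the transposition $(i\ n+1)$, which sends $E[\{i,j\},i^c]$ to the \emph{large-set} strong-subadditivity $\Delta[i^c,j^c]$. For this functional one computes directly
\[
\Delta[i^c,j^c]\bigl(v^{(I)}\bigr)=\delta_{I,\{i,j\}},
\]
so \emph{all but one} of the $2^n-1$ linearly independent fair-coin vectors already lie on the face, and the facet claim is immediate---no replacement, no classical detour, and no appeal to outside results. What your approach buys is a conceptual bridge to the well-known classical facets; what the paper's choice of symmetry element buys is a one-line self-contained argument.
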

\begin{proof}
 First observe that
 \begin{equation}\label{ssatowm}
  E[\{ij\},i^c]=(i\ n+1)\cdot \Delta[i^c,j^c]
 \end{equation}
 employing the action from Proposition \ref{Gammasym}. Now look at the entropy vectors
 \begin{equation}
  v^{(I)}_J=\min(1,|I\cap J|),\ J\neq \emptyset
 \end{equation}
which correspond to random variables some subset of which are maximally correlated and the rest are trivial. They are linearly independent according to the proof of \ref{dimSig}, and
\begin{equation}
  \Delta[i^c,j^c](v^{(I)})=\delta_{I\,\{ij\}},
\end{equation}
which proves that the face defined by $\Delta[i^c,j^c]$ has dimension one less then the whole cone, i.e.\ it is a facet. But due to the relation \eqref{ssatowm} this implies that $E[\{ij\},i^c]$ defines a facet as well.
\end{proof}

\section{Balanced Information Inequalities}\label{bal}

In his paper on classical balanced information inequalities \cite{chan2003balanced}, Chan introduces a way of balancing a possibly unbalanced information inequality, i.e.\ the linear projection
\begin{equation}
	\Pi_b: V_n^*\to V_n^*, \ f\mapsto g,\ g_I=\begin{cases}
	                                          	f_I-\sum_{i=1}^n r_i(f)& I=[n]\\
	                                          	f_I+r_i(f)& I=i^c\\
	                                          	f_I& \text{ else}
	                                          \end{cases},
\end{equation}
using the notion of residual weights defined in Equation \eqref{resids}
His main result is the following
\begin{thm}[Chan, \cite{chan2003balanced}]\label{chan}
	The following two statements are equivalent:
	\begin{enumerate}
		\item[(i)] $f\in \Sigma_n^*$
		\item[(ii)] $\Pi_b f\in \Sigma_{n,b}^*$ and $r_i(f)\ge 0$ for all $i\in [n]$.
	\end{enumerate}
\end{thm}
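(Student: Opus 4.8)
The plan is to decompose an arbitrary functional $f\in V_n^*$ along the ``monotonicity directions'' $\mu_i := e^{([n])}-e^{(i^c)}$, which as inequalities read $H(X_i\mid X_{i^c})\ge 0$ and satisfy $\mu_i(v) = v_{[n]}-v_{i^c}$, and then to recognise $\Pi_b$ as the transpose of a projection of $V_n$ that maps the entropy cone into itself. Three elementary observations come first. (a) The vector $v^{(i)}\in\Sigma_n$ with $v^{(i)}_I = |\{i\}\cap I|$ is an entropy vector — put a fair coin at position $i$ and trivial variables elsewhere, as in the proof of Proposition \ref{dimSig} — and by definition $r_i(f) = f(v^{(i)})$; moreover $\mu_i(v^{(j)})=\delta_{ij}$. (b) Each $\mu_i$ lies in $\Sigma_n^*$, and unwinding the definition of $\Pi_b$ gives the identity $f = \Pi_b f + \sum_{i=1}^n r_i(f)\,\mu_i$ for every $f$. (c) A direct computation of residual weights gives $r_j(\Pi_b f) = 0$ for all $j$, so $\Pi_b f$ is automatically balanced; hence statement (ii) amounts to ``$\Pi_b f\in\Sigma_n^*$ and $r_i(f)\ge 0$''. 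Granting these, the implication (ii)$\Rightarrow$(i) is immediate: for $v\in\Sigma_n$ one has $f(v) = (\Pi_b f)(v)+\sum_i r_i(f)\,\mu_i(v)$, and all three contributions $(\Pi_b f)(v)\ge 0$, $r_i(f)\ge 0$, $\mu_i(v) = H(X_i\mid X_{i^c})\ge 0$ are nonnegative.

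For (i)$\Rightarrow$(ii), nonnegativity of the residual weights is (a): $r_i(f) = f(v^{(i)})\ge 0$ because $v^{(i)}\in\Sigma_n$ and $f\in\Sigma_n^*$. To show $\Pi_b f\in\Sigma_n^*$ I would introduce the linear map $L:V_n\to V_n$, $L(v) = v-\sum_{i=1}^n\mu_i(v)\,v^{(i)}$. Using $\mu_i(v^{(j)})=\delta_{ij}$ one checks that $L$ is a projection, with kernel $\spa\{v^{(i)}\mid i\in[n]\}$ and image $\{w:w_{[n]} = w_{i^c}\ \forall i\}$, and that $L^\dagger = \Pi_b$, since $(L^\dagger f)(v) = f(Lv) = f(v)-\sum_i r_i(f)\,\mu_i(v) = (\Pi_b f)(v)$. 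Thus for $v\in\Sigma_n$ we have $(\Pi_b f)(v) = f(Lv)$, and since $f\in\Sigma_n^*=\overline{\Sigma}_n^*$ it suffices to prove the following.

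\emph{Key Lemma.} $L(\Sigma_n)\subseteq\overline{\Sigma}_n$; equivalently, if $v\in\Sigma_n$ is realised by $(X_1,\dots,X_n)$ then $w := Lv$, with $w_I = H(X_I)-\sum_{i\in I}H(X_i\mid X_{i^c})$, lies in $\overline{\Sigma}_n$.

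This lemma carries essentially the whole weight of the theorem. A first check is that $w$ obeys the Shannon-type inequalities — e.g.\ $w_{\{j\}} = I(X_j:X_{j^c})\ge 0$, and $w_{[n]} = \sum_j H(X_{j^c})-(n-1)H(X_{[n]})\ge 0$ by submodularity — but since $\overline{\Sigma}_n$ is strictly smaller than the Shannon cone for $n\ge 4$, this does not suffice and one needs an explicit construction realising $w$. I see two routes. A combinatorial one uses the group characterisation of Theorem \ref{chanyeung1}: realise $v$ approximately by a group tuple $(G,G_1,\dots,G_n)$, note $H(X_i\mid X_{i^c}) = \log[G_{i^c}:G_{[n]}]$, and build an amplified group tuple in which these indices are ``absorbed'', so that its entropy vector approximates $w$. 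A direct one takes $N$ i.i.d.\ copies of $X$, forms the blocks $\mathbf X_i = (X_i^{(1)},\dots,X_i^{(N)})$, and uses the asymptotic equipartition property (Theorem \ref{AEP}) together with conditional typicality to replace each $\mathbf X_i$ by a function $\psi_i(\mathbf X_i)$ discarding exactly the $\approx N\,H(X_i\mid X_{i^c})$ bits private to party $i$; dividing by $N$ and letting $N\to\infty$ places $w$ in $\overline{\Sigma}_n$. The genuine obstacle, I expect, is the verification in this construction: stripping one party's private randomness perturbs the conditional entropies of the others, so one must show that the $n$ private parts decouple in the i.i.d.\ limit and control all $2^n$ marginal entropies of $(\psi_1(\mathbf X_1),\dots,\psi_n(\mathbf X_n))$ simultaneously. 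Everything else — the two implications granted the Key Lemma, the balancedness of $\Pi_b f$, and the identity $\Pi_b = L^\dagger$ — is routine linear algebra with the vectors $v^{(i)}$ and functionals $\mu_i$.
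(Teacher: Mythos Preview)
The paper does not prove Theorem \ref{chan}: it is quoted from Chan's article \cite{chan2003balanced} and used as a black box to derive Corollary \ref{sigmastardirsum} and the subsequent geometric discussion. There is therefore no proof in the paper to compare your proposal against.

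That said, your outline is correct and is essentially Chan's own argument. The reductions you label (a)--(c) and the implication (ii)$\Rightarrow$(i) are exactly right, as is the identification $\Pi_b=L^\dagger$; these are the ``routine linear algebra'' parts and your verification is complete. Your Key Lemma --- that the vector $w$ with $w_I=H(X_I)-\sum_{i\in I}H(X_i\mid X_{i^c})$ lies in $\overline{\Sigma}_n$ --- is precisely the substantive statement Chan proves, and he does it via the first of your two routes: he passes through quasi-uniform random variables (equivalently, the group construction of Theorem \ref{chanyeung1}) to build, for each $k$, a tuple whose normalised entropy vector converges to $w$. Your second route (i.i.d.\ copies plus conditional typicality / Slepian--Wolf) would also work and is morally the same idea, but you are right that the delicate point is controlling all $2^n$ marginal entropies simultaneously when stripping the $n$ private parts; Chan's group-theoretic argument sidesteps this by working with exact quasi-uniform arrays rather than asymptotic typical sets.

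So: no gap in the overall strategy, and the place you flagged as ``the genuine obstacle'' is exactly where all the work lives. What remains is to actually carry out one of the two constructions for the Key Lemma; the rest of your write-up is already a complete proof modulo that lemma.
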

In particular it implies that $\Pi_b$ is a morphism from $\Sigma_n^*$ to $\Sigma_{n,b}^*$. For some information inequality $\sum_{I\subset [n]}f_I H(X_I)\ge 0$ this means that it is valid if and only if $r_i(f)\ge 0$ for all $i\in[n]$ and
\begin{equation}\label{chanspib}
	\sum_{I\subset [n]}f_I H(X_I)-\sum_{i\in[n]}r_i(f) H(X_i|X_{i^c})\ge 0
\end{equation}
is valid. Define the special monotonicity functionals used in the original definition \eqref{chanspib} of $\Pi_b$, i.e.
\begin{equation}
	m(i,i^c)_I=\begin{cases}
	         	1 & I=[n]\\
	         	-1 & I=i^c\\
	         	0 & \text{else}
	         \end{cases}.
\end{equation}
Let $M_{n-1}=\cone\left(\left\{m(i,i^c)|i\in[n]\right\}\right)$ be the cone generated by the $m(i,i^c)$. Theorem \ref{chan} implies the following
\begin{cor}
	$\Sigma_n^*=\Sigma_{n,b}^*+M_{n-1}$, and $\Pi_b$ projects onto $\Sigma_{n,b}$, i.e.\ for any element $f\in\Sigma_n^*$, $f=g+h$ with $\ g\in\Sigma_{n,b}^*$ and $\ h\in M_{n-1}$, $\Pi_b g=g$ and $\Pi_b h=0$.
\end{cor}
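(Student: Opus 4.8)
The plan is to write down the decomposition explicitly and then verify each part using Theorem \ref{chan} together with elementary residual-weight bookkeeping. For $f\in\Sigma_n^*$ the natural candidate is $f=\Pi_b f+(f-\Pi_b f)$. First I would observe, straight from the definition of $\Pi_b$, that $f-\Pi_b f$ is supported only on $[n]$ and on the sets $i^c$, with $(f-\Pi_b f)_{[n]}=\sum_{i\in[n]}r_i(f)$ and $(f-\Pi_b f)_{i^c}=-r_i(f)$; comparing entries gives $f-\Pi_b f=\sum_{i\in[n]}r_i(f)\,m(i,i^c)$. By Theorem \ref{chan}, $f\in\Sigma_n^*$ forces $r_i(f)\ge 0$ for all $i$, so $f-\Pi_b f\in M_{n-1}$, while the same theorem gives $\Pi_b f\in\Sigma_{n,b}^*$. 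This establishes $\Sigma_n^*\subset\Sigma_{n,b}^*+M_{n-1}$ and already exhibits the claimed splitting with $g=\Pi_b f$ and $h=f-\Pi_b f$.

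For the reverse inclusion I would take $g\in\Sigma_{n,b}^*$ and $h=\sum_{i}\lambda_i m(i,i^c)$ with $\lambda_i\ge 0$, and check the two conditions in Theorem \ref{chan}(ii) for $g+h$. The key auxiliary fact is $r_i(m(j,j^c))=\delta_{ij}$: since $i\in[n]$ always and $i\in j^c$ precisely when $i\neq j$, the two nonzero entries of $m(j,j^c)$ contribute $1$ and $-[i\neq j]$ to $r_i$. Hence $r_i(g+h)=r_i(g)+\lambda_i=\lambda_i\ge 0$, using $r_i(g)=0$ for balanced $g$. It then remains to show $\Pi_b(g+h)\in\Sigma_{n,b}^*$, and by linearity of $\Pi_b$ this reduces to $\Pi_b g=g$ and $\Pi_b h=0$. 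The former holds because $r_i(g)=0$ makes the defining formula for $\Pi_b$ act as the identity on $g$; the latter follows from $\Pi_b m(j,j^c)=0$, verified entry by entry with $r_i(m(j,j^c))=\delta_{ij}$: the $[n]$-entry becomes $1-\sum_i\delta_{ij}=0$, the $j^c$-entry becomes $-1+\delta_{jj}=0$, and no other entry is modified. Theorem \ref{chan} then yields $g+h\in\Sigma_n^*$, so $\Sigma_{n,b}^*+M_{n-1}\subset\Sigma_n^*$ and the two cones coincide.

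For the ``projects onto $\Sigma_{n,b}^*$'' assertion I would first note that $\Pi_b$ is idempotent: a short computation of $r_i(\Pi_b f)$, splitting the sum $\sum_{I\ni i}(\Pi_b f)_I$ into the $[n]$-term, the $k^c$-terms with $k\neq i$, and the untouched terms, gives $r_i(\Pi_b f)=r_i(f)-r_i(f)=0$ (the subtlety is that $i^c$ is exactly the modified set \emph{not} contained in the sum defining $r_i$). Thus $\Pi_b f$ is balanced and $\Pi_b^2 f=\Pi_b f$. Applying this to the decomposition $f=g+h$ above with $g=\Pi_b f$ and $h=f-\Pi_b f$ gives $\Pi_b g=\Pi_b^2 f=\Pi_b f=g$ and $\Pi_b h=\Pi_b f-\Pi_b^2 f=0$, which is precisely the claim.

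I do not expect a genuine obstacle here; the statement is essentially a repackaging of Theorem \ref{chan}. The only place demanding care is the residual-weight bookkeeping — keeping straight which subsets $I$ contain a given index $i$ (in particular that $i^c$ is the one subset omitted from $r_i$), and checking that the modification of $\Pi_b$ at $[n]$ and the modifications at the various $i^c$ do not interfere when $\Pi_b$ is iterated.
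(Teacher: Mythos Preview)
Your proof is correct and follows essentially the same approach as the paper: decompose $f=\Pi_b f+(f-\Pi_b f)$, use Theorem~\ref{chan} to place the two pieces in $\Sigma_{n,b}^*$ and $M_{n-1}$ respectively, and for the reverse inclusion verify condition~(ii) of Theorem~\ref{chan} via the residual-weight computation $r_i(g+h)=\lambda_i$ and $\Pi_b(g+h)=g$. You have simply made explicit the bookkeeping the paper leaves implicit (the identity $r_i(m(j,j^c))=\delta_{ij}$ and the entry-by-entry check of $\Pi_b m(j,j^c)=0$); your third paragraph on idempotence is a mild redundancy, since $\Pi_b g=g$ and $\Pi_b h=0$ were already established in your second paragraph for arbitrary balanced $g$ and $h\in M_{n-1}$.
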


\begin{proof}
	Given $f\in\Sigma_n^*$ we have 
	\begin{equation}\label{chandecomp}
	f=\Pi_b f+(\mathds{1}-\Pi_b)f,
	\end{equation}
	$\Pi_b f\in\Sigma_{n,b}^*$ and $(\mathds{1}-\Pi_b)f\in M_{n-1}$. Given $g\in \Sigma_{n,b}^*$ and $h=\sum_{i\in[n]}\alpha_i m(i,i^c)\in M_{n-1}$, $r_i(g+h)=\alpha_i>0$ and $\Pi_b(g+h)=g\in\Sigma_{n,b}^*$ so, according to Theorem \ref{chan}, $g+h\in\Sigma_n^*$.
\end{proof}
The following Lemma gives a geometrical interpretation of the result.
\begin{lem}\label{projequivdirsum}
	Let $V$ be a real vector space and $K_1, K_2\subset V$ convex cones. Then $\spa K_1\cap \, \spa K_2=\{0\}$ if and only if there exists a linear map $P$ with $P a=a\, \forall a\in K_1$ and $P b=0\, \forall b\in K_2$
\end{lem}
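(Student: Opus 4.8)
The statement is a clean linear-algebra fact, so the plan is to prove both implications directly. For the backward direction ($\Leftarrow$), suppose such a linear map $P$ exists. Take any $v \in \spa K_1 \cap \spa K_2$. Since $v \in \spa K_1$, write $v = \sum_i \lambda_i a_i$ with $a_i \in K_1$, $\lambda_i \in \R$; then $Pv = \sum_i \lambda_i P a_i = \sum_i \lambda_i a_i = v$ by linearity and $Pa = a$ on $K_1$. Since $v \in \spa K_2$, write $v = \sum_j \mu_j b_j$ with $b_j \in K_2$; then $Pv = \sum_j \mu_j P b_j = 0$. Hence $v = Pv = 0$, so the intersection of the spans is $\{0\}$.

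For the forward direction ($\Rightarrow$), assume $\spa K_1 \cap \spa K_2 = \{0\}$. Set $W_1 = \spa K_1$ and $W_2 = \spa K_2$; by hypothesis $W_1 \cap W_2 = \{0\}$, so the sum $W_1 + W_2$ is a direct sum inside $V$. Choose a complement $U$ so that $V = W_1 \oplus W_2 \oplus U$, and define $P$ to be the projection onto $W_1$ along $W_2 \oplus U$: concretely, for $v = w_1 + w_2 + u$ with $w_i \in W_i$, $u \in U$, set $Pv = w_1$. This $P$ is linear, and for $a \in K_1 \subset W_1$ the unique decomposition is $a = a + 0 + 0$, so $Pa = a$; for $b \in K_2 \subset W_2$ the decomposition is $b = 0 + b + 0$, so $Pb = 0$. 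Thus $P$ has the required properties.

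I do not expect any serious obstacle here; the only mild subtlety is making sure the complement $U$ exists, which is immediate in finite dimensions (and the whole excerpt works with $\dim V < \infty$) and holds in general by a basis-extension argument if one insists on the infinite-dimensional case. One point worth stating carefully in the writeup is that $P$ is by no means unique — any choice of complement $U$ works — so the lemma asserts existence, not canonicity; this matches the way it will be applied to the decomposition $\Sigma_n^* = \Sigma_{n,b}^* + M_{n-1}$, where $\Pi_b$ is one such $P$. The connection to the preceding corollary is then: the existence of $\Pi_b$ with $\Pi_b g = g$ on $\Sigma_{n,b}^*$ and $\Pi_b h = 0$ on $M_{n-1}$ is, via this lemma, equivalent to $\spa \Sigma_{n,b}^* \cap \spa M_{n-1} = \{0\}$, which is the geometric content of Chan's theorem and the property whose quantum analogue is investigated next.
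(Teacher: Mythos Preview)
Your proof is correct and follows essentially the same approach as the paper's: both directions proceed by extending the action of $P$ from $K_i$ to $\spa K_i$ by linearity for the backward implication, and by choosing a complement to $\spa K_1 \oplus \spa K_2$ and projecting onto $\spa K_1$ for the forward one. The only cosmetic difference is that the paper takes the orthogonal complement $K^\perp$ as the extra summand, whereas you pick an arbitrary complement $U$; both work.
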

\begin{proof}
	Let $\spa K_1\cap \spa K_2=\{0\}$. Take bases $\{e^{(1)}_i\}$ of $\spa K_1$ and $\{e^{(2)}_j\}$ of $\spa K_2$, then any $x \in V$ can be decomposed in a unique way as $x=\sum_i\alpha_i e^{(1)}+\sum_j\beta_j e^{(2)}_j+r$ with $r\in K^\perp$, $K=K_1+K_2$, and $P: x\mapsto \sum_i\alpha_i e^{(1)}$ has the required properties.
	
	Given a map $P$ as specified in the Lemma, we have, due to linearity, $P a=a\, \forall a\in \spa K_1$ and $P b=0\, \forall b\in\spa K_2$, so for any $x\in\spa K_1\cap \spa K_2$, $0=P x=x$.
\end{proof}
If $K=K_1+K_2$ and $\spa K_1\cap\spa K_2=\{0\}$ the sum of the two cones is \emph{direct} meaning that any element $a\in K$ can be written as sum $a=a_1+a_2, a_i\in K_i$ in a unique way. We write $K=K_1\oplus K_2$ then.
\begin{cor}\label{sigmastardirsum}
 \begin{equation}
  \Sigma_n^*=\Sigma_{n,b}^*\oplus M_{n-1}
 \end{equation}

\end{cor}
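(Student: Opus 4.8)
The plan is to derive this directly from the preceding Corollary together with Lemma \ref{projequivdirsum}. The preceding Corollary already supplies the decomposition $\Sigma_n^*=\Sigma_{n,b}^*+M_{n-1}$ as a (a priori not direct) sum of cones, so the only thing left to prove is that this sum is \emph{direct}, i.e.\ that $\spa\Sigma_{n,b}^*\cap\spa M_{n-1}=\{0\}$. By Lemma \ref{projequivdirsum}, applied with $K_1=\Sigma_{n,b}^*$ and $K_2=M_{n-1}$, this is equivalent to exhibiting a linear map that restricts to the identity on $\Sigma_{n,b}^*$ and to zero on $M_{n-1}$ — and the natural candidate is $\Pi_b$ itself.

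First I would check that $\Pi_b$ has these two properties. If $g\in\Sigma_{n,b}^*$ then $g$ is balanced, so $r_i(g)=0$ for every $i\in[n]$, and inspecting the three cases in the definition of $\Pi_b$ immediately gives $\Pi_b g=g$. For the other property, a one-line computation shows $r_j\!\left(m(i,i^c)\right)=\delta_{ij}$, whence the $[n]$-component of $\Pi_b m(i,i^c)$ is $1-\sum_k r_k\!\left(m(i,i^c)\right)=0$, the $j^c$-component is $m(i,i^c)_{j^c}+\delta_{ij}=0$ in both cases $j=i$ and $j\neq i$, and all remaining components vanish already for $m(i,i^c)$. Hence $\Pi_b m(i,i^c)=0$ for every $i$, and since $\Pi_b$ is linear and $M_{n-1}=\cone\!\left(\{m(i,i^c)\mid i\in[n]\}\right)$, it follows that $\Pi_b h=0$ for all $h\in M_{n-1}$. (Both facts are also recorded in the preceding Corollary.)

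Then I would invoke Lemma \ref{projequivdirsum}: the existence of the linear map $P=\Pi_b$ with $Pa=a$ for all $a\in\Sigma_{n,b}^*$ and $Pb=0$ for all $b\in M_{n-1}$ forces $\spa\Sigma_{n,b}^*\cap\spa M_{n-1}=\{0\}$. Combining this with $\Sigma_n^*=\Sigma_{n,b}^*+M_{n-1}$ and the definition of a direct sum of cones given just before the statement yields $\Sigma_n^*=\Sigma_{n,b}^*\oplus M_{n-1}$.

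There is essentially no genuinely hard step here; the substance is already contained in Theorem \ref{chan} and the preceding Corollary, and this Corollary is really just a geometric repackaging. The only point worth a line of care is making sure $\Pi_b$ vanishes on \emph{all} of $M_{n-1}$, not merely on the generating rays in the cone sense — but linearity of $\Pi_b$ together with $M_{n-1}$ being the conic (hence linear-span-generating) hull of the $m(i,i^c)$ closes that gap.
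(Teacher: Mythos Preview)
Your proof is correct and is exactly the argument the paper intends: the corollary is stated without proof because it follows immediately from the preceding Corollary (giving $\Sigma_n^*=\Sigma_{n,b}^*+M_{n-1}$ together with $\Pi_b g=g$ on $\Sigma_{n,b}^*$ and $\Pi_b h=0$ on $M_{n-1}$) combined with Lemma \ref{projequivdirsum}, which is precisely what you do. Your explicit verification of $r_j\!\left(m(i,i^c)\right)=\delta_{ij}$ and hence $\Pi_b m(i,i^c)=0$ is a nice sanity check that the paper leaves implicit.
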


Let us prove another Lemma that relates the sets of extremal rays of the cones appearing in the previous lemma.
\begin{lem}\label{sumextrays}
	Let $V$ be a real vector space, $K_1, K_2\subset V$ convex cones, $K=K_1+K_2$ with the following properties:
	\begin{enumerate}
		\item[\emph{(i)}]$\spa K_1\cap \, \spa K_2=\{0\}$
		\item[\emph{(ii)}]$\R v\not \subset K\, \forall v\in V$
	\end{enumerate}
Then $\mathrm{ext}(K)=\mathrm{ext}(K_1)\cup \mathrm{ext}(K_2)$.
\end{lem}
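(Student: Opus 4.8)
The plan is to show the two inclusions $\mathrm{ext}(K)\subset \mathrm{ext}(K_1)\cup\mathrm{ext}(K_2)$ and $\mathrm{ext}(K_1)\cup\mathrm{ext}(K_2)\subset\mathrm{ext}(K)$ separately, leaning on the direct-sum structure from hypothesis (i) and the pointedness from hypothesis (ii). The crucial technical device is the projection $P$ supplied by Lemma \ref{projequivdirsum}: since $\spa K_1\cap\spa K_2=\{0\}$, there is a linear $P$ with $P|_{K_1}=\mathrm{id}$ and $P|_{K_2}=0$, and by symmetry also a complementary projection $Q$ with $Q|_{K_2}=\mathrm{id}$, $Q|_{K_1}=0$. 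Every $v\in K$ then has a \emph{unique} decomposition $v=v_1+v_2$ with $v_1=Pv\in K_1$ and $v_2=Qv\in K_2$ (uniqueness is exactly the direct-sum statement; one should note that $Pv\in K_1$ because $Pv=P(v_1+v_2)=v_1$ for \emph{some} such decomposition, so $P$ actually maps $K$ into $K_1$).

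For the first inclusion, let $R=\R_{\ge0}v$ be an extremal ray of $K$, with $v=v_1+v_2$ as above. This is a sum decomposition of $v$ inside $K$, so extremality of $R$ forces $v_1,v_2\in R$. Since $R$ is a single ray and (by pointedness (ii)) $R\cap(-R)=\{0\}$, one of $v_1,v_2$ is a nonnegative multiple of $v$ and the other must be $0$ — they cannot both be positive multiples of $v$ unless one also writes as a multiple, but in any case after relabeling we get $v\in R\subset K_1$ (say), with $v_2=0$. Now I must check $R$ is extremal \emph{in} $K_1$: given $v=x+y$ with $x,y\in K_1\subset K$, extremality of $R$ in $K$ gives $x,y\in R$, so $R\in\mathrm{ext}(K_1)$. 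Hence $R\in\mathrm{ext}(K_1)\cup\mathrm{ext}(K_2)$.

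For the reverse inclusion, take $R=\R_{\ge0}v\in\mathrm{ext}(K_1)$ (the $K_2$ case is symmetric) and suppose $v=x+y$ with $x,y\in K$. Apply $P$: $v=Pv=Px+Py$ with $Px,Py\in K_1$, and apply $Q$: $0=Qv=Qx+Qy$ with $Qx,Qy\in K_2$. By pointedness of $K_2$ (which follows from (ii) since $K_2\subset K$), the relation $Qx+Qy=0$ with both summands in the pointed cone $K_2$ forces $Qx=Qy=0$, i.e. $x=Px\in K_1$ and $y=Py\in K_1$. Then $v=x+y$ is a decomposition inside $K_1$, so extremality of $R$ in $K_1$ gives $x,y\in R$, proving $R\in\mathrm{ext}(K)$.

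The main obstacle I anticipate is bookkeeping around pointedness: one has to extract from hypothesis (ii) that $K$, $K_1$, $K_2$ are all pointed (contain no line), and use this at two points — first to conclude that in $v_1+v_2$ with both on a common ray one summand vanishes, and second to kill $Qx,Qy$ from $Qx+Qy=0$. Both are elementary, but the argument is only clean if one is careful that $P$ genuinely maps $K$ into $K_1$ rather than merely $\spa K$ into $\spa K_1$; I would state that observation explicitly as the first step after invoking Lemma \ref{projequivdirsum}.
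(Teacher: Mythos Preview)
Your approach is essentially the paper's: prove both inclusions via the unique direct-sum decomposition, using (i) for the first inclusion and (ii) for the second. The one spot where your reasoning goes astray is in the first inclusion, where you invoke pointedness (ii) to conclude that one of $v_1,v_2$ must vanish. Pointedness alone does not give this: having $v_1,v_2\in R$ just says $v_1=\alpha v$, $v_2=\beta v$ with $\alpha,\beta\ge0$ and $\alpha+\beta=1$, and nothing about $R\cap(-R)=\{0\}$ forces a coefficient to be zero. The correct argument (which the paper uses) comes from (i): if both $\alpha,\beta>0$ then $v\in K_1$ and $v\in K_2$, hence $v\in\spa K_1\cap\spa K_2=\{0\}$, a contradiction. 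Your second inclusion, where (ii) is genuinely needed to kill $Qx+Qy=0$ inside $K_2$, is correct and matches the paper exactly.
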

Note that assumption (i) implies in particular that $\mathrm{ext}(K_1)\cap \mathrm{ext}(K_2)=\emptyset$.
\begin{proof}We have to prove two inclusions to show equality.
	\begin{itemize}
             	\item ${\mathrm{ext}(K)\subset\mathrm{ext}(K_1)\cup \mathrm{ext}(K_2)}$:
             	\item[]Take any extremal ray $R\subset K$ of $K$ and $0\not=v\in R$. Because of (i) there is a unique sum decomposition $v=a+b,\ a\in K_1,\ b\in K_2$. But $v$ is in an extremal ray of $K$, so $a, b\in R$. Suppose that $a\not=0\not=b$, then $R\subset K_1\cap K_2$, which is a contradiction to (i). Hence either $a=0$ or $b=0$. Let without loss of generality $b=0$, then $a=v\in K_1$. As $R$ is an extremal ray, we have $a,b\in R$ for all $a,b\in K$ such that $a+b=v$. In particular, if we have $a,b \in K_1\subset K$ with $a+b=v$ it follows that $a,b\in R$. so $R$ is an extremal ray of $K_1$.
		\item ${\mathrm{ext}(K)\supset\mathrm{ext}(K_1)\cup \mathrm{ext}(K_2)}$:
		\item[]Let $R\subset K_1$ now be an extremal ray of $K_1$, $0\not=v\in R$. Take any $a,b\in K$ such that $a+b=v$. Let $a=a'+a'',\ b=b'+b''$ with $a',b'\in K_1,\ a'',b''\in K_2$ be the unique sum decompositions of $a$ and $b$. Then $a'+b'=v,\ a''+b''=0$. Suppose that $a''\not=0$, then
		\begin{equation}
		K\supset (\R_{\ge 0}a''\cup \R_{\ge 0}b'')=(\R_{\ge 0}a''\cup \R_{\ge 0}(-a''))=\R a'',\nonumber
		\end{equation}
		which is a contradiction to (ii). Hence $a''=b''=0$ and $a,b\in  K_1$. But $R$ is an extremal ray of $K_1$, which implies $a,b\in R$, so $R$ is an extremal ray of $K$.
             \end{itemize}
\end{proof}
Together with Theorem \ref{chan} the previous lemma implies that an imbalanced information inequality is essential if and only if it is of the from $H(X_i|X_{i^c})\ge 0$. The essential weak monotonicity instances in the von Neumann cone (see Proposition \ref{pipdualextrays}) span a space that has a non-trivial intersection with the balanced subspace of $V_n^*$, therefore a result like Chan's cannot be achieved provided that these instances of monotonicity are essential for $\Gamma_n$.
\begin{thm}\label{nodirsum}
	Let $K\subset V_n$ be a cone with $(e^{(\emptyset)},v)=0$ for all $v\in K$ and $\dim K=2^n-1$ such that there exist $f_i\in \mathrm{ext}(K^*), i=1,...,l$ with $f_i\notin B_n$ but $\spa(f_i|i=1,...,l)\cap B_n\not= 0$. Then there is no cone $C\subset V_n^*$ such that $\spa(C)\cap B_n=\{0\}$ and $K^*=(K^*\cap B_n)+C$.
\end{thm}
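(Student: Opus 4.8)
The strategy is proof by contradiction. Suppose a cone $C\subset V_n^*$ exists with $\spa(C)\cap B_n=\{0\}$ and $K^*=(K^*\cap B_n)+C$. I will show that this forces every $f_i$ into $C$, hence $\spa(f_1,\ldots,f_l)\subseteq\spa(C)$, so that $\spa(f_1,\ldots,f_l)\cap B_n\subseteq\spa(C)\cap B_n=\{0\}$ — contradicting the hypothesis $\spa(f_1,\ldots,f_l)\cap B_n\neq\{0\}$.

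The core of the argument is short. First, $0\in K^*\cap B_n$, so the identity $K^*=(K^*\cap B_n)+C$ already gives $C\subseteq K^*$ (write $c=0+c$). Now fix $i$ and decompose $f_i=b_i+c_i$ with $b_i\in K^*\cap B_n\subseteq K^*$ and $c_i\in C\subseteq K^*$. Since $f_i$ spans an extremal ray of $K^*$ and $f_i=b_i+c_i$ with both summands in $K^*$, the defining property of an extremal ray forces $b_i=\beta_i f_i$ for some $\beta_i\ge 0$. But $b_i\in B_n$, $B_n$ is a subspace, and $f_i\notin B_n$; if $\beta_i>0$ this would give $f_i=\beta_i^{-1}b_i\in B_n$, a contradiction. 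Hence $\beta_i=0$, i.e. $b_i=0$ and $f_i=c_i\in C$. As $i$ was arbitrary, all $f_i$ lie in $C$, which closes the argument as described above.

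There is one point of hygiene to attend to, and I expect it to be the only real obstacle. The hypothesis $(e^{(\emptyset)},v)=0$ for all $v\in K$ means $\pm e^{(\emptyset)}\in K^*$, so $K^*$ in fact contains the line $L:=\R e^{(\emptyset)}$, and — using $\dim K=2^n-1$, so that $\spa K$ is the hyperplane $L^\perp$ — this $L$ is exactly the lineality space of $K^*$; in particular $K^*$ has no extremal ray in the literal sense of the definition. This is harmless: since $i\notin\emptyset$ for every $i\in[n]$, the $\emptyset$-coordinate never enters the balancedness relations, so $L\subseteq B_n$, and one simply runs the argument above in the pointed quotient cone $K^*/L\subset V_n^*/L$. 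There the $f_i$ project to generators of genuine extremal rays, $B_n/L$ is still a subspace, $C$ still makes sense, and the hypothesis $\spa(f_1,\ldots,f_l)\cap B_n\neq\{0\}$ becomes $\spa(\overline{f_1},\ldots,\overline{f_l})\cap(B_n/L)\neq\{0\}$ — which is the intended reading of all these statements in the entropy-cone setting, the $\emptyset$-entry of an entropic functional being immaterial. The thing to double-check carefully is exactly this compatibility of the quotient with $B_n$ (equivalently, that the balanced combination of the $f_i$ witnessing the hypothesis does not collapse to a multiple of $e^{(\emptyset)}$); once that is in place, the decomposition step and the span comparison are purely formal.
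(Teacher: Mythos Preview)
Your proof is correct and follows essentially the same route as the paper. The paper restricts to the subspace $V=\spa\{e^{(I)}\mid\emptyset\neq I\subset[n]\}$ (equivalent to your quotient by $L=\R e^{(\emptyset)}$) to obtain a pointed dual cone, then invokes Lemma~\ref{sumextrays} to conclude $\mathrm{ext}(K'^*)=\mathrm{ext}(K'^*\cap B_n)\cup\mathrm{ext}(C')$, whence $f_i\in C'$ and the span contradiction follows. You do the same thing but inline the relevant half of Lemma~\ref{sumextrays}: the decomposition $f_i=b_i+c_i$ together with extremality and $f_i\notin B_n$ forces $b_i=0$, so $f_i\in C$. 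Your treatment of the lineality space is in fact more explicit than the paper's, and the compatibility check you flag (that the balanced witness does not collapse into $L$) is automatic once one reads $\mathrm{ext}(K^*)$ as extremal rays of the pointed cone $K'^*$, i.e.\ with representatives having zero $\emptyset$-component.
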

\begin{proof}
	Let $V=\spa\{e^{(I)}|\emptyset\not=I\subset [n]\}\subset V_n$. Then $K\subset V$ and we denote by $K'^*=K^*\cap V^*$ the dual of $K$ in $V$. We have $\dim K=\dim V$ and therefore $K'^*$ does not contain any non-trivial subspaces. Suppose there was a cone $C$ as specified in the theorem and denote $C'=C\cap V$. Then, according to Lemma \ref{sumextrays}, 
	\begin{eqnarray}
	  \mathrm{ext}(C')&=&\mathrm{ext}(K'^*)\setminus \mathrm{ext}(K'^*\cap B_n)\nonumber\\
	  &\supset&\left\{f_1,...,f_l\right\}.\nonumber
	\end{eqnarray}
	But then
	\begin{eqnarray}
	 0\not&=&\spa(f_i|i=1,...,l)\cap B_n\nonumber\\
	 &\subset&\spa\ \mathrm{ext}(C')\cap B_n\nonumber\\
	 &=&0,\nonumber
	\end{eqnarray}
which is a contradiction.
\end{proof}

\begin{figure}[h]
	\centering
	\includegraphics[width=6cm]{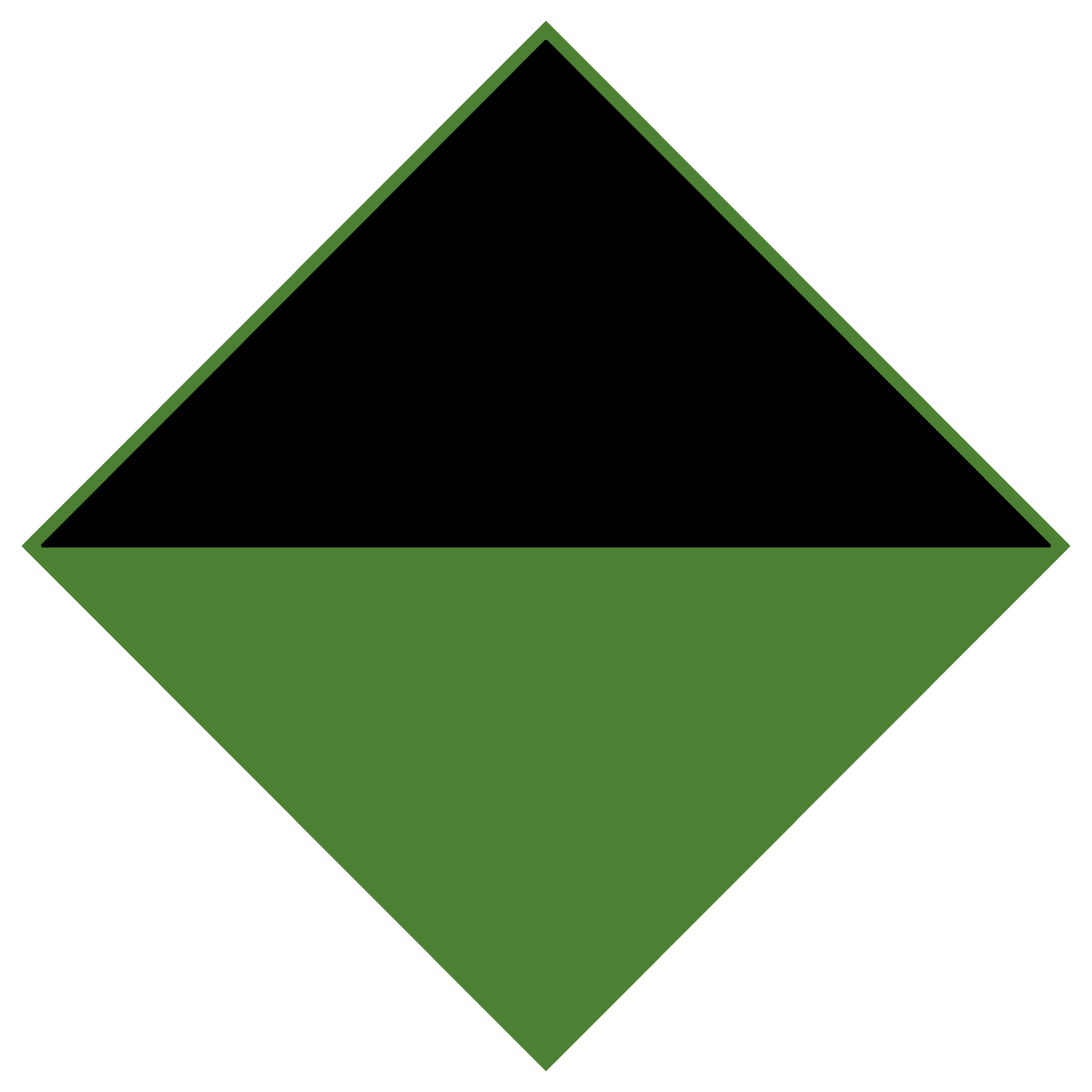}
	\caption{A simplified picture of the difference between the duals of the classical and quantum entropy cone: The 3-dimensional cone generated by the black base is a direct sum of two cones, whereas the one generated by the green base has a larger symmetry group.}\label{conescaricature}
\end{figure}

$\Gamma_n$ satisfies the first two conditions of the theorem, as $\Sigma_n\subset\Gamma_n\subset \spa\{e^{(I)}|\emptyset\not=I\subset [n]\}$, so, with Proposition \ref{dimSig}, $\dim \Gamma_n =2^n-1$. for the remaining condition, the existence of extremal rays of $\Gamma_n^*$ that span a space with non-trivial intersection with $B_n$, we need to show that some weak monotonicity instances that allow for a balanced linear combination are facets.

By choosing a suitable pair of strong subadditivity instances and using the symmetry \eqref{Gamsym}, we can show that $\Gamma_n$ fulfills the conditions of Theorem \ref{nodirsum} by showing that the chosen strong subadditivities define facets and using the cone isomorphism property of the symmetry transformation.
\begin{cor}\label{gammanodirsum}
	$\Gamma^*_n$ is not a direct sum of its balanced part and some other cone, and Chan's theorem \ref{chan} cannot be generalized to the case of quantum information inequalities.
\end{cor}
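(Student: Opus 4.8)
The plan is to deduce the statement as a single application of Theorem~\ref{nodirsum} with $K=\Gamma_n$, so the whole proof reduces to checking that $\Gamma_n$ satisfies the three hypotheses of that theorem (this needs $n\ge 4$; for $n\le 3$ the weak monotonicity facets turn out to have linearly independent residual-weight vectors and the route below does not go through, so the statement is to be read for $n\ge 4$). The first hypothesis, $(e^{(\emptyset)},v)=0$ for all $v\in\Gamma_n$, is immediate because $S(\rho_\emptyset)=0$ for every state. For the second, $\dim\Gamma_n=2^n-1$, I would use $\Sigma_n\subset\Gamma_n\subset\spa\{e^{(I)}\mid\emptyset\neq I\subset[n]\}$ together with Proposition~\ref{dimSig}.

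The substance lies in the third hypothesis: exhibiting extremal rays $f_i$ of $\Gamma_n^*$ with $f_i\notin B_n$ but $\spa(f_1,\dots,f_l)\cap B_n\neq\{0\}$. I would take the weak monotonicity facets supplied by Corollary~\ref{wmfacets}, concretely $f_1=E[\{1,n\},\{1\}^c]$ and $f_2=E[\{2,n\},\{2\}^c]$; both are facets of $\overline\Gamma_n$ by that corollary, hence extremal rays of $\Gamma_n^*$. A direct residual-weight computation (the coefficient vector of $E[\{i,n\},i^c]$ is $+1$ on $\{i,n\}$ and on $i^c$, $-1$ on $\{i\}$ and on $[n]\setminus\{i,n\}$) gives $r_k(E[\{i,n\},i^c])=2\,\delta_{kn}$; thus neither $f_1$ nor $f_2$ lies in $B_n$, yet $f_1$ and $f_2$ share the same residual-weight vector, so $f_1-f_2\in B_n$, and $f_1-f_2\neq 0$ since the two functionals differ already on the coordinate $e^{(\{1\})}$ once $n\ge 4$. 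Hence Theorem~\ref{nodirsum} applies and yields that no cone $C\subset V_n^*$ with $\spa(C)\cap B_n=\{0\}$ can satisfy $\Gamma_n^*=(\Gamma_n^*\cap B_n)+C$; as $\Gamma_n^*\cap B_n=\Gamma_{n,b}^*$ by definition, this is precisely the first assertion.

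For the second assertion I would argue by contradiction. If an exact analogue of Chan's Theorem~\ref{chan} held for quantum information inequalities, then, by the same reasoning that yields Corollary~\ref{sigmastardirsum} from Theorem~\ref{chan}, one would obtain $\Gamma_n^*=\Gamma_{n,b}^*+M_{n-1}$ (in fact a direct sum). But $\spa M_{n-1}\cap B_n=\{0\}$: since $r_k(m(i,i^c))=\delta_{ik}$, the residual-weight map $f\mapsto(r_1(f),\dots,r_n(f))$ sends the basis $m(1,1^c),\dots,m(n,n^c)$ of $\spa M_{n-1}$ to a basis of $\R^n$ and is therefore injective on $\spa M_{n-1}$, so its kernel $\spa M_{n-1}\cap B_n$ is trivial. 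Thus the above would be a decomposition $\Gamma_n^*=(\Gamma_n^*\cap B_n)+C$ with $C=M_{n-1}$ and $\spa(C)\cap B_n=\{0\}$ --- exactly what Theorem~\ref{nodirsum}, applied to $K=\Gamma_n$ as in the first part, rules out. This contradiction shows that Chan's theorem has no direct quantum analogue.

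I expect the only real obstacle to be bookkeeping rather than conceptual: one must pick, among the weak monotonicity facets of Corollary~\ref{wmfacets}, a pair whose residual-weight vectors coincide while the functionals themselves stay distinct --- this forces the two ``outer'' indices to differ and the ``center'' index to be shared, which is possible precisely because $n\ge 4$ (for $n=3$ the relabelling symmetry collapses such pairs, and all weak monotonicity facets then have independent residual weights) --- and one must keep the sign conventions in \eqref{ssa}--\eqref{wm} straight so that the residual weights come out as stated. Everything else is an assembly of Theorem~\ref{nodirsum}, Corollary~\ref{wmfacets}, and the discussion around Corollary~\ref{sigmastardirsum}.
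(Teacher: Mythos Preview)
Your proof is correct and follows essentially the same route as the paper: verify the two easy hypotheses of Theorem~\ref{nodirsum} for $K=\Gamma_n$, then feed in a pair of weak monotonicity facets from Corollary~\ref{wmfacets} whose nonzero difference is balanced (the paper uses $E[\{1,2\},1^c]$ and $E[\{2,3\},3^c]$, you use $E[\{1,n\},1^c]$ and $E[\{2,n\},2^c]$, which is an immaterial variation). Your treatment is in fact more careful than the paper's---you make the $n\ge 4$ restriction explicit and spell out why a quantum Chan-type theorem would produce precisely the forbidden decomposition---whereas the paper leaves both points implicit.
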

\begin{proof}
	Consider the weak monotonicity instances $E[\{12\}, 1^c]$ and $E[\{23\}, 3^c]$. Their difference is nonzero and balanced, and they are extremal rays of $\Gamma_n^*$ according to Corollary \ref{wmfacets}, i.e.\ Theorem \ref{nodirsum} applies.
\end{proof}

Figure \ref{conescaricature} shows a three-dimensional analogue of the situation of the dual cones $\Sigma_n^*$ and $\Gamma_n^*$.

\section{Symmetrically Sub-dividable Entropies}\label{sub}
In his thesis \cite{ibinson2008quantum} Ibinson introduced a class of morphisms from $\Lambda_m^\sigma$ to a smaller cone $\Lambda_n$. However, his description is quite complicated and can be significantly simplified using the clear mathematical language developed in this thesis. The goal of the following short section is to do that, completing the picture of known morphisms of the quantum entropy cone.

\paragraph*{}For a fixed partition $\lambda=(\lambda_1,...,\lambda_n)\vdash m$, define the set partition 
\begin{equation}
\mathfrak{M}_\lambda=\left\{M_i=\left\{\lambda_1+...+\lambda_{i-1}+1,...,\lambda_1+...+\lambda_i\right\}| i\in [n]\right\}
\end{equation}
of $[m]$ after setting $\lambda_0=0$. We say $\lambda$ has \emph{length} $n$, or $|\lambda|=n$. For each subset $J\subset [n]$ we have a ``coarse grained'' subset $K(J)=\bigcup_{i\in I}M_i\subset[m]$ and we can define the linear map
\begin{equation}\label{iBlock}
	b_\lambda: V_m\to V_n, v\mapsto w \text{ with } w_J=v_{K(J)}
\end{equation}
We call this operation \emph{blocking}. Let us find $b_\lambda^\dagger$. Take an arbitrary functional $f\in V_n^*$. Then
\begin{equation}
	f(b_\lambda v)=\sum_{J\in [n]} f_J v_{K(J)}=\sum_{I\in [m]}g_I v_I=g(v),
\end{equation}
where
\begin{equation}
	g_I=\begin{cases}
	    	f_J &J=K(I)\\
	    	0 & \text{else}
	    \end{cases},
\end{equation}
so $b_\lambda^\dagger f=g$.
Now, define the cone of all entropy vectors that can be constructed by application $b_\lambda$ for any $\lambda \vdash m$ for arbitrary $m$ from a symmetric entropy vector in $\Gamma_m^\sigma$,
\begin{equation}
	\Gamma_n^{i\sigma}=\bigcup_{m\ge n}\bigcup_{\substack{\lambda\vdash m\\|\lambda|=n}} b_\lambda \Gamma_m^\sigma.
\end{equation}
We call such entropy vectors \emph{symmetrically sub-dividable}, in Ibinson's thesis they are called weakly symmetric. Applying different $b_\lambda$s to non-symmetric entropy vectors would not yield anything new, as $b_\lambda \Gamma_m\subset \Gamma_n$ for all $\lambda\vdash m,\ |\lambda|=n$ and $b_\mu \Gamma_n=\Gamma_n$ for $\mu=(1^n)$. Obviously $\Gamma_n^{i\sigma}$ is a scale invariant, and $\Gamma_n^{i\sigma}\subset \Gamma_n$, so $\conv(\Gamma_n^{i\sigma})\subset\Gamma_n$. Ibinson was able to completely characterize $\Gamma_4^{i\sigma}$ by first finding candidates for inequalities computationally and then proving them, again using a computer program \cite{ibinson2008quantum}.


\chapter{Local Geometry of Extremal Rays}\label{differential}

Let $K$ be a polyhedron and $f: D\to K$ a differentiable function with $\text{ri} K \subset f(D)\subset K$. Assume further that $D$ is open, and that some low-dimensional face $F\subset f(D)$ is contained in the range of $f$. Then the function $f$ has to be highly singular at preimages of elements $v\in F$, as the range of the differential has to be contained in the span of $F$. The described situation precisely occurs for the von Neumann entropy function, I present this result in detail in the first section of this chapter. The second section is dedicated to specializing the result to the classical entropy cone.

\section{Quantum}

Suppose $\rho\in\mathcal{B}\left(\left(\C^d\right)^{\otimes n}\right)$ were a preimage of a point $h\in R\subset\overline{\Gamma}_n$ of an extremal ray $R$ that lies in a \emph{polyhedral} region of the cone $\overline{\Gamma}_n$. We say a point $p\in C$ lies in a polyhedral region of a cone $C\subset V$, if  we can find an open neighborhood $U\subset V$ of $p$ such that 
\begin{equation}
	C\cap U=U \cap \left\{x\in V|f_1(x)\ge 0,...,f_m(x)=0\right\}
\end{equation}
for some finite set of functionals $\left\{f_1,...,f_m\right\}\subset V^*$. If every nonzero point in an extremal ray lies in a polyhedral region, we call the ray \emph{isolated}. Then the differential of the entropy function, $\D s$, has rank one, as the only possible direction is along the ray, $\im(\D s)_\rho=\R s(\rho)$. A component of the entropy function is the entropy of a reduced state,
\begin{equation}
	s(\rho)_I=S(\rho_I)=H(\spec(\tr_{I^c}(\rho))).
\end{equation}
\paragraph{} The only complicated map in the composition on the right hand side is the one mapping a matrix to its \emph{spectrum}, $\spec(\rho)=(\lambda_1,\lambda_2,...,\lambda_r)$, where the $\lambda_i$ are the ordered eigenvalues of $\rho$, $\lambda_i\ge\lambda_j$ for all $j\ge i$. The differential of spec can be found using time independent perturbation theory, which involves diagonalizing the density matrix where the differential is calculated. That is problematic, as we want, in particular, to calculate the above differential for $I$ and $J$ simultaneously for $I\cap J\not=\emptyset$, but we are, in general, not able to choose a basis where both $\rho_I$ and $\rho_J$ are diagonal. However, because of the structure imposed by the strong condition that the rank of the differential has to be one, we get along by looking at disjoint subsets.
\paragraph{} We first restrict ourselves to pure states, in fact, characterizing entropies in of $n$-partite pure states is equivalent to characterizing the entropies of $(n\!-\!1)$-partite mixed states, as the cones are isomorphic (see Section \ref{sec:mor}). We can of course also restrict our attention to pure states $\ket{\psi}$ such that there is no bipartition $[n]=I\cup I^c$ such that $\ket\psi=\ket{\psi_I}\otimes\ket{\psi_{I^c}}$, as otherwise the problem reduces to the characterization of two lower-dimensional entropy cones. It is made precise below how this reduction works. Let us look at a single entropy, i.e.\ we look at the differential of the function
\begin{equation}
	s_I: \mathbb{P}\mathcal{H}\to \R, \pi\mapsto  H(\spec(\tr_{I^c}\pi)),
\end{equation}
where $\mathbb{P}\mathcal{H}=\left\{\ketbra{\psi}{\psi}\big|\ket{\psi}\in\mathcal{H}, \bracket{\psi}{\psi}=1\right\}$ is the projective space corresponding to the Hilbert space $\mathcal{H}=\mathcal{H}_1\otimes ...\otimes\mathcal{H}_n$. Note that this is viewed as a real manifold. Its tangent space at a point $\ketbra{\psi}{\psi}$ is $T_{\ket\psi}\mathbb{P}\mathcal{H}=\left\{\ketbra{\psi}{\phi}+\ketbra{\phi}{\psi}\big|\langle\psi|\phi\rangle=0\right\}$. Where there is no confusion anticipated we denote an element $\ketbra{\psi}{ \psi}$ of the projective space by $\psi$ and also an element $\ketbra{\psi}{\phi}+\ketbra{\phi}{\psi}$ of the tangent space at $\ket\psi$ by $\phi$. 

\paragraph{} The partial trace is linear and hence equal to its own differential. Let us have a look at the spectral map. If $\spec(\tr_{I^c}\ketbra{\psi}{\psi})$ is non degenerate, the differential is just given by the first order eigenvalue correction from textbook perturbation theory,
\begin{eqnarray}
	\left[(\D\spec_I)_{\psi}(\phi)\right]_\alpha&=&\bra{\alpha}\left(\tr_{I^c}\ketbra{\psi}{\phi}+\ketbra{\phi}{\psi}\right)\ket{\alpha}\nonumber\\&=&2 \Re \bra{\alpha}\left(\tr_{I^c}\ketbra{\psi}{\phi}\right)\ket{\alpha}\nonumber\\
	&=&2\Re\, \tr\left(\ketbra{\alpha}{\alpha}\otimes\mathds{1}_{I^c}\ketbra{\psi}{\phi}\right),\label{d1rdm}
\end{eqnarray}
where $\{\ket{\alpha}\}$ is the eigenbasis of the reduced density matrix, i.e.\ $\left(\tr_{I^c}\ketbra{\psi}{\psi}\right)\ket{\alpha}=p_\alpha\ket{\alpha}$. The differential of the Shannon entropy is easy to calculate,
\begin{equation}
	(\mathrm{\D}H)_{p}e_\alpha=-1-\log p_\alpha,
\end{equation}
where $p=(p_\alpha)_{\alpha=1}^d=\spec\left(\tr_{I^c}\ketbra{\psi}{\psi}\right)$ and $\{e_\alpha\}$ is the standard basis of the space $\R^{d_I}$ of spectra of Hermitian matrices. Note that, as will be shown below, the range of $\D\spec$ contains only differences $e_\alpha-e_\beta$, so the $-1$ in the above formula cancels and we get
\begin{equation}
	(\mathrm{\D}H)_{p}(e_\alpha-e_\beta)=\log\left(\frac{p_\beta}{p_\alpha}\right).
\end{equation}
The differential of the whole entropy is therefore, for non degenerate $\tr_{I^c}\ketbra{\psi}{\psi}$,
\begin{eqnarray}\label{dS}
	(\D s_I)_{\psi}\left(\phi\right)&=&(\D H)_{\spec\ \tr_{I^c}\ketbra{\psi}{\psi}}\circ (\D\spec)_{\tr_{I^c}\ketbra{\psi}{\psi}}\circ \tr_{I^c}\left(\ketbra{\psi}{\phi}+\ketbra{\phi}{\psi}\right)\nonumber\\
	&=&-2\sum_{\alpha}\log p_\alpha \Re \,\tr(\left(\ketbra{\alpha}{\alpha}\otimes\mathds{1}\ketbra{\psi}{\phi}\right)
\end{eqnarray}
Note that at this point we cannot employ a continuity argument to extend our result to the degenerate case, as the states $\ket\alpha$ are no longer well defined then. Let us look at Schmidt decomposition of $\ket{\psi}$ with respect to the bipartition $\mathcal{H}=\mathcal{H}_I\otimes\mathcal{H}_{I^c}$,
\begin{equation}\label{schmidt}
\ket{\psi}=\sum_{\alpha}\sqrt{p_\alpha}\ket{\alpha}\otimes\ket{\psi_\alpha}.
\end{equation}
The kernel of the differential of $\spec\circ\tr_{I^c}$ contains, as we can see from \eqref{d1rdm}, the orthogonal complement of the basis vectors appearing in the Schmidt decomposition of $\ket{\psi}$. The remaining subspace of the tangent space $T_{\ket{\psi}}\mathbb{P}\mathcal{H}\cong\left(\C\ket{\psi}\right)^\perp$ is $\R$-spanned by the vectors
\begin{eqnarray}
\ket{\phi_{\beta\gamma}}&=&\sqrt{p_{\gamma}}\ket{\beta}\otimes\ket{\psi_{\beta}}-\sqrt{p_{\beta}}\ket{\gamma}\otimes\ket{\psi_{\gamma}}\text{ and}\nonumber\\
\ket{\phi'_{\beta\gamma}}&=&i\left(\sqrt{p_{\gamma}}\ket{\beta}\otimes\ket{\psi_{\beta}}-\sqrt{p_{\beta}}\ket{\gamma}\otimes\ket{\psi_{\gamma}}\right).
\end{eqnarray}
On these basis vectors
\begin{eqnarray}
	(\D \spec_I)_{\psi}(\phi_{\beta\gamma})&=&2\sqrt{p_\beta p_\gamma}(e_\beta-e_\gamma)\text{ and}\nonumber\\
	(\D \spec_I)_{\psi}(\phi'_{\beta\gamma})&=&0.
\end{eqnarray}
This shows that the range of $(\D\spec)_{\psi}$ is $\spa\left\{e_\alpha-e_\beta\big|p_\alpha\not=0\not=p_\beta\right\}$.
 The differential of the $i$th particle's von Neumann entropy is hence
\begin{equation}\label{ds}
	(\D s_I)_{\psi}(\phi_{\beta\gamma})=2\sqrt{p_\beta p_\gamma}\log{\frac{p_\gamma}{p_\beta}},
\end{equation}
and, by continuity of the involved functions, this formula is also valid for $p_\beta=p_\gamma$, although we assumed non-degeneracy for the calculation of $\D\spec$. This shows that $s_I$ is \emph{critical}, i.e.\ $(\D s_I)_{\psi}\equiv 0$, if and only if the spectrum of $\tr_{I^c}\ketbra{\psi}{\psi}$ is \emph{flat}, meaning that there is a $k\in\N$ such that
\begin{equation}
	p_\alpha=\begin{cases}
	         	\frac{1}{k}&\alpha\le k\\
	         	0&\text{ else}
	         \end{cases}.
\end{equation}
That implies in particular that if the spectra of all reductions of a state are flat, it can populate isolated extremal rays. In the following we call the empty spectrum $\spec_{\emptyset}$ of a general state as well as the total spectrum $\spec_{[n]}$ of a pure $n$-partite state trivial.
It turns out that only states can populate isolated extremal rays where either all spectra are flat or none but the trivial ones are.
\begin{lem}\label{flatornot}
	Let $\ket{\psi}\in\mathcal{H}=\mathcal{H}_1\otimes ...\otimes\mathcal{H}_n$ with $s(\ket{\psi})_I\not=0$ for all $\emptyset \not=I\subsetneq [n]$ and $\im(\D s)_{\psi}=\R s(\ket{\psi})$. Then either all spectra are flat, or the only flat spectra are $\spec_\emptyset$ and $\spec_{[n]}$.
\end{lem}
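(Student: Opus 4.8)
The plan is to reduce everything to a simple dichotomy built on the criticality criterion already established: by formula \eqref{ds} and the remark following it, for any $I$ with $\emptyset\neq I\subsetneq[n]$ one has $(\D s_I)_{\psi}\equiv 0$ precisely when $\spec(\tr_{I^c}\ketbra{\psi}{\psi})$ is flat. I will distinguish the two cases according to whether the full differential $(\D s)_{\psi}$ vanishes.

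First I would dispose of the case $(\D s)_{\psi}\equiv 0$. Then every component is critical, hence by the criterion each non-trivial spectrum $\spec_I$, $\emptyset\neq I\subsetneq[n]$, is flat; the trivial spectra $\spec_\emptyset$ and $\spec_{[n]}$ are flat anyway, being the empty spectrum and, for a pure state, $(1)$. This is exactly the first alternative of the lemma, that all spectra are flat.

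Next I assume $(\D s)_{\psi}\not\equiv 0$ and argue that the only flat spectra are $\spec_\emptyset$ and $\spec_{[n]}$. Suppose for contradiction that $\spec_I$ is flat for some $\emptyset\neq I\subsetneq[n]$, so $(\D s_I)_{\psi}\equiv 0$. By the hypothesis $\im(\D s)_{\psi}=\R s(\ket{\psi})$, every tangent vector $\phi\in T_{\ket\psi}\mathbb{P}\mathcal{H}$ satisfies $(\D s)_{\psi}(\phi)=\lambda(\phi)\,s(\ket{\psi})$ for some scalar $\lambda(\phi)\in\R$. Taking the $I$-component yields $0=(\D s_I)_{\psi}(\phi)=\lambda(\phi)\,s(\ket{\psi})_I$; since $I$ is non-trivial the assumption $s(\ket{\psi})_I\neq 0$ forces $\lambda(\phi)=0$. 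As $\phi$ was arbitrary, $(\D s)_{\psi}\equiv 0$, contradicting the case assumption. Hence no non-trivial spectrum is flat, which is the second alternative; the two cases being exhaustive, the lemma follows.

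I do not expect a genuine obstacle here, as the argument is essentially bookkeeping once the differential is understood. The only point deserving attention is that the criticality criterion was derived under the non-degeneracy assumption on $\tr_{I^c}\ketbra{\psi}{\psi}$ and extended to the degenerate case via continuity of \eqref{ds} in the eigenvalues $p_\alpha$; one should make sure the spanning vectors $\phi_{\beta\gamma}$ and $\phi'_{\beta\gamma}$ indeed exhaust the relevant tangent directions so that this extension is legitimate, but this has in effect already been carried out in the preceding discussion.
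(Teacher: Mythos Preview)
Your proof is correct and essentially identical to the paper's: both use that a flat non-trivial spectrum forces the $I$-component of $(\D s)_\psi$ to vanish, and then read off from $\im(\D s)_\psi=\R s(\ket\psi)$ together with $s_I(\ket\psi)\neq 0$ that the scalar multiplier $\lambda(\phi)$ (the paper's $c$) must be zero, hence the whole differential vanishes and all spectra are flat. The only difference is cosmetic---you split into the cases $(\D s)_\psi\equiv 0$ versus $(\D s)_\psi\not\equiv 0$, whereas the paper phrases it as ``one non-trivial flat $\Rightarrow$ all flat''---but the logical content is the same.
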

\begin{proof}
	Let $\emptyset\not=I\subsetneq[n]$ such that $\spec_I(\ket{\psi})$ is flat. Then for all $\ket\phi\in\mathcal{H}$, $0=(\D s_I)_{\psi}(\phi)=c\, s_I(\ket{\psi})$. But $s_I(\ket{\psi})\not=0$, so $c=0$ and hence $(\D s_J)_{\ket{\psi}}(\ket{\phi})=c\, s_J(\ket{\psi})=0$ for all $\emptyset\not=J\subsetneq [n]$. This means all nonzero entropies are critical and therefore all spectra are flat.
\end{proof}
Let us look at single particle entropies now. All following derivations also work for an arbitrary partition, as the two are connected via a blocking morphism \eqref{iBlock}. As the von Neumann entropy is invariant under local basis change, we can assume without loss of generality that a states single particle reduced density matrices are diagonal in the standard basis. For completeness we state the conditions on the expansion coefficients resulting from that, 
\begin{eqnarray}\label{diagbasis}
&&\ket\psi=\sum_{\alpha\in[d]^n}c_{\alpha}\ket{\alpha}, \text{ with}\nonumber\\
&&\sum_{\alpha_1,...,\alpha_{j-1},\alpha_{j+1},...,\alpha_n}c^*_{\alpha_1...\alpha_{j-1}\beta\alpha_{j+1}...\alpha_n}c_{\alpha_1...\alpha_{j-1}\gamma\alpha_{j+1}...\alpha_n}=\delta_{\beta\gamma}p^{(j)}_{\beta_j}\,\, \forall i\in [n],
\end{eqnarray}
where $\ket{\alpha}:=\ket{\alpha_1...\alpha_n}:=\ket\alpha_1\otimes ... \otimes\ket\alpha_n$. In the Schmidt decomposition \eqref{schmidt} for $I=\{i\}$ a singleton the Schmidt basis for the complement is
\begin{equation}\label{diagschmidt}
	\ket{\psi_{\alpha_i}}=\frac{1}{\sqrt{p^{(i)}_{\alpha_i}}}\sum_{\substack{\gamma\in[d]^n\\ \gamma_i=\alpha_i}}c_\gamma\ket{\gamma_{i^c}}
\end{equation}
Let us have a look which noncritical states can populate isolated extremal rays.
\begin{prop}\label{nonflatissimple}
	Let $\ket{\psi}\in\mathcal{H}=\mathcal{H}_1\otimes ...\otimes\mathcal{H}_n$ with 
	\begin{enumerate}
		\item $s(\ket{\psi})_I\not=0$ for all $\emptyset \not=I\subsetneq [n]$,
		\item $\im(\D s)_{\ket{\psi}}=\R s(\ket{\psi})$ and let
		\item $s_\emptyset$ and $s_{[n]}$ be the only critical entropies, equivalently let $\spec_{\emptyset}$ and $\spec{[n]}$ be the only flat spectra.
	\end{enumerate}
Then $\ket\psi=\sum_{m=1}^kc_m\ket\psi_m$ for some $k>1,\ c_m\in \C$ with $\ket{\psi_m}=\bigotimes_j\ket{\psi_m^j}$ and $\bracket{\psi_m^j}{\psi_n^j}=\delta_{mn}$ for all $j\in[n]$, $m,n\in [k]$, i.e.\ $\ket\psi$ is a superposition of product states with each reduction of two of the latter being orthogonal.
\end{prop}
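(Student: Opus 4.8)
The plan is to turn the rank-one hypothesis $\im(\D s)_{\ket\psi}=\R\,s(\ket\psi)$ into rigid information about the support and amplitudes of $\ket\psi$. First I would normalise as in the discussion above, so that all single-particle reductions $\rho_{\{i\}}$ are diagonal and $\ket\psi=\sum_\alpha c_\alpha\ket\alpha$ as in $\eqref{diagbasis}$. Since $\rho_{\{i\}}$ is non-flat (hypothesis 3), $(\D s_{\{i\}})_{\ket\psi}$ is a non-zero functional on $T_{\ket\psi}\mathbb P\mathcal H$, and since $s_{\{i\}}(\ket\psi)\neq0$ (hypothesis 1) the rank-one hypothesis forces $\ker(\D s)_{\ket\psi}=\ker(\D s_{\{i\}})_{\ket\psi}$ for every $i$; two non-zero linear functionals with the same kernel are proportional, so $(\D s_{\{i\}})_{\ket\psi}=\lambda_{ij}(\D s_{\{j\}})_{\ket\psi}$ on $T_{\ket\psi}\mathbb P\mathcal H$ with $\lambda_{ij}\in\R\setminus\{0\}$. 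Using the first-order perturbation formula $(\D s_I)_{\ket\psi}(\phi)=-2\,\Re\bra\phi(\log\rho_I\otimes\mathds 1_{I^c})\ket\psi$ (which is $\eqref{dS}$, valid also for degenerate reductions), proportionality of these functionals on $(\C\ket\psi)^{\perp}$ says $(\log\rho_{\{i\}}\otimes\mathds 1)\ket\psi-\lambda_{ij}(\log\rho_{\{j\}}\otimes\mathds 1)\ket\psi\in\C\ket\psi$; comparing coefficients of $\ket\alpha$, this means $c_\alpha\neq0\Rightarrow\log p^{(i)}_{\alpha_i}=\lambda_{ij}\log p^{(j)}_{\alpha_j}+\mu_{ij}$ for all $i\neq j$.

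Next I would read off a ``level'' decomposition. Partition the occupied basis labels of each particle $i$ into level subspaces $H^{(i)}_\ell$ according to the value of $p^{(i)}$; the relations above show that on the support the level of $\alpha_i$ determines the level of every $\alpha_j$, so there is a common label set $L$ with $\ket\psi=\sum_{\ell\in L}\ket{\psi^\ell}$, each $\ket{\psi^\ell}$ supported on $\bigotimes_jH^{(j)}_\ell$, the $H^{(j)}_\ell$ mutually orthogonal in $\ell$ for each fixed $j$. Non-flatness of $\rho_{\{i\}}$ forces $k:=\#\{\ell:\psi^\ell\neq0\}\ge2$, and local factors of different blocks are automatically orthogonal, so it remains to prove that each normalised block $\widehat{\psi^\ell}$ is itself a superposition of product states with pairwise orthogonal local factors.

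For the blocks I would again feed the rank-one hypothesis structure-adapted deformations. Since $\rho_I(\ket\psi)$ is block-diagonal over $\ell$ for every proper non-empty $I$, one has $s_I(\ket\psi)=H(\vec q)+\sum_\ell q_\ell\,s_I(\widehat{\psi^\ell})$ with $q_\ell=\|\psi^\ell\|^2$, and deformations of $\ket\psi$ living inside block $\ell$ are exactly the tangent vectors of $\widehat{\psi^\ell}$. Applying rank-one to such an internal deformation, and using $s_{\{i\}}(\ket\psi)\ge H(\vec q)>0$ (since $k\ge2$) together with $(\D s_{\{i\}})_{\widehat{\psi^\ell}}=0$ (the single-particle reductions of $\widehat{\psi^\ell}$ are flat, hence critical), makes the proportionality constant vanish, so $(\D s)_{\widehat{\psi^\ell}}\equiv0$: all reductions of $\widehat{\psi^\ell}$ are flat. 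The ``weight-shifting'' tangent vectors $\phi=\tfrac1{2\|\psi^{\ell_1}\|}\widehat{\psi^{\ell_1}}-\tfrac1{2\|\psi^{\ell_2}\|}\widehat{\psi^{\ell_2}}$, which are orthogonal to $\ket\psi$, give $(\D s_I)_{\ket\psi}(\phi)=\log\tfrac{q_{\ell_2}}{q_{\ell_1}}+s_I(\widehat{\psi^{\ell_1}})-s_I(\widehat{\psi^{\ell_2}})$, so rank-one pins each $s(\widehat{\psi^\ell})$ relative to $s(\ket\psi)$; since a flat reduction has entropy $\log$ of its rank, this should force each $\widehat{\psi^\ell}$ to have one common Schmidt rank $r_\ell$ across every bipartition $I\,|\,I^c$ with $\emptyset\neq I\subsetneq[n]$.

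The crux, and the step I expect to be the main obstacle, is the rigidity lemma that then finishes the job: a pure $n$-partite state whose every proper non-trivial reduction is flat of one common rank $r$ is a superposition of $r$ product states with pairwise orthogonal local factors. One Schmidt-decomposes across a chosen cut, observes that the Schmidt vectors only span eigenspaces on which the reductions act as scalars (so one has freedom to rotate them), and must exploit precisely this degeneracy, inductively in the number of parties, to choose all the simultaneous Schmidt decompositions of product form; the two ways of writing the three-qubit GHZ state, $\tfrac1{\sqrt2}\bigl(\ket{+}\ket{+}\ket{+}+\ket{-}\ket{-}\ket{-}\bigr)$ versus $\tfrac1{\sqrt2}\bigl(\ket{\Phi^+}\ket{e_1}+\ket{\Psi^+}\ket{e_2}\bigr)$, already show why a careless choice of Schmidt basis does not suffice. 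Granting this lemma, splitting each $\ket{\psi^\ell}$ into its $r_\ell$ product terms and collecting over all $\ell$ — the factors being orthogonal within a block and, by the level construction, across blocks — yields $\ket\psi=\sum_{m=1}^{k'}c_m\bigotimes_j\ket{\psi_m^j}$ with $k'=\sum_\ell r_\ell\ge k\ge2$ and $\bracket{\psi_m^j}{\psi_n^j}=\delta_{mn}$, as claimed.
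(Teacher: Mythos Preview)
Your level-decomposition argument (proportionality of the single-particle differentials forcing affine relations among the $\log p^{(i)}_{\alpha_i}$ on the support) is sound, but from there you take a much harder road than necessary and end at an unproved ``rigidity lemma'': that a pure state whose every proper reduction is flat of a single common rank $r$ must be a superposition of $r$ product states with pairwise orthogonal local factors. You flag this yourself as ``the main obstacle'', and it is a genuine gap; even the step before it---forcing a \emph{common} rank across all bipartitions from the weight-shifting deformations---is asserted with ``should force'' rather than demonstrated. So the argument, as written, does not close.

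The paper bypasses all of this with one tangent-vector trick, working directly in the basis \eqref{diagbasis}. Take any two nonzero coefficients $c_\alpha,c_\beta$ and suppose for contradiction that $\alpha_j=\beta_j$ for some $j$. The vector $\ket{a_{\alpha\beta}}=c_\beta^*\ket\alpha-c_\alpha^*\ket\beta$ is orthogonal to $\ket\psi$ and factors as $\ket{\alpha_j}\otimes(\cdots)$ in slot $j$, so plugging into \eqref{dS} gives $(\D s_j)_\psi(\eta\ket{a_{\alpha\beta}})=0$ for every $\eta\in\C$. On the other hand, choosing an index (relabelled $1$) with $\alpha_1\neq\beta_1$ and $p^{(1)}_{\alpha_1}\neq p^{(1)}_{\beta_1}$ (available by hypothesis~3), the Schmidt formulae \eqref{diagschmidt}--\eqref{ds} give $(\D s_1)_\psi(\eta\ket{a_{\alpha\beta}})\neq 0$ for $\eta\notin i\R\, c_\alpha c_\beta$. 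Since $s_j(\ket\psi)\neq 0$ by hypothesis~1, this contradicts the rank-one hypothesis~2. Hence any two basis vectors in the support of $\ket\psi$ differ in \emph{every} coordinate---which is already the generalised-GHZ form, with the standard basis vectors themselves serving as the mutually orthogonal local factors $\ket{\psi_m^j}$. No block analysis and no rigidity lemma are needed.
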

Th state $\ket{\psi}$ is hence a generalized GHZ-state \cite{greenberger1989going}, i.e.\ a state that  is entangled in a way that any reduced state is classically correlated.

\begin{proof}
	Take any state $\ket\psi$ with the properties 1-3. At least two coefficients in the expansion \eqref{diagbasis} are nonzero, as otherwise all entropies were zero, which contradicts 1. Let therefore two coefficients, $c_\alpha:=c_{\alpha_1...\alpha_n}$ and $c_\beta :=c_{\beta_1...\beta_n}$, be nonzero and let $\alpha_1\not=\beta_1$ and $p_{\alpha_1}\not=p_{\beta_1}$. We can assume the latter without loss of generality because of 3. Assume now that there is a $1\not=j\in[n]$ such that $\alpha_j=\beta_j$. Define 
	\begin{equation}
	\ket{a_{\alpha\beta}}=c_\beta^* \ket\alpha-c_\alpha^* \ket\beta=\ket{\alpha_j}\otimes\left(c_\beta^* \ket{\alpha_{j^c}}-c_\alpha^* \ket{\beta_{j^c}}\right).
	\end{equation}
	Then plugging this vector into \eqref{dS} yields 
	\begin{equation}\label{dsj0}
	(\D s_j)_{\ket\psi}(\eta\ket{a_{\alpha\beta}})=0\text{ for any }\eta\in\C^*.
	\end{equation}
	On the other hand, with Equation \eqref{diagschmidt} we get
	\begin{equation}
		\bracket{\phi_{\alpha_1\beta_1}}{a_{\alpha\beta}}=c_\alpha c_\beta\left(\sqrt{\frac{p_\alpha}{p_\beta}}-\sqrt{\frac{p_\beta}{p_\alpha}}\right)
	\end{equation}
	and with \eqref{ds} this implies 
	\begin{equation}
	(\D s_1)_{\ket\psi}\left(\eta\ket{a_{\alpha\beta}}\right)=2\Re\,\eta c_\alpha c_\beta\left(p_\alpha-p_\beta\right)\log{\frac{p_{\beta_1}}{p_{\alpha_1}}},
	\end{equation}
	which is nonzero for $\eta\not\in i\R c_\alpha c_\beta$. Together with Equation \eqref{dsj0}, $s_i(\ket\psi)\not=0$ (1.) and 2. this yields a contradiction. Hence the theorem is proven as therefore $\alpha_i\not=\beta_i$ for all $i\in[n]$ for any two nonzero coefficients $c_\alpha$ and $c_\beta$ of the expansion \eqref{diagbasis} of $\ket\psi$.
\end{proof}
As a corollary we can give a characterization of the states populating isolated extremal rays of $\overline{\Gamma}_n$. But first we still have to make precise how the case of $\ket\psi$ being a product state reduces to a lower dimensional case. Take any pure state $\ket\psi=\ket{\psi_I}\otimes\ket{\psi_{I^c}}\in\mathcal{H}=\mathcal{H}_1\otimes ...\otimes\mathcal{H}_n$ that is a product state with respect to the bipartition $\mathcal{H}=\mathcal{H}_I\otimes\mathcal{H}_{I^c}$. Without loss of generality let $I=[m]$ for some $m<n$. Define the injection
\begin{eqnarray}\label{dirsuminj}
	M: V_m\oplus V_{n-m}&\to& V_n\nonumber\\
	(v^1,v^2)&\mapsto& v \text{ with } v_I=v^1_{I\cap[m]}+v^2_{(I-m)\cap[n-m]}.
\end{eqnarray}
Then we have $s(\ket\psi)=M(s(\ket{\psi_1}),s(\ket{\psi_2}))$. Hence it is a necessary condition for $\ket\psi$ to generate an isolated extremal ray of $\overline\Gamma_n^p$ that $\ket{\psi_1}$ and $\ket{\psi_2}$ generate extremal rays of $\overline\Gamma_m^p$ and $\overline\Gamma_{n-m}^p$ respectively.
\begin{thm}\label{poprays}
	Let $s(\rho)=v\in\Gamma_n$ generate an isolated extremal ray of $\overline{\Gamma}_n$. Then one of the following is true:
	\begin{enumerate}
	\item $v$ results from lower-dimensional extremal rays according to \eqref{dirsuminj},
	\item $v=r\sum_{\emptyset\not= I\subset[n]}e^{(I)}$, or
	\item all non-trivial spectra of $\rho$ are flat.
	\end{enumerate}
\end{thm}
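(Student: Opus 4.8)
\medskip
\noindent\emph{Proof plan.} The plan is to reduce everything to the pure-state analysis carried out above. If $v=0$ there is nothing to prove, so assume $v\neq 0$. Since $\mathrm{pur}^{n+1}_n$ is a cone isomorphism $\overline\Gamma_n\to\overline\Gamma_{n+1}^p$, it carries extremal rays to extremal rays and polyhedral regions to polyhedral regions; hence I would first pass to $w:=\mathrm{pur}^{n+1}_n(v)$, which generates an isolated extremal ray of $\overline\Gamma_{n+1}^p$, fix a pure state $\ket\psi\in\mathcal H_1\otimes\dots\otimes\mathcal H_{n+1}$ realizing it, and set $\rho=\tr_{n+1}\proj\psi$. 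By the rank-one principle established in the discussion preceding Lemma \ref{flatornot} --- the differential of the entropy function at a preimage of a point on an isolated extremal ray has image exactly $\R\,s(\ket\psi)$ --- the state $\ket\psi$ automatically satisfies hypothesis 2 of Proposition \ref{nonflatissimple}.

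Next I would split into two cases according to whether $\ket\psi$ factorizes over some bipartition $[n+1]=J\sqcup J^c$. If it does, then by the reduction discussed around \eqref{dirsuminj} the entropy vector $s(\ket\psi)$ is the image of $\big(s(\ket{\psi_J}),s(\ket{\psi_{J^c}})\big)$ under the direct-sum injection, and each factor again generates an isolated extremal ray of a lower-dimensional pure-state cone; transporting this back through $\mathrm{sur}_{n+1}^n$ gives conclusion (1) --- the only borderline cut being $J^c=\{n+1\}$, which merely re-presents $\rho$ as a pure $n$-partite state, on which the whole argument is rerun. If $\ket\psi$ does not factorize, then every proper nontrivial marginal $\tr_{I^c}\proj\psi$ is mixed (a pure marginal on $I$ would force a product structure across $I\mid I^c$), so $s(\ket\psi)_I\neq0$ for all $\emptyset\neq I\subsetneq[n+1]$, and, together with the rank-one property, this is exactly the hypothesis of Lemma \ref{flatornot}. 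That lemma gives either that all reduced spectra of $\ket\psi$ are flat --- whence all nontrivial spectra of $\rho$, being among them, are flat and we land in conclusion (3) --- or that the only flat spectra of $\ket\psi$ are $\spec_\emptyset$ and $\spec_{[n+1]}$; this last alternative, combined with the non-vanishing already noted and the rank-one property, supplies all three hypotheses of Proposition \ref{nonflatissimple}, which then produces a generalized GHZ decomposition $\ket\psi=\sum_{m=1}^k c_m\bigotimes_j\ket{\psi_m^j}$ with $k>1$ and $\langle\psi_m^j|\psi_{m'}^j\rangle=\delta_{mm'}$. A short computation then gives $\tr_{J^c}\proj\psi=\sum_m|c_m|^2\bigotimes_{j\in J}\proj{\psi_m^j}$, with mutually orthogonal summands, for every $\emptyset\neq J\subsetneq[n+1]$, so $S\big(\tr_{J^c}\proj\psi\big)=H\big((|c_m|^2)_m\big)=:r$ is the same for all such $J$. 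Reading this back through $v_I=w_I$ for $I\subseteq[n]$ yields $v_I=r$ for all $\emptyset\neq I\subset[n]$ and $v_\emptyset=0$, i.e.\ conclusion (2).

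The routine parts are the verification that the purification isomorphism preserves ``lying in an isolated extremal ray'', the non-vanishing of the proper marginal entropies of a non-factorizing pure state, and the GHZ spectrum computation, all of which are short. I expect the main obstacle to be the bookkeeping of the factorizing case: one has to pin down what ``$v$ results from lower-dimensional extremal rays'' means (the injection \eqref{dirsuminj} is phrased for $[n]$, whereas the factorization occurs in the purified $(n+1)$-party system), to separate genuine bipartitions of $[n]$ from the degenerate cut $J^c=\{n+1\}$, and to argue that the resulting recursion terminates and that the pieces it produces really generate extremal rays of the smaller cones rather than only of their pure-state subcones.
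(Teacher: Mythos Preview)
Your plan is correct and matches the paper's approach: purify, invoke the rank-one constraint on $\D s$, then apply Lemma~\ref{flatornot} followed by Proposition~\ref{nonflatissimple}. The one difference is ordering. The paper disposes of case~(1) \emph{before} purifying, by observing directly that if some $s_I(\rho)=0$ with $\emptyset\neq I\subset[n]$ then $\rho_I$ is pure and hence $\rho=\rho_I\otimes\rho_{I^c}$; only after ruling this out does it purify. Since ``$\ket\psi$ factorizes over a nontrivial bipartition of $[n+1]$'' is equivalent to ``some nontrivial entropy of $\rho$ vanishes'', this sidesteps exactly the bookkeeping you flag (the degenerate cut $J^c=\{n+1\}$, the recursion, and the mismatch between \eqref{dirsuminj} on $[n]$ versus $[n+1]$). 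Reordering your argument this way removes those obstacles entirely.
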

\begin{proof}
	If any non-trivial entropy of $\rho$ is zero, e.g.\ $s_I(\rho)=0$, $\rho=\ketbra{\phi}{\phi}\otimes \rho_{I^c}$ and 1. is true. Assume now that all non-trivial entropies are nonzero. As pointed out above and in Section \ref{sec:mor}, $\overline{\Gamma}_n\cong\overline{\Gamma}_{n+1}^p$. Let $\ket{\psi}$ be a purification of $\rho$. Then $s(\ket\psi)=\mathrm{pur}_n^{n+1}s(\rho)$ generates an isolated extremal ray in $\overline{\Gamma}_{n+1}^p$. According to Lemma \ref{flatornot} either all non-trivial spectra of $\ket\psi$ are flat, in which case we are done because 3. is true, or none of them is. If no non-trivial spectrum is flat, Theorem \ref{nonflatissimple} shows that all reductions of $\ket\psi$ have the same spectrum, hence 2. is true.
\end{proof}
The only isolated extremal ray that is populated by states with non-flat spectra is so simple that it is not hard to explicitly construct a flat representative.
\begin{cor}\label{existflat}
	For each populated isolated extremal ray $R\subset \overline{\Gamma}_n$ there is a state $\rho$ with flat spectra such that $s(\rho)\in R$.
\end{cor}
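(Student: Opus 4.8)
The plan is to invoke Theorem \ref{poprays} to reduce to the three cases there, and to show that in every case an explicit flat representative can be written down. The only non-trivial work is in case 2, where $v = r\sum_{\emptyset\neq I\subset[n]} e^{(I)}$ — but this ray is precisely the one populated by the generalized GHZ state, so we just exhibit that state.

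First I would recall that by Theorem \ref{poprays}, any populated isolated extremal ray $R$ falls into one of three cases. In case 3 there is nothing to prove: the defining states already have flat non-trivial spectra. In case 2, consider the $n$-partite state on $\mathcal H = \left(\C^2\right)^{\otimes n}$ given by $\ket{\mathrm{GHZ}_n} = \frac{1}{\sqrt 2}\left(\ket{0}^{\otimes n} + \ket{1}^{\otimes n}\right)$. For any $\emptyset \neq I\subsetneq[n]$ the reduced state is $\rho_I = \frac12\left(\proj{0}^{\otimes|I|} + \proj{1}^{\otimes|I|}\right)$, whose spectrum is $\left(\tfrac12,\tfrac12\right)$ — flat — so $S(\rho_I)=1$; also $S(\rho_\emptyset)=0$ and $S(\rho_{[n]})=0$. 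Hence $s(\ket{\mathrm{GHZ}_n}) = \sum_{\emptyset\neq I\subsetneq[n]} e^{(I)}$, which lies on the ray in case 2 (after noting $e^{([n])}$ has coefficient $0$ in the pure-state picture, consistent with how case 2 is phrased), and all its non-trivial spectra are flat. Scaling by $r$ just tensors in $r$ copies, preserving flatness.

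For case 1, where $v = M\left(v^1, v^2\right)$ with $v^1, v^2$ generating extremal rays of the lower-dimensional cones $\overline\Gamma_m$ and $\overline\Gamma_{n-m}$ respectively, I would argue by induction on $n$. Each of $v^1$ and $v^2$ generates an isolated extremal ray of its respective cone (this is the necessary condition noted just before Theorem \ref{poprays}, via the injection \eqref{dirsuminj}), so by the inductive hypothesis there are states $\sigma_1$ on $\mathcal H_{[m]}$ and $\sigma_2$ on $\mathcal H_{[n]\setminus[m]}$ with flat spectra and $s(\sigma_1)\in \R_{\ge0}v^1$, $s(\sigma_2)\in\R_{\ge0}v^2$. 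Since flat spectra tensor to flat spectra — if $p$ is uniform on $k$ points and $q$ uniform on $l$ points then $p\otimes q$ is uniform on $kl$ points — the product state $\rho = \sigma_1\otimes\sigma_2$ on $\mathcal H_{[m]}\otimes\mathcal H_{[n]\setminus[m]}$ has all reductions flat, and $s(\rho) = M\left(s(\sigma_1), s(\sigma_2)\right) \in R$. The base case $n=1$ is immediate since $\Gamma_1$ is just the non-negative half-line, generated by a fair coin state which has flat spectrum.

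The main obstacle is bookkeeping rather than conceptual: one must be careful that the extremal ray produced in case 1 is genuinely isolated and lies in $R$, which follows from linearity of $M$ together with the fact that tensoring preserves flatness and entropy vectors add; and one must check that the GHZ entropy vector in case 2 really sits on the ray $r\sum_{\emptyset\neq I}e^{(I)}$ rather than merely being proportional on a subspace — here it helps that in the pure-state formulation the $[n]$-component vanishes, so there is no discrepancy. No step requires a difficult estimate; the corollary is essentially a matter of assembling explicit flat states case by case from the structure theorem.
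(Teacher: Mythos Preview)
Your overall strategy matches the paper's: invoke Theorem \ref{poprays}, note that case 3 is immediate, handle case 1 by induction, and exhibit an explicit flat state for the exceptional ray in case 2. Your treatment of case 1 is in fact more careful than the paper's (which simply says the non-exceptional cases follow from Theorem \ref{poprays} and leaves the recursion implicit).

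However, there is a genuine error in your case 2. The exceptional ray in Theorem \ref{poprays} is stated in the \emph{mixed}-state cone $\overline\Gamma_n$: the vector is $v = r\sum_{\emptyset\neq I\subset[n]} e^{(I)}$, and this sum includes $I=[n]$, so $v_{[n]}=r\neq 0$. Your candidate, the pure $n$-partite state $\ket{\mathrm{GHZ}_n}$, has $S(\rho_{[n]})=0$ and therefore its entropy vector is \emph{not} on this ray. Your parenthetical remark that ``$e^{([n])}$ has coefficient $0$ in the pure-state picture, consistent with how case 2 is phrased'' is simply a misreading: case 2 is phrased for $\Gamma_n$, not for $\Gamma_{n+1}^p$.

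The fix is exactly what the paper does: take the $(n{+}1)$-qubit GHZ state $\ket\psi=\frac{1}{\sqrt 2}\left(\ket{0\ldots 0}+\ket{1\ldots 1}\right)\in(\C^2)^{\otimes(n+1)}$ and let $\rho=\tr_{n+1}\ketbra{\psi}{\psi}$ be its $n$-partite reduction. Then for every $\emptyset\neq I\subset[n]$, including $I=[n]$, one has $\rho_I=\tfrac12\bigl(\proj{0}^{\otimes|I|}+\proj{1}^{\otimes|I|}\bigr)$, which is flat with $S(\rho_I)=1$. Hence $s(\rho)=\sum_{\emptyset\neq I\subset[n]}e^{(I)}\in R_e$. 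Equivalently, under the isomorphism $\overline\Gamma_n\cong\overline\Gamma_{n+1}^p$ via $\mathrm{pur}_n^{n+1}$, the exceptional ray corresponds precisely to the entropy vector of $\ket{\mathrm{GHZ}_{n+1}}$, not $\ket{\mathrm{GHZ}_n}$.
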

\begin{proof}
	Unless $R=R_e:=\R_{\ge 0}\sum_{\emptyset\not= I\subset[n]}e^{(I)}$ is the exceptional ray the statement follows from Corollary \ref{poprays}. For the single exceptional ray $R_e$ any $n$-partite reduction of the $(n\!+\!1)$-qubit pure state $\ket\psi=\frac{1}{\sqrt{2}}\left(\ket{0...0}+\ket{1...1}\right)$ does the trick.
\end{proof}
The above theorems can be slightly strengthened: Even if an extremal ray $R\subset\overline{\Gamma}_n$ is not isolated, all the above statements are true for it provided that the extremal ray is still ``edgy''. To make this notion precise, we have to find a way of defining the directional derivative of the \emph{boundary} of the cone. To achieve this, we write the boundary $\partial \overline\Gamma_n$ locally as a graph of a function in the spirit of the well known criterion that a subset of a real Vector space is a differentiable submanifold if and only if locally it can be written as a graph of a differentiable function. Let us show that this is possible.
\begin{lem}\label{boundaryisgraph}
	Let $V$ be a vector space, $K\subset V$ a closed convex cone with $\dim K=\dim V$. Then, for all $p\in\partial K$ there exist $\epsilon>0$ and an affine hyperplane $H\subset V_n$ with orthogonal projector $\pi_H$ such that $\pi_H\big|_{\partial K\cap B_\epsilon(p)}$ is invertible on its image.
\end{lem}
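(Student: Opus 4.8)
The plan is to exploit the fact that a closed convex cone $K$ with $\dim K = \dim V$ has nonempty interior, and that the boundary $\partial K$ is a Lipschitz graph in a neighborhood of any of its points. Concretely, fix $p \in \partial K$ and pick an interior point $q \in \operatorname{int} K$; set $u = q - p \neq 0$ and let $H$ be the hyperplane through $p$ orthogonal to $u$, with $\pi_H$ the orthogonal projection onto the affine hyperplane $p + u^\perp$. The geometric content is that, locally near $p$, the boundary is ``visible'' from the direction $u$: for each $x$ in a small ball of $H$ around $p$, the ray $\{x + t u \mid t \in \R\}$ meets $\partial K$ in exactly one point close to $p$, and this assignment is the graph representation we want. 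I would carry this out in the following steps.

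First I would show that the line $x + \R u$ meets $\operatorname{int} K$ for all $x$ in a small enough ball $B_\delta(p) \cap H$. This is because $q = p + u \in \operatorname{int} K$ is an open condition, so a whole neighborhood of $q$ lies in $\operatorname{int} K$; for $x$ close to $p$, the point $x + u$ is close to $q$ and hence still interior. Second, since $x \in H$ can be taken close to $p \in \partial K$ and $K$ is closed, for such $x$ the set $\{t \in \R \mid x + tu \in K\}$ is a closed interval (by convexity) containing $1$ in its interior but not all of $\R$ (because $K$, being a proper cone — it contains no line through a boundary point, or more simply because $\dim K = \dim V$ and $K \neq V$ as $p \in \partial K$ — is not all of $V$; one may also argue directly that a closed convex set missing an open half-space bounds the line). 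Hence there is a well-defined smallest value $t_-(x) < 1$ with $x + t_-(x) u \in \partial K$. Third, I would define the candidate graph map $g : B_\delta(p)\cap H \to \R$, $g(x) = t_-(x)$, and check that $x \mapsto x + g(x) u$ is continuous (this follows from convexity: support-function arguments show $t_-$ is, in fact, a concave — hence continuous on the interior of its domain — function of $x$) and lands in $\partial K$. Restricting to the piece of $\partial K$ of the form $\{x + g(x) u \mid x \in B_\epsilon(p) \cap H\}$ for suitable $\epsilon$, the map $\pi_H$ sends $x + g(x)u \mapsto x$, which is a genuine inverse on the image; this is the required invertibility of $\pi_H|_{\partial K \cap B_\epsilon(p)}$ once I check that shrinking $\epsilon$ confines $\partial K \cap B_\epsilon(p)$ to lie in the graph (using continuity of $g$ and $g(p)=0$).

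The main obstacle is the last bookkeeping step: one must choose $\epsilon$ small enough that \emph{every} boundary point within distance $\epsilon$ of $p$ is of the graph form $x + g(x) u$ with $x \in H$ near $p$ — a priori a boundary point near $p$ might project under $\pi_H$ to a point where $g$ is not yet defined, or the ray through it might hit the boundary twice, once near $p$ and once far away, so that $\pi_H$ fails to be injective on $\partial K \cap B_\epsilon(p)$. Resolving this requires quantifying how $t_-$ varies: using that $u$ points strictly into the interior at $p$, there is a cone of ``inward'' directions at $p$, and for $x$ close to $p$ the boundary point $x + t_-(x)u$ stays close to $p$ with $t_-(x) \to 0$; then one picks $\epsilon$ smaller than the resulting modulus of continuity so that $\partial K \cap B_\epsilon(p)$ is contained in $\pi_H^{-1}(B_\delta(p)\cap H)$ and coincides there with the graph. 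Everything else — existence of the interior point, that the line is not entirely inside $K$, continuity of the graph map — is standard convex geometry.
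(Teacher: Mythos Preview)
Your high-level setup matches the paper's: pick $q\in\operatorname{int}K$, let $u=q-p$, and use the hyperplane orthogonal to $u$. But the executions diverge sharply. You build the inverse of $\pi_H$ explicitly as a graph $x\mapsto x+t_-(x)u$, then spend the bulk of the argument on well-definedness, continuity, and the bookkeeping of matching $\epsilon$ to $\delta$. The paper bypasses all of this with a two-line convexity argument: it places $H$ through $q$ (not through $p$) and chooses $\epsilon$ so that $B_\epsilon(q)\subset K$; then $\pi_H(B_\epsilon(p))=B_\epsilon(q)\cap H$ lies in $\operatorname{int}K$. If two points $x,y\in\partial K\cap B_\epsilon(p)$ had the same projection $a=\pi_H(x)=\pi_H(y)\in\operatorname{int}K$, then $x,y,a$ are collinear, and (since automatically $\epsilon<\|p-q\|$, so $x,y$ lie on the same side of $H$) one of them, say $y$, sits strictly between $x\in K$ and $a\in\operatorname{int}K$; hence $y\in\operatorname{int}K$, contradicting $y\in\partial K$. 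No graph, no continuity, no $\epsilon$--$\delta$ matching.

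Your route is correct, and it buys you more than the lemma asks for (an explicit Lipschitz parametrisation of $\partial K$), at the cost of the extra bookkeeping you yourself flag. One small slip: $t_-(x)=\inf\{t:x+tu\in K\}$ is \emph{convex}, not concave (take convex combinations of points $x_i+t_iu\in K$); the conclusion you draw---continuity on the interior of the domain---is unaffected. Your worry that the interval $\{t:x+tu\in K\}$ might be unbounded below is also easily dispatched: for any fixed $t_0<0$ one has $p+t_0u\notin K$ (else $p$ would be interior), so by closedness $x+t_0u\notin K$ for $x$ near $p$; since the interval contains $t=1$ and misses $t_0$, convexity of the interval forces $t_-(x)\ge t_0$.
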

\begin{proof}
	Take an inner point $q\in \text{int}(K)$. Then there exists an $\epsilon>0$ such that $U=B_\epsilon(q)\subset K$. For an arbitrary $p\in\partial K$, define the affine hyperplane $H=\left\{v\in V\big|(p-q,v-q)=0\right\}$. Then, with $\pi:=\pi_H$ being the orthogonal projector onto $H$, $\pi\big|_{\partial K\cap B_\epsilon(p)}$ is invertible on $B_\epsilon(q)\cap H$. To see this, first observe that $\pi\left(B_\epsilon(p)\right)=B_\epsilon(\pi(p))\cap H=B_\epsilon(q)\cap H$. Now Assume it was not the case, i.e.\ we find $x,y\in\partial K$ with $a:=\pi(x)=\pi(y)\in U$. This means that $x,y$ and $a$ lie on a line, hence without loss of generality $y=a+\alpha (x-a)=\alpha x+(1-\alpha)a$, $\alpha\in ]0,1[$. But $a$ is an interior point of $U$, so $y$ is an interior point of $\alpha x+(1-\alpha)U\subset\conv(x,U)\subset K$ and hence $y\not\in\partial K$, which is a contradiction.
\end{proof}
Now let $R\subset\overline\Gamma_n$ be an extremal ray and $0\not =p\in R$, where, for now, we look at $\Gamma_n$ as a subset of $V_n'=\spa\left(e_I\big|\emptyset\not=I\subset [n]\right)$. As an element of an extremal ray, $p\in \partial \overline\Gamma_n$ lies on the boundary of $\overline\Gamma_n$, so according to Lemma \ref{boundaryisgraph} we can find a hyperplane $H$ and an $\epsilon>0$ such that $\pi_H \big|_{B_\epsilon(p)\cap\partial K}$ is invertible on its image $B_\epsilon(\pi_H(p))\cap H$. Let $g:B_\epsilon(\pi_H(p))\cap H\to B_\epsilon(p)\cap\partial K$ be the inverse just shown to exist. Define the function
\begin{eqnarray}
	f: B_\epsilon(\pi_H(p))\cap H&\to& \R\\
	v\mapsto \left\|v-g(v)\right\|.
\end{eqnarray}
Its graph is equal to $B_\epsilon(p)\cap\partial K$ under the isometry
\begin{eqnarray}
	\phi: H\times \R&\to& V_n'\nonumber\\
	(v,x)&\mapsto&v+x n_H,
\end{eqnarray}
where $n_H$ is the unit normal vector of $H$. Now we can define $R$ to be an \emph{edge}, if for all $p\in R\setminus\{0\}$ the directional derivative of the function $f$ constructed above exists only in one direction.
As an example consider a three-dimensional cone generated by a base of the following shape:

\begin{figure}[ht]
	\centering
	\includegraphics[width=3cm]{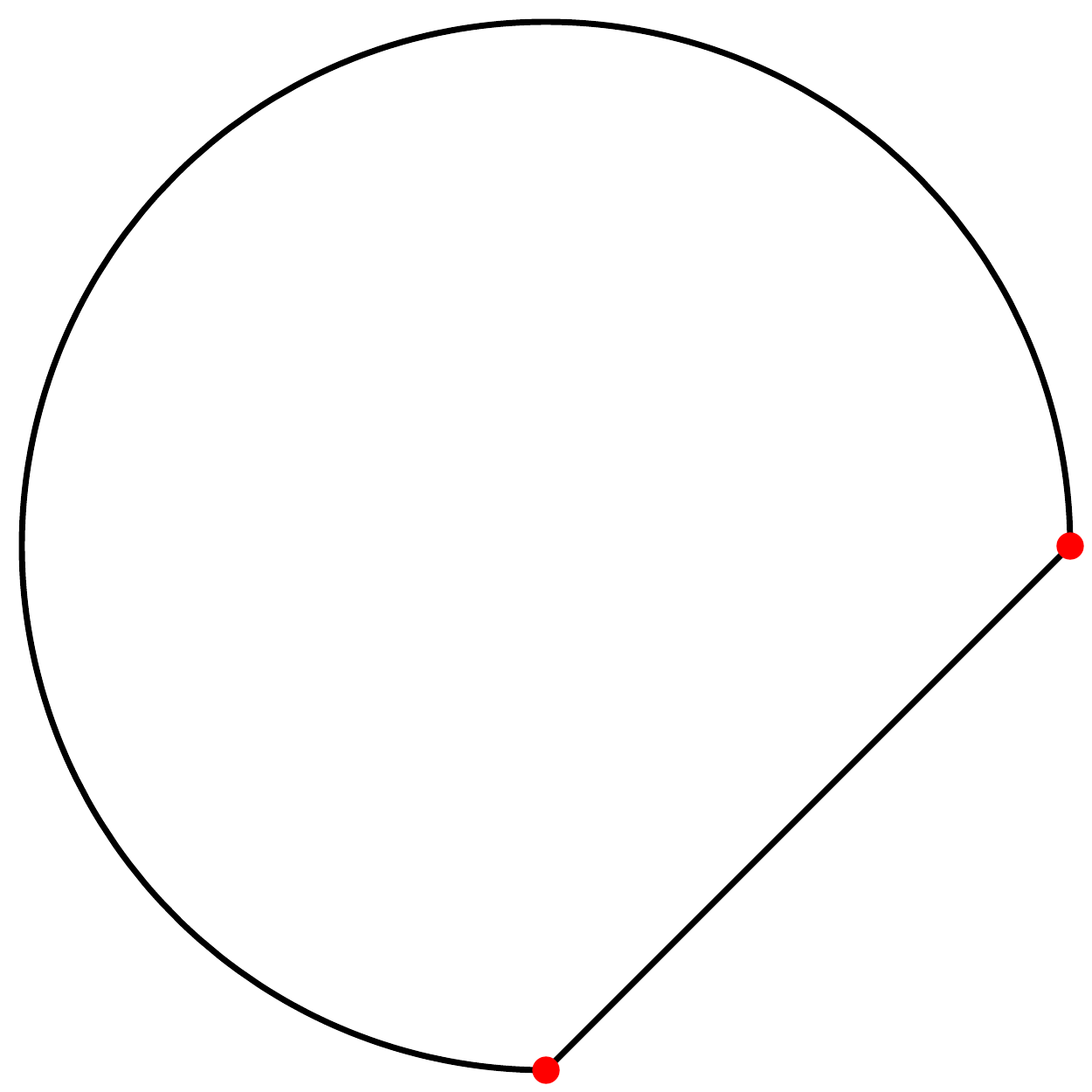}
\end{figure}
The extremal rays generated by the red points are not isolated, but the differential of a function whose image is contained in the cone still has to be of rank one at a preimage of a point on the ray.

We can also extend our characterization to any ray in the cone generated by the extremal rays with flat representatives. 
\begin{thm}
 For each ray $R$ in the subcone of $\overline\Gamma_n$ that is generated by the populated extremal rays there is a sequence of quantum states $\rho_k$ such that
 \begin{equation}
  \lim_{k\to \infty}\frac{1}{k}s(\rho_k)\in R
 \end{equation}

\end{thm}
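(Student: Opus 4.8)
The plan is to reduce the claim to realizing one point of the ray and then invoke the same two structural features of $\overline{\Gamma}_n$ that already entered the proof that it is a convex cone (Theorem \ref{qcone}): additivity of the entropy vector $s$ under tensor products, together with the observation that the factor $\tfrac1k$ in the statement is precisely a dilution. So I would fix a ray $R$ in the cone generated by the populated extremal rays and pick a point $v\in R$; the case $v=0$ is trivial (any constant sequence works), so assume $v\neq 0$. Since $R$ is contained in the cone generated by the populated extremal rays and each such ray is spanned by $s(\sigma)$ for some genuine quantum state $\sigma$, the point $v$ lies in the conic hull of finitely many entropy vectors of actual states; by Carath\'eodory's theorem for cones I may therefore write $v=\sum_{j=1}^m a_j\, s(\sigma_j)$ with $m\le 2^n$, coefficients $a_j\ge 0$, and $\sigma_1,\dots,\sigma_m$ quantum states on $n$-fold tensor product Hilbert spaces. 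It is exactly here that the word \emph{populated} is used: the abstract extremal rays must actually be occupied by states, which is what the flat-spectrum representatives from Corollary \ref{existflat} and the structural results around Theorem \ref{poprays} supply.

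Next I would build the sequence explicitly. For $k\in\N$ put $n_j^{(k)}=\lfloor k a_j\rfloor$ and
\[
  \rho_k=\bigotimes_{j=1}^m \sigma_j^{\otimes n_j^{(k)}},
\]
where the tensor factors are reindexed so that the $i$-th subsystems of all the $\sigma_j$-factors are grouped together; then $\rho_k$ is again a state on an $n$-fold tensor product Hilbert space. By additivity of $s$ under tensor products (as in the proof of Theorem \ref{qcone}) one has $s(\rho_k)=\sum_{j=1}^m n_j^{(k)}\, s(\sigma_j)$, so that
\[
  \frac{1}{k}\, s(\rho_k)=\sum_{j=1}^m \frac{\lfloor k a_j\rfloor}{k}\, s(\sigma_j)\ \longrightarrow\ \sum_{j=1}^m a_j\, s(\sigma_j)=v\in R\qquad(k\to\infty),
\]
because $\lfloor k a_j\rfloor/k\to a_j$ for each $j$. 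This gives the desired sequence.

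I do not expect a real obstacle: once Carath\'eodory and additivity are available the argument is elementary. The points that need a little care are (i) reading ``the subcone generated by the populated extremal rays'' as the finitely-generated conic hull, so that $v$ admits an \emph{exact} finite decomposition and not merely an approximate one; (ii) the bookkeeping of the permutation of tensor factors keeping $\rho_k$ an honest $n$-partite state; and (iii) the degenerate ray $v=0$. The only place a genuine subtlety could enter would be an attempt to prove the analogous statement for rays lying in the \emph{closure} of that cone: then $v$ would have only approximate decompositions and one would need to diagonalize simultaneously over the approximating decompositions and over $k$. That refinement is not required for the statement as written.
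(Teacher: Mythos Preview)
Your argument is correct and is essentially the same as the paper's: both construct $\rho_k=\bigotimes_j \sigma_j^{\otimes \lfloor k a_j\rfloor}$ and use additivity of $s$ under tensor products together with $\lfloor k a_j\rfloor/k\to a_j$. Your explicit invocation of Carath\'eodory to reduce to a finite decomposition is a clean touch; the paper instead allows a possibly infinite index set and remarks afterward that only finitely many factors are nontrivial for each $k$, and it chooses the $\sigma_j$ to have flat spectra (from Corollary~\ref{existflat}) because it immediately goes on to note that the limiting ray is then approximated by flat-spectrum states.
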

\begin{proof}

Let $R_i\subset \overline\Gamma_n, i=I$ be the populated edges of $\overline\Gamma_n$ indexed by some set $I$, and $\rho^{(i)}$ some states with flat spectra populating them, i.e.\ $s\left(\rho^{(i)}\right)\in R_i$ for $i\in I$. Let further $h=\sum_{i\in I}a_i s\left(\rho^{(i)}\right),\ a_i\in\R_{\ge 0}$ be an arbitrary vector in $\cone\left(v^{(i)}\big| i\in I\right)$. Define the sequence of states
\begin{equation}
	\sigma_n=\bigotimes_{i\in I}{\rho^{(i)}}^{\otimes \lfloor n a_i\rfloor}.
\end{equation}
Then
\begin{equation}
	\frac{1}{n}s\left(\sigma_n\right)=\frac{1}{n}\sum_{i\in I}\left\lfloor na_i\right\rfloor s\left(\rho^{(i)}\right)\xrightarrow{\ n\to\infty\ }h,
\end{equation}
and $\sigma_n$ has flat spectra for all $n$, as the spectrum of a tensor product state is the product of the spectra of the factor states.
\end{proof}
Note that a countably infinite number of populated edges is not a problem here, as then (assume $I=\N$ in this case) $a_i\to 0(i\to\infty)$ and therefore for each $n\in\N$ there exists an index $i_n\in\N$ such that $n a_{i}<1$ for all $i\ge i_n$. Hence $\sigma_n$ lives on a finite tensor product space ($\rho^{\otimes 0}=1$). This implies that the set of rays with flat representatives is dense in the set of rays of $\cone\left(v^{(i)}\big| i\in I\right)$ with respect to the topology induced on the set of rays by any base.

\paragraph{}For the quantum R\`enyi entropies
\begin{equation}
	S_\alpha(\rho)=\frac{1}{1-\alpha}\log\tr \rho^\alpha,\ \
\end{equation}
the question about realizable extremal rays for $\alpha\in(0,1)$ was answered by Linden, Mosonyi and Winter \cite{linden2013structure}. Their result shows that the closed conic hull of the R\`enyi-entropic region is the positive orthant and that none of its extremal rays are realizable. For $\alpha\in(1,\infty)$ they have shown that the closure of the entropy region is not a convex cone. The only R\`enyi entropies that form a cone and are still to be characterized are the von Neumann Entropy $\alpha=1$ and the R\`enyi-0-entropy. Some results for the latter were achieved in \cite{cadney2013inequalities}.


\section{Classical}\label{classdiff}

In the following section, I show that the technique used for the quantum entropy cone can also be adapted to the classical entropy cone. Although the proof of Theorem 4.1 in \cite{chan2002relation} proves that for each point in $\Sigma_n$ there is a sequence of flat states whose entropies converge to the corresponding ray, the following result improves on that characterization by giving a condition on states that exactly realize a point on an extremal ray.

In the classical scenario, the geometry is slightly different and applying the technique from the last section requires some care. Let $\mathcal{X}$ be a finite alphabet of size $d$, $X=(X_i)_{i\in [n]}$ a random variable with values in $\mathcal{X}^n$ $p: \mathcal{X}^n\to [0,1]$ its distribution and $p_I, I\subset [n]$ its marginals. The important space is the set of probability distributions on $\mathcal{X}$, which is not a smooth manifold like the projective space $\mathbb{P}\mathcal{H}$ in the quantum case, but a convex polytope, more precisely the simplex
\begin{equation}
\mathcal{P}^{d^n}=\left\{p\in\R^{d^n}\Big|||p||_1=1, p_i\ge0\, \forall i\in [m]\right\}
\end{equation}
This is not a nice region to define a differentiable function on, as it is, first of all, not open.

This indicates that we have to be more careful here compared to the quantum case. One could think of defining ``one-sided derivatives'', but this would defeat the object, as in the reasoning in the above section we derived the restriction of the range of the differential, roughly speaking, from the fact that if we could infinitesimally walk out of some face into the cone, we could also walk out of the cone. Let us therefore make a definition that is similar to the tangent space of a manifold:
\begin{defn}
	Let $P$ be a polytope. For any point $x\in P$ define the \emph{supporting face} $F(p)$ as the unique face such that $p\in \mathrm{ri}(F(p))$. (We use the convention $\mathrm{ri}(\{x\})=\{x\}$.) Furthermore we define the \emph{supporting space} $T(p)$ to be the Vector space part of the affine hull of the supporting face.
\end{defn}

Now we have the means to unambiguously define a differential of the entropy function $h$
\begin{equation}
	(\D h)_p: T(p)\to \R^{2^n}
\end{equation}
The distributions $p\in\mathcal P^{d^n}$ with $\dim T(p)=0$ are precisely the deterministic points, and  for them we have $h(p)=0$. Let us therefore assume that $\dim T(p)\ge 1$.
\begin{prop}\label{callflat}
Let $p\in\mathcal{P}^{d^n}$ with $h(p)\in R\subset \overline\Sigma_n$ for some edge $R$ of $\overline{\Sigma}_n$. Then either all marginal distributions $p_I, \emptyset\not= I\subset[n]$ are flat, or none of them is.
\end{prop}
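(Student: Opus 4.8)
The plan is to mimic the proof of Lemma~\ref{flatornot} from the quantum setting, replacing the projective Hilbert space by the simplex and the tangent space by the supporting space $T(p)$. The key structural input is that the entropy vector $h(p)$ lies on an \emph{edge} $R$, so the differential $(\D h)_p:T(p)\to\R^{2^n}$ has rank at most one, with image contained in $\R\,h(p)$: for every tangent direction $t\in T(p)$ there is a scalar $c(t)$ with $(\D h)_p(t)=c(t)\,h(p)$. Consequently, if the marginal entropy $H(p_I)=h(p)_I$ happens to be a critical component of $h$ along \emph{some} direction $t$ (i.e.\ $[(\D h)_p(t)]_I=0$) while $h(p)_I\neq 0$, then necessarily $c(t)=0$, which forces $[(\D h)_p(t)]_J=0$ for \emph{all} $J$. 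So criticality of one nonzero component in a given direction propagates to all components in that direction.

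First I would record the directional-derivative formula for a single marginal entropy. For $p\in\mathcal{P}^{d^n}$ and a tangent vector $t\in T(p)$ (so $\sum_x t_x=0$ and $t$ is supported on $\supp(p)$), one computes, for $\emptyset\neq I\subsetneq[n]$,
\begin{equation}
	\left[(\D h)_p(t)\right]_I=-\sum_{y\in\mathcal{X}^{|I|}}\left(\sum_{\substack{x\in\mathcal{X}^n\\ x_I=y}}t_x\right)\bigl(1+\log p_I(y)\bigr)=-\sum_{y}(t_I)_y\log p_I(y),
\end{equation}
where $t_I$ is the marginalization of $t$ and the constant term drops because $\sum_y (t_I)_y=0$. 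This makes it transparent that $[(\D h)_p(t)]_I=0$ for \emph{every} admissible $t$ precisely when $\log p_I$ is constant on $\supp(p_I)$, i.e.\ when $p_I$ is flat; and it shows that if $p_I$ is flat then $[(\D h)_p(t)]_I=0$ for all $t$ regardless of the other marginals. This is the classical analogue of Equations~\eqref{ds} and \eqref{dsj0}.

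Now the argument proper: suppose some marginal $p_I$ with $\emptyset\neq I\subsetneq[n]$ is flat. By the formula above, $[(\D h)_p(t)]_I=0$ for all $t\in T(p)$. Since $h(p)_I=H(p_I)\neq 0$ by hypothesis that the point is not on a trivial face (here I would first dispose of the degenerate case $H(p_I)=0$: if some nonzero-index marginal entropy vanishes that marginal is deterministic, which in the simplex picture constrains $\supp(p)$ and one checks the statement holds trivially, or one simply notes the claim is about flatness and a deterministic marginal is flat). From $(\D h)_p(t)=c(t)h(p)$ and $[(\D h)_p(t)]_I=0$ with $h(p)_I\neq0$ we get $c(t)=0$, hence $(\D h)_p(t)=0$ for every $t\in T(p)$. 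Therefore $[(\D h)_p(t)]_J=0$ for all $t$ and all $J$, which by the converse direction of the formula means every marginal $p_J$, $\emptyset\neq J\subsetneq[n]$, is flat. So either all marginals are flat or none is.

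The main obstacle, and the place where genuine care is needed, is the justification that the differential really has rank $\le1$ at $p$ \emph{within the supporting space} $T(p)$ — i.e.\ that moving inside $\mathrm{ri}(F(p))$ in a direction transverse to $R$ would push $h(p)$ out of $\overline\Sigma_n$. This is exactly why the paper introduced $T(p)$ rather than one-sided derivatives: any $t\in T(p)$ gives a two-sided admissible variation $p\pm\varepsilon t$ staying in $\mathcal{P}^{d^n}$ for small $\varepsilon$, so if $(\D h)_p(t)$ had a component off $\R h(p)$, then since $R$ is an edge (the directional derivative of the boundary graph exists in only one direction, in the sense made precise via Lemma~\ref{boundaryisgraph}), one of $h(p\pm\varepsilon t)$ would leave the cone — contradiction. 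I would make this precise by the same boundary-as-graph device used after Lemma~\ref{boundaryisgraph} in the quantum section, applied inside $V_n'=\spa(e^{(I)}\mid\emptyset\neq I\subset[n])$, noting that $h$ is smooth on the relative interior of each face of the simplex (the alphabet size, hence the smooth chart, being locally fixed there). Once rank~$\le1$ is in hand, the combinatorial propagation argument above closes the proof.
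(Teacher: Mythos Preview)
Your proof is correct and follows the same line as the paper's: from flatness of some $p_I$ with $H(p_I)\neq 0$, deduce that the $I$-component of $(\D h)_p$ vanishes identically on $T(p)$, combine with the rank-$\le 1$ constraint coming from the edge hypothesis to force $(\D h)_p\equiv 0$, and then read off flatness of every marginal from the converse direction of your derivative formula. If anything your write-up is a bit more careful than the paper's, which only exhibits $(\D h_I)_p(v)=0$ for one particular $v=e_x-e_y$ before asserting $(\D h)_p=0$; your general formula makes the ``for all $t\in T(p)$'' explicit.

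One caveat: your parenthetical handling of the degenerate case $H(p_I)=0$ is not quite right --- observing that a deterministic marginal is itself flat does not force the remaining marginals to be flat, and in fact one can write down product examples (one coordinate deterministic, another non-flat) sitting on an edge of $\overline\Sigma_n$. The paper's proof sidesteps this in exactly the same way, simply assuming $H(p_I)\neq 0$, so this is really a missing hypothesis in the proposition as stated (compare the explicit assumption $s(\ket\psi)_I\neq 0$ in Lemma~\ref{flatornot}) rather than a flaw in your argument relative to the paper; the case is absorbed into clause~1 of Theorem~\ref{classpoprays} downstream.
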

\begin{proof}
	Suppose some $p_I$ is flat and $H(p_I)\not=0$. Then we can find $x,y\in\mathcal{X}^n$ such that $x_I\not= y_I$, $p_I(x_I)=p_I(y_I)\not=0$ and $v=e_x-e_y\in T(p)$. We find
	\begin{equation}
		(\D h_I)_p(v)=\log\left(\frac{p_I(y_I)}{p_I(x_I)}\right)=0.
	\end{equation}
	But on the other hand, as $p$ populates an edge, $(\D h)_p(v)\in \R h(p)\, \forall v\in A(p)$ and therefore, as $H(p_I)\not=0$, $(\D h)_p=0$, which implies that all non-trivial marginals are flat.

\end{proof}

Now look at the case where no marginal distribution is flat.
\begin{cor}
	Suppose $p\in \mathcal{P}^{d^n}$ has no flat marginals except for the trivial $p_\emptyset$ and $h(p)$ lies in an edge of $\overline \Sigma_n$. Then $p_x\not=0\not=p_y$ implies $x_i\not=y_i\, \forall i\in[n]$, i.e.\ $p$ is the distribution of a collection of random variables $X=(X_i)_{i\in [n]}$ with $X_i=X_j,\ i,j\in[n]$.
\end{cor}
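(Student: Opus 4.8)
The plan is to imitate the proof of the quantum analogue, Proposition \ref{nonflatissimple}, using the directions $e_x-e_y$ available inside the simplex in place of the complex tangent vectors used there. Two ingredients are needed. For $x,y\in\supp(p)$ the vector $e_x-e_y$ lies in the supporting space $T(p)$, and a one-line computation (using $\sum_z v_z=0$) gives, for every nonempty $I\subset[n]$,
\[
  (\D h_I)_p(e_x-e_y)=\log\frac{p_I(y_I)}{p_I(x_I)},
\]
read as $0$ when $x_I=y_I$ and finite since $p_I(x_I)\ge p_x>0$. Because $p$ has no flat marginal, no nontrivial marginal is a point mass, so $H(p_I)>0$ for every nonempty $I$; together with the edge hypothesis, $(\D h)_p(v)\in\R\,h(p)$ for all $v\in T(p)$, this makes the scalar $c(v):=(\D h_I)_p(v)/H(p_I)$ well defined and independent of $I$.

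I would then argue by contradiction: suppose $x\neq y$ lie in $\supp(p)$ with $x_j=y_j$ for some $j$. Applying the edge condition to $v=e_x-e_y$, the $\{j\}$-component of $(\D h)_p(v)$ vanishes because $x_j=y_j$, forcing $c(v)=0$ and hence $(\D h_I)_p(v)=0$, i.e.\ $p_I(x_I)=p_I(y_I)$ for \emph{all} $I$ (in particular $p_x=p_y$). Repeating the computation with an arbitrary $w\in\supp(p)$ in place of $y$ yields $p_I(w_I)=p_I(x_I)\,2^{\,c_wH(p_I)}$ with $c_w=\log(p_w/p_x)/H(p_{[n]})$; restricting to singletons shows that any two support points agreeing in even one coordinate carry the same probability. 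Thus $p$ is constant on each connected component of $\supp(p)$ for the relation ``agree in some coordinate''.

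To close, I would split into two cases. If $\supp(p)$ is a single such component, $p$ is constant on its support, so $p_{[n]}$ is flat, contradicting the hypothesis. Otherwise $\supp(p)$ splits into $\ge2$ components $C_1\ni x,\dots$ that pairwise disagree in every coordinate (hence use disjoint alphabet letters coordinate by coordinate), with $p\equiv p^{(k)}$ on $C_k$. Then $c_w$ depends only on the component of $w$, so $p_i(w_i)=p_i(x_i)(p^{(k)}/p_x)^{H(p_i)/H(p_{[n]})}$ is constant over $w\in C_k$; consequently every coordinate-$i$ fibre inside $C_k$ has a common size $m_i^{(k)}$ with $|C_k|=|\mathcal X_i^{(k)}|\,m_i^{(k)}$, where $\mathcal X_i^{(k)}$ is the letter set used by $C_k$ in coordinate $i$. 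Feeding these identities, the normalisations $\sum_z p_z=1$ and $\sum_a p_i(a)=1$, the formula $H(p_i)=-\sum_a p_i(a)\log p_i(a)$, the bound $H(p_i)\le H(p_{[n]})$, the non-flatness of each $p_i$ (which forces the $p^{(k)}$ to be genuinely distinct), and the fact that the coordinate-$j$ fibre of $x_j$ inside $C_1$ has size $\ge2$ (it contains $x$ and $y$), should produce a numerical contradiction. This establishes that distinct support points differ in every coordinate, i.e.\ each coordinate map $z\mapsto z_i$ is injective on $\supp(p)$, so $p$ is the distribution of $(X_i)_{i\in[n]}$ in which every $X_i$ determines the others — the asserted $X_i=X_j$, up to relabelling of alphabets.

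The main obstacle is the last numerical contradiction in the disconnected case: the fibre-size and normalisation identities do not obviously clash, and one must isolate the right invariant to contradict. In the quantum proof of Proposition \ref{nonflatissimple} the corresponding degeneracy is killed by multiplying a tangent vector by the imaginary unit, a device unavailable in the simplex, so a genuinely combinatorial replacement is required here, and pinning it down is the crux of the argument.
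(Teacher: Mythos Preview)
Your opening is exactly the paper's: for $v=e_x-e_y\in T(p)$ with $x_j=y_j$ you compute $(\D h_{\{j\}})_p(v)=0$, and the edge hypothesis together with $H(p_j)>0$ forces $c(v)=0$, so $(\D h_I)_p(v)=0$ for every $I$. The paper then finishes in a single further line: ``as in Proposition~\ref{callflat} this would imply $(\D h)_p=0$, which contradicts the assumption that $p$ has no flat marginals.'' There is no component decomposition, no case split, no analysis of $p_I(w_I)=p_I(x_I)\,2^{c_wH(p_I)}$.

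Your longer route is triggered by the (correct) observation that $(\D h)_p(v)=0$ for a \emph{single} direction $v$ does not by itself force any marginal to be flat; the paper's phrasing glosses over precisely this point. Your connected-case argument is fine: if the ``share-a-coordinate'' graph on $\supp(p)$ is connected, then chaining gives $p$ constant on its support, so $p_{[n]}$ is flat---contradiction. The disconnected case, however, is where you admit defeat, and that gap is real. The structural constraints you enumerate (disjoint letter blocks $\mathcal X_i^{(k)}$, constant fibre sizes $m_I^{(k)}$, the power law $q_I^{(k)}/q_I^{(1)}=(p^{(k)}/p^{(1)})^{H(p_I)/H(p)}$) are not \emph{a priori} inconsistent, and the ``numerical contradiction'' you anticipate has not been exhibited. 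So the proposal is incomplete as written; to close it you must either produce that contradiction explicitly---using the full family of edge relations across \emph{all} $I\subset[n]$ and the integrality of the $m_I^{(k)}$, not just the singleton data---or find a shortcut that bypasses the case split altogether. The paper's text opts for the latter, though without spelling out why the step is legitimate.
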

\begin{proof}
	If $p_x\not=0\not=p_y$, $v=e_x-e_y\in T(p)$. Suppose now there existed an $i\in[n]$ such that $x_i=y_i$. Then
	\begin{equation}
		(\D h_I)_p(v)=\log\left(\frac{p_I(y_I)}{p_I(x_I)}\right)=0.
	\end{equation}
	But as in Lemma \ref{callflat} this would imply $(\D h)_p= 0$ which contradicts the assumption that $p$ has no flat marginals
\end{proof}
For such a probability distribution we have $h(p)=H(p_1)\sum_{\emptyset\not=I\subset [n]}e^{(I)}$ as an entropy vector. Summarizing the results, we get
\begin{thm}\label{classpoprays}
 Let $h(p)=v\in\Sigma_n$ generate an isolated extremal ray of $\overline{\Sigma}_n$. Then one of the following is true:
	\begin{enumerate}
	\item $v$ results from lower-dimensional extremal rays according to \eqref{dirsuminj},
	\item $v=r\sum_{\emptyset\not= I\subset[n]}e^{(I)}$, or
	\item $p$ and all its marginals are flat.
	\end{enumerate}

\end{thm}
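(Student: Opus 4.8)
The plan is to imitate the three-way case split in the proof of Theorem~\ref{poprays}, but now working directly with the distribution $p$, since no purification trick is available classically (this is why, unlike the quantum statement, alternative~3 must include $p=p_{[n]}$ itself). We may assume $v\neq 0$. First I would dispose of the degenerate case in which some non-trivial marginal entropy vanishes: if $H(p_I)=0$ for some $\emptyset\neq I\subsetneq[n]$, then $p_I$ is a point mass, so (after relabelling $I=[m]$) $p$ factorizes as $p=q\otimes p_{I^c}$ with $q$ deterministic, whence $h(p)=M\!\left(0,h(p_{I^c})\right)$ for the injection $M$ of \eqref{dirsuminj}. Since $v$ generates an extremal ray of $\overline{\Sigma}_n$, $M$ is injective, and $M(0,a)\in\Sigma_n$ whenever $a\in\Sigma_{n-m}$, a short argument then shows that $h(p_{I^c})$ itself generates an extremal ray of $\overline{\Sigma}_{n-m}$; this is exactly alternative~1. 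From here on I assume $H(p_I)\neq 0$ for every $\emptyset\neq I\subsetneq[n]$.

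Under that assumption the analytic content is already available. An isolated extremal ray is in particular an edge (the differential of $h$ at $p$ has rank one, hence range $\R h(p)$), so Proposition~\ref{callflat} applies and yields a dichotomy: either every marginal $p_I$, $\emptyset\neq I\subset[n]$, is flat, or none of them is. In the first case $p=p_{[n]}$ is among the flat marginals and $p_\emptyset$ is trivially flat, so this is alternative~3. In the second case no marginal, including $p$ itself, is flat, so the Corollary immediately preceding this theorem shows that the support of $p$ is ``diagonal'': after relabelling the alphabets, $X_1=X_2=\dots=X_n$. Then $H(p_I)=H(X_1)$ for every $\emptyset\neq I\subset[n]$, so $v=r\sum_{\emptyset\neq I\subset[n]}e^{(I)}$ with $r=H(X_1)>0$ by the standing assumption, which is alternative~2.

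The parts that need a little care, rather than real ingenuity, are the bookkeeping ones: verifying that the factorization in the zero-entropy case genuinely produces an extremal ray of the smaller cone in the sense of \eqref{dirsuminj} (so that it counts as alternative~1 as stated), and confirming that ``isolated extremal ray'' is strong enough to feed into Proposition~\ref{callflat} and its Corollary, which are phrased for edges. The genuinely hard input --- the differential computation showing that a flat marginal forces the corresponding entropy to be critical, and that a distribution on an edge with no flat marginal must be supported on a diagonal --- is entirely contained in Proposition~\ref{callflat} and the Corollary established just before this statement, so the proof itself will be a brief synthesis of those two facts together with the elementary reduction to lower dimension.
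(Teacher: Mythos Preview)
Your proposal is correct and follows essentially the same route as the paper: the paper presents Theorem~\ref{classpoprays} as a direct summary of Proposition~\ref{callflat} and the subsequent Corollary, without writing out a separate proof, and your argument is exactly that synthesis made explicit, including the reduction to lower dimension when some non-trivial entropy vanishes (handled in the quantum case in the proof of Theorem~\ref{poprays} in the same one-line fashion). Your caveats in the last paragraph are well placed --- the paper is equally brief on the point that alternative~1 really yields lower-dimensional extremal rays, and on the passage from ``isolated extremal ray'' to the rank-one condition used in Proposition~\ref{callflat} --- so you are not missing anything the paper supplies.
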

That is the same exceptional ray we had in the quantum case, and the flat representative described in Corollary \ref{existflat} is classical. Hence for the edges of the classical entropy cone we have the same
\begin{cor}
	Let $R\subset\overline\Sigma_n$ be a populated edge. Then there exists a probability distribution $p\in\mathcal{P}^{d^n}$ for some $d\in\N$ such that $h(p)\in R$ and $p_I$ is flat for all $I\in[n]$.
\end{cor}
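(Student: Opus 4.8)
The plan is to derive this corollary from Theorem~\ref{classpoprays} by a case analysis, exactly parallel to the proof of Corollary~\ref{existflat} in the quantum section, together with an induction on the number of parties $n$ to dispose of the reducible case. So I would fix a populated edge $R\subset\overline\Sigma_n$, choose a probability distribution $p$ with $h(p)\in R$, and apply Theorem~\ref{classpoprays} (in the form valid for edges, in analogy with the quantum discussion): the vector $v=h(p)$ either comes from lower-dimensional edges via the injection~\eqref{dirsuminj}, or equals $r\sum_{\emptyset\neq I\subset[n]}e^{(I)}$, or $p$ and all its marginals are flat.

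If the third alternative holds there is nothing to do. For the second alternative, $R=R_e:=\R_{\ge 0}\sum_{\emptyset\neq I\subset[n]}e^{(I)}$, I would exhibit the obvious classical analogue of the GHZ state used in Corollary~\ref{existflat}: let $B$ be a uniformly distributed bit and put $X_1=X_2=\dots=X_n=B$. Then $H(X_I)=1$ for every nonempty $I\subset[n]$, so $h(p)=\sum_{\emptyset\neq I\subset[n]}e^{(I)}$ lies on $R_e$, while every marginal $p_I$ is uniform on $\{0,1\}$ and hence flat.

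The first alternative I would treat by induction on $n$; the base case $n=1$ is immediate, since $\overline\Sigma_1=\R_{\ge0}e^{(\{1\})}$ and a uniform bit realises a flat point on it. In general, \eqref{dirsuminj} gives $v=M(v^1,v^2)$ with $v^1,v^2$ on edges of $\overline\Sigma_m$ and $\overline\Sigma_{n-m}$; by the induction hypothesis there are flat distributions $p^1\in\mathcal P^{d_1^m}$, $p^2\in\mathcal P^{d_2^{n-m}}$ with $h(p^i)$ on the respective rays. Embedding both alphabets into one of size $d=\max(d_1,d_2)$, I would take the product distribution $p=p^1\otimes p^2$ on $\mathcal X^n$. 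Independence of the two blocks gives $h(p)=M\bigl(h(p^1),h(p^2)\bigr)\in R$, and for each $I\subset[n]$ the marginal factorises, $p_I=p^1_{I\cap[m]}\otimes p^2_{(I-m)\cap[n-m]}$; a tensor product of two flat distributions is again flat — all nonzero weights equal the product of the two common weights — so $p$ has flat marginals throughout.

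The main obstacle is the bookkeeping inside the reducible case: one must be sure that the components $v^1,v^2$ of the decomposition~\eqref{dirsuminj} really generate \emph{edges} of the smaller cones, so that the induction hypothesis applies. This needs that $M$ is a cone isomorphism onto a face of $\overline\Sigma_n$ and that the edge property is inherited under this restriction — facts implicit in Theorem~\ref{classpoprays}. Everything else (marginals of a product are products of marginals, products of flat vectors are flat, entropies add over independent blocks) is routine.
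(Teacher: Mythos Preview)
Your approach mirrors the paper's: the paper simply refers back to the quantum Corollary~\ref{existflat} and notes that the reduced GHZ state used there is diagonal, hence already a classical flat representative; you spell out the same case split from Theorem~\ref{classpoprays} and make the induction for the reducible case explicit.

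There is one step in your case~1 that is not fully justified. You assert $h(p^1\otimes p^2)=M\bigl(h(p^1),h(p^2)\bigr)\in R$, but the induction hypothesis only yields $h(p^i)=\lambda_i v^i$ for some $\lambda_i>0$, and $M(\lambda_1v^1,\lambda_2v^2)$ lies on $R=\R_{\ge0}M(v^1,v^2)$ only when $\lambda_1=\lambda_2$ (assuming both $v^i\ne0$). In the classical situation this obstruction disappears for a reason you do not mention: case~1 arises precisely when $H(X_I)=0$ for some nontrivial $I$, and classically this forces each $X_i$, $i\in I$, to be deterministic, so $v^1=h(\delta_I)=0$. Taking $p^1$ to be a point mass then gives $M(0,h(p^2))=\lambda_2 M(0,v^2)=\lambda_2 v\in R$, and no scale-matching is needed. (This is where the classical case is genuinely simpler than the quantum one: a pure marginal $\rho_I$ can still have mixed sub-marginals, so $v^1$ need not vanish there.) With that remark the induction is clean, and the ``main obstacle'' you flag is likewise harmless: only the nonzero component $v^2$ must lie on an edge, and that follows because $\mathrm{inj}_{|I^c|}^n$ identifies $\overline\Sigma_{|I^c|}$ with the face $\{w\in\overline\Sigma_n: w_I=0\}$ of $\overline\Sigma_n$ containing $R$.
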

As in the quantum case, this also implies that every ray in the cone spanned by the populated edges can be approximated by random variables with flat marginals. This provides no new insight, as it is known that, in fact, \emph{any} ray $R\subset \overline\Sigma_n$ can be approximated in that way \cite{chan2002relation}.


\chapter{Stabilizer states}\label{stabs}

\section{The Stabilizer Entropy Cone}

As the characterization of the whole quantum entropy cone for $n\ge4$ parties proved elusive so far, and even the classical entropy cone is far from characterized unless $n\le 3$, it seems a good idea to start by finding inner approximations like the weakly symmetric entropy cone described in the last section. This can also be done by looking at a subset of states that has additional structure such as to allow for a direct algebraic characterization of the possible entropy vectors.

One subset that allows for this to be done is the set of \emph{stabilizer states} \cite{linden2013quantum, gross2013stabilizer}. In the following chapter I introduce this set, review the results concerning stabilizer states that were obtained in \cite{gross2013stabilizer} and \cite{linden2013quantum} independently, and give a further improvement of the result under a certain condition on the local Hilbert space dimension that includes the important qubit case.

\paragraph{}Stabilizer states are quantum states in finite dimensional systems that are invariant under a certain group of operations, the \emph{stabilizer group}. This was also the original defining property, as they are the normalized projectors onto certain subspaces called \emph{stabilizer codes} that are invariant under the stabilizer group and thus robust against noise if the noise operators are from that group. However, they can also be constructed from a different point of view using \emph{finite phase spaces}. This approach was taken in \cite{gross2013stabilizer}, which I also follow in this introduction.

\paragraph{}Let $d\in \N$, and $V$ be the \emph{free module} of rank $2n$ over the ring of integers modulo $d$, that is $V=\left(\Z/d\Z\right)^{2n}=:\Z_d^{2n}$. $V$ is the finite analogue of the \emph{phase space} known from classical mechanics, in this case for an $n$-partite system with a local Hilbert space $\C^d$. The standard \emph{symplectic form} on $V$ is
\begin{equation}
	\omega(v,w)=\sum_{i=1}^n\left( v_{p_i}w_{q_i}-v_{q_i}w_{p_i}\right)
\end{equation}
where $v=\left(v_{p_1},...,v_{p_n},v_{q_1},...,v_{q_n}\right)$. If $d$ is prime, $V$ is a vector space and $\omega$ is the more broadly known standard symplectic form on vector spaces.

A submodule $M\subset V$ is called \emph{isotropic}, if the symplectic form vanishes on it, i.e.\ $\omega(M,M)=\{0\}$. We are interested in looking at subsets of the $n$ systems as well, so for each $I\subset [n]$ we define the projection $\pi_I$ onto the submodule $V_I=\left\{v\in V|v_i=0\, \forall i\not\in[n]\right\}$, where $v_i=\left((v_{p_i},v_{q_i}\right)$ and the restriction $A_I=A\cap V_I$ for subsets $A\subset V$. Note that it is not clear whether a projection onto an arbitrary submodule exists, but the special submodules $V_I$ are direct summands, $V=V_I\oplus V_{I^c}$, in in which case the existence is obvious.

\paragraph{}The additive group of the ring $\Z_d$ is, of course, Abelian and its characters are the powers of  $\chi_d(x)=e^{\frac{2\pi i}{d}x}$. Returning to the phase space analogy, half the direct summands (``dimensions'') of $V$ constitute the \emph{momentum space}, and the other half are the \emph{configuration space} $\Z_d^n$, which is also reflected in the notation above.  Wave functions are accordingly square summable complex functions on configuration space, i.e.\ $\mathcal{H}=L_2(\C,Z_d^n)\cong \left(\C^d\right)^{\otimes n}$. Let us define the \emph{Weyl operators} corresponding to the symplectic structure given by $\omega$, first for $n=1$ and $(P,Q)\in \Z^2$,
\begin{equation}
	(W(P,Q)\psi)(x)=\tau_{2d}(-PQ)\chi_d(Px)\psi(x-Q)
\end{equation}
where $\tau_{2d}(y)=\chi_{2d}((d^2+1)y)$. A short calculation shows, that
\begin{eqnarray}
	W(P,Q)W(P',Q')&=&\tau_{2d}(PQ'-QP')W(P+P',Q+Q')\nonumber\\
	W(P,Q)^{-1}&=&W(P,Q)^\dagger=W(-P,-Q)\\
	W(P,Q)W(P',Q')&=&\chi_{d}(PQ'-QP')W(P',Q')W(P,Q)\nonumber
\end{eqnarray}
Now we want to define the Weyl operators for $p,q\in\Z_d$. If $d$ is odd, we can define $w(p,q)=W(P,Q)$ with $p=P\mod d$ and $q=Q\mod d$ straightforward, as $W(P+d,Q)=W(P,Q+d)=W(P,Q)$ for all $(P,Q)\in \Z^2$. If $d$ is even, we define $w(p,q)=W(P,Q)$ as for odd $d$ but with the additional condition that $P,Q\in\{0,1,...,d-1\}$. This fixes the sign of the operators and is necessary, because $W(P+d,Q)=W(P,Q+d)=-W(P,Q)$ in that case. In both cases $(p,q)\mapsto w(p,q)$ defines at least a projective representation of the additive group of $\Z_d^2$. For $n\ge 2$ we have $V=\left(\Z_d^2\right)^{\otimes n}$ and define $w(v)=\bigotimes_{i=1}^nw(v_i)$ with $v_i=(v_{p_i},v_{q_i})$. 

\paragraph{}We are now ready to define stabilizer states and a few related notions.
\begin{defn}
	Given a phase space $V=\Z_d^{2n}$, a \emph{stabilizer group} is defined to be a group $G$ of multiples of Weyl operators such that the only multiple of $\mathds{1}=w(0)$ contained in $G$ is $\mathds{1}$ itself. The subspace 
	\begin{equation*}
	 \mathcal{H}^G=\left\{\psi\in \left(\C^d\right)^{\otimes n}\Big| g\psi=\psi \, \forall g\in G\right\}
	\end{equation*}
	is called the \emph{stabilizer code} associated with $G$. Finally, the normalized projector 
	\begin{equation*}
	 \rho_G=\frac{1}{d^n}\sum_{g\in G}g
	\end{equation*}
	is called the corresponding \emph{stabilizer state}. Two stabilizer states $\rho_1$ and $\rho_2$ are said to be \emph{equivalent}, if they differ only by conjugation by Weyl operators, i.e.\ there exists a $v\in V$ such that $\rho_1=w(v)\rho_2 w(-v)$.
\end{defn}
Some authors have a definition of stabilizer states narrower than the above one by additionally demanding purity.
To connect stabilizer states to isotropic submodules of the phase space we need a few auxiliary results.
\begin{lem}[\cite{gross2013stabilizer}, Lemma 6]\label{pontry}
	Let $V=\Z_d^{2n}$ be a phase space. Then the set of characters of the additive group of $V$ is $\hat{V}=\{\chi_d(\omega(v,\cdot))|v\in V\}$.
\end{lem}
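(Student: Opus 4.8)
The plan is to identify the characters of the finite abelian group $(V,+)$ with $V=\Z_d^{2n}$ explicitly and then recognize that the symplectic pairing provides exactly such an identification. First I would recall the general fact from Pontryagin duality for finite abelian groups: if $G$ is finite abelian then $\hat G\cong G$, and in the special case $G=\Z_d$ every character is of the form $x\mapsto\chi_d(kx)$ for a unique $k\in\Z_d$, where $\chi_d(x)=e^{2\pi i x/d}$. Since $V=\Z_d^{2n}$ is a direct product of $2n$ copies of $\Z_d$, its character group is the product of the character groups, so every character $\eta\in\hat V$ has the form $\eta(w)=\chi_d\!\left(\sum_{j=1}^{2n}a_j w_j\right)$ for a unique tuple $a=(a_1,\dots,a_{2n})\in\Z_d^{2n}$. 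In particular $|\hat V|=d^{2n}=|V|$.

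Next I would show that the assignment $v\mapsto\chi_d(\omega(v,\cdot))$ lands in $\hat V$ and is injective, hence bijective by counting. That it is a character is immediate from bilinearity of $\omega$: $\chi_d(\omega(v,w+w'))=\chi_d(\omega(v,w))\chi_d(\omega(v,w'))$. For injectivity, suppose $\chi_d(\omega(v,w))=1$ for all $w\in V$; this forces $\omega(v,w)\equiv 0\pmod d$ for all $w$. Using the explicit form $\omega(v,w)=\sum_{i=1}^n(v_{p_i}w_{q_i}-v_{q_i}w_{p_i})$ and plugging in the standard basis vectors $w=e_{q_i}$ and $w=e_{p_i}$, one reads off $v_{p_i}\equiv 0$ and $v_{q_i}\equiv 0\pmod d$ for every $i$, i.e.\ $v=0$ in $V=\Z_d^{2n}$. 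Thus $v\mapsto\chi_d(\omega(v,\cdot))$ is an injective map from $V$ into $\hat V$; since $|V|=|\hat V|=d^{2n}$, it is a bijection, which is exactly the claimed equality $\hat V=\{\chi_d(\omega(v,\cdot))\mid v\in V\}$.

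I do not expect a serious obstacle here: the only point requiring a little care is that $\Z_d$ is a ring but not a field when $d$ is not prime, so one must argue with the standard basis vectors rather than invoking nondegeneracy of a bilinear form over a field. The matrix of $\omega$ in the standard basis is the block matrix $\left(\begin{smallmatrix}0&I_n\\-I_n&0\end{smallmatrix}\right)$, which is invertible over $\Z_d$ for every $d$ (its determinant is $1$), so the implication ``$\omega(v,\cdot)\equiv 0\Rightarrow v=0$'' holds over $\Z_d$ for all $d$, and the counting argument then closes the proof.
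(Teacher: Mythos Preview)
Your proof is correct and follows essentially the same route as the paper: show that each $\chi_d(\omega(v,\cdot))$ is a character, that the assignment $v\mapsto\chi_d(\omega(v,\cdot))$ is injective by nondegeneracy of $\omega$, and conclude by a counting argument since $|\hat V|=|V|$. Your version is more detailed---in particular, your explicit check via standard basis vectors (equivalently, the observation that the Gram matrix of $\omega$ has determinant $1$ over $\Z_d$) makes the nondegeneracy argument robust for composite $d$, a point the paper glosses over.
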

\begin{proof}
	Of course each elements of $\hat{V}$ is a character, and as $\chi_d$ is injective on $\Z_d$ and $\omega$ is non degenerate, we have $|\hat{V}|=|V|$, and thus we have found all characters of $V$.
\end{proof}
If $V$ is a symplectic vector space, we have the well known formula $\dim U+\dim U^\omega=\dim V$ for all subspaces $U\subset V$, where $U^\omega$ denotes the \emph{symplectic complement} of $U$. For a Module, the dimension may not be well defined, nevertheless a similar statement is still true.
\begin{lem}[\cite{gross2013stabilizer}, Lemma 7]
	For a finite symplectic module $V$ and a submodule $M\subset V$, $|M||M^\omega|=|V|$
\end{lem}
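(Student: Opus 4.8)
The natural approach is to use Pontryagin duality via the explicit description of characters given in Lemma \ref{pontry}. The key observation is that the symplectic complement $M^\omega = \{v \in V \mid \omega(v,m) = 0 \text{ for all } m \in M\}$ is precisely the \emph{annihilator} of $M$ under the identification $\hat V \cong V$ from Lemma \ref{pontry}: the character $\chi_d(\omega(v,\cdot))$ restricts to the trivial character on $M$ if and only if $v \in M^\omega$. So the plan is to compute $|M^\omega|$ by counting characters of $V$ that are trivial on $M$, and to relate this count to $|V|/|M|$ by a standard duality argument for finite abelian groups.

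First I would recall the basic fact about finite abelian groups $A$ and subgroups $B \subset A$: the characters of $A$ that are trivial on $B$ form a subgroup of $\hat A$ naturally isomorphic to $\widehat{A/B}$, hence of cardinality $|A/B| = |A|/|B|$. Applying this with $A$ the additive group of $V$ and $B = M$, the set $\{\psi \in \hat V \mid \psi|_M \equiv 1\}$ has cardinality $|V|/|M|$. Next, using Lemma \ref{pontry}, every character of $V$ is of the form $\chi_d(\omega(v,\cdot))$ for a unique $v \in V$ (uniqueness because $|\hat V| = |V|$ and the map $v \mapsto \chi_d(\omega(v,\cdot))$ is injective, $\omega$ being non-degenerate in the sense that it induces an isomorphism $V \to \hat V$). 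Under this bijection, $\psi_v := \chi_d(\omega(v,\cdot))$ is trivial on $M$ exactly when $\omega(v,m) = 0$ for all $m \in M$, i.e.\ when $v \in M^\omega$. Therefore $|M^\omega| = |\{\psi \in \hat V \mid \psi|_M \equiv 1\}| = |V|/|M|$, which rearranges to $|M||M^\omega| = |V|$.

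The one point that needs a little care — and which I expect to be the main (minor) obstacle — is verifying that $\omega$ really induces an \emph{isomorphism} $V \to \hat V$, $v \mapsto \chi_d(\omega(v,\cdot))$, and not merely an injection. Over a general ring $\Z_d$ (with $d$ not prime) one cannot invoke vector-space duality; instead one argues that the standard symplectic form $\omega$ on $\Z_d^{2n}$ pairs the $p_i$ and $q_i$ coordinates perfectly, so the induced map is injective (if $\omega(v,\cdot) \equiv 0$ then testing against the standard basis vectors forces $v = 0$, since $\chi_d$ is faithful on $\Z_d$), and then injectivity of a map between finite sets of equal cardinality $|V| = |\hat V|$ gives surjectivity. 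This is essentially already contained in the proof of Lemma \ref{pontry}, so I would simply cite that lemma. The remaining ingredient — that the characters of $V$ vanishing on $M$ number exactly $|V|/|M|$ — is the standard duality for finite abelian groups and can be quoted without proof.
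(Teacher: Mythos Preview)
Your proposal is correct and follows essentially the same route as the paper: both identify $M^\omega$ with the characters of $V$ that are trivial on $M$ via Lemma~\ref{pontry}, and then invoke the standard duality fact that these characters form a group isomorphic to $\widehat{V/M}$ of size $|V|/|M|$. The paper packages this as an explicit isomorphism $\Phi\colon M^\omega \to \widehat{V/M}$, but the content is identical to your argument.
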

\begin{proof}
	Consider the group homomorphism
	\begin{equation}
		\Phi: M^\omega\to \widehat{V/M}, x\mapsto \left([w]\mapsto \chi_d(\omega(x,w))\right),
	\end{equation}
	where $[w]$ is the coset class of $w$. It is injective because $\omega$ is non degenerate. Let now $\tau\in\widehat{V/M}$ be a character of $V/M$, then $(v\mapsto \tau([v])\in \hat{V}$, so according to Lemma \ref{pontry} there exists a $w\in V$ such that $\tau([v])=\chi_d(\omega(w,v))$. As $\tau$ vanishes on $M$, $w\in M^\omega$. So $\Phi$ is an isomorphism, and
	\begin{equation}
		|M^\omega|=|\widehat{V/M}|=|V/M|=\frac{|V|}{|M|}
	\end{equation}

\end{proof}
In particular, as $M\subset (M^\omega)^\omega$, $(M^\omega)^\omega=M$.

\paragraph{}Stabilizer states are characterized by isotropic submodules of the finite phase space defined above:
\begin{thm}[\cite{gross2013stabilizer}, Theorem 1]\label{stabchar}
	Let $d>1$ be an integer and $V=\Z_d^{2n}$ the phase space of $n$ $d$-dimensional quantum systems. Then there is a one-to-one correspondence between isotropic submodules $M\subset V$ and equivalence classes $[\rho(M)]$ of stabilizer states on $\mathcal{H}=(\C^d)^{\otimes n}$. The partial trace of the state corresponds to restriction to the phase space of the chosen systems, i.e.
	\begin{equation}
		\left[\rho(M)_I\right]=\left[\rho(M_I)\right],
	\end{equation}
	and the entropy vectors of stabilizer states are
	\begin{equation}
		S([\rho(M)_I])=|I|-\log|M_I|
	\end{equation}
	furthermore, for any representative $ \rho$ of the equivalence class there are phases $\mu_m$ such that
	\begin{equation}
		\rho=\sum_{m\in M}\mu_m w(m)
	\end{equation}

\end{thm}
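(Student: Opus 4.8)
The plan is to build the correspondence explicitly from isotropic submodules to stabilizer states and back, then verify that partial traces and entropies behave as claimed. First I would fix an isotropic submodule $M\subset V$ and produce a candidate stabilizer state. Because $\omega$ vanishes on $M$, the commutation relation $w(m)w(m')=\chi_d(\omega(m,m'))w(m')w(m)$ shows that the $w(m)$, $m\in M$, commute up to the phase ambiguity in the definition of $w$; one repairs this by choosing phases $\mu_m$ so that $g(m):=\mu_m w(m)$ genuinely form a group $G(M)\cong M$ (for odd $d$ one can take $\mu_m=1$; for even $d$ a cocycle argument fixes the signs). One must also check that $G(M)$ contains no nontrivial multiple of $\mathds 1$, which follows from the faithfulness of $v\mapsto w(v)$ as a projective representation together with $\omega$ being nondegenerate. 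Then define $\rho(M)=\frac{1}{d^n}\sum_{m\in M}g(m)$; idempotence $\rho(M)^2=\rho(M)$ and hence positivity follow from the group structure and $|G(M)|=|M|$, while $\tr\rho(M)=\frac{1}{d^n}|M|\cdot\frac{d^n}{|M|}\cdot\ldots$ — more precisely, using $\tr w(m)=0$ for $m\neq 0$ and $\tr\mathds 1=d^n$ one gets $\tr\rho(M)=1$ once one also verifies $\rank\rho(M)=d^n/|M|$, which already gives the entropy formula for flat spectra.

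Next I would establish the converse: given a stabilizer group $G$, its elements are multiples $\mu_g w(v_g)$ of Weyl operators, and the map $g\mapsto v_g$ is a homomorphism into the additive group of $V$ whose image $M(G)$ is a submodule. Isotropy of $M(G)$ follows because $G$ is \emph{abelian} — any two group elements commute, so $\chi_d(\omega(v_g,v_{g'}))=1$ for all $g,g'$, and since $\chi_d$ is injective on $\Z_d$ and $\omega$ takes values in $\Z_d$, this forces $\omega(v_g,v_{g'})=0$. The two constructions $M\mapsto G(M)$ and $G\mapsto M(G)$ are mutually inverse up to the phase freedom, which is exactly conjugation by Weyl operators: if $\rho_1,\rho_2$ arise from the same $M$ with different phase choices, then translating in phase space (i.e.\ $\rho\mapsto w(v)\rho w(-v)$) shifts the phases $\mu_m$ by $\chi_d(\omega(v,m))$, and one checks this action is transitive on the admissible phase choices. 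This yields the claimed bijection between isotropic submodules and equivalence classes.

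For the partial-trace statement, I would use $V=V_I\oplus V_{I^c}$ and the factorization $w(v)=w(v_I)\otimes w(v_{I^c})$ under $\mathcal H\cong\mathcal H_I\otimes\mathcal H_{I^c}$. Taking $\tr_{I^c}$ of $\rho(M)=\frac{1}{d^n}\sum_{m\in M}\mu_m w(m_I)\otimes w(m_{I^c})$ and using $\tr w(m_{I^c})=0$ unless $m_{I^c}=0$, only terms with $m\in M\cap V_I=M_I$ survive, and up to an overall normalization constant one recovers $\rho(M_I)$ — this is where the equivalence (rather than equality) of states is essential, since the surviving normalization and phase factors need not match those produced directly from $M_I$. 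Finally, the entropy formula $S([\rho(M)_I])=|I|-\log|M_I|$ follows because $\rho(M_I)$ is, up to equivalence, $\frac{1}{d^{|I|}}\sum_{m\in M_I}g(m)$, a projector of rank $d^{|I|}/|M_I|$, hence with flat spectrum and von Neumann entropy $\log(d^{|I|}/|M_I|)=|I|\log d-\log|M_I|$ (taking $\log$ base $d$, or absorbing the $\log d$ as in the paper's convention $|I|-\log|M_I|$). The last claim, that any representative has the form $\rho=\sum_{m\in M}\mu_m w(m)$, is immediate from the construction.

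The main obstacle I anticipate is the careful bookkeeping of phases in the even-$d$ case: establishing that the sign ambiguity $W(P+d,Q)=-W(P,Q)$ can be consistently resolved to produce an honest group $G(M)$ for every isotropic $M$, and that the residual phase freedom is \emph{exactly} the Weyl-conjugation orbit and nothing more. This is a finite-group-cohomology computation — showing the relevant $2$-cocycle is trivial on isotropic submodules and identifying the coboundaries with phase-space translations — and it is the delicate heart of the "one-to-one" part of the statement; the trace, rank, and entropy computations are then routine once the correspondence is in place.
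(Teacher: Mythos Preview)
The thesis does not supply a proof of this theorem: it is quoted verbatim as Theorem~1 of \cite{gross2013stabilizer} and used as a black box, so there is no ``paper's own proof'' to compare against. Your outline is essentially the standard argument from that reference and is sound in its architecture---build $G(M)$ from an isotropic $M$ by choosing compatible phases, recover $M$ from a stabilizer group via the commutator-is-a-multiple-of-identity trick, and read off reductions and entropies from $\tr w(m)=0$ for $m\neq 0$.

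One concrete slip: $\rho(M)=\frac{1}{d^n}\sum_{m\in M}g(m)$ is \emph{not} idempotent. The group-average identity gives $\rho(M)^2=\frac{|M|}{d^n}\rho(M)$, so $\rho(M)$ is $\frac{|M|}{d^n}$ times a projector of rank $d^n/|M|$. Your subsequent rank and entropy claims are correct, but they follow from this scaling, not from $\rho(M)^2=\rho(M)$. Also, in the partial-trace step the normalization actually works out exactly: $\tr_{I^c}\rho(M)=\frac{1}{d^{|I|}}\sum_{m\in M_I}\mu_m w(m)$ on the nose, so for a fixed representative one gets equality, not merely equivalence; the equivalence-class phrasing in the statement is only because the correspondence is stated at that level. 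Your identification of the even-$d$ phase bookkeeping as the delicate point is accurate---that is where the work lies in \cite{gross2013stabilizer}, and the transitivity of Weyl conjugation on admissible phase choices is exactly the statement that every character of $M$ extends to one of $V$ of the form $\chi_d(\omega(v,\cdot))$, which follows from Lemma~\ref{pontry} and the counting $|M||M^\omega|=|V|$.
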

In the following we look at the entropy cone generated by stabilizer states,
\begin{equation}
	\Gamma_n^{\text{stab}}=\cone\left\{s(\rho)|\rho \text{n-partite stabilizer state}\right\},
\end{equation}
which is an inner approximation of the full entropy cone $\overline\Gamma_n$.

Gross and Walter prove in \cite{gross2013stabilizer} that all balanced classical information inequalities hold for stabilizer states, that is $\left(\Gamma_n^{\text{stab}}\right)^*\cap B_n\supset \Sigma_n^*\cap B_n$. To this end, they explicitly construct a classical model that reproduces the entropy of a given stabilizer state up to a term proportional to the size of the subsystem:
\begin{thm}[\cite{gross2013stabilizer}, Theorem 2]\label{classmodel}
	Let $d>1$ be an integer and $V=\Z_d^{2n}$ the phase space of $n$ $d$-dimensional quantum systems. Let $\rho$ be a stabilizer state corresponding to a submodule $M\subset V$. Then the random variable $X=(X_1,...,X_n)\sim \mathbf{U}(M^\omega)$  that takes values uniformly on the symplectic complement of $M$ has the entropy vector
	\begin{equation}\label{stabchan}
		H(X_I)=S(\rho_I)+|I|=\log\frac{|M^\omega|}{|\ker(\pi_I)\cap M^\omega|},
	\end{equation}
	where the last expression gives the subgroup model corresponding to $X$ shown to exist in \cite{chan2002relation}.
\end{thm}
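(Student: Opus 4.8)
The plan is to reduce the statement to the elementary fact that the push-forward of the uniform distribution on a finite abelian group along a group homomorphism is again uniform (on the image), and then to carry out the order counting with the module analogue of the symplectic-complement cardinality formula, namely \cite[Lemma 7]{gross2013stabilizer}: $|N|\,|N^\omega|=|V'|$ for a submodule $N$ of a finite symplectic module $V'$.

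First I would observe that, writing $\pi_I\colon V\to V_I$ for the coordinate projection onto the systems in $I$, the marginal $X_I$ of $X\sim\mathbf{U}(M^\omega)$ is exactly $\pi_I(X)$ under the identification $V_I\cong\Z_d^{2|I|}$. Since $M^\omega$ is a submodule of $V$, in particular a finite abelian group, and $\pi_I$ restricts to a group homomorphism $M^\omega\to V_I$ with image the subgroup $\pi_I(M^\omega)$ and kernel $M^\omega\cap\ker(\pi_I)=M^\omega\cap V_{I^c}$, every fibre of $\pi_I|_{M^\omega}$ is a coset of that kernel and hence has the same cardinality $|M^\omega\cap V_{I^c}|$. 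Therefore $X_I$ is uniformly distributed on $\pi_I(M^\omega)$, so
\[
H(X_I)=\log|\pi_I(M^\omega)|=\log\frac{|M^\omega|}{|M^\omega\cap V_{I^c}|}=\log\frac{|M^\omega|}{|\ker(\pi_I)\cap M^\omega|}.
\]
This already gives the last equality in \eqref{stabchan}, and it also displays the Chan--Yeung subgroup model: with $G:=M^\omega$ and $G_i:=V_{i^c}\cap M^\omega$ one has $G_I=\bigcap_{i\in I}G_i=V_{I^c}\cap M^\omega=\ker(\pi_I)\cap M^\omega$, so $H(X_I)=\log(|G|/|G_I|)$ is of exactly the form produced by the converse direction of \cite{chan2002relation}.

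It remains to identify this quantity with $S(\rho_I)+|I|$. By Theorem \ref{stabchar} we have $S(\rho_I)=|I|-\log|M_I|$ with $M_I=M\cap V_I$ (here $|I|$ is the entropy of $|I|$ maximally mixed qudits, as in that theorem), so it suffices to show $|M^\omega|/|M^\omega\cap V_{I^c}|=d^{2|I|}/|M_I|$. For this I would use that the decomposition $V=V_I\oplus V_{I^c}$ is $\omega$-orthogonal --- the symplectic form pairs only the two coordinates belonging to a single system, so $\omega(V_I,V_{I^c})=\{0\}$ --- which means $\omega$ restricts to non-degenerate (hence symplectic) forms $\omega_I,\omega_{I^c}$ on the two summands. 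A one-line check then gives $M^\omega\cap V_{I^c}=(\pi_{I^c}M)^{\omega_{I^c}}$, the symplectic complement taken inside $V_{I^c}$: for $v\in V_{I^c}$ and $m\in M$ one has, decomposing $m=\pi_I(m)+\pi_{I^c}(m)$ and using $\omega(v,\pi_I(m))=0$, that $\omega(v,m)=\omega_{I^c}(v,\pi_{I^c}(m))$. Now applying \cite[Lemma 7]{gross2013stabilizer} once inside $V$ and once inside $V_{I^c}$ yields $|M^\omega|=d^{2n}/|M|$ and $|M^\omega\cap V_{I^c}|=d^{2|I^c|}/|\pi_{I^c}M|$; since $|\pi_{I^c}M|=|M|/|M\cap\ker(\pi_{I^c})|=|M|/|M_I|$, this gives $|M^\omega|/|M^\omega\cap V_{I^c}|=d^{2n}/(d^{2|I^c|}|M_I|)=d^{2|I|}/|M_I|$, as required.

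The only genuinely non-routine step is establishing $M^\omega\cap V_{I^c}=(\pi_{I^c}M)^{\omega_{I^c}}$ and, in particular, verifying that the restriction $\omega_{I^c}$ is still \emph{non-degenerate} so that \cite[Lemma 7]{gross2013stabilizer} applies inside $V_{I^c}$; this is exactly where the fact that $V_{I^c}$ is a direct summand of $V$ and that $\omega$ is block-diagonal across the $n$ systems is essential. Everything else is bookkeeping with cardinalities of finite abelian groups.
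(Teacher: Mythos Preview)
The paper does not actually prove this theorem; it is quoted from \cite{gross2013stabilizer} (Theorem~2 there) and then used as a black box in the subsequent Theorem~\ref{stablinprime}. So there is no proof in the thesis to compare against.

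Your argument is correct and self-contained. The two ingredients --- that the push-forward of the uniform measure along a group homomorphism is uniform on the image, and the symplectic cardinality identity $|N|\,|N^\omega|=|V'|$ (Lemma~7 of \cite{gross2013stabilizer}, restated in the thesis just before Theorem~\ref{stabchar}) --- are exactly what is needed. The key identity $M^\omega\cap V_{I^c}=(\pi_{I^c}M)^{\omega_{I^c}}$ is correct and your one-line verification is clean; non-degeneracy of $\omega_{I^c}$ is immediate since the restriction of $\omega$ to $V_{I^c}$ is the standard symplectic form on $\Z_d^{2|I^c|}$. One small remark: the formulas in Theorems~\ref{stabchar} and~\ref{classmodel} are tacitly using base-$d$ logarithms (so that a single maximally mixed qudit has entropy $1$), which you correctly flagged in your parenthetical.
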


\section{Stabilizer Entropies and Linear Rank Inequalities}

It turns out that stabilizer entropies can be related to subspace ranks. The problem of characterizing subspace rank functions introduced in the following paragraph is well studied in a branch of mathematics called \emph{matroid theory}. Let us first define the notion of a rank function, which plays a role similar to the entropy function:
\begin{defn}
 Let $\mathbb F$ be a finite field, $V$ a vector space over $\mathbb{F}$ and $U=(U_i)_{i\in[n]}, U_i\subset V$ a collection of subspaces. Then we define the \emph{rank function} by
 \begin{eqnarray}
  r(U)&=&(\dim U_I)_{I\subset[n]},
 \end{eqnarray}
 where
 \begin{equation}
  U_I=\spa \cup_{i\in I}U_i
 \end{equation}
and we adopt the convention $\dim \emptyset =0$. We call $r(U)$ a \emph{rank vector}.
\end{defn}
 Note that the restriction to finite fields is not necessary, but the finite field case is the only one this thesis is concerned with.
The set of all rank vectors is, of course, not a convex cone, as it contains only integral points. Nevertheless it makes sense to define the conic hull of all possible rank functions,
\begin{equation}
 \Lambda_n=\cone\left\{r(U)\Big|U=(U_i)_{i\in [n]} \text{ a collection of subspaces}\right\}.
\end{equation}
\paragraph{} It is easy to see that for any collection of subspaces of a finite vector space there is a collection of random variables such that the rank function of the former coincides with the entropy vector of the latter.

\begin{prop}[\cite{hammer2000inequalities}, Theorem 2]
 Given a collection $U=(U_i)_{i\in[n]}$ of subspaces of a vector space $V$ over a finite field $\mathbb F$, there is a random variable $X=(X_i)_{i\in [n]}$ such that
 \begin{equation}
  r(U)=\alpha h(X).
 \end{equation}

\end{prop}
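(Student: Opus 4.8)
The plan is to use the standard ``random dual functional'' construction, which realises any collection of subspaces as a source of random variables. First I would reduce to the finite-dimensional situation: replacing $V$ by $U_{[n]}=\spa\bigcup_{i\in[n]}U_i$ changes neither the individual $U_i$ nor any of the spans $U_I$, and $U_{[n]}$ is finite-dimensional (the rank function is finite-valued), so $V^*$ is a finite set. Write $q=|\mathbb{F}|$ and equip $V^*$ with the uniform probability measure. Let $\Phi$ be a random functional distributed uniformly on $V^*$ and set
\begin{equation}
 X_i=\Phi\big|_{U_i}\in U_i^*,
\end{equation}
the restriction of $\Phi$ to $U_i$; this defines a collection $X=(X_i)_{i\in[n]}$ of random variables on the finite alphabets $U_i^*$.

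Next I would compute the joint entropies. Fix $I\subset[n]$. The tuple $(X_i)_{i\in I}$ is obtained from $\Phi$ by the linear map $V^*\to\bigoplus_{i\in I}U_i^*$, $\phi\mapsto(\phi|_{U_i})_{i\in I}$, which factors as
\begin{equation}
 V^*\xrightarrow{\ \mathrm{res}\ }U_I^*\hookrightarrow\bigoplus_{i\in I}U_i^*,
\end{equation}
where the first arrow restricts a functional to $U_I=\spa\bigcup_{i\in I}U_i$ and the second restricts further. The first arrow is surjective (over a field every subspace has a complement, so any functional on $U_I$ extends to $V$), and the second is injective (a functional on $U_I$ is determined by its values on the spanning set $\bigcup_{i\in I}U_i$). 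Hence the law of $(X_i)_{i\in I}$ is the pushforward of the uniform law on $V^*$ along a surjective linear map onto $U_I^*$, i.e.\ the uniform distribution on the $q^{\dim U_I}$ elements of $U_I^*$. Therefore $H(X_I)=\log q^{\dim U_I}=(\log q)\dim U_I$, with both sides equal to $0$ for $I=\emptyset$ since $U_\emptyset=\{0\}$.

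Reading this off for every $I\subset[n]$ gives $h(X)=(\log q)\,r(U)$, so the claim holds with $\alpha=1/\log q$. There is no real obstacle here; the only point needing a moment's care is the factorisation step, i.e.\ identifying a compatible family of functionals on the $U_i$ with a single functional on their span, and this is immediate from linearity. As a minor variant one may instead take $X_i=\Phi+U_i^\perp\in V^*/U_i^\perp\cong U_i^*$, which makes the surjectivity/injectivity bookkeeping even more transparent, or work directly with $V$ in place of $V^*$ after fixing a nondegenerate bilinear form; the resulting random variables are the same up to relabelling.
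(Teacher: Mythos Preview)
Your proof is correct and follows essentially the same construction as the paper: take a uniformly distributed functional $\Phi\in V^*$ and set $X_i=\Phi|_{U_i}$, then observe that $(X_i)_{i\in I}$ is uniform on a set of size $q^{\dim U_I}$, giving $h(X)=(\log q)\,r(U)$. You are slightly more careful than the paper in two places---the reduction to finite-dimensional $V$, and the explicit factorisation showing that the tuple $(X_i)_{i\in I}$ carries exactly the information of $\Phi|_{U_I}$---but the underlying idea is identical.
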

\begin{proof}
 Let $X$ be a random variable uniformly distributed on $V^*$, the dual space of $V$. Then we define
 \begin{equation}
  X_i=X|_{U_i},
 \end{equation}
 i.e.\ the restriction of the random functional $X$ to $U_i$. This yields
 \begin{equation}
  X_I=X|_{U_I}
 \end{equation}
and therefore
\begin{equation}
 h(X)=\log(|F|)r(U),
\end{equation}
as $X_I$ is uniformly distributed on $U_I$.
\end{proof}

This shows that the cone generated by all rank function is contained in the classical entropy cone, i.e.
\begin{equation}
 \Lambda_n\subset \overline \Sigma_n.
\end{equation}
However, there are inequalities respected by rank functions that are violated by entropies \cite{dougherty2009linear,chan2011truncation}, making the inclusion strict,

\begin{equation}
 \Lambda_n\subsetneq \overline \Sigma_n.
\end{equation}
These inequalities are called \emph{linear rank inequalities}.

\paragraph{} The cone $\Lambda_n$ plays an important role in network coding as it determines the capacities achievable by \emph{linear codes}. This class of codes has many advantages over general network codes, as encoding, decoding and construction can be done efficiently \cite{yeung2008information}. The existence of linear rank inequalities violated by Shannon entropies and explicit counterexamples \cite{dougherty2005insufficiency} show, however, that linear codes cannot achieve the maximal possible rates.

\paragraph{} Let us connect the dots by investigating the relationship between stabilizer entropies and subspace ranks. The following Lemma was stated without proof in \cite{chan2011truncation} with a forward reference to a publication that was not published afterwards, neither was I able to get information about it's status upon request \cite{Chandidntanswer}. In the following $V_1+V_2=\left\{v_1+v_2|v_1\in V_1, v_2\in V_2\right\}=\spa(V_1, V_2)$ denotes the Minkowski sum of two subspaces, and we also use the sum symbol $\sum$ for this concept.
\begin{lem}\label{vectolinear}
	Given a finite field $\mathbb{F}$, a vector space $V$ over $\mathbb{F}$ and a vector $h=\left(h_I\right)_{I\subset[n]}\in V_n$, the following two statements are equivalent:
	\begin{itemize}
		\item[(i)] There exists a collection $(V_i)_{i\in[n]}$ of subspaces of $V^*$ such that $h_I=\dim\sum_{i\in I}V_i$
		\item[(ii)] There exists a collection $(W_i)_{i\in[n]}$ of subspaces of $V$ such that $h_I=\log_{|\mathbb{F}|}\left(\frac{|V|}{|\bigcap_{i\in I}W_i|}\right)$.
	\end{itemize}

\end{lem}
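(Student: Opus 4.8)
The plan is to exploit the standard duality between a finite-dimensional vector space and its dual via annihilators. For a subspace $W\subset V$ write $W^\circ=\{f\in V^*\mid f(w)=0\ \forall w\in W\}\subset V^*$ for its annihilator. First I would record the three elementary facts I need: (a) $\dim W^\circ=\dim V-\dim W$; (b) annihilation is inclusion-reversing and, crucially, turns intersections into sums, $\left(\bigcap_{i\in I}W_i\right)^\circ=\sum_{i\in I}W_i^\circ$; and (c) under the canonical isomorphism $V\cong V^{**}$ one has $(W^\circ)^\circ=W$, so $W\mapsto W^\circ$ is an inclusion-reversing bijection between the subspace lattices of $V$ and $V^*$. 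All three are standard and hold precisely because $V$ is finite dimensional; I would prove (b) by observing that a functional vanishes on $\bigcap_{i}W_i$ iff it lies in $\sum_{i}W_i^\circ$ once one checks, by a dimension count, that both sides coincide.

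Then, given (ii) with subspaces $W_i\subset V$, I would set $V_i:=W_i^\circ\subset V^*$. By (b), $\sum_{i\in I}V_i=\left(\bigcap_{i\in I}W_i\right)^\circ$, so (a) gives $\dim\sum_{i\in I}V_i=\dim V-\dim\bigcap_{i\in I}W_i$. Since $|\bigcap_{i\in I}W_i|=|\mathbb F|^{\dim\bigcap_{i\in I}W_i}$ and $|V|=|\mathbb F|^{\dim V}$, taking $\log_{|\mathbb F|}$ of $|V|/|\bigcap_{i\in I}W_i|$ yields exactly $\dim\sum_{i\in I}V_i$, hence $h_I=\dim\sum_{i\in I}V_i$ and (i) holds. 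Conversely, given (i) with $V_i\subset V^*$, I would set $W_i:=V_i^\circ\subset V^{**}\cong V$; then by (b) applied inside $V^*$, $\bigcap_{i\in I}W_i=\left(\sum_{i\in I}V_i\right)^\circ$, so $\dim\bigcap_{i\in I}W_i=\dim V^*-\dim\sum_{i\in I}V_i=\dim V-h_I$, and reversing the computation above gives $\log_{|\mathbb F|}\!\big(|V|/|\bigcap_{i\in I}W_i|\big)=h_I$, i.e.\ (ii).

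For bookkeeping I would treat the edge case $I=\emptyset$ separately: the empty sum of subspaces is $\{0\}$ and the empty intersection is the whole space, so both sides read $0$, consistent with the conventions $\dim\emptyset=0$ and $h_\emptyset=0$; and the identification $V\cong V^{**}$ is harmless since $\dim V<\infty$. There is no serious obstacle here — the only points requiring a little care are fact (b), where the finiteness of the dimension is genuinely used, and the translation between the multiplicative ``$|V|/|\bigcap_i W_i|$'' bookkeeping of the subgroup/stabilizer models (cf.\ Theorem \ref{classmodel}) and the additive ``$\dim\sum_i V_i$'' bookkeeping of rank functions, which is just the observation that over $\mathbb F$ a subspace of dimension $k$ has $|\mathbb F|^k$ elements.
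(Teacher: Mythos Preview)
Your proof is correct and follows essentially the same route as the paper: both arguments rest on the annihilator correspondence $W\mapsto W^\circ$, the dimension formula $\dim W^\circ=\dim V-\dim W$, and the identity $(\bigcap_i W_i)^\circ=\sum_i W_i^\circ$, together with $|U|=|\mathbb F|^{\dim U}$. You are somewhat more careful than the paper in spelling out both directions, the $I=\emptyset$ case, and the use of $V\cong V^{**}$, but the underlying idea is identical.
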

\begin{proof}
	Let $U^o=\left\{f\in V^*|f(v)=0\,\, \forall v\in U\right\}\subset V^*$ denote the \emph{annihilator} of a subspace $U\subset V$. Then $\dim U+\dim U^o=\dim V$ and for any subspaces $U_1,U_2\subset V$ we have $\left(U_1\cap U_2\right)^o=U_1^o+U_2^o$. Furthermore, the cardinality of any subspace $U\subset V$ is $|U|=|\mathbb{F}|^{\dim U}$, therefore
	\begin{eqnarray}
		\log_{|\mathbb{F}|}\left(\frac{|V|}{|\bigcap_{i\in I}W_i|}\right)&=&\log_{|\mathbb{F}|}|V|-\log_{|\mathbb{F}|}|\bigcap_{i\in I}W_i|\nonumber\\
		&=&\dim V-\dim \bigcap_{i\in I}W_i=\dim\left(\bigcap_{i\in I}W_i\right)^o=\dim\sum_{i\in I}W_i^o.
	\end{eqnarray}
\end{proof}
With the help of the above result we can give a partial answer to a question posed by Linden, Matu\v{s}, Ruskai and Winter in \cite{linden2013quantum}: Do the entropy vectors of stabilizer states respect all linear rank inequalities? Below, we prove that the entropies of stabilizer states with square free local dimension respect all balanced linear rank inequalities. This result together with the results in \cite{gross2013stabilizer, linden2013quantum} is a bit disappointing, as it shows, that stabilizer states are, also from an entropic perspective, too simple to provide a model for general quantum states. On the other hand, this adds to the existing evidence that stabilizer codes should be thought of as quantum analogues of linear codes.
\begin{thm}\label{stablinprime}
	The entropy vectors of stabilizer states with a vector space as a phase space respect all balanced linear rank inequalities.
\end{thm}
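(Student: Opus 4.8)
The plan is to combine the classical model of Gross and Walter (Theorem~\ref{classmodel}) with the annihilator duality of Lemma~\ref{vectolinear}. The crucial point is that when the phase space $V=\Z_d^{2n}$ is a vector space, i.e.\ when $d$ is prime, the random variable $X\sim\mathbf{U}(M^\omega)$ attached to a stabilizer state $\rho=\rho(M)$ has an entropy vector that is (a positive multiple of) a genuine \emph{rank vector} over the field $\Z_d$; hence $h(X)\in\Lambda_n$ already satisfies every linear rank inequality, and the statement about $s(\rho)$ follows once the shift by $|I|$ is accounted for, which is exactly what balancedness removes.

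First I would fix a prime $d$, an isotropic subspace $M\subset V=\Z_d^{2n}$, and the corresponding stabilizer state $\rho$. By Theorem~\ref{classmodel} the uniformly distributed $X\sim\mathbf{U}(M^\omega)$ satisfies $H(X_I)=S(\rho_I)+|I|$ and $H(X_I)=\log_{|\Z_d|}\frac{|M^\omega|}{|\ker(\pi_I)\cap M^\omega|}$ for every $I\subset[n]$. Since $\ker\pi_I=V_{I^c}=\bigcap_{i\in I}V_{\{i\}^c}$, I would put $W_i:=V_{\{i\}^c}\cap M^\omega$, which is a subspace of the $\Z_d$-vector space $M^\omega$ and satisfies $\ker(\pi_I)\cap M^\omega=\bigcap_{i\in I}W_i$. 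Thus
\[
  H(X_I)=\log_{|\Z_d|}\frac{|M^\omega|}{\left|\bigcap_{i\in I}W_i\right|},
\]
which is precisely condition (ii) of Lemma~\ref{vectolinear} with ambient vector space $M^\omega$ over $\mathbb F=\Z_d$. The lemma then produces a collection $(U_i)_{i\in[n]}$ of subspaces of $(M^\omega)^*$ with $H(X_I)=\dim\sum_{i\in I}U_i=r(U)_I$, so $h(X)=r(U)\in\Lambda_n$ (using that $\Lambda_n$ is scale invariant, in case a different logarithm base is preferred).

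Next I would transfer inequalities. Let $f\in\Lambda_n^*\cap B_n$ be an arbitrary balanced linear rank inequality. Then $f(h(X))\ge 0$ since $h(X)\in\Lambda_n$. On the other hand $s(\rho)=h(X)-u$ with $u_I:=|I|=\sum_{i\in[n]}v^{(i)}_I$ and $v^{(i)}_I=|\{i\}\cap I|$, so $f(u)=\sum_{i\in[n]}f(v^{(i)})=\sum_{i\in[n]}r_i(f)=0$ because $f$ is balanced. Hence $f(s(\rho))=f(h(X))-f(u)=f(h(X))\ge 0$. Since $\rho$ was an arbitrary stabilizer state with prime local dimension and $\Gamma_n^{\mathrm{stab}}$ is the conic hull of such entropy vectors, $f$ is nonnegative on all of $\Gamma_n^{\mathrm{stab}}$, as claimed.

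The substantive work is already carried out in Lemma~\ref{vectolinear} and Theorem~\ref{classmodel}, so I do not expect a real obstacle; the two points needing care are (a) invoking Lemma~\ref{vectolinear} with the smaller ambient space $M^\omega$ rather than $V$, which relies on the identity $\ker(\pi_I)\cap M^\omega=\bigcap_{i\in I}W_i$, and (b) the bookkeeping that the $|I|$-shift between $H(X_I)$ and $S(\rho_I)$ is annihilated by $f$ exactly because $f$ has vanishing residual weights. This also makes transparent why the theorem concerns \emph{balanced} rank inequalities, and why primality of $d$ enters only through the field structure of $\Z_d$ -- which is what turns the Gross--Walter model into a \emph{linear} construction rather than a merely group-theoretic one.
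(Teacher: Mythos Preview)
Your proof is correct and follows essentially the same route as the paper's: both arguments feed the Gross--Walter classical model (Theorem~\ref{classmodel}) into the annihilator duality of Lemma~\ref{vectolinear} via the identity $\ker(\pi_I)\cap M^\omega=\bigcap_{i\in I}\bigl(\ker(\pi_i)\cap M^\omega\bigr)$, and then observe that the $|I|$-shift is killed by balancedness. Your write-up is slightly more explicit about the residual-weight computation $f(u)=\sum_i r_i(f)=0$, but the underlying argument is identical.
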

\begin{proof}
	Let $\rho$ be such a stabilizer state. Then, according to Theorem \ref{classmodel} 
	\begin{equation}
		S(\rho_I)+|I|=\log\frac{|M^\omega|}{|\ker(\pi_I)\cap M^\omega|},
	\end{equation}
	where $M$ is the isotropic subspace corresponding to $\rho$. But $\ker(\pi_I)\cap M^\omega=\left(\bigcap_i\ker(\pi_i)\right)\cap M^\omega=\bigcap_i\left(\ker(\pi_i)\cap M^\omega\right)$, so according to Proposition \ref{vectolinear} that implies
	\begin{equation}
		S(\rho_I)+|I|=\dim\sum_{i\in I}\left(\ker(\pi_i)\cap M^\omega\right)^o,
	\end{equation}
	where ${}^o$ denotes the annihilator in $\left(M^\omega\right)^*$. As the correction terms proportional to $|I|$ cancel in a balanced functional, balanced inequalities that hold for subspace ranks remain valid for the entropy vectors of stabilizer states with vector phase space.
\end{proof}

Generalizing Proposition \ref{vectolinear} to modules seems difficult, as the dimension of a submodule might be ill-defined even if the supermodule is free. Also a similar statement only involving cardinalities is problematic, as $A\cap B\subset \left(A^o+B^o\right)^o$, but in general they are not equal.

It turns out though that submodules of free $R$-modules for $R=\Z_d$ with $d$ square free always have a direct sum decomposition into modules of prime order. With the help of this result we can reduce the case of square free phase space dimension stabilizers to the prime dimension case. The following results extend the Observation from \cite{appleby2012monomial} that the Heisenberg-Weyl-Group factorizes for square-free dimensions by the Fact that also stabilizer states factorize. The Lemma below is corollary of B\`{e}zout's Lemma:
\begin{lem}\label{bizlem}
	Let $z_1,...,z_k\in \Z$ pairwise coprime, $d=\prod_{i=1}^k z_i$ and $r_i=\frac{d}{z_i}$. Then there exist $a_i\in\Z$ such that
	\begin{equation}\label{biz}
	\sum_{i=1}^k a_i r_i=1.
	\end{equation}
\end{lem}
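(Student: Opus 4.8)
The plan is to reduce the claim to the standard (two-integer) Bézout identity by showing that the integers $r_1,\dots,r_k$ have greatest common divisor $1$, and then invoking the generalized Bézout identity, which says that for any integers $n_1,\dots,n_k$ the subgroup $\Z n_1 + \dots + \Z n_k \subset \Z$ equals $\Z\gcd(n_1,\dots,n_k)$; in particular, if the gcd is $1$ then $1$ is an integer linear combination of the $n_i$. (This generalized statement follows from the two-variable case by an easy induction on $k$, or simply from the fact that $\Z$ is a principal ideal domain.)

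First I would record the structural observation that $z_i \mid r_j$ whenever $i \neq j$: indeed $r_j = \prod_{l\neq j} z_l$ and the factor $z_i$ appears in this product when $i\neq j$. Next, suppose for contradiction that some prime $p$ divides all of $r_1,\dots,r_k$. From $p \mid r_1 = \prod_{l\neq 1} z_l$ and the fact that $p$ is prime, we get $p \mid z_m$ for some $m \neq 1$. But the $z_i$ are pairwise coprime, so $p \nmid z_l$ for every $l \neq m$, hence $p$ divides none of the factors of $r_m = \prod_{l\neq m} z_l$, so $p \nmid r_m$ — contradicting the assumption. Therefore no prime divides all the $r_i$, i.e. $\gcd(r_1,\dots,r_k)=1$.

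Combining the two steps: by the generalized Bézout identity there exist $a_1,\dots,a_k\in\Z$ with $\sum_{i=1}^k a_i r_i = 1$, which is exactly \eqref{biz}. I would also dispatch the degenerate small cases explicitly if desired — for $k=1$ one has $r_1=1$ and $a_1=1$ works, and for $k=2$ the identity $a_1 z_2 + a_2 z_1 = 1$ is literally the usual Bézout relation for the coprime pair $z_1,z_2$ — but these are subsumed by the general argument.

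I do not expect any real obstacle here: the only mild subtlety is making sure the "Bézout" being quoted is the multi-integer version, which is why I would state it explicitly (or derive it in one line from principality of ideals in $\Z$) before applying it. Everything else is the elementary gcd computation above.
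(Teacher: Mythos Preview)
Your argument is correct, but it takes a different route from the paper. The paper proceeds by induction on $k$: at the inductive step it applies the ordinary two-integer B\'ezout lemma to the coprime pair $z_{k-1},z_k$ to obtain $a z_{k-1}+b z_k=1$, multiplies through by $d/(z_{k-1}z_k)$ to get $b r_{k-1}+a r_k = d/(z_{k-1}z_k)$, and then treats $z_{k-1}z_k$ as a single new factor so that the induction hypothesis for $k-1$ pairwise coprime integers applies. You instead argue directly that $\gcd(r_1,\dots,r_k)=1$ by a prime-divisor contradiction and then invoke the multi-integer B\'ezout identity in one stroke. Your approach is cleaner and more conceptual, and the gcd computation is a nice structural observation; the paper's approach has the minor advantage of being fully self-contained, using only the two-variable B\'ezout statement and producing an explicit recursive description of the coefficients $a_i$. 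Both are perfectly valid for this elementary lemma.
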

\begin{proof}
	We prove the lemma by induction over $k$. If $k=1$ then $r_1=\frac{z_1}{z_1}=1$ so $a_1=1$ does the trick. Suppose now the lemma were true up to $k-1$. As $z_{k-1}$ and $z_k$ are coprime, according to B\`{e}zout's lemma there exist $a,b\in \Z$ such that $a z_{k-1}+b z_k=1$. Then $b r_{k-1}+a r_k=\frac{d}{z_{k-1}z_k}$. As the numbers $z_j$ are pairwise coprime, so are $z_{k-1}'=z_{k-1}z_k$ and any $z_j$ with $j\not=k-1,k$, and thus, according to the induction hypothesis, we can find numbers $a_i'$ for $i=1,...,k-1$ such that $\sum_{i=1}^{k-2}a_i'z_i+a'_{k-1}z_{k-1}'=1$. So $a_i=a_i'$ for $i=1,...,k-2$, $a_{k-1}=a_{k-1}'b$ and $a_k=a_{k-1}'a$ fulfill \eqref{biz}
\end{proof}
Using this lemma we can prove the above mentioned submodule decomposition.
\begin{prop}\label{squarefreedec}
	Let $d=\prod_{i=1}^k p_i$ with $p_i$ distinct primes and $M\subset \Z_d^n\cong \Z^n_{p_1}\times ... \times \Z^n_{p_k}$ a submodule. Then $M=\bigoplus_{i=1}^k M_i$ such that $M_i\subset\Z_{p_i}^n$.
\end{prop}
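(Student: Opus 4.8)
The plan is to exploit the Chinese Remainder decomposition $\Z_d \cong \Z_{p_1} \times \cdots \times \Z_{p_k}$, or rather the complete system of orthogonal idempotents it produces inside the ring $\Z_d$. Apply Lemma \ref{bizlem} with $z_i = p_i$ and $r_i = d/p_i$ to obtain integers $a_i$ with $\sum_{i=1}^k a_i r_i \equiv 1 \pmod d$, and set $e_i := a_i r_i \bmod d \in \Z_d$. First I would check that the $e_i$ form a complete orthogonal idempotent system: $e_i e_j \equiv 0 \pmod d$ for $i\neq j$ because $r_i r_j = d\cdot\bigl(d/(p_i p_j)\bigr)$ is a multiple of $d$; consequently $e_i^2 = e_i\bigl(1 - \sum_{j\neq i} e_j\bigr) \equiv e_i$; and $\sum_i e_i \equiv 1$ by construction. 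Reducing modulo the individual primes one finds $e_i \equiv 0 \pmod{p_l}$ for $l\neq i$ (since $p_l \mid r_i$) and $e_i \equiv 1 \pmod{p_i}$, so under the identification $\Z_d^n \cong \Z_{p_1}^n\times\cdots\times\Z_{p_k}^n$ multiplication by the scalar $e_i$ is exactly the projection onto the $i$-th factor $\Z_{p_i}^n$.

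With these idempotents in hand the decomposition is immediate. Put $M_i := e_i M$. Since $M$ is a submodule of $\Z_d^n$ it is closed under multiplication by the ring element $e_i$, so $M_i \subseteq M$; and since $e_i$ projects onto the $i$-th factor, $M_i \subseteq \Z_{p_i}^n$. For any $m\in M$ we have $m = 1\cdot m = \bigl(\sum_{i=1}^k e_i\bigr)m = \sum_{i=1}^k e_i m \in \sum_i M_i$, hence $M = \sum_i M_i$. The sum is direct because the ambient sum $\Z_{p_1}^n \oplus \cdots \oplus \Z_{p_k}^n$ is internal, so any family of submodules $M_i \subseteq \Z_{p_i}^n$ intersects the sum of the others trivially; therefore $M = \bigoplus_{i=1}^k M_i$ as claimed. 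As a sanity check one also gets $M_i = M \cap \Z_{p_i}^n$, since $e_i$ fixes $\Z_{p_i}^n$ pointwise.

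There is essentially no hard step here; the only point needing a moment of care is matching the idempotent produced by Lemma \ref{bizlem} with the coordinate description of the isomorphism $\Z_d \cong \prod_i \Z_{p_i}$, i.e.\ verifying that $e_i$ really induces the $i$-th CRT projection, and observing that it is precisely the square-free hypothesis on $d$ that lets us take $z_i = p_i$ (each prime to the first power), which is what splits $M$ into \emph{prime-order} pieces rather than merely prime-power ones. Everything else is the standard argument that idempotents of a product ring split every module over it. For the downstream use in the square-free version of Theorem \ref{stablinprime} one would additionally note that each $e_i$ is central and preserves the symplectic form $\omega$, so the decomposition is $\omega$-orthogonal and each $M_i$ inherits isotropy from $M$ — but that lies outside the statement being proved here.
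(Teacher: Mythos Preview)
Your proof is correct and follows essentially the same route as the paper's. Both invoke Lemma~\ref{bizlem} to write $1\equiv\sum_i a_i r_i\pmod d$ and use the resulting scalars to split $M$ along the CRT factors; the paper defines $M_i=r_iM$ directly, while you package the same scalars as the orthogonal idempotents $e_i=a_ir_i$ and set $M_i=e_iM$, which coincides with $r_iM$ since $e_i$ acts as the identity on $\Z_{p_i}^n$. Your idempotent phrasing is slightly cleaner (the directness of the sum is immediate from $M_i\subseteq\Z_{p_i}^n$), but the substance is identical.
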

\begin{proof}
	Define $r_i=\frac{d}{p_i}$ and the submodules $M_i=r_i M$. Take $x\in M_i\setminus  \{0\}$ then there is a $y\in M$ such that $x=r_i y$ and $p_i x=d y=0\mod d$. According to Lemma \ref{bizlem} there are $a_i\in\Z_d$ such that $\sum_{i=1}^k a_i r_i=1 \mod d$. Therefore $\sum_{i=1}^k a_i M_i=M$, and as obviously $\spa_{\Z_d}\bigcup_{i=1}^k M_k\subset M$ we proved $\spa_{\Z_d}\bigcup_{i=1}^k M_k=M$. But the $M_i$ are submodules and $M_i\cap M_j={0}$, so each element $m\in M$ of $M$ can be decomposed uniquely into a sum $m=\sum_{i=1}^k m_i$ such that $m_i\in M_i$.
\end{proof}
We can prove now that stabilizer states in square free dimensions are tensor products of stabilizer states of prime dimension.
\begin{thm}\label{stabsqfree}
	Let $d=\prod_{i=1}^k p_i$ with $p_i$ distinct primes and $\rho\in \mathcal{B}\left(\left(\C^d\right)^{\otimes n}\right)$ a stabilizer state. Then there is a tensor product structure $\C^d=\bigotimes_{i=1}^k \C^{p_i}$ and stabilizer states $\rho_i\in \mathcal{B}\left(\left(\C^{p_i}\right)^{\otimes n}\right)$ such that $\rho=\bigotimes_{i=1}^k \rho_i$.
\end{thm}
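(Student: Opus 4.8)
The plan is to push everything into the phase-space picture of Theorem~\ref{stabchar} and then feed in the submodule decomposition of Proposition~\ref{squarefreedec}, applied to the $2n$-coordinate module $V=\Z_d^{2n}$ rather than to $\Z_d^n$. Let $\rho$ be the given stabilizer state and $M\subset V$ the isotropic submodule it corresponds to under Theorem~\ref{stabchar}. Proposition~\ref{squarefreedec} (with $2n$ in place of $n$) yields a direct sum decomposition $M=\bigoplus_{i=1}^k M_i$ with $M_i\subset\Z_{p_i}^{2n}$, these summands being placed inside $V$ via the Chinese Remainder isomorphism $\Z_d\cong\Z_{p_1}\times\dots\times\Z_{p_k}$. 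The first, routine step is to check that each $M_i$ is isotropic for the standard symplectic form on $\Z_{p_i}^{2n}$: under CRT the form $\omega$ on $V$ splits as a direct sum of nonzero scalar multiples of the forms $\omega_i$ on the $\Z_{p_i}^{2n}$; the cross terms $\omega(M_i,M_j)$ vanish for $i\neq j$ because the summands $r_iM$ and $r_jM$ pair to $0$ modulo $d$; and $\omega(M,M)=\{0\}$ then forces $\omega_i(M_i,M_i)=\{0\}$ componentwise.

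Next I would set up the compatible tensor factorisation of the Hilbert space. Writing $\mathcal{H}=L_2(\C,\Z_d^n)$ and using $\Z_d^n\cong\prod_{i=1}^k\Z_{p_i}^n$, one obtains a canonical identification $\mathcal{H}\cong\bigotimes_{i=1}^k L_2(\C,\Z_{p_i}^n)\cong\bigotimes_{i=1}^k\left(\C^{p_i}\right)^{\otimes n}$, after regrouping tensor factors so that the $n$ copies of $\C^{p_i}$ sit together. The key point --- which is exactly the square-free factorisation of the Heisenberg--Weyl group recorded in \cite{appleby2012monomial} --- is that under this identification the Weyl operators multiply through: for $m=\sum_i m_i\in V$ with $m_i\in\Z_{p_i}^{2n}$ one has $w(m)=\zeta_m\bigotimes_{i=1}^k w(m_i)$ for a phase $\zeta_m$, arising from the decomposition of the character $\chi_d$ and of the $\tau_{2d}$ cocycle over the CRT product. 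I expect this bookkeeping --- in particular tracking the $\tau_{2d}$ phases and the sign convention needed when $2\mid d$ (at most one $p_i$ equals $2$, so it contributes a single harmless global phase) --- to be the main obstacle, or at least the only place where genuine care is required.

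With the Weyl operators factorised, the conclusion is quick. Let $\sigma_i=\rho(M_i)$ be the canonical stabilizer state on $\left(\C^{p_i}\right)^{\otimes n}$ attached to $M_i$ by Theorem~\ref{stabchar}, and set $\sigma=\bigotimes_{i=1}^k\sigma_i$. Because the Weyl operators factorise, the stabilizer group of $\sigma$ is the product of the stabilizer groups of the $\sigma_i$, which corresponds under Theorem~\ref{stabchar} precisely to the submodule $\bigoplus_i M_i=M$; hence $\sigma$ is a stabilizer state on $\left(\C^d\right)^{\otimes n}$ with associated module $M$. By the one-to-one correspondence in Theorem~\ref{stabchar}, $\rho$ and $\sigma$ lie in the same equivalence class, so $\rho=w(v)\,\sigma\,w(v)^\dagger$ for some $v\in V$. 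Writing $v=\sum_i v_i$ with $v_i\in\Z_{p_i}^{2n}$ and using $w(v)=\zeta_v\bigotimes_i w(v_i)$, the conjugation acts factorwise and the phase $\zeta_v$ cancels, so
\begin{equation}
	\rho=\bigotimes_{i=1}^k\left(w(v_i)\,\sigma_i\,w(v_i)^\dagger\right)=:\bigotimes_{i=1}^k\rho_i ,
\end{equation}
and each $\rho_i$, being a Weyl conjugate of the stabilizer state $\sigma_i$, is again a stabilizer state on $\left(\C^{p_i}\right)^{\otimes n}$. This is the asserted decomposition. (One can avoid the equivalence step altogether by writing $\rho=\sum_{m\in M}\mu_m w(m)$ as in Theorem~\ref{stabchar}, substituting $w(m)=\zeta_m\bigotimes_i w(m_i)$, and observing that the phases $\mu_m\zeta_m$ must factorise over the CRT product because $\rho$ is a projector supported on $M$; the equivalence argument merely packages this more cleanly.)
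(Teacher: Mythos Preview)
Your proposal is correct and follows essentially the same strategy as the paper: decompose the isotropic module $M$ via Proposition~\ref{squarefreedec}, then use the CRT factorisation of the Heisenberg--Weyl group (the paper cites \cite{appleby2012monomial} for this and phrases it as an abstract ``representations of a direct product are tensor products'' argument) to split the stabilizer state as a tensor product. You are in fact more careful than the paper in two places it glosses over --- the explicit isotropy check for each $M_i$, and the passage from the equivalence class of Theorem~\ref{stabchar} back to the individual state $\rho$ via a factorised Weyl conjugation --- but these are refinements of the same proof, not a different route.
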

\begin{proof}

	Let $M\subset \Z_d^{2n}$ be the isotropic submodule corresponding to $\rho$ according to Theorem \ref{stabchar}. Then, according to Proposition \ref{squarefreedec} we can write 
	\begin{equation}
	\rho =\frac{1}{p_1^n...p_k^n}\sum_{m_1\in M_1,...,m_k\in M_k}\prod_{i=1}^k \mu_{m_i}w(m_i).
	\end{equation}
	As the stabilizer group is a representation of the additive group of $M=\bigoplus_{i=1}^k M_i$, which is a direct product of the additive groups of the $M_i$, it is well known that there exist representations of the $M_i$ such that the representation of the product group is their tensor product, i.e.\ there exists a tensor product structure $\C^d=\bigotimes_{i=1}^k \C^{p_i}$ where $w(m_i)$ acts on the $i$th tensor factor for $m_i\in M_i\subset M$. Hence we have
	\begin{equation}
		\rho=\bigotimes_{i=1}^k \frac{1}{p_k^n}\sum_{m_i\in M_i}\mu_{m_i}w(m_i)\in \mathcal{B}\left(\bigotimes_{i=1}^k \C^{p_i}\right),
	\end{equation}
	which is the desired tensor product decomposition.
\end{proof}
As the entropies of the factors of a product state are additive, Corollary \ref{stablinprime} generalizes to square free dimensions:
\begin{cor}
	The entropy vectors of stabilizer states with square free local dimension respect all balanced linear rank inequalities.
\end{cor}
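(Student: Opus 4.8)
The plan is to combine the two principal results of this section with the additivity of the entropy vector under tensor products. First I would take an arbitrary stabilizer state $\rho\in\mathcal{B}\left(\left(\C^d\right)^{\otimes n}\right)$ whose local dimension factors as $d=\prod_{i=1}^k p_i$ into distinct primes. By Theorem~\ref{stabsqfree} there is a tensor product structure $\C^d=\bigotimes_{i=1}^k\C^{p_i}$ and stabilizer states $\rho_i\in\mathcal{B}\left(\left(\C^{p_i}\right)^{\otimes n}\right)$ with $\rho=\bigotimes_{i=1}^k\rho_i$. Since for every $I\subset[n]$ the reduction satisfies $\rho_I=\bigotimes_{i=1}^k(\rho_i)_I$, and the von Neumann entropy of a tensor product is the sum of the entropies of the factors, the entropy vectors add: $s(\rho)=\sum_{i=1}^k s(\rho_i)$.

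Next I would invoke the prime-dimension case. Each factor $\rho_i$ is a stabilizer state on $n$ systems of prime local dimension $p_i$, so its finite phase space $\Z_{p_i}^{2n}$ is genuinely a vector space over the field $\Z_{p_i}$; Theorem~\ref{stablinprime} then applies to each $\rho_i$ and gives that $s(\rho_i)$ satisfies every balanced linear rank inequality.

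Finally I would note that a single balanced linear rank inequality is a linear condition $f(v)\ge 0$ for some $f\in B_n$ that is nonnegative on $\Lambda_n$, and the vectors it cuts out form an additive set (a homogeneous half-space). Hence from $f(s(\rho_i))\ge 0$ for all $i$ one gets $f(s(\rho))=f\!\left(\sum_{i=1}^k s(\rho_i)\right)=\sum_{i=1}^k f(s(\rho_i))\ge 0$, which is exactly the assertion of the corollary.

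I do not expect any real obstacle here: all the substantive work — the square-free submodule decomposition (Proposition~\ref{squarefreedec}), the ensuing factorisation of stabilizer states (Theorem~\ref{stabsqfree}), the identification of the symplectic-complement model with a subspace-rank function via annihilators (Proposition~\ref{vectolinear}), and the prime-dimension statement (Theorem~\ref{stablinprime}) — has already been established above, so the corollary is essentially a two-line assembly. The only point worth a remark is that the property ``satisfies every balanced linear rank inequality'' is stable under adding entropy vectors, which is immediate from the linearity of the defining functionals.
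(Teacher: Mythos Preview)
Your proposal is correct and follows essentially the same route as the paper: factor the stabilizer state into prime-dimension tensor factors via Theorem~\ref{stabsqfree}, use additivity of the von Neumann entropy to write $s(\rho)=\sum_i s(\rho_i)$, apply Theorem~\ref{stablinprime} to each summand, and conclude by linearity of the inequality functional. The paper's proof is just the two-line version of what you wrote.
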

\begin{proof}
	Using the fact that $S(\rho_1\otimes\rho_2)=S(\rho_1)+S(\rho_2)$ and Theorem \ref{stabsqfree}, we find that the entropy vector of any stabilizer state $\rho\in \mathcal{B}\left(\left(\C^d\right)^{\otimes n}\right)$ with $d=\prod_{i=1}^k p_i$ is
	\begin{equation}\label{entsum}
		S(\rho)=\sum_{i=1}^kS(\rho_i)
	\end{equation}
	where $\rho_i\in\mathcal{B}\left(\left(\C^{p_i}\right)^{\otimes n}\right)$ is a stabilizer state in prime local dimension. But all the summands on the right hand side of \eqref{entsum} respect all balanced linear rank inequalities according to Corollary \ref{stablinprime}, and hence so does the left hand side.
\end{proof}



\chapter{Entropy Vectors and Type Classes}\label{CYR}

The main goal of this section is to better understand the essence of the correspondence between entropy vectors and group sizes proved in \cite{chan2002relation} that was already briefly mentioned in Section \ref{classical}. It turns out that the result can be reformulated using only type classes, without reference to groups. This viewpoint also makes it possible to connect this result to representation theory and find a ``classical analogue'' of the Schur-Weyl decomposition that was briefly discussed in \cite{christandl2006structure} in a non-representation-theoretic way. This provides a relation between representation theory and the classical marginal problem in the spirit of the result from \cite{christandl2006spectra} for the quantum case. To this end we develop a clear understanding how strings and permutation modules are connected. The representation-theoretic formulation yields a formula for the restrictions of irreducible representations of the Unitary group to the symmetric group as a byproduct. The correspondence between the spectrum estimation theorem and the asymptotic equipartition property is also easily made clearer in this framework, and we argue why a simple quantum analogue of \cite{chan2002relation} cannot be expected.

\paragraph{} Throughout this chapter \emph{frequency vectors} of strings play an important role, let us therefore recall their definition.
\begin{defn}[Frequency]
	Let $\mathcal{X}$ be a finite alphabet and $x\in\mathcal{X}^q$ a string. Then we define the \emph{frequency vector} $\mathfrak{f}(x)\in\N^{\mathcal{X}}$ of $x$ by
	\begin{equation}
		\mathfrak{f}(x)_a=\left|\left\{\alpha\in[q]\,\big|\,x_\alpha=a\right\}\right|
	\end{equation}
\end{defn}

Another important notion is that of a

\begin{defn}[Type Class]
	Let $\mathcal{X}$ be a finite alphabet, $x\in\mathcal{X}^n$ a string and $f=\mathfrak{f}(x)$ its frequency vector. Then the $S_n$-orbit generated by $x$ is called the \emph{type class} of $x$, i.e.
	\begin{equation}
		T_f=S_n.x\subset \mathcal{X}^n.
	\end{equation}
\end{defn}
The type class has an index $f$ instead of $x$ because strings with the same frequency vector generate the same type class. The size of the type class is
\begin{equation}\label{typeclasssize}
	|T_f|=\frac{\ell(\lambda)!}{\lambda!}=\frac{\ell(\lambda)!}{\prod_i\lambda_i!},\ \lambda=\sh(f),
\end{equation}

\section{Type Class Characterization of Entropy Vectors}

In this section we recast the main theorem from \cite{chan2002relation} purely in terms of type classes.
Let us first recall the original theorem that connects group sizes and Entropy vectors, which was already stated in Section \ref{classinfo}:
\begin{thm}[\cite{chan2002relation}]\label{chanyeung}
	Let $X=(X_i)_{i\in [n]}$ be an $n$-partite random variable. Then there exits a sequence of tuples of finite groups $(G,G_1, G_2,...,G_n)_k, k\in\N$ with $G_i\subset G$ subgroups, such that
	\begin{equation}
		H(X_I)=\lim_{k\to\infty}\frac{1}{k}\log\frac{|G|}{\left|G_I\right|}\, \forall I\subset[n]
	\end{equation}
	where $G_I=\bigcap_{i\in I}G_i$.
	Conversely, for any group tuple $(G,G_1, G_2,...,G_n)$ there is a random variable $Y=(Y_I)_{I\subset [n]}$ such that
	\begin{equation}
		H(Y_I)=\log\frac{|G|}{\left|G_I\right|}
	\end{equation}
\end{thm}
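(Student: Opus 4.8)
The plan is to prove the Chan--Yeung correspondence (Theorem \ref{chanyeung}) by the two classical constructions: (i) groups from a random variable via a large i.i.d.\ sample, and (ii) a random variable from a group tuple via the uniform distribution on cosets. I would state and prove the ``converse'' direction first, since it is elementary and motivates the other.

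\textbf{The group-to-random-variable direction.} Given $(G,G_1,\dots,G_n)$, let $\Omega$ be the uniform probability space on $G$ and define $X_i(g)=gG_i\in G/G_i$, so $X_i$ is uniform on the coset space $G/G_i$. For $I\subset[n]$ the joint variable $X_I=(X_i)_{i\in I}$ records the tuple of cosets $(gG_i)_{i\in I}$; two elements $g,g'$ give the same value iff $g^{-1}g'\in\bigcap_{i\in I}G_i=G_I$, so the fibers of $X_I$ are exactly the left cosets of $G_I$, each of size $|G_I|$. Hence $X_I$ is uniform on a set of $|G|/|G_I|$ points and $H(X_I)=\log\frac{|G|}{|G_I|}$, which is the claimed formula. (Here I use the convention $G_\emptyset=G$ so that $H(X_\emptyset)=0$.)

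\textbf{The random-variable-to-group direction.} Given $X=(X_i)_{i\in[n]}$ with distribution $p$ on $\mathcal{X}^n$, fix $k$ and consider the sample space $\mathcal{X}^{nk}$, viewed as $k$ independent copies; let $G=G^{(k)}$ be the symmetric group $S_k$ acting by permuting the $k$ coordinates of a length-$k$ string over $\mathcal{X}$ (acting simultaneously on all $n$ letters). For each realization choose a string $x^{(i)}\in\mathcal{X}^k$ that is a ``typical'' string for the marginal $p_{X_i}$ --- more precisely, for $I\subset[n]$ pick a single string $y^{(I)}\in(\mathcal{X}^{|I|})^k$ whose frequency vector is (as close as integer arithmetic allows to) $k$ times the distribution $p_{X_I}$, and define the subgroup $G_I^{(k)}=\mathrm{Stab}_{S_k}(y^{(I)})$, the stabilizer of that string under coordinate permutation. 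The stabilizer of a string with frequency vector $f$ is the Young subgroup $S_f=\prod_a S_{f_a}$, so by \eqref{typeclasssize} the index $[S_k:S_{I}^{(k)}]=|T_{f^{(I)}}|=k!/\prod_a f^{(I)}_a!$ equals the size of the type class, and Stirling's formula gives $\frac1k\log\frac{|G^{(k)}|}{|G_I^{(k)}|}\to H(p_{X_I})$. The one genuine technical point is that the $G_i^{(k)}$ for different $i$ must be chosen \emph{compatibly}: the $y^{(I)}$ should be obtained from a common ``jointly typical'' array $Y\in(\mathcal{X}^n)^k$ with joint frequency vector close to $kp$, by deleting the coordinates outside $I$; then automatically $G_I^{(k)}=\bigcap_{i\in I}G_i^{(k)}$ because a permutation fixes the restricted array to $I$ iff it fixes each of the $|I|$ letter-components, i.e.\ iff it lies in every $G_i^{(k)}$. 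The convergence to $H(X_I)$ for all $I$ simultaneously then follows from Stirling applied to each of the finitely many type-class sizes.

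\textbf{Main obstacle.} The delicate part is arranging the \emph{single} jointly typical array $Y$ so that (a) its joint frequency vector is within $o(k)$ of $kp$ in every coordinate --- which needs the integer rounding of $kp(x)$ to be done consistently --- and (b) the resulting intersection-of-stabilizers identity $G_I=\bigcap_{i\in I}G_i$ holds \emph{exactly}, not just asymptotically. Both are handled by the same move: take $Y$ to be any fixed string whose columns realize the rounded empirical distribution of $p$; then $G_i$ is the stabilizer of the $i$-th row, intersection of stabilizers is the stabilizer of the subarray, and the type-class size formula \eqref{typeclasssize} converts everything to multinomial coefficients whose logarithms Stirling pins down to $kH(p_{X_I})+O(\log k)$. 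Dividing by $k$ and letting $k\to\infty$ finishes the proof; this is precisely the point where the type-class language (to be developed in the next section) makes the argument transparent, which is the reformulation the chapter is after.
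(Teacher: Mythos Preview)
Your proposal is correct and follows essentially the same route as the paper's derivation in Chapter~\ref{CYR}: the group is the symmetric group $S_k$ (the paper uses $S_{qk}$), the subgroups $G_i$ are the stabilizers of the single-letter marginal strings of one fixed jointly typical array, the key identity $G_I=\bigcap_{i\in I}G_i$ is exactly the observation that a permutation fixes $x_I$ iff it fixes each $x_i$, and the orbit--stabilizer theorem together with Stirling converts type-class sizes to entropies.

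The only methodological difference worth noting is how the integrality issue is handled. You round the frequencies of $Y$ to the nearest integers and absorb the discrepancy into an $O(\log k)$ error; the paper instead first restricts to \emph{rational} distributions $p$ with denominator $q$ (Theorem~\ref{grouplesscy}), so that $kqp$ is exactly integral for every $k$ and no rounding is needed, with irrational $p$ handled implicitly by continuity and density. Both work, but the rational-denominator trick is slightly cleaner and is the device the chapter uses throughout. You also spell out the converse (group-to-random-variable) direction via the uniform-on-cosets construction, which the paper simply cites from \cite{chan2002relation} without reproducing; your argument there is the standard one and is fine.
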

The proof uses a construction of Young subgroups that are the symmetry groups of the (joint and marginal) type classes of $X$. It turns out that this can be formulated without reference to group theory, in fact, looking closely at their proof we see that they actually connect type class sizes and group sizes, the entropies appear as a mere corollary.

\paragraph{}Suppose now the alphabet $\mathcal{X}=\mathcal{A}^n$ is a product alphabet. Then for each subset $I\subset[n]$ we define the \emph{marginal string} $x_I$ of a string $x\in\mathcal{X}^q$ by setting
\begin{equation}
	(x_I)_\alpha=\left((x_\alpha)_i\right)_{i\in I}.
\end{equation}
This definition makes sense because it is compatible with the definition of $X_I$ for a composite random variable $X=(X_i)_{i\in[n]}$ where the $X_i$ all have values on the same finite alphabet $\mathcal{A}$. More precisely let $Z=X^n$, then $Z_I=\left(X_I\right)^n$, where $Z_I$ is the string marginal of $Z$ and $X_I$ as usual.

\paragraph{}Now observe that each type class defines a rational probability distribution, the so called \emph{empirical distribution} $p=\frac{f}{n}$. Conversely to each rational probability distribution $p\in\mathcal P^{d}\cap\Q^d$ with denominator $q$, i.e.\ such that $p q\in \N^d$, and each integer $k>0$ we get a corresponding type class $T_{kqp}$.

We are now ready to recast Chan's and Yeung's result purely in terms of type classes:
\begin{thm}\label{grouplesscy}
	Let $X=(X_i)_{i\in [n]}$ be an $n$-partite random variable with rational probability distribution $p$ of denominator $q$, each $X_i$ has values on the same finite alphabet $\mathcal{A}$ of size $d$, so $X$ has values on $\mathcal{X}=\mathcal{A}^n$. Then for each integer $k$ there exists a random variable $Y^{(k)}=(Y^{(k)}_{i})_{i\in[n]}$ such that $Y^{(k)}_{I}$ is uniformly distributed on the type class $T_{qkp_I}\subset\left(\mathcal{A}^{|I|}\right)^{qk}$ for all $I\subset [n]$.
\end{thm}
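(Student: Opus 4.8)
The goal is to construct, for each $k$, a single random variable $Y^{(k)}=(Y^{(k)}_i)_{i\in[n]}$ whose every marginal $Y^{(k)}_I$ is uniform on the type class $T_{qkp_I}\subset(\mathcal{A}^{|I|})^{qk}$. The natural construction is the one implicit in the proof of Theorem \ref{chanyeung}: fix a string $x\in\mathcal{X}^{qk}$ whose frequency vector is $\mathfrak{f}(x)=qkp$ (this exists because $qkp\in\N^d$ and $\sum_a(qkp)_a = qk$), and let $Y^{(k)}$ be the random string $\pi.x$ where $\pi$ is drawn uniformly from $S_{qk}$. Concretely, the sample space is $S_{qk}$ with the uniform measure, and $Y^{(k)}(\pi)=\pi.x\in\mathcal{X}^{qk}$, viewed as a $qk$-letter string over the product alphabet $\mathcal{A}^n$, i.e. as a tuple of $n$ strings $\left((\pi.x)_i\right)_{i\in[n]}$ each in $\mathcal{A}^{qk}$. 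So $Y^{(k)}$ takes values in $\mathcal{X}^{qk}$, and I must check that each coordinate-subset marginal has the asserted uniform law.

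\textbf{Key steps.} First I would record that $Y^{(k)}$ is, by construction, uniformly distributed on the orbit $S_{qk}.x = T_{qkp}$, since the uniform measure on $S_{qk}$ pushes forward to the uniform measure on any orbit (every fiber of the orbit map has the same cardinality $|\mathrm{Stab}(x)|$). Second, and this is the crux, I would show that the marginal operation commutes with the $S_{qk}$-action in the sense that $(\pi.x)_I = \pi.(x_I)$: this is exactly the compatibility noted in the paragraph preceding the theorem, where for a product alphabet $\mathcal{X}=\mathcal{A}^n$ the marginal string is defined letterwise by $(x_I)_\alpha = \left((x_\alpha)_i\right)_{i\in I}$, and $S_{qk}$ permutes the letter-positions $\alpha$ while leaving the internal index $i$ untouched. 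Hence $Y^{(k)}_I = (\pi.x)_I = \pi.(x_I)$, which is uniformly distributed on the orbit $S_{qk}.(x_I)$. Third, I would identify that orbit: $x_I$ is a string in $(\mathcal{A}^{|I|})^{qk}$ with frequency vector $\mathfrak{f}(x_I)$ equal to the marginal of $\mathfrak{f}(x)=qkp$, namely $\mathfrak{f}(x_I)_{a} = \sum_{b:\,b_I = a}(qkp)_b = qk\,(p_I)_a$ where $p_I$ denotes the marginal distribution of $p$ on $\mathcal{A}^{|I|}$. Therefore $S_{qk}.(x_I) = T_{qk p_I}$, and $Y^{(k)}_I$ is uniform on $T_{qkp_I}$ as claimed. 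The edge cases $I=\emptyset$ (empty string, trivial type class) and $I=[n]$ (recovering uniformity on $T_{qkp}$) are immediate.

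\textbf{Main obstacle.} The construction itself is short; the only real work is bookkeeping around the two-level index structure: positions $\alpha\in[qk]$ versus product-alphabet coordinates $i\in[n]$, and making precise that ``$S_{qk}$ acting on $\mathcal{X}^{qk}$'' restricted to the $I$-coordinates is the same as ``$S_{qk}$ acting on $(\mathcal{A}^{|I|})^{qk}$.'' I expect the subtle point to be verifying that the empirical distribution of $x_I$ is genuinely $p_I$ — i.e. that marginalizing the frequency vector of $x$ (which by construction has empirical distribution $p$) yields the frequency vector of the marginal string $x_I$, rather than something merely proportional to it — but this follows directly from the letterwise definition of the marginal string and the definition of $\mathfrak{f}$. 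One should also note that this statement is the constructive ``converse'' half analogous to the second part of Theorem \ref{chanyeung}: it builds the random variable once the rational distribution $p$ is given; the limiting statement $H(X_I)=\lim_k \frac1{qk}\log|T_{qkp_I}|$ recovering the entropies of the original $X$ would then follow from the type-class size formula \eqref{typeclasssize} together with Stirling's approximation, but that is a separate corollary and not part of what must be proved here.
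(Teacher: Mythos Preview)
Your proposal is correct and follows essentially the same approach as the paper: fix $x\in T_{kqp}$, take $Y^{(k)}$ uniform on the $S_{kq}$-orbit of $x$, use that marginalization commutes with the $S_{kq}$-action to get $Y^{(k)}_I=\pi.(x_I)$, and identify the orbit of $x_I$ as $T_{kqp_I}$. The only cosmetic difference is that you push forward the uniform measure on $S_{qk}$ through the orbit map (invoking equal fiber sizes) to get uniformity of $Y^{(k)}_I$ directly, whereas the paper starts with $Y$ uniform on $T_{kqp}$ and verifies uniformity of the marginal by a short transitivity calculation; these are equivalent.
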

\begin{proof}
	First write the type class $T_{kqp}$ as an $S_{kq}$-orbit, i.e.
	\begin{equation}
		T_{kqp}=S_{kq}.x
	\end{equation}
	with a fixed string $x\in T_{kqp}$. Now observe that $x_I\in T_{kqp_I}$ and marginalization of strings commutes with the action of the symmetric group, hence
	\begin{eqnarray}\label{typemarginal}
		\left(T_{kqp}\right)_I&=&\left(S_{kq}.x\right)_I\nonumber\\&=&S_{kq}.x_I\nonumber\\&=&T_{kqp_I}.
	\end{eqnarray}
	Let $Y$ be a random variable uniformly distributed on $T_{kqp}$, $p'$ its probability distribution and $p'_I$ the distribution of $Y_I$ which is defined as the marginal string of $Y$. Take any two elements $z,t\in T_{kqp_I}$ and let $\pi\in S_{kq}$ such that $\pi.z=t$. Now calculate 
	\begin{eqnarray}
		p'_I(t)&=&p'_I(\pi.z)\nonumber\\&=&\sum_{\substack{x\in T_{kqp}\\x_i=(\pi.z)_i\, \forall i\in I}}p'(x)\nonumber\\&=&\sum_{\substack{x\in T_{kqp}\\x_i=z_i\, \forall i\in I}}p'(\pi^{-1}x)\nonumber\\&=&\sum_{\substack{x\in T_{kqp}\\x_i=z_i\, \forall i\in I}}p'(x)=p'_I(z),
	\end{eqnarray}
	where in the second to last equality we used uniformity of $p'$. This proves that $p'_I$ is uniformly distributed on its support, which is equal to the type class $T_{kqp_I}$ according to \eqref{typemarginal}, hence $Y$ is the desired random variable.
\end{proof}
In the above theorem the random variable $Y$ has strings of product letters as values. The random variable obtained by taking the first letter of the string is distributed identically to $X$, as by definition the frequency of the letter $s$ is $kqp(s)$ and $Y$ is uniformly distributed on its support $T_{kqp}$ which is invariant under permuting the $kq$ letters in the string.

Now observe that type class sizes are asymptotically related to entropies. In particular,
\begin{eqnarray}
\lim_{k\to\infty}\frac{1}{qk}\log|T_{qkp_I}|&=&\lim_{k\to\infty}\frac{1}{qk}\log\frac{(qk)!}{\prod_{a\in\mathcal{A}} (qkp(a))!}\nonumber\\
&=&\lim_{k\to\infty}\frac{1}{qk}\log\frac{(qk)^{qk}}{\prod_{a\in\mathcal{A}} (qkp(a))^{qkp(a)}}\nonumber\\
&=&-\lim_{k\to\infty}\frac{1}{qk}\log\prod_{a\in\mathcal{A}} (p(a))^{qkp(a)}\nonumber\\
&=&H(p_I),
\end{eqnarray}
where for the second equality we used sterlings approximation on the factorials and discarded sub-exponential factors. Furthermore we can relate the size of a type class to the number of cosets of a Young subgroup, i.e.\ for any $x_I\in T_{qkp_I}$
\begin{eqnarray}\label{typelagrange}
|T_{qkp_I}|=|S_{qk}x_I|=\frac{\left|S_{qk}\right|}{\left|S_{qk,x_I}\right|},
\end{eqnarray}
where $S_{qk,x}$ is the \emph{stabilizer subgroup} or \emph{stabilizer} of $x$ in $S_{qk}$, which is isomorphic to $S_{qkp}=S_{kqp_1}\times S_{kq p_2}\times ...\times S_{kqp_d}$. In the second equation we used Lagrange's theorem for the number of cosets. But $S_{qk,x_I}$ can be expressed in terms of the stabilizers $S_{qk,x_i}$, $i\in I$, as an element $\pi\in S_{qk}$ stabilizes $x_I$ exactly if it stabilizes $x_i$ for all $i\in I$. This implies that
\begin{equation}
 S_{qk,x_I}=\bigcap_{i\in I}S_{qk,x_i}
\end{equation}
and Theorem \ref{chanyeung} follows together with \eqref{typelagrange}. The construction is also interesting in its own right because it proves that every ray in $\overline\Sigma_n$ can be approached by random variables with flat marginals \cite{chan2002relation}, which is a stronger statement for the classical entropy cone than we could derive in Chapter \ref{differential} for the classical and quantum cones with the help of the differential of the entropy function.

\section{Permutation Modules and the Classical Marginal Problem}

\subsection{Strings and Permutation Modules: The Bipartite Case}\label{strings}

Let us now connect Theorem \ref{grouplesscy} to th representation theory of the symmetric group. To this end we look at the permutation modules $M^\lambda$ introduced in Section \ref{symirreps}, as their natural basis is the type class of the probability distribution $p=\lambda/n$ if $\lambda\vdash n$. In the following section I describe this correspondence and investigate its implications.

\paragraph{}Given an alphabet $\mathcal{X}$ of $d$ letters and a string $x\in \mathcal{X}^n$, relabel the letters such that $1$ is the most frequent symbol, $2$ is the second most frequent etc. Then we can encode the information in $x$ in the tableau $T$ with $\sh(T)=\lambda$ such that it has the positions of the ones in the first row, the positions of the twos in the second and so on, so $\lambda\vdash(n,d)$ is the type of $x$. As an example, look at the string
\begin{equation}
	x=312313231.
\end{equation}
Now relabel the alphabet,
\begin{equation}\label{string}
	x=123121312,
\end{equation}
And put the positions in a Young tableau,
\begin{equation}
	T_x=\Yvcentermath1\young({1468},{259},{37}).
\end{equation}
The action $S_n\looparrowright M^{\lambda}$ restricted to the basis given by row-standard tableaux is exactly the natural action of $S_n$ on strings of type $\lambda$. An equivalent way of looking at this representation is to describe it as an action of $S_n$ on the dissections of the set $[n]$ of shape $\lambda$. Let us first define a
\begin{defn}[Dissection]
 Let $S$ be a set. A dissection of $S$ into $k$ parts is a set partition with labeled parts, i.e.\ a tuple $d\in\left(2^S\right)^k$ such that $d_i\cap d_j=\emptyset$ for $i\not=j$ and $\bigcup_{i=1}^{k}d_i=S$. More generally, we also allow a dissection of $S$ to be indexed by an arbitrary set $I$.
\end{defn}
The latter part of the definition has some advantages, as it captures the structure of the object the best. Labels are needed (see the remark below), but the order can be chosen arbitrarily, or  in other words, forgotten about.

There is a one-to-one correspondence between strings and dissections. Let $s\in[k]^n$ be a string, then we define the corresponding dissection $d^{(x)}$ by $i\in d^{(x)}_{x_i}$, i.e.\ $d^{(x)}_{j}$ is the set of indices carrying the letter $j\in[k]$ in $x$. Using this correspondence we define the frequency of a dissection by
\begin{equation}
 f^{(d)}_i=\left|d_i\right|
\end{equation}
and, accordingly, the shape of a dissection $\sh(d)=\sh\left(f^{(d)}\right)$.
The natural $S_k$- and $S_n$-actions are in a way dual to the string picture, as in this case $S_k$ acts by permuting the indices and $S_n$ acts element-wise. Also the correspondence proves immediately that $\C S_n d\cong M^{\sh(d)}$.

\begin{rem}
 Why are we using dissections instead of set partitions? Dissections of the same shape generate isomorphic $S_n$-modules, i.e.\ $\C S_n d\cong M^{\sh(d)}$, but in the case of degeneracies in the frequency vector, or, equivalently, in the corresponding partition, the order of the parts plays a role. The corresponding fact in the string case is that we do not care about the order of the alphabet because different orders generate isomorphic $S_n$-modules, but it is important to \emph{fix} an order. This is understood best by giving an extreme example: For $n\le k$, an atomic dissection of $[n]$ into $n$ singletons generates $M^{(1,...,1)}=\C S_n$, but the corresponding set partition  is invariant under the $S_n$-action.
\end{rem}

This viewpoint makes it quite simple to find the decomposition of the tensor product $M^\lambda\otimes M^\mu$ in terms of permutation modules. The action of $S_n$ on this tensor product is given by permuting the elements of $[n]$ and keeping track of two dissections of it. But that is equivalent to keeping track of their \emph{coarsest common refinement}. If we define that having two indices, we avoid the problem of fixing an order on the Cartesian product of two ordered sets:
\begin{defn}[Coarsest Common Refinement]
  Let $d_1$ and $d_2$ be dissections of a set $S$ indexed by sets $I_1$ and $I_2$. Then we define the coarsest common refinement to be the dissection $d$ of $S$ indexed by $I_1\times I_2$ satisfying
  \begin{equation}
   d_{ij}=(d_1)_i\cap (d_2)_j
  \end{equation}
\end{defn}
It has now already become obvious that the tensor product of two permutation modules is isomorphic to a direct sum of permutation modules:
\begin{equation}
	M^\lambda\otimes M^\mu\cong\bigoplus_{\nu\vdash n} \left(M^\nu\right)^{\oplus h_{\lambda\mu}^\nu}.
\end{equation}
The multiplicities $h_{\lambda\mu}^\nu$ are exactly the ones defined in \cite{christandl2006structure}, Section 2.3.3, in the string analogue of the Clebsch-Gordan isomorphism. Note that $h_{\lambda\mu}^\nu=0$ if $\nu\succ\lambda$ or $\nu\succ \mu$, a partition always dominates its refinement. We call the coefficients $h_{\lambda\mu}^\nu$ \emph{classical Kronecker coefficients}.

\paragraph{}The interesting thing to understand is now how multiplicities larger than one emerge. For this purpose, consider first the $S_n\times S_k$-action on the set of dissections of $S$ labeled by $I$, where $|S|=n$ and $|I|=k$. There, the set of orbits is indeed labeled by the shapes of the dissections, in other words, the shape is the only invariant. The task of finding the multiplicities of $M^\lambda$ in $M^\mu\otimes M^\nu$ is equivalent to understanding the $S_n\times S_k$-invariants of $S$-dissections labeled by $I\times I$. The $S_n$ part is still easily treated in the same way: It permutes the content of the sets and hence reduces the problem to treating set sizes and labels. Set sizes remain invariant. Nontrivial multiplicities arise now because the $S_n\times S_k$ action does not generate all of $S_{k^2}$, the permutation group of $I\times I$.

\paragraph{}For an illustrative example we switch back to the string picture for a moment, take a string $x=1122$ and pair it with $y=1112$ and with $(14).y=2111$. One can do this 
by writing the two strings below each other and view pairs of symbols as new symbols, i.e.
\begin{equation}
	\begin{array}{c||c|c|c|c||c}
		\text{String}&&&&&\text{Type}\\\hline
		x&1&1&2&2&(2^2)\\
		y&1&1&1&2&(31)\\
		(14).y&2&1&1&1&(31)\\
		x\otimes y&(1,1)&(1,1)&(2,1)&(2,2)&(21^2)\\
		x\otimes(14).y&(1,2)&(1,1)&(2,1)&(2,1)&(21^2)
	\end{array}
\end{equation}

Let us look at the same product in the set dissection picture. The dissections are
\begin{eqnarray}
	x&=&\left(\{12\},\{34\}\right)\nonumber\\
	y&=&\left(\{123\},\{4\}\right)\nonumber\\
	(14).y&=&\left(\{234\},\{1\}\right).
\end{eqnarray}
Writing down the refinements and ordering the indices lexicographically we get
\begin{eqnarray}
	x\otimes y&=&\left(\{12\},\emptyset,\{3\},\{4\}\right)\nonumber\\
	x\otimes(57).y&=&\left(\{2\},\{1\},\{34 \},\emptyset\right).
\end{eqnarray}
Both dissections have shape $(21^2)$ but the different order renders the copies of $M^{(21^2)}$ they generate distinct, as the labels of the parts with two elements, $(1,1)$ and $(2,1)$, cannot be transformed into each other by permuting $1$ and $2$.

\paragraph{}Theorem \ref{grouplesscy} and its proof using string marginalization shows now, that the decomposition of tensor products of permutation modules into permutation modules is connected to the bipartite quantum marginal problem: Two random variables with joint distribution $\lambda/q$ and marginals $\mu/q$ and $\nu/q$ exist if and only if $h^{\lambda}_{\mu\, \nu}\not=0$. This was already described in \cite{christandl2006structure}, however, this was done purely combinatorially, not connecting the results to the representation theory of the symmetric group. The above representation-theoretic formulation makes it easier to understand the similarities and differences between classical and quantum marginal problem, which will be discussed below. The relation between permutation modules and the classical marginal problem was also indicated in a remark in \cite{klyachko2004quantum} after Theorem 6.5.1, but was not elaborated.

\subsection{Shannon Type Inequalities from Permutation Modules}

As the type classes of a probability distribution are defined to have a cardinality rate asymptotically equal to the Shannon entropy of the latter, this result immediately yields simple Shannon type inequalities. For the following discussion we introduce the notation $A(k)\dot=B(k)$ for ``$A(k)$ is equal to $B(k)$ in leading order in $k$'' from \cite{cover2012elements}. The size of the type class $T_\lambda$ i.e.\ the dimension of the permutation module $M^\lambda$ is
\begin{equation}
	\dim M^\lambda=\frac{n!}{\lambda!},
\end{equation}
where $\lambda\vdash n$ and $\lambda!=\lambda_1!\lambda_2!...\lambda_d!$. Now suppose we have random variables $X_1$ and $X_2$ with distributions $p_{12}=\lambda/q$, $p_1=\mu/q$, $p_2=\nu/q$. Then $h^\lambda_{\mu\nu}\not=0$ But that means that there exist set partitions $p^\lambda,p^\mu$ and $p^\nu$ such that $p^\lambda$ is a common refinement of $p^\nu$ and $p^\mu$ (As we are only interested in whether the ``classical Kronecker coefficient'' $h^\lambda_{\mu\nu}$ is zero or not, considering partitions instead of dissections suffices here). This implies $\lambda\prec\mu,\nu$. Now we calculate
\begin{eqnarray}
	H(X_1|X_2)&=&H(X_{12})-H(X_2)\nonumber\\&\dot=&\frac{1}{kq}\left(\log|T_{k \lambda}|-\log|T_{k \mu}|\right)\nonumber\\ &=&\frac{1}{kq}\log\frac{(k\mu)!}{(k\lambda)!}\ge 0.
\end{eqnarray}
The inequality follows because for each $\mu_i$ there exists a set $\lambda|_{\mu_i}=\{\lambda_j|j\in I\}$ such that $\sum_{j\in I}\lambda_j=\mu_i$ and these sets are distinct for different $\mu_i$ (this is nothing else but saying that $ \lambda$ is a refinement of $\mu$), and therefore
\begin{equation}
	\frac{\mu!}{\lambda!}=\prod_{i}\frac{\mu_i!}{\lambda|_{\mu_i}!}\ge 1.
\end{equation}
The same is true for $k\lambda, k\mu$ for any $k$. Thereby we proved monotonicity of the Shannon entropy solely by exploiting the Asymptotic equipartition property and the string analogue of the Clebsch-Gordon isomorphism from \cite{christandl2006structure}.
Subadditivity is even simpler to prove: Following the quantum version in \cite{christandl2006structure}, Chapter 2, observe that $M^{k\lambda}\tilde{\subset\, } M^{k\mu}\otimes M^{k\nu}$ for all $k\in\N_+$ and therefore
\begin{eqnarray}
	H(X_{12})&\dot=&\frac{1}{kq}\log\dim M^{k\lambda}\nonumber\\&\le&\frac{1}{kq}\log\dim M^{k\mu}\otimes M^{k\nu}\nonumber\\&=&\frac{1}{kq}\log\left(\dim M^{k\mu}\dim M^{k\nu}\right)\nonumber\\&=&\frac{1}{kq}\left(\log\dim M^{k\mu}+\log\dim M^{k\nu}\right)\nonumber\\&\dot=&H(X_1)+H(X_2).
\end{eqnarray}
Strong subadditivity follows easily as well:
\begin{prop}[Strong subadditivity of the Shannon entropy]\label{repssa}
	The Shannon entropy is strongly subadditive, i.e.\ for three random variables $X,Y$ and $Z$ we have
	\begin{equation}
		H(XY)+H(YZ)-H(XYZ)-H(Y)\ge0
	\end{equation}
\end{prop}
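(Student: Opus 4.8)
The plan is to push the type-class/permutation-module method used above for monotonicity and subadditivity one step further. By continuity of $H$ on a fixed finite alphabet and density of the rationals in the probability simplex it will suffice to treat the case where $(X,Y,Z)$ has a rational joint distribution $p$ of denominator $q$; write $p_{XY},p_{YZ},p_Y$ for the relevant marginals. For each $k\in\N$ I would consider strings of length $n=kq$ and the four type classes $T_{XYZ}:=T_{kqp}$, $T_{XY}:=T_{kqp_{XY}}$, $T_{YZ}:=T_{kqp_{YZ}}$, $T_Y:=T_{kqp_Y}$, which are the natural bases of the permutation modules $M^{\lambda_{XYZ}}$, $M^{\lambda_{XY}}$, $M^{\lambda_{YZ}}$, $M^{\lambda_Y}$ attached to these (scaled) distributions. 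As in \eqref{typemarginal}, marginalisation of strings commutes with the $S_n$-action, so $T_{XY}=(T_{XYZ})_{XY}$, and similarly for the other marginals. The target is the single dimension inequality
\begin{equation}
	\dim M^{\lambda_{XYZ}}\cdot \dim M^{\lambda_Y}\;\le\;\dim M^{\lambda_{XY}}\cdot \dim M^{\lambda_{YZ}},
\end{equation}
equivalently $|T_{XYZ}|\,|T_Y|\le |T_{XY}|\,|T_{YZ}|$; granting it, one takes logarithms, divides by $kq$ and lets $k\to\infty$, and the limit $\tfrac1{kq}\log|T_{kqp_I}|\to H(p_I)$ established above turns it into $H(XYZ)+H(Y)\le H(XY)+H(YZ)$, which is the claim.

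The heart of the argument will be a counting step. I would first observe that the string-marginalisation map $w\mapsto (w_{XY},w_{YZ})$ is injective on $T_{XYZ}$: from $w_{XY}$ and $w_{YZ}$ one reads off the $X$-, $Y$- and $Z$-component of $w_\alpha$ at every position $\alpha$, hence recovers $w$. Its image consists of \emph{$Y$-coherent} pairs, i.e.\ lies inside $P:=\{(u,v)\in T_{XY}\times T_{YZ}\mid u_Y=v_Y\}$, because $(w_{XY})_Y=w_Y=(w_{YZ})_Y$; hence $|T_{XYZ}|\le |P|$. To count $P$ I would partition it over the shared $Y$-marginal, $|P|=\sum_{t\in T_Y}\bigl|\{u\in T_{XY}\mid u_Y=t\}\bigr|\,\bigl|\{v\in T_{YZ}\mid v_Y=t\}\bigr|$. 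Since $T_Y$ is a single $S_n$-orbit and $u\mapsto u_Y$ is $S_n$-equivariant, $S_n$ permutes the fibres $\{u\mid u_Y=t\}$ transitively, so they all have the same cardinality; as they partition $T_{XY}$ this is $|T_{XY}|/|T_Y|$, and likewise $\{v\mid v_Y=t\}$ has cardinality $|T_{YZ}|/|T_Y|$. Therefore $|P|=|T_{XY}|\,|T_{YZ}|/|T_Y|$, giving $|T_{XYZ}|\,|T_Y|\le |T_{XY}|\,|T_{YZ}|$.

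The only genuinely delicate point is this equivariant bookkeeping: checking that string marginalisation is $S_n$-equivariant (so that all fibres over $T_Y$ really do have equal size) and that $w\mapsto (w_{XY},w_{YZ})$ is injective with image in $P$. I would deliberately \emph{not} try to upgrade the dimension inequality to a literal $S_n$-module embedding $M^{\lambda_{XYZ}}\otimes M^{\lambda_Y}\hookrightarrow M^{\lambda_{XY}}\otimes M^{\lambda_{YZ}}$ in the style of the subadditivity proof: the submodule of $M^{\lambda_{XY}}\otimes M^{\lambda_{YZ}}$ spanned by the $Y$-coherent pairs whose coarsest common refinement has shape $\lambda_{XYZ}$ is a direct sum of copies of $M^{\lambda_{XYZ}}$ with a multiplicity that is exactly the kind of ``classical Kronecker''-type count which can exceed one, so the clean statement genuinely lives at the level of dimensions rather than modules. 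The final reduction from rational to arbitrary distributions is routine, since the functional $H(XYZ)+H(Y)-H(XY)-H(YZ)$ is continuous in the joint distribution on a fixed finite alphabet.
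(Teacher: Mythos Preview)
Your argument is correct. The injection $w\mapsto(w_{XY},w_{YZ})$ into the $Y$-coherent pairs, together with the transitive $S_n$-action on $T_Y$ making all fibres of $u\mapsto u_Y$ equicardinal, gives the dimension inequality $|T_{XYZ}|\,|T_Y|\le |T_{XY}|\,|T_{YZ}|$ cleanly; the limit and continuity steps are exactly as in the paper.

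The paper's proof takes a different route: it writes the ratio $\frac{(k\lambda)!(k\beta)!}{(k\mu)!(k\nu)!}$ explicitly and then \emph{factors} it over the parts $\beta_i$ of the $Y$-partition, observing that each factor equals $\dim M^{\mu^{(i)}}\dim M^{\nu^{(i)}}/\dim M^{\lambda^{(i)}}$ where $\lambda^{(i)}$ is a common refinement of $\mu^{(i)}$ and $\nu^{(i)}$, and then invokes the already-proven subadditivity inclusion $M^{\lambda^{(i)}}\tilde{\subset\,}M^{\mu^{(i)}}\otimes M^{\nu^{(i)}}$ on each block. In effect the paper reduces strong subadditivity to subadditivity by conditioning on $Y$---which is exactly the classical ``$I(X{:}Z|Y)$ is an average of mutual informations'' argument, as the paper remarks after the proof. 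Your approach bypasses this reduction: you never split along $\beta$ and never call subadditivity as a lemma, instead proving the type-class inequality in one shot via the injection and the orbit-fibre count. What your route buys is a self-contained counting proof; what the paper's buys is a transparent explanation of why the quantum analogue is harder (there is no conditioning decomposition of the conditional mutual information), which your argument does not make visible.
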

\begin{proof}
	Let us first assume that the probability distributions involved are rational, $qp_y=\beta$, $qp_{xy}=\mu$, $qp_{yz}=\nu$ and $qp_{xyz}=\lambda$ for some Young diagrams $\beta,\mu,\nu,\lambda\vdash q$.
	We first express the inequality to be proven in terms of type class sizes:
	\begin{eqnarray}
		\exp qk\left(H(XY)+H(YZ)-H(XYZ)-H(Y)\right)&\dot=&\frac{\dim M^{k\mu}\dim M^{k\nu}}{\dim M^{k\lambda} \dim M^{k\beta}}\nonumber\\
		&=&\frac{(k\lambda)!(k\beta)!}{(k\mu)!(k\nu)!}.
	\end{eqnarray}
	As the diagrams correspond to a random variable, $\mu,\nu$ and $\lambda$ are refinements of $\beta$. If $\gamma$ is any refinement of $\beta$, we can relabel the parts of $\gamma$ by $\gamma_{ij}$ such that $\sum_{j}\gamma_{ij}=\beta_i$ thus defining $\gamma^{(i)}=(\gamma_{ij})_{j\in [l_i]}\vdash\beta_i$ for suitable $l_i\in \N$.
	With this definition we can write the expression above as
	\begin{eqnarray}
		\frac{(k\lambda)!(k\beta)!}{(k\mu)!(k\nu)!}&=&\prod_i \frac{(k\beta_i)!(k\lambda^{(i)})!}{(k\mu^{(i)})!(k\nu^{(i)})!}\nonumber\\
		&=&\prod_i \frac{\dim M^{\mu^{(i)}}\dim M^{\nu^{(i)}}}{\dim M^{\lambda^{(i)}}}\ge 1,
	\end{eqnarray}
	because $\lambda^{(i)}$ is a common refinement of $\mu^{(i)}$ and $\nu^{(i)}$ and therefore $M^{\lambda^{(i)}}\tilde{\subset\,}M^{\mu^{(i)}}\!\!\otimes M^{\nu^{(i)}}$. If the probability distribution is irrational, write it as the limit of a sequence of rational distributions and use the fact that the Shannon entropy is continuous.
\end{proof}
Note that the way in which we used the refinement property in the last equation of the proof reflects the fact that the conditional mutual informations can be written as convex combination of mutual informations using conditional probabilities, hence one should expect difficulties using a similar technique in the quantum case, which was treated with significantly more effort in \cite{christandl2012recoupling}.

\subsection{$S_n\times S_d$-Duality and the Classical and Quantum Marginal Problems}\label{repcy}

\paragraph{}Both classically and in quantum theory there is a correspondence between Young diagrams (or limits of those) and states. In quantum theory this is facilitated by the spectrum estimation Theorem \ref{specest} \cite{christandl2006spectra, christandl2012recoupling}, in classical theory by the asymptotic equipartition property \ref{AEP}. In the following I find the classical analogue of the Schur-Weyl decomposition of $(\C^d)^{\otimes n}$. This yields a representation-theoretic formulation of the asymptotic equipartition property that is a direct analogue of Theorem \ref{specest}. This clarifies the meaning of Theorem \ref{grouplesscy} in the classical Schur-Weyl picture and makes the impossibility of a direct generalization to the quantum case apparent.

\paragraph{}Let us look at the the representation given by the action
\begin{equation}
	S_n\stackrel{\phi}{\looparrowright}\left(\C^d\right)^{\otimes n}
\end{equation}
defined in Equation \eqref{Sntens}. But this time we do not decompose the tensor product according to Schur Weyl duality, but look at a decomposition into subrepresentations that respects a fixed product basis. This is the best we can do to find as much structure as possible while still retaining the product basis with respect to which the classical states are embedded. We cannot expect this decomposition to contain only irreducible representations.

\paragraph{}Taking an arbitrary product basis vector $\ket{v}$, $v\in[d]^n$ we see that $S_n\ket{v}$ contains all basis vectors $\ket{w}$ with $\mathfrak{f}(v)=\mathfrak{f}(w)$, or, if put it in another way, $S_n\ket{v}=T_{\mathfrak{f}(v)}$, the orbit is the subset of product basis vectors representing the type class of $f$. This decomposes $\left(\C^d\right)^{\otimes n}$ into subrepresentations that are spanned by type classes of product basis vectors,
\begin{equation}
	\left(\C^d\right)^{\otimes n}\cong\bigoplus_{f}\spa S_n\ket{f},
\end{equation}
where $\ket{f}:=\ket{1^{f_1}...d^{f_d}}$.

The Young subgroup $S_{f}=S_{f_1}\times S_{f_2}\times ...\times S_{f_d}$ is the stabilizer of $\ket{v}$, so the representation spanned by it is isomorphic to the induced representation $1 \uparrow^{S_n}_{S_{f}}\cong M^{\sh(f)}$, where $\sh(f)$ is the partition corresponding to the frequency $f$. So we found
\begin{equation}
	\left(\C^d\right)^{\otimes n}\cong\bigoplus_{\lambda\vdash(n,d)} \left(M^\lambda\right)^{\oplus m_\lambda}
\end{equation}
with multiplicities $m_\lambda$ that are still to be determined.

\paragraph{}There is of course also a natural action of $S_d$ that preserves this decomposition, that is the one permuting our fixed basis $B$ of $\C^d$ that generates the product basis. It preserves the above decomposition because it can only permute the frequencies of the basis vectors and thus does not change the corresponding partition. This means we can write
\begin{equation}
	\left(\C^d\right)^{\otimes n}\cong\bigoplus_{\lambda\vdash(n,d)} M^\lambda \otimes W(\lambda),
\end{equation}
where $W(\lambda)$ is some representation of $S_d$. Let us identify this representation. For this purpose we label the standard basis of $\left(\C^d\right)^{\otimes n}$ in a different way. Given a string $x$ we can identify it by giving the shape of it's frequency vector, $\lambda$, a permutation $\pi\in S_d$ such that $\pi(i)$ is the $i$th most frequent symbol and a permutation $\sigma\in S_n$ such that $\sigma.\left(\pi(1)^{\lambda_1}\pi(2)^{\lambda_2}...\pi(d)^{\lambda_d}\right)=x$. 

\paragraph{}This description is, however, not unique. Any $\sigma'=\sigma\tau$ for some $\tau\in S_\lambda$ can replace $\sigma$, this shows again that the corresponding representation of $S_n$ is equal to $1\!\uparrow_{S_\lambda}^{S_n}=M^\lambda$. A similar argument holds for $\pi$. Let $\mu$ be the partition corresponding to the shape of the multiplicities of entries occurring in $\lambda$. This is by far easier explained by an example than it is done with words. Take a Young diagram
\begin{equation*}
	\lambda=\Yvcentermath1\yng(5,3,2,2,1,1,1),
\end{equation*}
i.e.\ $\lambda=(5,3,2^2,1^3)$. Then the multiplicity shape is $\mu=(3,2,1^2)$, because there is one triple of equal parts, one pair and two parts that are different from all others.

Now take a string described by the triple $(\lambda,\pi,\tilde\sigma)$, where $\tilde\sigma=\sigma S_\lambda$ is a left coset of $S_\lambda$ and suppose $\lambda_i=\lambda_j$. Then the same string is also described by $(\lambda,\pi(ij),\widetilde{\sigma\tau})$, where $\tau=(a+1\, b+1)(a+2\, b+2)...(a+\lambda_i,b+\lambda_i)$ with $a=\lambda_1+...+\lambda_{i-1}$ and $b=\lambda_1+...+\lambda_{j-1}$, and $(ij)$ denotes the transposition of $i$ and $j$. This reflects the fact that two letters that occur with the same frequency can also be interchanged by a permutation of the positions within the string.

To remove this ambiguity we replace $\pi$ by $\pi S_\mu$ where $\mu$ is the shape of the multiplicity pattern of $\lambda$, i.e.\ $\mu=\sh(\mathfrak{f}(\lambda))$. This observation enables us to find the representations $W(\lambda)$. Let us find the stabilizer of $S_n\ket{\lambda}$ under $S_d$, $\lambda=\left(\lambda_1^{f_1}\lambda_{f_1+1}^{f_2}...\lambda_{d-f_r+1}^{f_r}\right)$. This is equal to the Young subgroup $S_f\subset S_d$, as permuting basis vectors that have the same frequency can also be achieved by the action of $S_n$ like in the above example. These observations show that the representation $W(\lambda)$ of $S_d$ is equal to $M^{\sh(\mathsf f(\lambda))}=:M^{\lambda^+}$, because it is the one that corresponds to the action on strings with shape $\sh(f)$. Putting things together we get
\begin{equation}\label{clschurweyl}
	\left(\C^d\right)^{\otimes n}\cong\bigoplus_{\lambda\vdash(n,d)} M^\lambda \otimes M^{\lambda^+},
\end{equation}
which is the representation theoretic formulation of the string analogue of Schur Weyl duality that has been proposed in \cite{christandl2006structure}.

\paragraph{}The following observations are not of any direct use in our strive to understand the relations between quantum and classical marginal problem, but yield a result that might be interesting in itself for representation theory and is therefore stated here, forming a short digression.

Having done two different decompositions of the tensor product space $\left(\C^d\right)^{\otimes n}$, we can find the restriction of any irreducible representation of $\mathrm{U}(d)$ to $S_d$ by comparing them. Using the decomposition \eqref{kostka} on the $S_n$-modules $M^\lambda$ in \eqref{clschurweyl}, it transforms into
\begin{equation}
	\left(\C^d\right)^{\otimes n}\cong\bigoplus_{\lambda\vdash(n,d)}\ \ \bigoplus_{\mu\succ\lambda}\C^{K_{\mu\lambda}} \otimes [\mu]\otimes M^{\lambda^+}.
\end{equation}
Upon swapping the direct sums we get
\begin{equation}
	\left(\C^d\right)^{\otimes n}\cong\bigoplus_{\mu\vdash(n,d)}[\mu]\otimes\left[\bigoplus_{\substack{\lambda\vdash(n,d)\\\lambda\prec\mu}} \C^{K_{\mu\lambda}}\otimes M^{\lambda^+} \right],
\end{equation}
and comparing this expression to the Schur Weyl decomposition \eqref{schurweyl} implies
\begin{equation}
	V_\mu\!\downarrow^{\mathrm{U}(d)}_{S_d}\cong \bigoplus_{\substack{\lambda\vdash(n,d)\\\lambda\prec\mu}} \C^{K_{\mu\lambda}}\otimes M^{\lambda^+},
\end{equation}
expressing the restriction of the $\mathrm{U}(d)$-irreducible representation $V_\mu$ to $S_d$ as a direct sum of permutation modules. Using \eqref{kostka} once more we get the decomposition into irreducible representations,
\begin{thm}\label{UtoS}
\begin{equation}
	V_\mu\!\downarrow^{\mathrm{U}(d)}_{S_d}\cong\bigoplus_{\nu\vdash d}[\nu]^{\oplus \eta_{\nu\mu}}
\end{equation}
with
\begin{equation}
	\eta_{\nu\mu}=\sum_{\substack{\lambda\vdash(|\mu|,d)\\\lambda\prec\mu\\\lambda^+\prec\nu}} K_{\mu\lambda}K_{\nu\lambda^+}.
\end{equation}

\end{thm}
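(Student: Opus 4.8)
The plan is to prove Theorem~\ref{UtoS} by playing the two decompositions of $\left(\C^d\right)^{\otimes n}$ against each other, with $n:=|\mu|$ fixed throughout. On one side I would use the Schur--Weyl decomposition \eqref{schurweyl}, $\left(\C^d\right)^{\otimes n}\cong\bigoplus_{\lambda\vdash(n,d)}[\lambda]\otimes V_\lambda$, read as an isomorphism of $S_n\times\mathrm{U}(d)$-modules; on the other side I would use the ``classical Schur--Weyl'' decomposition \eqref{clschurweyl}, $\left(\C^d\right)^{\otimes n}\cong\bigoplus_{\lambda\vdash(n,d)}M^\lambda\otimes M^{\lambda^+}$, read as an isomorphism of $S_n\times S_d$-modules, where the $S_d\hookrightarrow\mathrm{U}(d)$ in question is exactly the subgroup permuting the fixed basis of $\C^d$ that generates the product basis. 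Since the two copies of $S_d$ thus agree, restricting \eqref{schurweyl} to $S_n\times S_d$ and matching it against \eqref{clschurweyl} will produce the restriction $V_\mu\!\downarrow^{\mathrm{U}(d)}_{S_d}$ once the $S_n$-isotypic components have been separated.

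First I would substitute the Kostka decomposition \eqref{kostka}, $M^\lambda\cong\bigoplus_{\mu\succ\lambda}K_{\mu\lambda}[\mu]$, into the $S_n$-factor of \eqref{clschurweyl}, obtaining
\[
\left(\C^d\right)^{\otimes n}\cong\bigoplus_{\lambda\vdash(n,d)}\ \bigoplus_{\mu\succ\lambda}\C^{K_{\mu\lambda}}\otimes[\mu]\otimes M^{\lambda^+}.
\]
Swapping the two direct sums collects, for each $\mu\vdash(n,d)$, the coefficient of $[\mu]$ as an $S_d$-module, namely $\bigoplus_{\lambda\prec\mu}\C^{K_{\mu\lambda}}\otimes M^{\lambda^+}$. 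Comparing this with the $[\mu]$-isotypic component of \eqref{schurweyl} restricted to $S_d$ — which is just $V_\mu\!\downarrow^{\mathrm{U}(d)}_{S_d}$, because in \eqref{schurweyl} the module $[\mu]$ is paired precisely with $V_\mu$ — yields the intermediate identity
\[
V_\mu\!\downarrow^{\mathrm{U}(d)}_{S_d}\cong\bigoplus_{\substack{\lambda\vdash(n,d)\\\lambda\prec\mu}}\C^{K_{\mu\lambda}}\otimes M^{\lambda^+}.
\]
The step that needs a word of justification here is that an equivalence class of $S_n$-representations is determined by its character, so reading off the coefficient of the irreducible $[\mu]$ on both sides is legitimate.

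It then remains to decompose each permutation module $M^{\lambda^+}$ of $S_d$ into irreducibles, which is \eqref{kostka} once more: $M^{\lambda^+}\cong\bigoplus_{\nu\succ\lambda^+}K_{\nu\lambda^+}[\nu]$. Substituting this into the intermediate identity and collecting the multiplicity of a fixed Specht module $[\nu]$ gives
\[
\eta_{\nu\mu}=\sum_{\substack{\lambda\vdash(|\mu|,d)\\\lambda\prec\mu,\ \lambda^+\prec\nu}}K_{\mu\lambda}\,K_{\nu\lambda^+},
\]
which is the claimed formula. I do not expect any genuinely hard step; the closest thing to an obstacle is the bookkeeping: verifying that \eqref{clschurweyl} really is an $S_n\times S_d$-bimodule isomorphism with the $S_d$-action being the basis-permutation action (established in the discussion preceding the theorem), checking that $\lambda^+=\sh(\mathfrak f(\lambda))$ is a genuine partition of $d$ for every $\lambda\vdash(n,d)$ (the frequency vector of $\lambda$, recording also the number of zero parts, sums to $d$), and keeping the two dominance orders — one on partitions of $n$, one on partitions of $d$ — from getting mixed up. All the representation-theoretic content is simply the combination of the two dualities with a twofold application of the Kostka expansion.
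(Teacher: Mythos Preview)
Your proposal is correct and follows exactly the same route as the paper: apply the Kostka expansion \eqref{kostka} to the $S_n$-factor of \eqref{clschurweyl}, swap the sums, match the $[\mu]$-isotypic component against the Schur--Weyl decomposition \eqref{schurweyl} to read off $V_\mu\!\downarrow^{\mathrm{U}(d)}_{S_d}$ as a sum of permutation modules, and then apply \eqref{kostka} once more to the $M^{\lambda^+}$. Your additional bookkeeping remarks (that the two $S_d$-actions agree, that $\lambda^+\vdash d$) are welcome sanity checks the paper leaves implicit.
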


\paragraph{}Let us return to the decomposition \eqref{clschurweyl} and formulate the asymptotic equipartition property (AEP) in this picture. This can be done such that it looks very similar to the spectrum estimation theorem \ref{specest}. A good reference for a classical treatment of this kind of information theoretic basics is \cite{cover2012elements}. 
\begin{prop}
	Let $\rho\in\mathcal{B}\left(\C^d\right)$ be diagonal in a fixed basis $B$, $\rho=\sum_{i}p_i\ketbra{i}{i}$ and let $Q_\lambda$ be the projector onto $M^\lambda\otimes M^{\lambda^+}$ in \eqref{clschurweyl} defined with respect to $B$. Further define
	\begin{equation}
		B_{n,d,\epsilon}(r):=\left\{\lambda\vdash(n,d)\Big|\left\|\overline\lambda-r\right\|\le \epsilon\right\},
	\end{equation}
	with $r=\spec\rho=p\downarrow$. Then
	\begin{equation}
		\lim_{n\to\infty}\sum_{\lambda\in B_{n,d,\epsilon}(r)}\tr Q_\lambda \rho^{\otimes n}=1
	\end{equation}
	for all $\epsilon>0$.
\end{prop}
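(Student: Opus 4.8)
The plan is to translate the statement into one about the empirical type of an i.i.d.\ sample and then invoke the law of large numbers, so that the argument runs parallel to the proof of the asymptotic equipartition property (Theorem \ref{AEP}) while keeping track of the permutation-module grading \eqref{clschurweyl}.

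First I would make the projector $Q_\lambda$ explicit. By the construction in Section \ref{repcy}, the $\lambda$-th summand $M^\lambda\otimes M^{\lambda^+}$ of \eqref{clschurweyl} is $\bigoplus_{f:\,\sh(f)=\lambda}\spa\big(S_n\ket{f}\big)$, and since $S_n\ket f$ is exactly the set of product basis vectors $\ket x$, $x\in[d]^n$, with $\mathfrak f(x)=f$, this is the span of all $\ket x$ with $\sh(\mathfrak f(x))=\lambda$. Hence $Q_\lambda$ is the orthogonal projector onto $\spa\{\ket x\mid \sh(\mathfrak f(x))=\lambda\}$, and the $Q_\lambda$, $\lambda\vdash(n,d)$, are pairwise orthogonal and sum to $\mathds 1$. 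Since $\rho=\sum_i p_i\ketbra{i}{i}$ is diagonal in $B$, the power $\rho^{\otimes n}=\sum_{x\in[d]^n}\big(\prod_{j=1}^n p_{x_j}\big)\ketbra{x}{x}$ is diagonal in the product basis, so with $X_1,\dots,X_n$ i.i.d.\ of law $p$,
\begin{equation}
	\tr Q_\lambda\rho^{\otimes n}=\sum_{\substack{x\in[d]^n\\ \sh(\mathfrak f(x))=\lambda}}\ \prod_{j=1}^n p_{x_j}=\mathbf P\big(\sh(\mathfrak f(X_1,\dots,X_n))=\lambda\big).
\end{equation}
Writing $\overline\Lambda_n:=\tfrac1n\sh(\mathfrak f(X_1,\dots,X_n))$ for the sorted, normalised empirical type and summing over $\lambda\in B_{n,d,\epsilon}(r)$ yields
\begin{equation}
	\sum_{\lambda\in B_{n,d,\epsilon}(r)}\tr Q_\lambda\rho^{\otimes n}=\mathbf P\big(\norm{\overline\Lambda_n-r}\le\epsilon\big).
\end{equation}

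Finally I would apply the law of large numbers: for each $a\in[d]$ one has $\tfrac1n\mathfrak f(X_1,\dots,X_n)_a=\tfrac1n\sum_{j=1}^n\mathds 1[X_j=a]\to p_a$ in probability, so the finite-dimensional vector $\tfrac1n\mathfrak f(X_1,\dots,X_n)$ converges to $p$ in probability; as the sorting map $v\mapsto v{\downarrow}$ is $1$-Lipschitz, $\overline\Lambda_n\to p{\downarrow}=r$ in probability, whence $\mathbf P(\norm{\overline\Lambda_n-r}>\epsilon)\to0$ and the proposition follows. There is no genuine obstacle here; the only points needing care are (i) verifying that \eqref{clschurweyl} groups the product basis vectors by frequency \emph{shape} and not by frequency vector, so that $Q_\lambda$ is precisely the ``type-shape indicator'' above, and (ii) that passing from frequency vectors to partitions by sorting is harmless because sorting is continuous. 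If one wants a quantitative version mirroring the spectrum estimation theorem \ref{specest}, one can instead bound $\tr Q_\lambda\rho^{\otimes n}$ directly using $|T_f|\le e^{nH(\mathfrak f(x)/n)}$ and $\prod_a p_a^{\mathfrak f(x)_a}\le e^{-n(H(\mathfrak f(x)/n)+H(\overline\lambda\|r))}$ (the last step by rearrangement, since any $f$ of shape $\lambda$ is a permutation of $n\overline\lambda$), giving $\tr Q_\lambda\rho^{\otimes n}$ bounded above by $e^{-nH(\overline\lambda\|r)}$ up to a polynomial-in-$n$ factor — the exact classical shadow of Theorem \ref{specest} — but for the stated proposition the soft argument above suffices.
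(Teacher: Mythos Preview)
Your proof is correct and, in fact, cleaner than the paper's. The paper argues via the typical-set formulation of the AEP: it orders the $p_i$ decreasingly (using $S_d$-invariance of $Q_\lambda$), takes a string $x$ in the typical set $A^n_{\epsilon'}$, and then derives an inequality of the form $\|\overline\lambda-p\|\le -\epsilon'/\log p_d$ from the two-sided bound on $p^n(x)$, treating the case $p_d=0$ by dimension reduction. Your route instead identifies $\tr Q_\lambda\rho^{\otimes n}$ directly as the probability that the \emph{sorted} empirical type equals $\overline\lambda$, and then concludes by the weak law of large numbers together with the $1$-Lipschitz property of the sorting map.

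What this buys you: you avoid the somewhat delicate step of converting ``$-\tfrac1n\log p^n(x)$ close to $H(p)$'' into ``empirical shape close to $r$'', which in the paper needs the extra hypothesis $p_d\neq 0$ and a separate boundary case. Your argument handles all $p$ uniformly. Conversely, the paper's approach stays closer to the information-theoretic language and makes the parallel with Theorem~\ref{specest} more visible at the level of the proof; your final paragraph recovers that parallel as an optional quantitative sharpening, which is a nice touch. The two points you flag as needing care---(i) that $Q_\lambda$ groups by frequency \emph{shape}, and (ii) that sorting is continuous---are exactly the places where a reader might stumble, and you address them adequately.
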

\begin{proof}
	The AEP as it is found in \cite{cover2012elements} makes, among other, the following statement: Let $A^n_{\epsilon'}=\left\{x\in\mathcal{X}^n|p(x)=2^{-n(H(x)\pm\epsilon')}\right\}$, then $\mathbf P\left(A^n_{\epsilon'}\right)\stackrel{n\to\infty}{\longrightarrow} 1$. The projectors $Q_\lambda$ are $S_d$-invariant, let therefore without loss of generality $p_1\ge p_2\ge...\ge p_d$. Assume for now $p_d\not=0$. Take $\lambda\vdash(n,d)$ such that $x_\lambda\in A^n_{\epsilon'}$ and calculate
	\begin{equation}
		2^{-n(H(p)+\epsilon')}\le p^n(x_\lambda)=\prod_i p_i^{\lambda_i}=\prod_i p_i^{n\left(p_i+(\overline\lambda_i-p_i)\right)}\le2^{-n\left(H(p)+\|\overline\lambda-p\|\log(p_d)\right)},
	\end{equation}
	so $2^{-n\epsilon'}\le 2^{n \|\overline\lambda-p\|\log(p_d)}$ and therefore $\|\overline\lambda-p\|\le -\frac{\epsilon'}{\log p_d}$. So we have proven that all strings in $A^n_{\epsilon'}$ have a frequency of shape $\lambda\in B_{n,d,-\frac{\epsilon'}{\log p_d}}(r)$ and the result follows by setting $\epsilon'=- \epsilon\log p_d$. The case where $p_d=0$ reduces to a lower dimension $d'<d$ because the assumption $2^{-n(H(p)+\epsilon')}\le p^n(x_\lambda)$ already implies $\lambda_d=0$ then.
\end{proof}

Now we see what the construction in \cite{chan2002relation} means in this formulation: The random variable that is uniformly distributed on the type class $T(p)$ of some rational distribution $p=f/q$ of denominator $q$ is nothing else but the projector onto one copy of $M^\lambda$ in \eqref{clschurweyl}, it makes the limit $n\to\infty$ obsolete by setting the probabilities on non-typical events to zero. Note that the discarded events include the few most probable events as well as a huge number of events of negligible probability, and that the probability of the exactly typical sequences approaches zero for large sequence lengths.

\paragraph{}It has been tried for some time to find a construction for the quantum case analogous to the result of Chan and Yeung (\cite{christandl2006spectra}, \cite{christandl2012recoupling}).

The straightforward analogue would be taking a state of the form
\begin{equation}\label{qucystate}
	\rho=\frac{1}{\dim [\lambda]}P_{[\lambda]}\otimes\ketbra{x}{x}
\end{equation}
In the Schur-Weyl picture, where $\ket{x}\in V_\lambda$.

Given a state $\rho\in\mathcal{B}\left(\C^d\right)$ with rational spectrum $p=\spec\rho\in\Q^d$ with denominator $q$, we want to find a state
\begin{equation}
\sigma\cong \frac{1}{\dim[\lambda]}P_{[\lambda]}\otimes\ketbra{x}{x}\in\mathcal{B}\left( [\lambda]\otimes V_\lambda\right)\subset\mathcal{B}\left( \bigoplus_{\mu\vdash(q,d)}[\mu]\otimes V_\mu\right)\cong\mathcal{B}\left( \left(\C^d\right)^{\otimes q}\right),
\end{equation}
where $\lambda=\sh(qp)$, such that $\rho=\tr_{1^c}\sigma=\rho$.

The task is now to identify the vector $\ket{x}\in V_\lambda$. Let us derive an expression for the partial trace $\rho=\tr_{1^c}\sigma$ for the candidate states 
\begin{equation}
\sigma=\frac{1}{\dim[\lambda]}P_{[\lambda]}\otimes\ketbra{v_{\lambda, B}}{v_{\lambda, B}},
\end{equation}
where $\ket{v_{\lambda, B}}$ is the highest weight vector in $V_\lambda$ for the maximal torus $T\subset \mathrm{U}(d)$ defined by the eigenbasis $B$ of $\rho$ and fix as ordering on the weights the  lexicographical ordering with respect to the ordering of the basis such that the first vector has the highest corresponding eigenvalue etc. We denote by $\ket{i}$ the elements of that basis and calculate
\begin{eqnarray}
	\rho_{ij}&=&\bra{i}\left(\tr_{1^c}\sigma\right)\ket{j}\nonumber\\ &=&\tr\left(\ketbra{j}{i}\tr_{1^c}\sigma\right)\nonumber\\&=&\tr\left(\ketbra{j}{i}\otimes\mathds{1}^{\otimes q-1}\sigma\right)\nonumber\\
	&=&\frac{1}{\dim[\lambda]}\tr\left(\ketbra{j}{i}\otimes\mathds{1}^{\otimes q-1}P_{[\lambda]}\hat{\otimes}\ketbra{v_{\lambda, B}}{v_{\lambda, B}}\right)\nonumber\\
	&=&\frac{1}{q!\bra{\lambda}e_\lambda^\dagger e_\lambda\ket{\lambda}}\tr\left(\ketbra{j}{i}\otimes\mathds{1}^{\otimes n-1}\sum_{\pi\in S_q}\pi^\dagger e_\lambda\ketbra{\lambda}{\lambda}e_\lambda^\dagger\pi\right)\nonumber\\
	&=&\frac{1}{q!\bra{\lambda}e_\lambda^\dagger e_\lambda\ket{\lambda}}\tr\left(\sum_{\pi\in S_q}\pi\ketbra{j}{i}\otimes\mathds{1}^{\otimes n-1}\pi^\dagger e_\lambda\ketbra{\lambda}{\lambda}e_\lambda^\dagger\right)\nonumber\\
	&=&\frac{\dim[\lambda]^2}{q!^3\bra{\lambda}e_\lambda^\dagger e_\lambda\ket{\lambda}}\tr\left(e_\lambda^\dagger\sum_{\pi\in S_q}\pi\ketbra{j}{i}\otimes\mathds{1}^{\otimes n-1}\pi^\dagger e_\lambda e_\lambda\ketbra{\lambda}{\lambda}e_\lambda^\dagger\right)\nonumber\\
	&=&\tr\left(V_\lambda(\ketbra{j}{i})\ketbra{v_{\lambda, B}}{v_{\lambda, B}}\right)=\begin{cases}
	                                                                                 	\lambda_i&i=j\\
	                                                                                 	0&\text{else}
	                                                                                 \end{cases}
\end{eqnarray}
Here we have put a hat on the tensor product in the second line to distinguish between the two different tensor structures of the same space present in that expression, corresponding to the notation
\begin{equation*}
	\left(\C^d\right)^{\otimes n}\cong \bigoplus_{\lambda\vdash(n,d)}[\lambda]\hat\otimes V_\lambda
\end{equation*}
for the Schur-Weyl decomposition \eqref{schurweyl}. $e_\lambda$ denotes the Young symmetrizer corresponding to the standard tableau $T_\lambda$ of shape $\lambda$ with numbers 1 to $n$ inserted from left to right and from top to bottom, e.g.
\begin{eqnarray*}
	\lambda&=&\Yvcentermath1\yng(2,1)\\
	T_\lambda&=&\Yvcentermath1\young({12},3).
\end{eqnarray*}
In line two we used that in the Schur Weyl picture 
\begin{equation}\label{hwv}
\ket{v_{\lambda, B}}=\frac{e_T\ket{T}}{\sqrt{\bra{T}e_T^\dagger e_T\ket{T}}}
\end{equation}
is the normalized highest weight vector in the Weyl module $e_T\left(\C^d\right)^{\otimes n}$ (see \cite{christandl2006structure}, Lemma 1.22). From the second to the third line we used the fact that the Young symmetrizer is proportional to a projector \eqref{youngproj}. In the last line the identity \eqref{hwv} is used again and $\frac{1}{q!}\sum_{\pi\in S_q}\pi\ketbra{j}{i}\otimes\mathds{1}^{\otimes n-1}\pi^\dagger$ is just the diagonal representation of the Lie algebra element $E_{ji}=\ketbra{j}{i}$, hence 
\begin{equation}
V_\lambda(\ketbra{j}{i})=e_\lambda^\dagger\sum_{\pi\in S_q}\pi\ketbra{j}{i}\otimes\mathds{1}^{\otimes n-1}\pi^\dagger e_\lambda
\end{equation}
is the $\mathfrak{gl}(d)$-representation with highest weight vector $e_\lambda\ket{\lambda}$.

To return to a basis free expression we get
\begin{equation}
	\rho=V_\lambda^\dagger\left(\ketbra{v_{\lambda, B}}{v_{\lambda, B}}\right)
\end{equation}
where here $V_\lambda$ denotes the Lie algebra representation map and $V_\lambda^\dagger$ denotes its adjoint as an element of $\hom\left(\mathfrak{gl}(d),\mathfrak{gl}\left(d^n\right)\right)$.

\paragraph{}We have therefore found a quantum analogue of the Chan Yeung construction for $n=1$: Given a state $\rho=\frac{1}{q}\sum_{i=1}^d\lambda_i\ketbra{i}{i}$ we choose the maximal torus corresponding to the eigenbasis of $\rho$ and the state $\sigma=S_q.\ketbra{v_{\lambda, B}}{v_{\lambda, B}}$ is supported only on the copy of $[\lambda]$ corresponding to the coherent state $e_T\ket{T}$ where $\ket{T}$ is the product basis vector that has $\ket{i}$ at the positions in row $i$ of $T$ and $\sh(T)=\lambda$. Also it has the original state as one body marginal. 

\paragraph{}Unfortunately it is obvious that this construction does not commute with the partial trace in case of a multipartite system: The highest weight vector $\ket{v_{\lambda, B}}$ itself is, in general, entangled with respect to the tensor product structure of the physical Hilbert space and therefore contributes to the entropy of reduced states, and in particular, will not yield reduced states with flat spectra.


\chapter{Summary and Open Questions}

In this thesis I have approached the problem of characterizing multipartite quantum entropies from different perspectives. This led to a variety of insights:

\begin{itemize}
 \item The quantum entropy cone is more symmetric than its classical analogue.
 \item On the other hand it is less structured in the sense that finding quantum information inequalities cannot be reduced to finding balanced information inequalities, as it is the case for the classical entropy cone.
 \item There are weak monotonicity inequalities that define facets of the quantum entropy cone. They have a structure similar to the monotonicity inequalities that define facets of the classical entropy cone.
 \item Quantum states whose entropy vectors lie on an extremal ray of the quantum entropy cone have a very simple structure: Their marginals have only one nonzero eigenvalue each. The same is true for the classical analogue.
 \item Entropies from stabilizer states have more structure than previously known, rendering them uninteresting to characterize the full entropy cone as well as to achieve the capacity of a general quantum network communication scenario. However, the structural insight gained in this thesis shows from the entropic perspective that stabilizer codes are quantum analogues of linear codes. Linear codes are useful in classical network coding, indicating that the same should be true for stabilizer states and future quantum network coding.
 \item The group characterization theorem \cite{chan2002relation} for Shannon entropy vectors can be reformulated in a purely combinatoric language.
 \item There is a connection between the classical marginal problem and the representation theory of the symmetric group. The corresponding formalism simplifies the comparison between quantum and classical marginal problem
 
\end{itemize}

The main questions that remained open are the following:

\begin{itemize}
 \item Are all inequalities that were shown to be convex independent for the von Neumann cone \cite{pippenger2003inequalities} essential for the real quantum entropy cone as well?
 \item Can the reasoning behind Theorem \ref{poprays} be generalized to states whose entropy vectors are \emph{close} to an extremal ray?
 \item Do stabilizer states in square dimension respect balanced linear rank inequalities? A solution would also answer the question whether Abelian codes are more powerful than linear codes.
 \item Is it feasible to calculate the marginals of a state like \eqref{qucystate}? Are they supported on the typical subspaces corresponding to spectra close to the ones of the marginals of the parent state?
\end{itemize}


\appendix

\chapter{Tripartite Quantum Marginal Problem}

During the time I did the research for this Thesis, I also tried to generalize a result on the quantum marginal problem for tripartite mixed states by Christandl, \c Sahino\u glu and Walter \cite{christandl2012recoupling}. This did not lead to any significant results, but some parts of the proof of the main theorem of \cite{christandl2012recoupling} were quite difficult to understand, so I record my slightly more detailed reformulation of the proof in this appendix for future benefit.

Let us first set the scene for understanding the result of Christandl, \c{S}ahino\u{g}lu and Walter \cite{christandl2012recoupling}, we use the notation from this paper in the following.

We want to use the Clebsch-Gordan-isomorphism
\begin{equation}\label{cgiso}
	[\alpha]\otimes[\beta]\cong\bigoplus_\lambda[\lambda]\otimes H^{\alpha\beta}_\lambda,
\end{equation}
where $\alpha,\beta$ are Young diagrams, the direct sum is taken over all Young diagrams, and $H^{\alpha\beta}_\lambda$ is a multiplicity space with dimension equal to the \emph{Kronecker coefficient} $g_{\alpha\beta\lambda}$. Note that this dimension can be zero and in fact is nonzero only for finitely many Young diagrams $\lambda$ for given $\alpha$ and $\beta$.

Consider the following three alternative ways of decomposing $[\alpha]\otimes[\beta]\otimes[\gamma]$ into irreducible representations of $S_n$, all three of the using the Clebsch-Gordan isomorphism \eqref{cgiso} twice:
\begin{eqnarray}
	[\alpha]\otimes[\beta]\otimes[\gamma]&\cong&\bigoplus_\eta [\eta]\otimes [\gamma]\otimes H^{\alpha\beta}_\eta\cong\bigoplus_{\eta,\lambda}[\lambda]\otimes H^{\eta\gamma}_\lambda\otimes H^{\alpha\beta}_\eta\label{recoup}\nonumber\\
	{}[\alpha]\otimes[\beta]\otimes[\gamma]&\cong&\bigoplus_\eta [\eta]\otimes [\alpha]\otimes H^{\beta\gamma}_\eta\cong\bigoplus_{\eta,\lambda}[\lambda]\otimes H^{\alpha\eta}_\lambda\otimes H^{\beta\gamma}_\eta\nonumber\\
	{}[\alpha]\otimes[\beta]\otimes[\gamma]&\cong&\bigoplus_\eta [\eta]\otimes [\beta]\otimes H^{\alpha\gamma}_\eta\cong\bigoplus_{\eta,\lambda}[\lambda]\otimes H^{\eta\beta}_\lambda\otimes H^{\alpha\gamma}_\eta
\end{eqnarray}
It is natural to ask now how the different decompositions are related. One tool is given by the so-called \emph{recoupling coefficients}. The terminology is quite misleading here, as the recoupling coefficients are not, in general, coefficients, but maps relating the different decompositions \eqref{recoup}. To understand their definition, look at the chain of maps
\begin{equation}
	H^{\mu\gamma}_\lambda\otimes H^{\alpha\beta}_\mu\hookrightarrow \bigoplus_{\eta}H^{\eta\gamma}_\lambda\otimes H^{\alpha\beta}_\eta\stackrel{\sim}{\longrightarrow}\bigoplus_{\eta} H^{\alpha\eta}_\lambda\otimes H^{\beta\gamma}_\eta\twoheadrightarrow  H^{\alpha\nu}_\lambda\otimes H^{\beta\gamma}_\nu
\end{equation}
for fixed Young diagrams $\alpha,\beta,\gamma, \lambda, \mu$ and $\nu$. The first map is just the natural embedding, the second is the isomorphism resulting from the fact that the two direct sums are just different decompositions of the multiplicity space of $[\lambda]$ in $[\alpha]\otimes [\beta]\otimes[\gamma]$ and the last map is the projection onto the specified direct summand. The composition of the three maps is the recoupling coefficient
\begin{equation}
	\left[\begin{array}{ccc}
	      	\alpha&\beta&\mu\\\gamma&\lambda&\nu
	      \end{array}\right]:H^{\mu\gamma}_\lambda\otimes H^{\alpha\beta}_\mu\to H^{\alpha\nu}_\lambda\otimes H^{\beta\gamma}_\nu.
\end{equation}
The decomposition \eqref{recoup} enables us to write down three decompositions of the Hilbert space of a tripartite system to the $n$-th tensor power via Schur-Weyl duality, i.e.\ a Hilbert space 
\begin{equation}
\mathcal{H}=\left(\mathcal{H}_A\otimes\mathcal{H}_B\otimes\mathcal{H}_C\right)^{\otimes n}\cong\mathcal{H}_A^{\otimes n}\otimes\mathcal{H}_B^{\otimes n}\otimes\mathcal{H}_C^{\otimes n}
\end{equation}
 can be decomposed by first applying Schur-Weyl duality separately for $\mathcal{H}_A^{\otimes n}$, $\mathcal{H}_B^{\otimes n}$, and $\mathcal{H}_C^{\otimes n}$ each and then using one of the above decompositions of the triple products of irreducibles of $S_n$, e.g.
\begin{equation}\label{schurweyl+recoup}
	\mathcal{H}\cong\bigoplus_{\alpha\beta\gamma}[\alpha]\otimes[\beta]\otimes[\gamma]\otimes V_\alpha\otimes V_\beta\otimes V_\gamma\cong\bigoplus_{\alpha\beta\gamma\eta\lambda}[\lambda]\otimes H^{\eta\gamma}_\lambda\otimes H^{\alpha\beta}_\eta\otimes V_\alpha\otimes V_\beta\otimes V_\gamma
\end{equation}

For each quintuple $(\alpha,\beta,\gamma,\mu,\lambda)$ of Young diagrams define the operator
\begin{equation}\label{pathproj}
	Q_{\gamma\mu,\lambda}^{\alpha\beta\gamma}=(P_\alpha\otimes P_\beta\otimes P_\gamma)(P_\mu\otimes P_\gamma)P_\lambda
\end{equation}
and similar operators for the other decompositions according to \eqref{recoup}. These operators are, in fact, orthogonal projectors, as the three projectors in their definition are all block diagonal and either zero or the identity on each block in the direct sum decomposition on the right hand side of \eqref{schurweyl+recoup}. One can now easily relate the product of two such projectors to a recoupling coefficient:
\begin{equation}\label{projrecoup}
	Q_{\gamma\mu,\lambda}^{\alpha\beta\gamma}Q_{\alpha'\mu,\lambda'}^{\alpha'\beta'\gamma'}=\delta_{\alpha\alpha'}\delta_{\beta\beta'}\delta_{\gamma\gamma'}\delta_{\lambda\lambda'}P_\lambda\otimes\left[\begin{array}{ccc}
	      	\alpha&\beta&\mu\\\gamma&\lambda&\nu
	      \end{array}\right]\otimes \mathds{1}_{V_\alpha\otimes V_\beta\otimes V_\gamma}
\end{equation}
The product vanishes for $\alpha\not=\alpha'$ etc., because $P_\alpha P_\alpha'=\delta_{\alpha\alpha'}P_\alpha$ and, as already mentioned, the three projectors in the right hand side of \eqref{pathproj} commute.

The result of Christandl, \c{S}ahino\u{g}lu and Walter reads as follows:
\begin{thm}[\cite{christandl2012recoupling}]\label{recoupspec}
	There exists a finite dimensional Hilbert space $\mathcal{H}$ and a state $\rho\in\mathcal{B}\left(\mathcal{H}^{\otimes 3}\right)$ with spectra $s=(r_{ABC},r_{AB},r_{BC}, r_A,r_B,r_C)$ if and only if there is a sequence of Young diagrams $(\lambda,\mu, \nu,\alpha,\beta,\gamma)_k,\ k\in\N$ such that
	\begin{equation}
		\lim_{k\to\infty}\frac{1}{k}(\lambda,\mu, \nu,\alpha,\beta,\gamma)_k=s
	\end{equation}
	and
	\begin{equation}
		\left\|\left[\begin{array}{ccc}
	      	\alpha&\beta&\mu\\\gamma&\lambda&\nu
	      \end{array}\right]\right\|\ge\frac{1}{f(k)}
	\end{equation}
	for some polynomial $f$.
\end{thm}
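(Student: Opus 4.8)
The plan is to establish both directions of the equivalence via the projector-recoupling-coefficient dictionary \eqref{projrecoup} combined with the spectrum estimation theorem (Theorem \ref{specest}) and the asymptotic equipartition of Schur--Weyl multiplicities. For the "only if" direction, I would start with a state $\rho$ realizing the six spectra $s$ and pass to the $k$-fold tensor power $\rho^{\otimes k}$ on $\mathcal{H}^{\otimes 3k}$ (after padding local dimensions so that a single Hilbert space $\mathcal{H}$ works for all systems). Applying the spectrum estimation theorem to each of the reductions $\rho_A,\rho_B,\rho_C$ of the full $\rho$ simultaneously, one finds Young diagrams $\alpha,\beta,\gamma$ with $\frac{1}{k}(\alpha,\beta,\gamma)\to (r_A,r_B,r_C)$ such that the projector $P_\alpha\otimes P_\beta\otimes P_\gamma$ captures a non-negligible (inverse-polynomially bounded) fraction of $\rho^{\otimes k}$; similarly, applying spectrum estimation to $\rho_{AB}$ (via the Schur--Weyl decomposition of $(\mathcal{H}_A\otimes\mathcal{H}_B)^{\otimes k}$, whose irreducibles are indexed by the same $\mu$ that appears in the Clebsch--Gordan decomposition $[\alpha]\otimes[\beta]$) yields $\mu$ with $\frac{1}{k}\mu\to r_{AB}$ and a non-negligible overlap with $P_\mu$, and estimating $\rho_{ABC}$ gives $\lambda$ with $\frac1k\lambda\to r_{ABC}$ and non-negligible overlap with $P_\lambda$. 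The key point is that these three projectors (suitably interpreted on the right-hand side of \eqref{schurweyl+recoup}) commute, so their product $Q^{\alpha\beta\gamma}_{\gamma\mu,\lambda}$ still has non-negligible overlap with $\rho^{\otimes k}$; since $\tr\!\big(Q^{\alpha\beta\gamma}_{\gamma\mu,\lambda}\rho^{\otimes k}\big)^2 \le \operatorname{poly}(k)$ forces the corresponding block not to vanish, and since by \eqref{projrecoup} that block is $P_\lambda\otimes\big[\begin{smallmatrix}\alpha&\beta&\mu\\\gamma&\lambda&\nu\end{smallmatrix}\big]\otimes\mathds{1}$ summed against $\rho_{BC}$-estimation data selecting $\nu$, a non-vanishing block forces $\big\|\big[\begin{smallmatrix}\alpha&\beta&\mu\\\gamma&\lambda&\nu\end{smallmatrix}\big]\big\|\ge 1/f(k)$ for a polynomial $f$ coming from the various spectrum-estimation error bounds.

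For the "if" direction, I would run the construction in reverse. Given the sequence $(\lambda,\mu,\nu,\alpha,\beta,\gamma)_k$ with asymptotic ratios $s$ and recoupling-coefficient norm bounded below by $1/f(k)$, pick (by the definition of operator norm) unit vectors $|x\rangle\in H^{\mu\gamma}_\lambda\otimes H^{\alpha\beta}_\mu$ and $|y\rangle\in H^{\alpha\nu}_\lambda\otimes H^{\beta\gamma}_\nu$ with $|\langle y|\big[\begin{smallmatrix}\alpha&\beta&\mu\\\gamma&\lambda&\nu\end{smallmatrix}\big]|x\rangle|\ge 1/f(k)$. Build a pure state on $\mathcal{H}^{\otimes 3k}$ supported on the block $[\lambda]\otimes H^{\eta\gamma}_\lambda\otimes H^{\alpha\beta}_\eta\otimes V_\alpha\otimes V_\beta\otimes V_\gamma$ for $\eta=\mu$, roughly of the form (maximally mixed on $[\lambda]$) $\otimes\,|x\rangle\langle x|\,\otimes$ (maximally mixed on $V_\alpha\otimes V_\beta\otimes V_\gamma$), and take $\rho$ to be a suitable mixed state on $\mathcal{H}^{\otimes 3}$ obtained by the same symmetrization-and-partial-trace device used in the $n=1$ computation at the end of Chapter \ref{CYR}. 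One then computes the six reduced spectra: the $V_\alpha\otimes V_\beta\otimes V_\gamma$ factor and spectrum-estimation convergence give $\operatorname{spec}(\rho_A)\to r_A$ etc.; the $P_\mu$-support gives $\operatorname{spec}(\rho_{AB})\to r_{AB}$; the overlap with the $|y\rangle$-block certified by the recoupling coefficient gives $\operatorname{spec}(\rho_{BC})\to r_{BC}$ (this is where the lower bound $1/f(k)$ is essential — it guarantees the $BC$-reduction is not washed out); and $P_\lambda$-support gives $\operatorname{spec}(\rho_{ABC})\to r_{ABC}$. The inverse-polynomial bound ensures these convergences are uniform enough to survive the $k\to\infty$ limit of the compact set of spectra, and the limit state $\rho_\infty$ (extracted by compactness of $\mathcal{B}(\mathcal{H}^{\otimes 3})$ for fixed padded $\mathcal{H}$) has exactly the prescribed spectra.

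The main obstacle I anticipate is bookkeeping the \emph{simultaneous} use of spectrum estimation for the overlapping reductions $\rho_A,\rho_{AB},\rho_{ABC},\rho_{BC}$ while keeping all the error terms polynomial: the subtlety is that $P_\mu$ (an object living inside the Schur--Weyl decomposition of the \emph{joint} $AB$ system) and $P_\alpha\otimes P_\beta$ (living in the separate decompositions of $A$ and $B$) do not obviously act on the same tensor-factor structure, so one must carefully identify all of them as commuting block-diagonal operators on the common refinement $\eqref{schurweyl+recoup}$ before multiplying them — this is precisely the content of the remark that the three projectors in \eqref{pathproj} commute, and getting the combinatorics of "which $\eta$" right (one needs $\eta=\mu$ consistently on both sides of the recoupling map) is where \cite{christandl2012recoupling} is terse. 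A secondary difficulty is the dimension-padding: one must fix a single $\mathcal{H}$ large enough that \emph{every} term in the sequence embeds, which requires an a priori bound on the number of rows of the $\alpha_k,\beta_k,\gamma_k$ in terms of the (fixed) target dimensions — available because the limiting spectra $r_A,r_B,r_C$ have finitely many nonzero entries, but needing a quantitative "the Young diagrams cannot grow extra rows faster than sublinearly" argument analogous to the one in the proof of the cited proposition.
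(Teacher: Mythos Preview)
Your overall strategy for the ``only if'' direction is correct in spirit---spectrum estimation on the various reductions, the commuting path projectors \eqref{pathproj}, and the identity \eqref{projrecoup} are exactly the ingredients---but there is a genuine gap at the step where $\nu$ is supposed to appear. You write that $Q^{\alpha\beta\gamma}_{\gamma\mu,\lambda}$ has non-negligible overlap with $\rho^{\otimes k}$ and that ``by \eqref{projrecoup} that block is $P_\lambda\otimes\bigl[\begin{smallmatrix}\alpha&\beta&\mu\\\gamma&\lambda&\nu\end{smallmatrix}\bigr]\otimes\mathds 1$ summed against $\rho_{BC}$-estimation data selecting $\nu$''. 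But \eqref{projrecoup} does \emph{not} say that a single path projector $Q^{\alpha\beta\gamma}_{\gamma\mu,\lambda}$ contains the recoupling coefficient; it says the recoupling coefficient is the \emph{product of two} path projectors coming from two \emph{different} decomposition routes, $Q^{\alpha\beta\gamma}_{\gamma\mu,\lambda}\,Q^{\alpha\beta\gamma}_{\alpha\nu,\lambda}$. Knowing that each of these separately has large trace against $\rho^{\otimes k}$ does not by itself bound their product, and ``summed against $\rho_{BC}$-estimation data'' is not an argument. (Also, the inequality ``$\tr(Q\rho^{\otimes k})^2\le\mathrm{poly}(k)$'' is the wrong direction; you mean $\ge 1/\mathrm{poly}(k)$.)

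The paper closes exactly this gap with one extra device you are missing: rather than fixing $(\alpha,\beta,\gamma,\mu,\nu,\lambda)$ in advance, it forms \emph{smeared} projectors $\tilde Q_1=\bigoplus Q^{\alpha\beta\gamma}_{\mu\gamma,\lambda}$ and $\tilde Q_2=\bigoplus Q^{\alpha\beta\gamma}_{\nu\alpha,\lambda}$, each summed over all diagrams $\delta$-close to the appropriate spectra, and then applies the elementary bound $|\tr PQ\rho|\ge \tr P\rho-\tr\overline{Q}\rho$ to get $|\tr\tilde Q_1\tilde Q_2\,\rho^{\otimes k}|\ge 1-2\epsilon$ directly for the \emph{product}. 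Because the cross terms with mismatched $(\alpha,\beta,\gamma,\lambda)$ vanish by \eqref{projrecoup}, this product is a sum of at most polynomially many terms of the form $Q^{\alpha\beta\gamma}_{\mu\gamma,\lambda}Q^{\alpha\beta\gamma}_{\nu\alpha,\lambda}$, and a pigeonhole over that sum produces a single tuple with $\bigl\|\bigl[\begin{smallmatrix}\alpha&\beta&\mu\\\gamma&\lambda&\nu\end{smallmatrix}\bigr]\bigr\|\ge 1/\mathrm{poly}(k)$. Your ``obstacle'' paragraph already anticipates the bookkeeping difficulty; the resolution is precisely this smear-then-pigeonhole manoeuvre, not a direct selection of diagrams.

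The paper's appendix only treats the ``only if'' direction, so I cannot compare your converse construction to it; your sketch there is plausible but would need the same care about how the $\nu$-block overlap (controlled by the recoupling norm) actually pins down $\spec\rho_{BC}$.
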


The argument for the proof of the ``only if'' direction of Theorem \ref{recoupspec} roughly goes like this: Define for each local dimension $d$, each $k\in \N$ and each spectrum $r$ and $\delta>0$ the set 
\begin{equation}
	\Delta(k,\delta,r)=\left\{\lambda\vdash(k,\ell(r))\big|||\overline{\lambda}-r||_1\le\delta\right\},
\end{equation}
with $\ell(r)$ being the length of $r$, i.e.\ the dimension of the underlying Hilbert space. Furthermore, define the projectors
\begin{eqnarray}\label{smearedprojs}
	\tilde{Q}^{k,\delta}_1&=&\bigoplus Q^{\alpha\beta\gamma}_{\mu\gamma, \lambda}\text{ and}\nonumber\\
	\tilde{Q}^{k,\delta}_2&=&\bigoplus Q^{\alpha\beta\gamma}_{\nu\alpha,\lambda},
\end{eqnarray}
where the direct sums are taken over all diagrams $\alpha\in\Delta(k,\delta,r_A)$, $\nu\in\Delta(k,\delta,r_{BC})$ etc.
Now, observe that for any contraction $P$, any projector $Q$ and an arbitrary density matrix $\rho$ we have the elementary bound
\begin{equation}\label{elbound}
	|\tr PQ\rho|=|\tr P\rho-\tr P\overline{Q}\rho|\ge \tr P\rho-\tr\overline{Q}\rho,
\end{equation}
which implies, together with Theorem \ref{specest} that for all $\epsilon> 0$ there exists a $\delta>0$ such that
\begin{equation}
	\left\|\tilde{Q}^{k,\delta}_1\tilde{Q}^{k,\delta}_2\right\|_\infty\ge\left|\tr\tilde{Q}^{k,\delta}_1\tilde{Q}^{k,\delta}_2\rho^{\otimes k}\right|\ge 1-2\epsilon.
\end{equation}
According to \eqref{projrecoup} all terms in $\tilde{Q}^{k,\delta}_1\tilde{Q}^{k,\delta}_2$ where $\alpha\not=\alpha'$ or $\beta\not=\beta'$ etc. vanish such that
\begin{equation}\label{bigprojprod}
	\tilde{Q}^{k,\delta}_1\tilde{Q}^{k,\delta}_2=\sum_{\alpha\beta\gamma\mu\nu\lambda} Q^{\alpha\beta\gamma}_{\mu\gamma, \lambda}Q^{\alpha\beta\gamma}_{\nu\alpha,\lambda},
\end{equation}
the sum again taken over all diagrams with normalization $\delta$-close to the spectra of $\rho$. Therefore we can conclude that
\begin{equation}
	N \max_{\alpha\beta\gamma\mu\nu\lambda}\left\|Q^{\alpha\beta\gamma}_{\mu\gamma, \lambda}Q^{\alpha\beta\gamma}_{\nu\alpha,\lambda}\right\|_\infty\ge\sum_{\alpha\beta\gamma\mu\nu\lambda} \left\|Q^{\alpha\beta\gamma}_{\mu\gamma, \lambda}Q^{\alpha\beta\gamma}_{\nu\alpha,\lambda}\right\|_\infty\ge\left\|\tilde{Q}^{k,\delta}_1\tilde{Q}^{k,\delta}_2\right\|_\infty\ge\left|\tr\tilde{Q}^{k,\delta}_1\tilde{Q}^{k,\delta}_2\rho^{\otimes k}\right|\ge 1-2\epsilon
\end{equation}
With $N$ being the number of terms in the sum in \eqref{bigprojprod}, which is at most polynomial in $k$, and $\epsilon_k=q(k)e^{-k \delta^2}$ for some fixed polynomial $q$. That is , for every $\delta>0$ there is a polynomial $p_\delta$  and a series of tuples of Young diagrams $(\lambda,\mu, \nu,\alpha,\beta,\gamma)^{(\delta)}_k,\ k\in\N$ such that
\begin{equation}
	\left\|Q^{\alpha\beta\gamma}_{\mu\gamma, \lambda}Q^{\alpha\beta\gamma}_{\nu\alpha,\lambda}\right\|_\infty\ge\frac{1-2\epsilon}{p_\delta(k)}
\end{equation}
and the normalization of all involved diagrams are $\delta$-close to their partner spectrum. What is more, we can choose the set $\{p_\delta|\delta\in\R_+\}$ such that $p_\delta(k)\le p_{\delta'}(k)$ for all $\delta\le\delta'$ and all $k\in\N$. With this we can prove that the series of diagrams $(\lambda,\mu, \nu,\alpha,\beta,\gamma)^{(1/\sqrt[4]{k})}_k$ converges to $s$ and the product of the corresponding projectors, $Q^{\alpha\beta\gamma}_{\mu\gamma, \lambda}Q^{\alpha\beta\gamma}_{\nu\alpha,\lambda}$ decreases at most polynomially in $k$ with respect to the norm $||\cdot||_\infty$. On the other hand, this norm is equal to any norm of the corresponding recoupling coefficient up to a factor polynomial in $k$.

We can now strengthen this direction of the theorem.
\begin{thm}
	For any quantum state $\rho\in\mathcal{B}\left(\C^{\otimes 3}\right)$ with spectra $s=(r_{ABC},r_{AB},r_{BC},r_{AC},r_A,r_B,r_C)$ there exists a series of Young diagrams $x_k=(\lambda,\mu,\nu,\sigma,\alpha,\beta,\gamma)_k, k\in\N$ such that
	\begin{equation}
		\lim_{k\to\infty}\frac{1}{k} x_k=s
	\end{equation}
	and the operator norm of one of  $Q^k_1Q^k_2Q^k_3$ and cyclic permutations decay at most polynomially, where the three projectors $Q^k_i$, $i=1,2,3$ correspond to the three ways of decomposing $([\alpha]\otimes[\beta]\otimes[\gamma])_k$. More precisely there exist polynomials $p_i, i=0,1,2$ such that
	\begin{equation}
		\left\|Q^k_{(123)^i(1)}Q^k_{(123)^i(2)}Q^k_{(123)^i(3)}\right\|_\infty\ge\frac{1}{\left|p_i(k)\right|}, i=0,1,2.
	\end{equation}
	In particular, the recoupling coefficients
	\begin{equation}
		\left[\begin{array}{ccc}
	      	\alpha&\beta&\mu\\\gamma&\lambda&\nu
	      \end{array}\right],\left[\begin{array}{ccc}
	      	\beta&\gamma&\nu\\\alpha&\lambda&\sigma
	      \end{array}\right]\text{ and }\left[\begin{array}{ccc}
	      	\gamma&\alpha&\sigma\\\beta&\lambda&\mu
	      \end{array}\right]
	\end{equation} 
	do not decay exponentially.
\end{thm}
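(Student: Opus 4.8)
The plan is to carry out the argument sketched above for the ``only if'' direction of Theorem~\ref{recoupspec}, but with three smeared path projectors instead of two and with the extra intermediate diagram $\sigma$ attached to the $AC$-marginal. Given $\rho$ with spectra $s$, for each $k$ and $\delta>0$ I would form, as in \eqref{smearedprojs}, the typical-diagram sets $\Delta(k,\delta,r_A),\dots,\Delta(k,\delta,r_{ABC})$ together with $\Delta(k,\delta,r_{AC})$, and define three projectors $\tilde Q_1^{k,\delta},\tilde Q_2^{k,\delta},\tilde Q_3^{k,\delta}$ as direct sums of the path projectors $Q^{\alpha\beta\gamma}_{\mu\gamma,\lambda}$, $Q^{\alpha\beta\gamma}_{\nu\alpha,\lambda}$, $Q^{\alpha\beta\gamma}_{\sigma\beta,\lambda}$ over all tuples whose normalized diagrams are $\delta$-close to the appropriate spectra; these realize the three decompositions of \eqref{recoup}.

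First I would establish the three-projector analogue of \eqref{elbound}: telescoping $I-\tilde Q_1\tilde Q_2\tilde Q_3=(I-\tilde Q_1)+\tilde Q_1(I-\tilde Q_2)+\tilde Q_1\tilde Q_2(I-\tilde Q_3)$, bounding each correction by $\|(I-\tilde Q_j)\rho^{\otimes k}\|_1\le(\tr(I-\tilde Q_j)\rho^{\otimes k})^{1/2}$, and then invoking Theorem~\ref{specest} for the marginals entering each decomposition, which gives $\Re\tr\tilde Q_1\tilde Q_2\tilde Q_3\rho^{\otimes k}\ge 1-\varepsilon_k$ with $\varepsilon_k=\mathrm{poly}(k)\,e^{-ck\delta^2}$ (the square-root loss relative to the two-projector case being harmless). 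By cyclicity of the trace the same lower bound holds for the other two cyclic orderings. Next, \eqref{projrecoup} shows that all three path projectors sharing a tuple $(\alpha,\beta,\gamma,\lambda)$ act inside the single isotypic block $[\lambda]\otimes H^{\alpha\beta\gamma}_\lambda\otimes V_\alpha\otimes V_\beta\otimes V_\gamma$, so each $\tilde Q_i$ is block-diagonal in $(\alpha,\beta,\gamma,\lambda)$; the number of such blocks, and of intermediate triples $(\mu,\nu,\sigma)$ inside a block, is polynomial in $k$, so $\|\tilde Q_1\tilde Q_2\tilde Q_3\|_\infty\le\mathrm{poly}(k)\cdot\max\|Q_1Q_2Q_3\|_\infty$ over single path-projector triples, and likewise for the cyclic orderings.

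For the diagonalization, write $w_b=\tr P_b\rho^{\otimes k}$ for the weight of a block $b=(\alpha,\beta,\gamma,\lambda)$ and $A_i(b)$ for the operator norm of the $i$-th cyclic triple product restricted to $b$. The estimates above give $\sum_b A_i(b)w_b\ge 1-\varepsilon_k$ for $i=0,1,2$, with $\sum_b w_b\le 1$ and $A_i(b)\le 1$; since $A_0+A_1+A_2\le 3$, the maximal-weight block $b^\star_k$ then satisfies $A_0(b^\star_k),A_1(b^\star_k),A_2(b^\star_k)\ge 1-\mathrm{poly}(k)\,\varepsilon_k$. Taking $\delta=\delta_k=k^{-1/4}$ makes $\mathrm{poly}(k)\,\varepsilon_k\to 0$ and forces $b^\star_k/k\to(r_A,r_B,r_C,r_{ABC})$. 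Combining with the polynomial block-decomposition bound to pick intermediate diagrams $(\mu,\nu,\sigma)_k$ realizing the maxima, one obtains a sequence $x_k=(\lambda,\mu,\nu,\sigma,\alpha,\beta,\gamma)_k$ with $x_k/k\to s$ whose triple-product norms are $\ge 1/\mathrm{poly}(k)$; since \eqref{projrecoup} gives $\|Q_1Q_2Q_3\|_\infty\le\|Q_1Q_2\|_\infty=\|\text{recoupling coeff.}\|_\infty$ and cyclically, the three recoupling coefficients inherit the bound and in particular do not decay exponentially.

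The genuine obstacle is the last part of the diagonalization. The block-averaging pins down $\alpha,\beta,\gamma,\lambda$ simultaneously for all three cyclic orderings, but the maxima over the intermediate diagrams $\mu,\nu,\sigma$ inside $b^\star_k$ may a priori be attained at \emph{different} triples for $i=0,1,2$; extracting a single $(\mu,\nu,\sigma)_k$ good for all three cyclic products at once requires showing that an almost-invariant vector of $\tilde Q_1|_{b^\star},\tilde Q_2|_{b^\star},\tilde Q_3|_{b^\star}$ has non-negligible overlap with one summand in each of the three decompositions of $H^{\alpha\beta\gamma}_\lambda$ compatibly, which is not automatic. Absent that, one still obtains, for each cyclic ordering separately and along its own sequence, the polynomial lower bound (directly from the first step and the $A\!\to\! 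B\!\to\! C$ relabelling symmetry), which is the form of the statement one can prove unconditionally.
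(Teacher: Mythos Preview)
Your telescoping bound and block decomposition are essentially what the paper does, but you take a detour that creates the very obstacle you flag at the end. The paper never tries to bound the three cyclic orderings \emph{separately} and then reconcile them; instead it picks the full tuple $(\alpha,\beta,\gamma,\lambda,\mu,\nu,\sigma)_k$ in one shot by maximizing the single scalar
\[
\left|\tr\, Q^{\alpha\beta\gamma}_{\mu\gamma,\lambda}\,Q^{\alpha\beta\gamma}_{\nu\alpha,\lambda}\,Q^{\alpha\beta\gamma}_{\sigma\beta,\lambda}\,\rho^{\otimes k}\right|
\]
over the polynomially many terms in the expansion of $\tilde Q_1\tilde Q_2\tilde Q_3$ (the cross terms in $(\alpha,\beta,\gamma,\lambda)$ vanish exactly as in \eqref{bigprojprod}, and the intermediate diagrams $\mu,\nu,\sigma$ come along automatically as summation indices). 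This already fixes $\mu,\nu,\sigma$ once and for all, so the compatibility problem you describe never arises.

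The remaining point is how a \emph{single} lower bound on $\left|\tr Q_1Q_2Q_3\rho^{\otimes k}\right|$ controls all three cyclic products. The paper does this in two steps: first it drops $\rho^{\otimes k}$ at the cost of a polynomial factor, passing to $\left|\tr Q_1Q_2Q_3\right|$; then, since this trace is genuinely cyclic in the $Q_i$, one has for instance
\[
\left|\tr Q_1Q_2Q_3\right|=\left|\tr Q_2Q_3Q_1\right|\le(\tr Q_1)\,\|Q_2Q_3Q_1\|_\infty\le(\tr Q_1)\,\|Q_3Q_1\|_\infty,
\]
and $\tr Q_1$ is polynomial in $k$. This gives the lower bounds on $\|Q_iQ_j\|_\infty$ for \emph{all} adjacent pairs, hence on all three recoupling coefficients, from the single maximizing tuple. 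Your averaging argument over blocks $b=(\alpha,\beta,\gamma,\lambda)$ with weights $w_b$ and quantities $A_i(b)$ is unnecessary machinery once you see that the trace with $\rho^{\otimes k}$ can be traded for the state-free trace at polynomial cost; that is the step you are missing.
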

Note that there are actually there are six recoupling coefficients, but the ones missing in the theorem are adjoints of the above.
\begin{proof}
	We use the projectors \eqref{smearedprojs} and define the analogous one corresponding to $Q^{\alpha\beta\gamma}_{\sigma\beta,\lambda}$, $\tilde{Q}^k_3$. Using the bound \eqref{elbound} twice we get
	\begin{equation}
		\left|\tr\tilde{Q}^k_1\tilde{Q}^k_2\tilde{Q}^k_3\rho^{\otimes k}\right|\ge\left|\tr\tilde{Q}^k_1\tilde{Q}^k_2\rho^{\otimes k}\right|-\tr\overline{\tilde{Q}_3^k}\rho^{\otimes k}\ge\tr\tilde{Q}^k_1\rho^{\otimes k}-\tr\overline{\tilde{Q}_2^k}\rho^{\otimes k}-\tr\overline{\tilde{Q}_3^k}\rho^{\otimes k}\ge 1-3\epsilon.
	\end{equation}
	Now we employ the argument that, already at this point, we can find $(\lambda,\mu,\nu,\sigma,\alpha,\beta,\gamma)_k$ such that the projectors $Q_i^k, \ i=1,2,3$ corresponding to the different decompositions of $([\alpha]\otimes[\beta]\otimes[\gamma])_k$ fulfill the inequality
	\begin{equation}
		\left|\tr Q^k_1Q^k_2Q^k_3\rho^{\otimes k}\right|\ge\frac{1}{p(k)}
	\end{equation}
	for some polynomial $p$. We can use this inequality to bound the operator norm of $\tr Q^k_1Q^k_2Q^k_3$ and cyclic permutations thereof, and, in particular, the three different recoupling coefficients:
	\begin{eqnarray}
		\left|\tr Q^k_1Q^k_2Q^k_3\rho^{\otimes k}\right|&\le&\left\|Q^k_1Q^k_2Q_3^k\right\|_\infty\le\min\left\{\left\|Q^k_1Q^k_2\right\|_\infty,\left\|Q^k_2Q^k_3\right\|_\infty\right\},\nonumber\\
		\left|\tr Q^k_1Q^k_2Q^k_3\rho^{\otimes k}\right|&\le&\left|\tr Q^k_1Q^k_2Q^k_3\right|=\left|\tr Q^k_2Q^k_3Q^k_1\right|\le\left(\tr Q^k_1\right)\left\|Q^k_2Q^k_3Q^k_1\right\|_\infty\nonumber\\ &\le&\left(\tr Q^k_1\right)\left\|Q^k_3Q^k_1\right\|_\infty.
	\end{eqnarray}
	Analogously,
	\begin{equation}
		\left|\tr Q^k_1Q^k_2Q^k_3\rho^{\otimes k}\right|\le \left(\tr Q^k_2\right)\left\|Q^k_3Q^k_1Q^k_2\right\|_\infty.
	\end{equation}
	Together with \eqref{projrecoup} and the fact that all involved irreducible representations have a dimension at most polynomial in $k$, we get the assertion using an argument completely analogous to the one in the proof sketch above.
	
\end{proof}


\bibliographystyle{plain}

\newpage

 \includegraphics[width=0.9\textwidth]{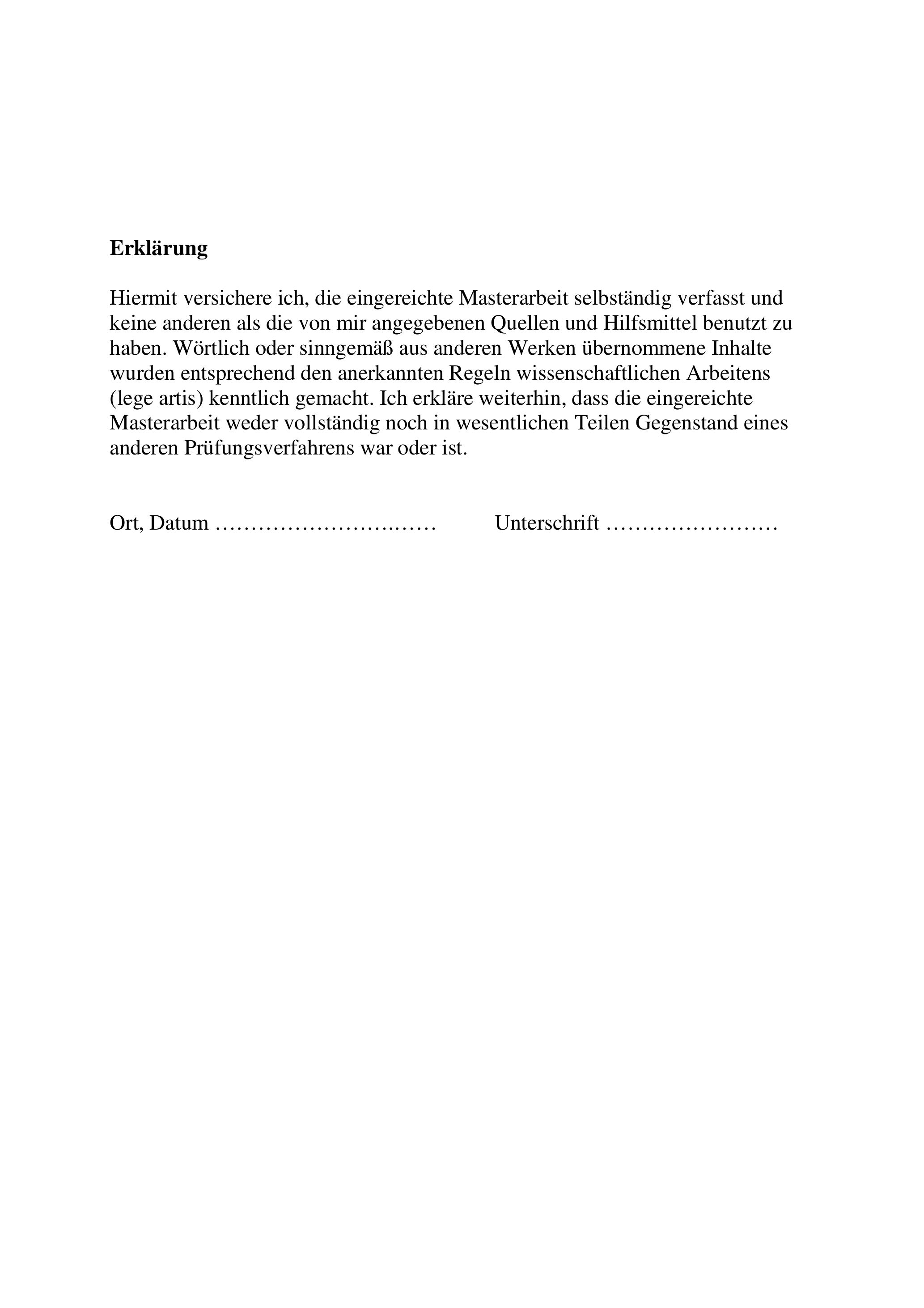}

 \newpage
\end{document}